\newcommand{\dst}{\displaystyle}
\newtheorem{theorem}{Theorem}[section]
\newtheorem{claim}[theorem]{Claim}
\newtheorem{proposition}[theorem]{Proposition}
\newtheorem{corollary}[theorem]{Corollary}
\newtheorem{lemma}[theorem]{Lemma}
\newtheorem{example}[theorem]{Example}
\newtheorem{fact}[theorem]{Fact}
\newtheorem{remark}[theorem]{Remark}
\theoremstyle{definition}
\newtheorem{definition}[theorem]{Definition}
\newenvironment{prf}{\noindent{\bf Proof.~}}{\(\qed\)}
\newcommand{\BPF}{\begin{prf}} 
\newcommand {\EPF}{\end{prf}}
\def\mult{\textsf{Mult}}
\newcommand{\capac}{\textnormal{cap}}
\newcommand{\trn}{\textnormal{Trn}}
\newcommand{\Det}{\textnormal{Det}}
\newcommand{\Per}{\textnormal{Per}}
\newcommand{\ds}{\textnormal{ds}}
\newcommand{\tensor}{\otimes}
\newcommand{\expon}{\textnormal{exp}}
\newcommand{\QuantPer}{\textnormal{QPer}}
\newcommand{\sgn}{\textnormal{sgn}}
\newcommand{\tr}{\textnormal{tr}}
\newcommand{\rk}{\textnormal{rank}}
\def\poly{\textnormal{poly}}
\newcommand{\ncrk}{\textnormal{nc-rank}}
\def\ncrank{\textnormal{nc-rank}}
\def\F{{\mathbb{F}}}
\def\Q{{\mathbb{Q}}}
\def\Z{{\mathbb{Z}}}
\def\R{{\mathbb{R}}}
\def\C{{\mathbb{C}}}
\def\GL{\mathbb{GL}}
\def\FX{ \mathbb{F}\langle \bx \rangle }
\def\QX{ \mathbb{Q}\langle \bx \rangle }
\def\Fx{ \mathbb{F}\langle \bx \rangle }
\def\Fxx{\mathbb{F}\langC \bx \rangC}
\def\Qxx{\mathbb{Q}\langC \bx \rangC}
\def\cM{\mathcal M}
\def\cPS{\mathcal{PSPACE}}
\def\cP{\mathcal{P}}
\def\bf{{\mathbf f}}
\def\bx{{\mathbf x}}
\def\by{{\mathbf y}}
\def\bx{{\mathbf x}}
\def\by{{\mathbf y}}
\newcommand{\wt}[1]{\widetilde{#1}}
\def\then{\Rightarrow}
\def\to{\rightarrow}
\newcommand{\eps}{\epsilon}
\newcommand{\langC}{\begin{picture}(9,7)
\put(3.0,0){$($}
\put(1.1,2.5){\rotatebox{30}{\line(1,0){8.0}}}
\put(1.1,2.5){\rotatebox{330}{\line(1,0){8.0}}}
\end{picture}}
\newcommand{\rangC}{\begin{picture}(9,7)
\put(2.0,0){$)$}
\put(.8,2.5){\rotatebox{150}{\line(1,0){8.0}}}
\put(.8,2.5){\rotatebox{210}{\line(1,0){8.0}}}
\end{picture}}
\begin{document}

\title{Operator scaling: theory and applications}

\author{
Ankit Garg \thanks{Microsoft Research New England, email: garga@microsoft.com. This research was done when the author was a student at Princeton University and his research was partially supported by Mark Braverman's NSF grant CCF-1149888, Simons Collaboration on Algorithms and Geometry, Simons Fellowship in Theoretical Computer Science and a Siebel Scholarship.}
\and
Leonid Gurvits\thanks{Department of Computer Science, The City College of New York, email: l.n.gurvits@gmail.com}
\and
Rafael Oliveira \thanks{Department of Computer Science, Princeton University, email: rmo@cs.princeton.edu. Research supported by NSF Career award (1451191) and by CCF-1523816 award.}
\and
Avi Wigderson \thanks{Institute for Advanced Study, Princeton, email: avi@math.ias.edu. This research was partially supported by NSF grant
CCF-1412958.}
}

\maketitle

\begin{abstract}

In this paper we present a deterministic polynomial time algorithm for testing if a symbolic matrix in {\em non-commuting} variables over $\Q$ 
is invertible or not. The analogous question for commuting variables is the celebrated polynomial identity testing (PIT) for symbolic determinants. 
In contrast to the commutative case, which has an efficient probabilistic algorithm, the best previous algorithm for the non-commutative 
setting required exponential time~\cite{IQS2015} (whether or not randomization is allowed). The algorithm efficiently solves the ``word problem'' 
for the free skew field, and the identity testing problem for arithmetic formulae with division over non-commuting variables, 
two problems which had only exponential-time algorithms prior to this work.

The main contribution of this paper is a complexity analysis of an existing
algorithm due to Gurvits~\cite{gurvits2004}, who proved it was polynomial time for 
certain classes of inputs.
We prove it always runs in polynomial time. The main component of our 
analysis is a simple (given the necessary known tools) lower bound on 
central notion of {\em capacity} of operators (introduced by Gurvits~\cite{gurvits2004}).
We extend the algorithm to actually approximate capacity to any accuracy 
in polynomial time, and use this analysis to give quantitative bounds on 
the continuity of capacity (the latter is used in a subsequent paper on 
Brascamp-Lieb inequalities).
We also extend the algorithm to compute not only singularity, but 
actually the (non-commutative) rank of a symbolic matrix, yielding a 
factor 2 approximation of the commutative rank. This naturally raises a 
relaxation of the
commutative PIT problem to achieving better deterministic approximation 
of the commutative
rank.

Symbolic matrices in non-commuting variables, and the related structural and algorithmic questions, have a remarkable number of diverse 
origins and motivations. They arise independently in (commutative) invariant theory and representation theory, linear algebra, optimization, 
linear system theory, quantum information theory, approximation of the permanent and naturally in non-commutative algebra. We provide a 
detailed account of some of these sources and their interconnections. In particular we explain how some of these sources 
played an important role in the development of Gurvits' algorithm and in our analysis of it here. 

\end{abstract}

\tableofcontents

\newpage 

\section{Introduction}

This introduction will be unusually long, due to the unusual number of settings in which the problems we study appear, and their 
interconnections and history. 

The main object of study in this paper are {\em symbolic} matrices (also called linear pencils) whose entries are linear forms in variables $\bx = \{x_1, x_2, \dots,  x_m\}$ 
over a field\footnote{Our main results will be for the rationals $\Q$ (and will hold for $\R$ and $\C$ as well) but not for finite fields. However many of the questions are interesting for any field.}  $\F$. Any such matrix can be expressed 
as a linear combination of the variables with matrix coefficients 
$$L = x_1 A_1 + x_2 A_2 + \dots + x_m A_m$$ 
where $A_1, A_2 \ldots, A_m$ are  $n \times n$ matrices\footnote{For all purposes we may assume that the matrices $A_i$ are linearly 
independent, namely span a space of matrices of dimension exactly $m$.} over $\F$. 

The main computational problem we will be concerned with in this paper (which we call SINGULAR) is determining whether 
such a symbolic matrix is invertible or not (over the field of fractions in the given variables). This problem has a dual life, depending on whether the variables commute or don't commute. In the {\em commutative} case this problem has an illustrious history and significance. It was first explicitly stated by Edmonds~\cite{Edm67}, and shown to have a randomized polynomial time algorithm by Lovasz~\cite{Lov79}.  The completeness of determinant for arithmetic formulas by Valiant~\cite{Val} means that singularity captures the celebrated Polynomial Identity Testing (PIT) problem, and so in the commutative setting we will refer to it as PIT. The derandomization of the latter probabilistic algorithm for PIT (namely, proving PIT $\in\cP$) became all-important overnight when Kabanets and Impagliazzo~\cite{KabImp} showed it would imply nontrivial arithmetic or Boolean lower bounds well beyond current reach. 

On the other hand, in the {\em non-commutative} case even the meaning of this problem SINGULAR is unclear. It took decades to fully define and understand the related notion of a ``field of fractions'' for non-commutative polynomials, namely the {\em free skew field} over which we (attempt to) invert the matrix\footnote{For now, the reader may think of the elements of this ``free skew  field"  simply as containing all expressions (formulas) built from the variables and constants using the arithmetic operations of addition, multiplication and division (we define it more formally a bit later). We note that while this is syntactically the same definition one can use for commuting variables, the skew field is vastly more complex, and in particular its elements cannot be described canonically as ratios of polynomials.}. But as we will explain below, this non-commutative SINGULAR problem has many intuitive and clean equivalent formulations (some entirely commutative!). It captures a non-commutative version of identity testing for polynomials and rational functions, provides a possible avenue to attack the notorious commutative PIT version, and quite surprisingly, its different formulations arise naturally in diverse areas of mathematics, revealing surprising connections between them. We only note that unlike the commutative PIT, it is not even clear from its definition
that the problem SINGULAR  is decidable. It requires some of the very nontrivial characterizations above, and other important results in commutative algebra, to prove a deterministic  {\em exponential time} upper bound on its complexity (see two very different  proofs in~\cite{CR99,IQS2015}), the best known before this work. No better bound was known even allowing randomness. 

The main result of this paper is a {\em deterministic} polynomial time algorithm for this problem!

\begin{theorem}\label{main}
For non-commutative variables over $\Q$, {\em SINGULAR} $\in \cP$. \\
More specifically, there is a deterministic algorithm which, given $m$ $n\times n$ integer matrices 
$A_1,\dots A_m$ with entries of bit-size $b$, decides in time $\emph{\poly}(n,m,b)$ if the matrix 
$L = \sum_{i=1}^m x_i A_i$ is invertible over the free skew field.
\end{theorem}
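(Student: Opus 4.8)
The plan is to analyze Gurvits' operator scaling algorithm and prove it correctly decides singularity in polynomial time. Write $L=\sum_{i=1}^m x_iA_i$ and attach to the tuple $(A_1,\dots,A_m)$ the completely positive operator $T(X)=\sum_i A_iXA_i^{\dagger}$ on $n\times n$ Hermitian matrices, its trace-adjoint $T^{*}(X)=\sum_i A_i^{\dagger}XA_i$, and its \emph{capacity} $\capac(T)=\inf\{\det T(X):X\succ 0,\ \det X=1\}$. Two ingredients organize everything. The first is structural: $\ncrk(L)=n$, i.e.\ $L$ is invertible over the free skew field, if and only if $(A_i)$ has no \emph{shrunk subspace}, a subspace $V$ with $\dim(\sum_iA_iV)<\dim V$ (Cohn / Fortin--Reutenauer); moreover $\ncrk(L)<n$ forces $\capac(T)=0$ (test $X=P_V+\delta(I-P_V)$, rescale to $\det X=1$, let $\delta\to 0$). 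The second is algorithmic: alternating normalization keeps a tuple $(B_i)$ with $B_i=PA_iQ$ for invertible $P,Q$, along which $\capac$ is non-decreasing.

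Setting up the dynamics: first preprocess by checking whether $\sum_iA_iA_i^{\dagger}$ or $\sum_iA_i^{\dagger}A_i$ is singular; if so, some $u\neq 0$ has $u^{\dagger}A_i=0$ (resp.\ $A_iu=0$) for all $i$, exhibiting a shrunk subspace, so output ``singular''. Otherwise rescale on the right so that $T^{*}(I)=I$, hence $\tr T(I)=\tr T^{*}(I)=n$. Now iterate: a \emph{left step} replaces $B_i\leftarrow T(I)^{-1/2}B_i$, forcing $T(I)=I$ and multiplying $\capac$ by $(\det T(I))^{-1}\ge 1$ (since $T(I)\succeq 0$ has trace $n$); a symmetric \emph{right step} forces $T^{*}(I)=I$ and multiplies $\capac$ by $(\det T^{*}(I))^{-1}\ge 1$. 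Right after a left step, $\capac(T)\le\det T(I)=1$. The progress estimate is pure linear algebra: if $Y\succeq 0$ with $\tr Y=n$ then $\det Y\le\exp(-\|Y-I\|_F^2/(2n))$, by checking $\log\lambda\le(\lambda-1)-\frac{1}{2n}(\lambda-1)^2$ on $(0,n]$ and summing over the eigenvalues of $Y$. Hence, writing $\ds(T):=\|T^{*}(I)-I\|_F^2$ measured immediately after a left step, every right step with $\ds(T)\ge\varepsilon$ multiplies $\capac$ by at least $\exp(\varepsilon/(2n))$.

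The dichotomy, with threshold $\varepsilon:=1/n$. The clean fact behind correctness is: \emph{if $T(I)=I$ and $\ds(T)<1/n$ then $(A_i)$ has no shrunk subspace, hence $\ncrk(L)=n$ and $L$ is non-singular}. Indeed, if $V$ with $\dim V=d$ were shrunk, then $T(P_V)\succeq 0$ has range inside $\sum_iA_iV$, so $\rk T(P_V)\le d-1$; also $P_V\preceq I$ gives $T(P_V)\preceq T(I)=I$, so $\tr T(P_V)\le d-1$; but $\tr T(P_V)=\tr(P_V\,T^{*}(I))\ge d-\sqrt{d\cdot\ds(T)}>d-1$, a contradiction. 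So the algorithm runs $N$ iterations, outputs ``non-singular'' the first time it observes $\ds(T)<1/n$, and outputs ``singular'' if it never does. If $L$ is singular, the observation is impossible, so the output is correct. If $L$ is non-singular, then $\ncrk(L)=n$ and the capacity lower bound below gives $\capac(T_0)\ge\exp(-\poly(n,m,b))$ for the right-normalized operator; since $\capac$ is non-decreasing and $\le 1$, at most $2n\ln(1/\capac(T_0))/\varepsilon=\poly(n,m,b)$ iterations can have $\ds\ge\varepsilon$, so with $N$ one larger some iteration has $\ds<1/n$ and the output is again correct. Each iteration is standard linear algebra, so the running time is $\poly(n,m,b)$.

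The heart of the matter, and the step I expect to be the real obstacle, is the \emph{capacity lower bound}: if $\ncrk(L)=n$ and the $A_i$ are integer matrices of entry bit-size $\le b$, then $\capac(T)\ge\exp(-\poly(n,m,b))$ (prove it first for the integer tuple, then absorb the initial right-normalization, which multiplies $\capac$ by $(\det\sum_iA_i^{\dagger}A_i)^{-1}$, a factor in $[\exp(-\poly),\exp(\poly)]$). The natural plan is to make quantitative the equivalence $\capac(T)>0\iff\ncrk(L)=n$ through the blow-up characterization of non-commutative rank: $\ncrk(L)=n$ iff the $nd\times nd$ matrix $\sum_iA_i\otimes Z_i$ (the $Z_i$ being $d\times d$ matrices of fresh commuting indeterminates) is invertible, and by the IQS / Derksen--Makam regularity bound $d=n-1$ already suffices; hence there are integer matrices $C_i$ of $\poly$ bit-size with $D:=\sum_iA_i\otimes C_i$ invertible, so $\det D\in\Z\setminus\{0\}$, $\|D\|\le\exp(\poly)$, and thus $\sigma_{\min}(D)\ge\exp(-\poly)$, which a quantitative bound relating $\capac(T)$ to the conditioning of such blow-ups turns into the claimed lower bound on $\capac(T)$. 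One loose end remains, numerical: the normalizers $(\cdot)^{-1/2}$ are irrational, so the whole run is carried out with $\poly(n,m,b)$ bits of precision, and one checks that over all $N$ iterations the accumulated rounding never perturbs $\ds(T)$ near the $\Theta(1/n)$ threshold nor the tracked value of $\capac$ by more than an $\exp(\pm\poly)$ factor — using the a priori condition-number bounds implied by the capacity lower bound when $L$ is non-singular, while in the singular case a matrix that becomes (numerically) singular itself certifies a shrunk subspace and triggers the ``singular'' verdict.
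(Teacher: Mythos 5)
Your framework --- Algorithm $G$, capacity as the potential function, the $\ds(T)<1/n$ stopping rule with its trace--Cauchy--Schwarz correctness argument, the per-iteration AM--GM progress estimate, the reduction to an initial capacity lower bound for rank-nondecreasing integer tuples, and the need for a truncated variant --- matches the paper's, and those parts of your argument are sound.

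The genuine gap is precisely in the step you flag as ``the heart of the matter'': you produce integer blow-up witnesses $C_1,\dots,C_m$ with $D:=\sum_iA_i\otimes C_i$ invertible, observe $|\det D|\ge 1$ and $\sigma_{\min}(D)\ge\exp(-\poly)$, and then invoke ``a quantitative bound relating $\capac(T)$ to the conditioning of such blow-ups.'' That lemma \emph{is} the main technical content of the paper (Theorem~\ref{lem:cap-square}), and it does not follow from $\sigma_{\min}(D)$ in any direct way, since $\capac(T_A)$ is an infimum over all $X\succ 0$ with $\det X=1$ while $\sigma_{\min}(D)$ knows nothing about $X$. The missing bridge is a scaling-and-factorization argument: for each $X$ set $\tilde C_i := T_A(X)^{-1/2}A_iX^{1/2}$, so that $\sum_i\tilde C_i\tilde C_i^{\dagger}=I_n$, and use
\begin{equation*}
D \;=\; \bigl(T_A(X)^{1/2}\otimes I_d\bigr)\Bigl(\sum_i\tilde C_i\otimes C_i\Bigr)\bigl(X^{-1/2}\otimes I_d\bigr),
\end{equation*}
whence $1\le|\det D|=\det\bigl(T_A(X)\bigr)^{d/2}\,|\det Y|$ with $Y=\sum_i\tilde C_i\otimes C_i$; one then still has to bound $|\det Y|$ from above using $\sum\tilde C_i\tilde C_i^{\dagger}=I_n$ together with a bound on the entries of the $C_i$, which the paper does via a matrix Cauchy--Schwarz inequality (Lemma~\ref{lemm:matrix_CS}) and Alon's Combinatorial Nullstellensatz (Lemma~\ref{lem:small-SZ}), making $d$ cancel entirely and yielding $\capac(T_A)\ge n^{-2n}$ for any finite $d$. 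Without this chain, the asserted implication from $\sigma_{\min}(D)\ge\exp(-\poly)$ to $\capac(T)\ge\exp(-\poly)$ is a restatement of the theorem, not a proof; and what actually drives the estimate is $|\det D|\ge 1$ funneled through the factorization, not $\sigma_{\min}$. A secondary, historical point: the $d=n-1$ bound of Derksen--Makam/IQS you invoke postdates this paper and is not needed here --- Hilbert's finiteness theorem already gives existence of some finite $d$, which is all the $d$-free estimate requires --- though once the factorization argument is supplied your route via $d=n-1$ would also give a polynomial bound.
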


We now formulate the problem SINGULAR for non-commuting variables in its various contexts and incarnations. We will keep throughout 
an analogy with the commutative case. To avoid confusion, we will switch to calling commuting variables $y$, and keep $x$ to denote 
non-commuting variables. As is common let  $\F[\by]$ denote the algebra of polynomials in commuting variables $\by= \{ y_1, y_2, \dots, y_m \}$  over a field $\F$,  and $\F(\by)$ the field of rational functions over these variables (every element of this field is a ratio 
of two polynomials in $\F[\by]$).  For commuting variables there are many simple equivalent formulations of the problem.

\begin{fact}[PIT]\label{CommutEquiv}
The following are equivalent for a matrix $L = \sum_{i=1}^m y_i A_i$ in commuting variables $\by$. Some of these equivalences only hold for large enough fields $\mathbb{F}$.
\begin{enumerate}[(1)]
\item $L$ is singular (namely not invertible) over $\F(\by)$.
\item $L$ has nontrivial factors in $\F(\by)$. Namely, there exist $r <n$, matrices $K,M$ of dimensions $n\times r , r\times n$ (respectively) with 
entries in $\F(\by)$ such that $L=KM$. The smallest $r$ with this property is called the (commutative) {\em rank} of $L$, and is denoted 
$\emph{\rk} (L)$.
\item The linear space defined by $L$ is {\em singular}, namely contains no non-singular matrix over $\F$. In other words, for every choice of constants $\beta_i \in \F$ the matrix  $\sum_{i=1}^m \beta_i A_i$  is singular over $\F$.
\item $\Det\left(\sum_{i=1}^m y_i A_i \right) \equiv 0$ as a polynomial over $\F[\by]$, where $\Det$ is the determinant polynomial.
\end{enumerate}
\end{fact}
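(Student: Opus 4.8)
The plan is to establish a cycle of implications among the four conditions. The structural heart is the equivalence $(1) \iff (2)$, which is essentially the definition of rank of a matrix over any field: over the field of fractions $\F(\by)$, a matrix $L$ of size $n \times n$ is non-invertible precisely when its rank is some $r < n$, and a rank-$r$ matrix always factors as $L = KM$ with $K$ of size $n \times r$ and $M$ of size $r \times n$ (take $K$ to be a basis for the column space, $M$ the coordinates) — and conversely any such factorization forces $\rk(L) \le r < n$, so $L$ is singular. This direction needs no field-size hypothesis. For $(1) \iff (4)$ I would observe that a matrix over any integral domain is singular over its fraction field iff its determinant is zero in the domain; here the domain is $\F[\by]$, and $\Det(\sum_i y_i A_i)$ is a polynomial (homogeneous of degree $n$) in $\F[\by]$, so $L$ is singular over $\F(\by) \iff \Det(L) \equiv 0$. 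Again no field-size assumption is needed for this one.

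The implication that requires care, and where the field-size hypothesis enters, is $(3) \iff (4)$ (equivalently $(3) \iff (1)$). The direction $(4) \Rightarrow (3)$ is immediate: if $\Det(\sum_i y_i A_i) \equiv 0$ as a polynomial, then substituting any constants $y_i = \beta_i \in \F$ yields $\Det(\sum_i \beta_i A_i) = 0$, so every matrix in the linear space is singular. For the converse $(3) \Rightarrow (4)$, I would argue contrapositively: if $\Det(\sum_i y_i A_i) =: p(\by)$ is a nonzero polynomial of degree $n$, then by the Schwartz–Zippel lemma (or simply the fact that a nonzero polynomial over an infinite field is not identically zero as a function), provided $|\F| > n$ there exists a point $\beta \in \F^m$ with $p(\beta) \neq 0$, giving a non-singular matrix $\sum_i \beta_i A_i$ in the space, contradicting $(3)$. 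This is the only place the ``large enough field'' caveat is genuinely used, and it is the main (mild) obstacle: over small finite fields one can have a nonzero degree-$n$ polynomial vanishing on all of $\F^m$.

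Putting these together: I would present the equivalences $(1)\iff(2)$ and $(1)\iff(4)$ over arbitrary fields, and then close the loop with $(3)\iff(4)$ under the hypothesis $|\F| > n$ (or just $|\F| = \infty$), noting along the way that $(4) \Rightarrow (3)$ holds unconditionally while $(3)\Rightarrow(4)$ is the step needing the field to be large. No step is deep; the only subtlety worth flagging explicitly in the write-up is precisely which equivalences survive over small fields (namely $(1)\iff(2)\iff(4)$) and which do not (the ones involving $(3)$), so that the later algorithmic results — which work over $\Q$ — are not misread as applying verbatim to finite fields.
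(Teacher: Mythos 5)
Your proof is correct and complete. The paper states Fact~\ref{CommutEquiv} as a known background fact and does not supply its own proof, so there is no paper argument to compare against; but your route --- $(1)\iff(2)$ via rank factorization over the field of fractions, $(1)\iff(4)$ via the determinant criterion over the integral domain $\F[\by]$, and $(3)\iff(4)$ by evaluation in one direction and Schwartz--Zippel (requiring $|\F|>n$) in the other --- is exactly the standard one, and you have correctly isolated that the only place the ``large enough field'' caveat is needed is $(3)\Rightarrow(4)$. One minor presentational point: it is worth stating explicitly that conditions $(1)$, $(2)$, $(4)$ are equivalent over every field, and that $(3)$ joins the equivalence class only when $|\F|>n$; you say this in prose at the end, but lifting it into the statement of which implications hold unconditionally would make the write-up cleaner. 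No mathematical gaps.
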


Now consider the non-commutative situation. The algebra of polynomials over non-commutative variables $\bx = \{x_1, x_2, \dots, x_m\}$ 
is familiar enough, and denoted $\Fx$. Its (skew) ``field of fractions''\footnote{Actually there are many, but only one ``universal field of fractions''}, 
which we have yet to define, is denoted $\F\langC \bx \rangC$. As in the commutative case, its elements include all inverses of 
nonzero polynomials in $\Fx$(but as mentioned above, these elements may be far more complex than ratios of polynomials in $\Fx$). One can formally define, as we will, the non-commutative SINGULAR problem, exactly as (1) in Fact~\ref{CommutEquiv} for the commutative case. 
But then most analogs of these equivalences seem to break down if we take them literally. E.g. (3) clearly doesn't make sense (as it will not distinguish commuting from non-commuting elements), and (4) is ill-defined due to the plethora of non-commutative variants of the determinant polynomial (see e.g. \cite{quasidet}). However, the beauty of the non-commutative setting is revealed in many more equivalences, which we collect in the following theorem (which in particular has interesting analogs of (2),(3),(4)). These equivalences illustrate some of the connections to optimization, (commutative) invariant theory, non-commutative algebra and quantum information theory. All of these equivalences will be discussed in this introduction, and elaborated on more formally later.  We note that some of them  are highly nontrivial theorems we shall later reference individually. To compare between the commutative and non-commutative worlds, and appreciate better the equivalences, it is instructive to have in mind the following example.

\begin{example}{\label{skew-symm}} The following $3 \times 3$ skew symmetric matrix is singular over $\F(z,w)$ but not singular over $\F \langC z,w\rangC$.
\[
  \begin{bmatrix}
    0 & z & w \\
    -z & 0 & 1 \\
    -w & -1 & 0
  \end{bmatrix}
\] 
\end{example}

\begin{theorem}[SINGULAR]\label{Equivalences}
The following are equivalent for a matrix $L = \sum_{i=1}^m x_i A_i$ in non-commuting variables $\bx$. Again some of these equivalences only hold for large enough fields $\mathbb{F}$ and some only over reals or complex numbers. But for concreteness, we will fix the field to be $\C$.
\begin{enumerate}
\item $L$ is singular (namely not invertible) over $\Fxx$.
\item $L$ has nontrivial {\em polynomial} factors in $\Fx$. Namely, there exist $r <n$, matrices $K,M$ of dimensions $n\times r , r\times n$ (respectively) with entries in $\Fx$\footnote{Moreover, the polynomial entries of $K,M$ in such a minimal decomposition can actually be taken to be polynomials of {\em degree at most 1}, namely {\em affine} combinations of the variables.}
 such that $L=KM$. The smallest $r$ with this property is called the {\em non-commutative rank} 
(and sometimes {\em inner rank}) of $L$, and is denoted $\emph{\ncrk} (L)$. 
\item The linear space defined by $L$ {\em with matrix coefficients} is {\em singular}.  Namely, for every integer $k$ and every tuple of  $k\times k$ matrices $(B_1, B_2, \ldots, B_m)$ over $\F$,  the matrix  $\sum_{i=1}^m B_i \otimes A_i$  is singular over $\F$.
\item For every integer $k$ we have  $\Det\left(\sum_{i=1}^m X_i \otimes A_i\right) \equiv 0$, where the $X_i$ are 
$k\times k$ matrices of {\em commutative} variables,
 $\Det$ is still the commutative determinant polynomial, acting here on $kn \times kn$ matrices.
\item $L$ has a {\em shrunk subspace}. Namely, there are subspaces $U,W$ of $\F^n$, with $\dim W < \dim U$, such that for all $i$ $A_iU \subseteq W$. 
\item $L$ has a {\em Hall-blocker} (in analogy with the Hall theorem for perfect matchings). Namely, there exist nonsingular matrices $B,C$ over $\F$ such that the symbolic matrix $BLC$ has an all-zeros minor of size  $i \times j$, with $i+j >n$.
\item The {\em completely positive} quantum operator (or map) associated with $L$ is rank-decreasing. Namely, there exist a positive semi-definite matrix $P$ such that $\rk (\sum_{i=1}^m A_iPA_i^\dagger) < \emph{\rk}(P)$.\footnote{Here $A^{\dagger}$ denotes the conjugate transpose of a complex matrix $A$.}
\item The tuple of matrices $(A_1, A_2, \ldots, A_m)$ is in the {\em null-cone} of the Left-Right action of $(SL_n(\F))^2$ on $m$ $n\times n$ matrices, namely they evaluate to zero on every non-constant homogeneous \textbf{invariant} polynomial under this action\footnote{The Left-Right action and its invariant polynomials are defined as follows. Consider  $mn^2$ commuting variables which are arranged in $m$ matrices $(Y_1, Y_2, \ldots Y_m)$, and consider polynomials in these variables. Every pair $B,C$ of determinant 1 matrices over $\F$ defines a linear map of these variables by sending this tuple to $(BY_1C, BY_2C, \ldots BY_mC)$.  A polynomial in these variables is {\em invariant} if it remains unchanged by this action for every such pair $B,C$.}. 
\end{enumerate}
\end{theorem}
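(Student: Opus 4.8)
I would prove the eight statements form a single equivalence class not by one argument but by assembling a web of implications, most of them elementary linear algebra, organized around condition (5) — the existence of a shrunk subspace — as a hub, with one genuinely deep step and two citations of fundamental theorems.

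\emph{Elementary links to (5).} Condition (6) is just (5) rewritten: after invertible changes of basis $B, C$ turning $U$ and $W$ into coordinate subspaces, the inclusions $A_i U \subseteq W$ become a common all-zeros block of $BLC$ of size $(n - \dim W) \times \dim U$, and the condition $i + j > n$ there reads $\dim U > \dim W$; this is reversible. For (5) $\iff$ (7) (over $\R$ or $\C$, as allowed): writing a positive semidefinite $P$ as $QQ^\dagger$, the matrix $\sum_i A_i P A_i^\dagger = \sum_i (A_i Q)(A_i Q)^\dagger$ is positive semidefinite with image $\sum_i A_i \cdot \mathrm{im}(P)$, so the completely positive map associated with $L$ is rank-decreasing iff $\dim(\sum_i A_i U) < \dim U$ for some subspace $U = \mathrm{im}(P)$, i.e. iff $(U, \sum_i A_i U)$ is a shrunk subspace. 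For (5) $\implies$ (3): if $A_i U \subseteq W$ with $\dim W < \dim U$, then for every $k$ and all $B_1, \dots, B_m \in \F^{k \times k}$ the matrix $\sum_i B_i \otimes A_i$ maps $\F^k \otimes U$ into the strictly smaller space $\F^k \otimes W$, hence is singular. And (3) $\iff$ (4) over a large enough field, since $\Det(\sum_i X_i \otimes A_i)$ is the zero polynomial iff it vanishes under every substitution of matrices for the $X_i$.

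\emph{Structural inputs.} The equivalence (1) $\iff$ (2) is Cohn's theory of the free skew field: $\FX$ is a free ideal ring, hence a semifir, so it has a universal field of fractions $\Fxx$ into which it embeds fully, and a square matrix over $\FX$ is invertible over $\Fxx$ iff it is full (inner rank equal to its size); negating yields the factorization statement (2), the affine refinement coming from linearizing a minimal factorization. The equivalence (2) $\iff$ (5) is the Fortin--Reutenauer identity $\ncrk(L) = n - \max\{\dim U - \dim W : A_i U \subseteq W \text{ for all } i\}$, whose easy inequality is the block structure used above and whose reverse inequality — that a minimal factorization of a \emph{linear} matrix is forced by such an obstruction — is the substantive part I would cite. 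Finally (4) $\iff$ (8) comes from the first fundamental theorem of invariant theory for the left--right action of $(SL_n(\F))^2$: the invariant ring is spanned by the polynomials $\Det(\sum_i T_i \otimes Y_i)$ over all $d$ and all $T_i \in \F^{d \times d}$ (Derksen--Makam, after Procesi, Razmyslov, Domokos, Derksen--Weyman), so the tuple $(A_1, \dots, A_m)$ is in the null cone iff every such invariant vanishes at it, which — letting the entries of the $T_i$ range freely, over a large field — is precisely (4). (The FFT and this last step want characteristic $0$ or large characteristic; the case of interest is $\F = \Q$.)

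\emph{The main obstacle.} All of the above gives only the easy direction of the equivalence — a shrunk subspace makes every commutative blow-up singular — and the real work lies in the converse, (4) (equivalently (3)) $\implies$ (5): if $L$ is full over $\FX$, then $\Det(\sum_i X_i \otimes A_i) \not\equiv 0$ for some finite $k$. Equivalently, the regularized commutative rank $\lim_{k \to \infty} \tfrac{1}{k} \max_{B_i \in \F^{k \times k}} \rk(\sum_i B_i \otimes A_i)$ equals $\ncrk(L)$, and in particular equals $n$ when $L$ is full. This is the genuine bridge between the noncommutative and commutative sides; it is essentially in Cohn--Reutenauer~\cite{CR99}, with effective versions bounding the needed $k$ due to Ivanyos--Qiao--Subrahmanyam~\cite{IQS2015} and to Derksen--Makam. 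Granting it, the implications close up: $(1) \iff (2) \iff (5) \iff (6) \iff (7)$, $(5) \implies (3) \iff (4) \implies (5)$, and $(4) \iff (8)$.
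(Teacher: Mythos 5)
Your proof is correct and takes essentially the same route the paper does: the paper likewise presents Theorem~\ref{Equivalences} not as a single argument but as an assembly of cited results and short linear-algebra bridges, namely Cohn for (1)$\Leftrightarrow$(2), Cohn and Fortin--Reutenauer for (2)$\Leftrightarrow$(5)$\Leftrightarrow$(6), Proposition~\ref{shrunk=decreasing} for (5)$\Leftrightarrow$(7), the first fundamental theorem (Theorem~\ref{invariant-left-right}) for (4)$\Leftrightarrow$(8), and the easy direction (5)$\Rightarrow$(3)$\Leftrightarrow$(4). The only place you diverge is which deep input closes the cycle: you cite the Cohn--Reutenauer regularized-rank statement directly as (4)$\Rightarrow$(5), whereas the paper routes (4)$\Leftrightarrow$(8)$\Leftrightarrow$(5) through the null-cone characterization of Theorem~\ref{null-cone-shrunk} (a tuple lies in the null-cone iff it admits a shrunk subspace, from \cite{BD06, ANS10}) together with the FFT; these are equivalent bridges, so the difference is one of attribution rather than substance.
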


In the rest of this introduction we will try to explain how and why the problem SINGULAR arises in these different contexts, how are they related. We will discuss the algorithm that solves SINGULAR (which already appears in~\cite{gurvits2004}), its complexity analysis (which is the main result of this paper), and how they arise from these (and other!) sources. We will also discuss the  extension of this algorithm to non-commutative rank computation, and the implications of these algorithms to the various settings.
We also highlight the recurring analogs and connections between the commutative and non-commutative worlds. There are probably many ways to weave this meandering story due to the multiple connections between the topics; we chose one (some accounts of subsets of these connections appear e.g. in~\cite{gurvits2004,HW14,IQS2015}). Note that most descriptions will be informal or semi-formal; formal definitions for everything that is actually needed will be given in the technical sections, and for the rest we provide references.

\subsection{The Free Skew Field}


Polynomials, both in commuting and non-commuting variables, can always be uniquely defined by their coefficients. And so, 
while their computational complexity is of course interesting, it is not part of their description. 
For rational functions this persists in the commutative setting, as they are simply ratios of polynomials. It is perhaps amusing that in the 
non-commutative setting, the first definition of the {\em free skew field} of rational functions was computational, namely one whose 
objects are {\em described} by their computation via a sequence of arithmetic operations. Both this description, and the subsequent discovery of a  syntactic representation play an important role in our story. 


Books on history and construction of the skew field include~\cite{Cohn-SF, Rowen}, 
and  a  thorough discussion from the computational perspective is given in~\cite{HW14}. Here we will be relatively brief, highlighting the key 
roles played by matrices and symbolic matrices in these constructions of the free skew field. 

Fix a field $\F$ and a set of non-commuting variables $\bx= \{ x_1, x_2,\dots, x_m \}$. As before, let $\Fx$ denote the the (free) algebra of 
non-commutative polynomials. A major objective is to construct a (minimal)  ``skew field of fractions'', which contains this algebra and 
is a division ring, namely every non-zero element is invertible. As it happens, there can be many non-isomorphic ways of doing that, but 
only one that is \emph{universal}\footnote{This is a technical term which we will not define here.}, the {\em free skew field} $\Fxx$, first 
defined by Amitsur~\cite{amitsur66} as follows (see a more detailed and precise description in~\cite{vinnikov10}).

A {\em rational expression} $r(\bx)$ is simply a formula whose inputs are variables and constants from $\F$, over the standard 
arithmetic operations of addition, multiplication and division, e.g. $x_1^{-1}- (x_1 + x_2)^{-1}$ and $\left(x_1+ \left(x_1x_2^{-1} \right)x_1\right)^{-1}$ (note that inversions can be arbitrarily nested\footnote{In general {\em inversion height}, the minimum amount of nesting needed, can be arbitrarily high. This is an important theorem of Reutenauer \cite{Reut96}. However, in the example above the nested inversion can be eliminated, and in fact the two expressions are equal (a simple fact which the reader might try to prove)! This equality is called Hua's identity \cite{Hua}, underlying the fundamental theorem of projective geometry}).  Essentially, the elements of the field will consist of these expressions, modulo an equivalence relation that we now define. Note that substituting 
$d \times d$ matrices for the variables of an expression $r$ it becomes  a (partial) function\footnote{when the expression attempts to invert a singular 
matrix it is undefined on that input. The domain of an expression is simply all input tuples on which it is defined}. Two expressions $r,s$ are {\em equivalent} if they agree (output the same matrix) for {\em all} inputs for which 
they are both defined (where we go over  all tuples of matrices from $M_d(\F)$ for {\em all} finite $d$).

\begin{theorem}[\cite{amitsur66}]\label{free-skew-field1}
The rational expressions under the above equivalence relation comprise the (universal) free skew field $\Fxx$.
\end{theorem}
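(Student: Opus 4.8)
The plan is to construct the skew field in three stages — first as an abstract quotient $\cU$ of the non-degenerate rational expressions, then verifying that $\cU$ is a division ring, and finally identifying $\cU$ with the universal field of fractions of $\Fx$ guaranteed by Cohn's theory of free ideal rings. The recurring theme is that one must understand the equivalence relation precisely enough to know that divisions are ``almost always'' legal.

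First I would set up the bookkeeping. For a rational expression $r$ and a dimension $d$, let $\mathrm{Dom}_d(r)\subseteq M_d(\F)^m$ be the set of tuples on which no inversion occurring in $r$ is applied to a singular matrix; this is a Zariski-open set. By induction on the number of inversions, call $r$ \emph{non-degenerate} if every subexpression of the form $s^{-1}$ has $s$ non-degenerate and $s\not\equiv 0$, and prove simultaneously that (i) $\mathrm{Dom}_d(r)\neq\varnothing$ for all large $d$ when $r$ is non-degenerate, and (ii) the operations $+,\cdot,{}^{-1}$ descend to equivalence classes of non-degenerate expressions, so that $\cU$ is a unital associative $\F$-algebra. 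The induction is best organized via Cohn's linearization: every non-degenerate rational expression is equivalent to an entry of $L^{-1}$ for a full linear symbolic matrix $L=\sum_i x_iA_i$ (this is exactly the device behind the ``degree $1$'' factorizations noted in item (2) of Theorem~\ref{Equivalences}), which reduces every question about expressions to the singularity theory of linear symbolic matrices and keeps the induction finite.

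To see that $\cU$ is a division ring one needs the sharper statement that a non-degenerate $r$ with $r\not\equiv 0$ is not merely nonzero at some substitution but \emph{invertible} at one, possibly after enlarging $d$ and choosing a generic point. The cleanest route is Amitsur's observation that the $\F$-subalgebra of $M_d(\F(\xi))$ generated by $m$ generic $d\times d$ matrices is a domain, hence — being a domain satisfying a polynomial identity — an Ore domain with a division ring of fractions $\mathcal{D}_d$; a non-degenerate expression evaluated at the generic matrices lands in $\mathcal{D}_d$, because ``never divide by a singular matrix'' is exactly ``never divide by an element that is $0$ in $\mathcal{D}_d$'', and in a division ring nonzero elements are invertible. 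The division-ring axioms for $\cU$ then follow formally: if $[r]\neq[0]$ its generic evaluation is a nonzero, hence invertible, element of $\mathcal{D}_d$ for large $d$, so $r^{-1}$ is non-degenerate and $[r]\cdot[r^{-1}]=[1]$, the two factors agreeing wherever $r$ is invertible. The inclusion $\Fx\hookrightarrow\cU$ is clear (a nonzero polynomial has nonzero generic evaluation once $d$ is large — here the hypothesis that $\F$ is sufficiently large enters, in line with the caveats of Fact~\ref{CommutEquiv} and Theorem~\ref{Equivalences}), and $\cU$ is generated by $\Fx$ as a skew field since every element is the class of a formula built from $\Fx$ by the three operations.

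Finally, universality. I would invoke Cohn's theorem that $\Fx$ is a free ideal ring, hence a Sylvester domain, hence has a universal field of fractions $\cU_{\mathrm{Cohn}}$, realized as the localization of $\Fx$ at its \emph{full} matrices (see \cite{Cohn-SF,Rowen}), and then identify $\cU$ with $\cU_{\mathrm{Cohn}}$. One direction: matrix evaluation factors through a specialization of $\cU_{\mathrm{Cohn}}$, so expressions equal in $\cU_{\mathrm{Cohn}}$ are $\equiv$-equivalent. The other direction: two expressions unequal in $\cU_{\mathrm{Cohn}}$ have a nonzero, hence invertible, difference, which is therefore represented by a full linear matrix over $\Fx$; and a full linear matrix over $\Fx$ is nonsingular under some matrix substitution. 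This last coincidence — ``full over $\Fx$'' $\iff$ ``nonsingular on some matrix substitution'', i.e.\ the equivalences $(1)\iff(2)\iff(3)$ of Theorem~\ref{Equivalences} — is the real content, and, together with the ``invertible, not merely nonzero'' amplification used above, it is the main obstacle: both are assertions about the inner rank of symbolic matrices whose proofs require genuine commutative-algebraic input, which is precisely why this elementary-to-state theorem needed the substantial work of Amitsur and, later, of Cohn. Everything else — the congruence checks, the ring and division-ring axioms, the fact that $\cU$ is epic over $\Fx$ — is bookkeeping layered on top of these two facts.
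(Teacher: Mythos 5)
The paper does not prove this theorem; it is stated as a citation to Amitsur~\cite{amitsur66} (with a pointer to \cite{vinnikov10} for a detailed treatment) and then used as the \emph{definition} of $\Fxx$ throughout. So there is no in-paper argument to compare against, and I assess your proposal on its own merits.

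The architecture you lay out is sound and is essentially the historical one: Amitsur's generic-matrix argument supplies the division-ring structure, Cohn's theory of free ideal rings and universal localization at full matrices supplies universality, and the bridge between the two is the equivalence ``full over $\Fx$ $\iff$ nonsingular under some matrix substitution,'' i.e.\ the $(1)\iff(2)\iff(3)$ content of Theorem~\ref{Equivalences}. You also correctly flag the role of the infinite-field hypothesis and the crucial upgrade from ``nonzero at a substitution'' to ``invertible at a substitution.'' Those are exactly the two places where real work is needed.

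One structural point is more than cosmetic, though. The simultaneous induction in your opening paragraph cannot be closed as stated. To verify (i), that $\mathrm{Dom}_d(r)\ne\varnothing$ when $r=s^{-1}$, you need $s$ to be \emph{nonsingular} at some tuple, whereas your recursive definition of non-degeneracy only records $s\not\equiv 0$, i.e.\ that $s$ is nonzero at some tuple --- a weaker condition a priori, since a nonzero matrix can be singular. The implication ``nonzero somewhere $\Rightarrow$ nonsingular somewhere (after enlarging $d$)'' is precisely the ``sharper statement'' you defer to the next paragraph, and its proof via the domain property of the generic matrix algebra must therefore be woven into the inductive hypothesis rather than proved afterward. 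Concretely, strengthen (i) to: for all large $d$ the generic evaluation of $r$ is defined in $\mathcal D_d$, and if $r\not\equiv 0$ it is nonzero there; since $\mathcal D_d$ is a division ring embedding in $d\times d$ matrices over a field extension of $\F(\xi)$, ``nonzero in $\mathcal D_d$'' forces nonvanishing determinant, and the induction then closes. The fix is mechanical given the tools you already invoke, but as written the argument is circular.
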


\noindent One important complication (or perhaps a fountain of interesting problems) in the non-commutative setting is that a non-zero rational 
expression (or even a polynomial expression, namely one with no inverses) can be an identically zero function over the entire $M_d(\F)$ for some finite $d$. 
The simple example $x_1 x_2 - x_2 x_1$ is a {\em polynomial identity} for $d=1$, and more generally the {\em standard polynomial} 
$\sum_{\pi \in S_{2d}} (-1)^{\sgn(\pi)} x_{\pi(1)}x_{\pi(2)}\cdots x_{\pi(2d)}$ is a polynomial identity for $M_d(\F)$. By the celebrated 
Amitsur-Levitsky theorem, this example is tight in terms of the relation between the degree of the polynomial identity and dimension 
of matrices.

\begin{theorem}[\cite{AmiLev}]\label{AL}
For a fixed $d > \lfloor k/2 \rfloor$, a degree $k$ polynomial cannot vanish on all inputs in $M_d(\F)$.
\end{theorem}

This theorem immediately implies a polynomial identity test in probabilistic polynomial time (just like in the commutative case) mentioned in the next subsection - simply plug random matrices of appropriate size for the variables. However, for rational expressions, which is our 
main interest here, such upper bounds on the sufficient dimension to exclude {\em rational identities} are much harder to prove, and what is known is much weaker. The best bounds (and only for some fields) are exponential, which seem to only provide probabilistic exponential time rational identity tests. The last subsection of this introduction shows how these exponential dimension bounds arise and obtained from invariant theory, and how we use them to nonetheless derive a {\em deterministic, polynomial} time identity test promised in our main theorem.  

A second construction of the free skew field which is somehow more concrete was developed by Cohn. In the notation we have already established of symbolic matrices Cohn proved:

\begin{theorem}[\cite{cohn1971-sf}]\label{cohn71}
Every element of the free skew field $\Fxx$ is an entry of the inverse of some (invertible) symbolic matrix whose entries are polynomials in $\Fx$.
\end{theorem}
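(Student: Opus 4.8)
The plan is to show that the elements of the free skew field obtainable as entries of inverses of invertible symbolic matrices over $\Fx$ form a sub-skew-field containing $\Fx$, and then invoke the universality (or the minimality) of $\Fxx$ to conclude that this sub-skew-field is everything. So first I would fix notation: call an element $f \in \Fxx$ \emph{realizable} if there is an invertible symbolic matrix $M$ with polynomial entries and indices $i,j$ with $f = (M^{-1})_{ij}$; the goal is that every element is realizable. It is convenient to first record the companion fact that $f$ is realizable if and only if there is a linear pencil $L = A_0 + x_1 A_1 + \cdots + x_m A_m$ (entries affine in the variables, obtained by the standard trick of introducing auxiliary variables and using Schur complements to reduce degree) and vectors $u, v$ over $\F$ with $f = u^\top L^{-1} v$; this ``linearization'' step is routine but worth stating since it makes the closure arguments clean.

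The heart of the argument is the closure properties. First, $\Fx \subseteq$ (realizable elements): a single polynomial $p$ is the $(1,2)$-entry of the inverse of $\begin{bmatrix} 1 & -p \\ 0 & 1 \end{bmatrix}$, and constants and variables are trivially realizable. Second, realizable elements are closed under addition, multiplication, and inversion. For addition and multiplication one takes block-diagonal or block-triangular combinations of the two matrices $M, M'$ realizing $f, g$: e.g. for $f+g$ one uses a $2\times 2$ block matrix with $M$ and $M'$ on the diagonal and a coupling block, and computes the relevant entry of the inverse via the block-inverse formula; for $fg$ a similar block-triangular construction works. For inversion, if $f = (M^{-1})_{ij}$ is nonzero, one augments $M$ by one extra row and column so that a Schur-complement computation produces $f^{-1}$ as an entry of the inverse of the augmented matrix; the point is exactly that Schur complements of (invertible) matrices with polynomial entries, and their inverses, stay in the realizable class. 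This shows the realizable elements form a subring closed under inverses of nonzero elements, i.e. a sub-division-ring of $\Fxx$ containing $\Fx$.

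The last step is to conclude that this sub-division-ring is all of $\Fxx$. Here I would use the defining property of $\Fxx$ as the \emph{universal} field of fractions of $\Fx$ (Theorem 1.7 / Amitsur's characterization): any division ring generated by (an isomorphic copy of) $\Fx$ inside a field of fractions of $\Fx$ must be the whole thing, because $\Fxx$ is generated as a skew field by $\Fx$ and has no proper sub-skew-field containing $\Fx$. Alternatively, and more self-containedly, one argues by induction on the structure of a rational expression $r(\bx)$ representing a given element: the base cases (variables, constants) and the inductive steps ($r + s$, $rs$, $r^{-1}$) are precisely the closure properties just established, so every rational expression evaluates to a realizable element, and by Theorem 1.7 every element of $\Fxx$ is represented by such an expression.

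The main obstacle I anticipate is not conceptual but bookkeeping: making the block constructions for sum, product, and especially inverse genuinely correct, i.e. verifying that the auxiliary matrices one writes down are actually invertible over $\Fxx$ and that the block-inverse/Schur-complement identities pick out exactly the entry one wants. In particular the inversion step is delicate: one must ensure that augmenting $M$ does not destroy invertibility and that the resulting Schur complement is the scalar $f$ (up to sign/units), which requires choosing the augmentation using the specific row and column indices $i, j$. The degree-reduction (linearization) lemma, while standard, also needs care to guarantee that the pencil one produces is invertible precisely when the original expression is defined. None of these steps is hard in isolation, but getting the blocks to line up simultaneously for all three operations is where the real work lies.
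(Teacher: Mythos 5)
The paper states this theorem as a citation to Cohn~\cite{cohn1971-sf} and does not include a proof, so there is no in-paper argument to compare against; your proposal is, however, a correct outline of the standard ``linear realization'' proof, which is essentially Cohn's own. The realizable elements form a sub-division-ring of $\Fxx$ containing $\Fx$, and since $\Fxx$ is by construction a \emph{field of fractions} of $\Fx$ (i.e.\ generated by $\Fx$ as a skew field --- you do not even need universality for this step, only that it is generated by $\Fx$), this sub-division-ring must be all of $\Fxx$; alternatively, as you note, one can induct on Amitsur's rational expressions. Two small corrections to the bookkeeping. First, for addition no coupling block is needed: in the normalized form $f = u^{\top} L^{-1} v$, $g = w^{\top} K^{-1} z$ one takes the block-diagonal pencil $L \oplus K$ with stacked border vectors $(u;w)$ and $(v;z)$; it is for the \emph{product} $fg$ that the coupling block appears, in $\begin{pmatrix} L & -v w^{\top} \\ 0 & K \end{pmatrix}$. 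Second, it is cleanest to carry out all three closure arguments in the $u^{\top}L^{-1}v$ normalization (this subsumes ``entry of an inverse'' via $u=e_i$, $v=e_j$, and conversely re-embeds into a single entry as the top-right entry of the bordered matrix $\begin{pmatrix} 1 & u^{\top} & 0 \\ 0 & L & v \\ 0 & 0 & 1 \end{pmatrix}$); then the inversion step you flagged as delicate becomes a one-line Schur-complement identity: the $(2,2)$-block of $\begin{pmatrix} L & v \\ -u^{\top} & 0 \end{pmatrix}^{-1}$ equals $(u^{\top}L^{-1}v)^{-1}$, and that bordered matrix is invertible over $\Fxx$ precisely when $u^{\top}L^{-1}v \neq 0$.
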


For this definition not to be self-referential (and other reasons) Cohn proved some of the important characterizations of invertible matrices, 
including items (2) and (5) in Theorem~\ref{Equivalences}. It is clear that (each entry of) the inverse of such a matrix can be given by 
a rational expression, and the question of whether a matrix is invertible again arises naturally, and it turns out to capture rational identity testing.

\subsection{Word Problems and Identity Tests}

Word problems and their complexity are central throughout mathematics, arising whenever mathematical objects in a certain class have several representations. In such cases, a basic problem is whether two such representations describe the same object. Often, indeed, some problems of this form have served as early examples of decidable and undecidable problems\footnote{E.g. deciding if two knots diagrams describe the same knot was proved decidable by Haken~\cite{Hak}, and deciding if two presentations with generators and relations describe the same group was proved undecidable by Rabin~\cite{Rab}}. 

For us, the objects in question are polynomials and rational functions in commuting and non-commuting variables. Their standard representations will be arithmetic formulas and circuits, which take the variables (and constants in $\F$) as inputs, and use plus, times and {\em division} gates. An excellent exposition of arithmetic complexity models and the state-of-art in the subject is \cite{Shpilka-Yehudayoff} (which discusses at length the polynomial identity testing problem\footnote{while this paper focuses on identity testing, we note that our interest is partly (and indirectly) motivated by the more basic problem of proving lower bounds for non-commutative circuits. We refer the reader to the papers~\cite{nisan1991,HWY10,HWY11,limaye2015} and their references for existing lower bounds on weaker models, some completeness results, and possible approaches to proving stronger lower bounds}).
We stress that we will specifically allow division gates in our computational models, both because we want to consider rational functions, and because unlike for computing polynomials, in the non-commutative case there is no known analog of Strassen's theorem~\cite{Str} that divisions can be efficiently eliminated. This issue and partial results, many relevant to this paper, are discussed in~\cite{HW14}. The word problem in our arithmetic context is equivalent to an Identity Test, checking if a given representation of a polynomial or rational function describes the trivial element, the zero polynomial. 

As is well known, the main reason why PIT is so important in the commutative setting is that it captures the polynomial and rational function identity test for formulas, as proved by Valiant~\cite{Val}. Combining it with the equivalences of Fact~\ref{CommutEquiv}, we have

\begin{theorem}[\cite{Val}]\label{PIT-complete}
There is an efficient algorithm which converts every arithmetic formula $\phi (\by) $ in {\em commuting} variables $\by$ of size $s$ to a 
symbolic matrix $L_\phi$ of size $\emph{\poly}(s)$, such that $\phi$ is identically zero if and only 
if $L_\phi \in$ {\em PIT} (i.e., $L_\phi$ is singular).
\end{theorem}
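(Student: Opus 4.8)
The plan is to realize $f_\phi$, the rational function computed by $\phi$, as a single entry of the inverse of an affine symbolic matrix, and then to append one border row and column so that \emph{singularity} of the enlarged matrix becomes equivalent to \emph{vanishing} of $f_\phi$. The main step will be a structural induction on the formula tree establishing the following invariant: for every sub-formula $\psi$ of $\phi$, computing $f_\psi \in \F(\by)$, one can build in time $\poly(|\psi|)$ an $N_\psi\times N_\psi$ symbolic matrix $M_\psi$ whose entries are \emph{affine} forms in $\by$, together with two \emph{constant} $0/1$-vectors $b_\psi,c_\psi$, such that $N_\psi = O(|\psi|)$, $\det M_\psi \not\equiv 0$ in $\F[\by]$, and $c_\psi^\top M_\psi^{-1} b_\psi = f_\psi$. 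Carrying the condition $\det M_\psi\not\equiv 0$ along the induction is precisely what makes the division case work.

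For a leaf ($\psi = y_i$, or a constant $\alpha$) I would take a $2\times 2$ triangular matrix, e.g. $\left(\begin{smallmatrix} 1 & 0 \\ -y_i & 1\end{smallmatrix}\right)$ with $b = e_1$, $c = e_2$, so that $c^\top M^{-1} b = y_i$. The inductive step uses three gadgets. For $f = g+h$: set $M = \left(\begin{smallmatrix} M_g & 0 \\ 0 & M_h\end{smallmatrix}\right)$, $b = (b_g;b_h)$, $c = (c_g;c_h)$, which gives $c^\top M^{-1}b = g+h$. For $f = g\cdot h$: set $M = \left(\begin{smallmatrix} M_g & -b_g c_h^\top \\ 0 & M_h\end{smallmatrix}\right)$, $b = (0;b_h)$, $c = (c_g;0)$; the top-right block of $M^{-1}$ is $M_g^{-1}b_g c_h^\top M_h^{-1}$, so $c^\top M^{-1}b = (c_g^\top M_g^{-1}b_g)(c_h^\top M_h^{-1}b_h) = gh$. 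For $f = g^{-1}$ (which is meaningful exactly because a legal formula never divides by an identically-zero expression, so $g\not\equiv 0$): set $M = \left(\begin{smallmatrix} M_g & b_g \\ -c_g^\top & 0\end{smallmatrix}\right)$; its Schur complement relative to the invertible block $M_g$ is the scalar $g$, whence $\det M = \det(M_g)\cdot g\not\equiv 0$ and the bottom-right entry of $M^{-1}$ equals $g^{-1}$, so take $b = c = e_{N_g+1}$. In each gadget $\det M\not\equiv 0$ (a block-triangular product, or the Schur-complement identity) and $N$ grows additively, so the invariant is preserved and $N_\phi = O(|\phi|)$.

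With $(M,b,c) := (M_\phi,b_\phi,c_\phi)$ in hand I would finally set
\[
  L_\phi \;=\; \begin{pmatrix} M & b \\ c^\top & 0\end{pmatrix},
\]
an affine symbolic matrix of size $O(|\phi|)$. Cofactor expansion along the last row and column (equivalently, the Schur-complement formula, legitimate since $M$ is invertible over $\F(\by)$) gives $\det L_\phi = -\det(M)\,f_\phi$. Since $\det M\not\equiv 0$, this yields: $L_\phi$ is singular over $\F(\by)$ $\iff$ $\det L_\phi \equiv 0$ $\iff$ $f_\phi\equiv 0$, which is exactly the claimed equivalence ``$f_\phi\equiv 0 \iff L_\phi$ is singular'' (and Fact~\ref{CommutEquiv} supplies the other phrasings of PIT). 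Two minor points remain: $L_\phi$ is affine rather than homogeneous, but Fact~\ref{CommutEquiv} and the notion of singularity apply verbatim to affine matrices (alternatively, homogenize with a fresh variable fixed to $1$); and the construction is plainly a polynomial-time (indeed linear-time) bottom-up pass over $\phi$.

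I expect the only genuinely delicate point to be the division gadget: it must simultaneously keep the matrix affine, keep it invertible over $\F(\by)$, and extract the \emph{reciprocal} of an already-realized entry — and it is this gadget that forces the induction to track invertibility rather than just the computed value and to use that $\phi$ never divides by zero. An alternative route — pushing all divisions to the top of a balanced formula and reducing to the division-free determinant-completeness of Valiant~\cite{Val} — also works, but it needs Brent-style depth reduction for division formulas and only gives a polynomial, rather than linear, size blow-up; the linearization above avoids both. Everything else is routine block linear algebra and bookkeeping of dimensions.
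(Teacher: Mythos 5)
Your construction is correct: the invariants ($M_\psi$ affine, $\det M_\psi\not\equiv0$, $c_\psi^\top M_\psi^{-1}b_\psi=f_\psi$) are preserved by each of the four gadgets, the bordered matrix satisfies $\det L_\phi = -\det(M)\,f_\phi$, and the homogenization and size bookkeeping are handled correctly. Note, however, that the paper does not actually supply a proof of this statement — Theorem~\ref{PIT-complete} is cited to Valiant~\cite{Val} and deferred — so there is no in-paper argument to compare against directly. What you wrote is not Valiant's route. Valiant's completeness result realizes a (division-free) formula as the determinant of an affine matrix via a signed path-sum argument, and to cover formulas \emph{with} division one would then invoke Strassen's division elimination, exactly the ``alternative route'' you flag at the end. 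Your argument is instead the Cohn/Malcolmson-style bordered linearization: the same machinery the paper uses for the non-commutative analogue (Theorem~\ref{RIT-compete}) and that underlies Higman's trick in the appendix. This buys you two things over the Valiant-plus-Strassen route: divisions are handled natively (no depth reduction, no degree bounds needed), and the blow-up is linear rather than merely polynomial. Since the paper explicitly contrasts the commutative (Valiant) and non-commutative (Cohn) completeness theorems, it is worth being aware that your proof of the commutative statement is really the commutative specialization of the Cohn-style construction — which is a perfectly valid, and arguably cleaner, way to prove Theorem~\ref{PIT-complete}, but not what the citation~\cite{Val} points to.
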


Theorem~\ref{cohn71}  of the previous subsection, showing that in the non-commutative setting as well SINGULAR captures polynomial and rational function identity test, is the analog Valiant's completeness theorem. It was proved even earlier by Cohn~
(see also Malcolmson's version~\cite{malcolmson}), using similar linear algebraic ideas. Here is a more precise statement which gives the analogy.

\begin{theorem}[\cite{cohn1971-sf} ]\label{RIT-compete}
There is an efficient algorithm which converts every arithmetic formula $\phi (\bx)$ in {\em non-commuting} variables $\bx$ of size $s$ to a symbolic matrix $L_\phi$ of size $\emph{\poly}(s)$, such that the rational expression computed by $\phi$ is identically zero if and only if $L_\phi \in$ {\em SINGULAR}.
\end{theorem}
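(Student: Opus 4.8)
=== PROOF PROPOSAL ===

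\medskip

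\noindent\textbf{Proof proposal for Theorem~\ref{RIT-compete}.}

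The plan is to follow the classical linearization-by-matrices strategy, which is the non-commutative analog of Valiant's construction, and is essentially due to Cohn and Malcolmson. The central object is a recursively-built symbolic matrix $L_\phi$ associated to a formula $\phi$, together with a ``solution vector'' encoding the value computed by $\phi$; the zero-test for the formula will be reduced to a singularity test for $L_\phi$. First I would set up the right invariant to maintain along the recursion: for each subformula $\psi$ computing a rational expression $f_\psi$, I want a symbolic matrix $M_\psi$ (with entries affine in $\bx$) that is invertible over $\Fxx$ and such that one designated entry of $M_\psi^{-1}$ equals $f_\psi$ --- equivalently, I want a linear system $M_\psi \bu = \be$ (with $\be$ a fixed coordinate vector) whose first coordinate of the solution is $f_\psi$. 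The size of $M_\psi$ should be $O(\size(\psi))$.

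\medskip

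\noindent The induction proceeds over the structure of $\phi$. Base case: a variable $x_i$ or a constant $c$ is represented by the $1\times 1$ (or $2\times 2$) matrix exhibiting it directly as an entry of an inverse, e.g. $x_i$ is an entry of the inverse of $\begin{bmatrix} x_i^{-1}\end{bmatrix}$-type gadget --- more carefully, one uses the standard $2\times 2$ trick $\begin{bmatrix} 1 & -x_i \\ 0 & 1\end{bmatrix}^{-1} = \begin{bmatrix} 1 & x_i \\ 0 & 1\end{bmatrix}$. Inductive step: given gadgets $M_\psi, M_\chi$ for subformulas $\psi,\chi$, I build gadgets for $\psi+\chi$, $\psi\cdot\chi$, and $\psi^{-1}$ by suitably combining $M_\psi$ and $M_\chi$ into a larger block matrix, adding $O(1)$ new rows/columns carrying affine entries in $\bx$ that wire together the ``output'' coordinate of one into the ``input'' of the other (for product), add the outputs (for sum), or swap the designated input/output coordinate (for inverse). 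Each operation increases the dimension additively by $O(1)$ plus the dimensions of the children, so $\size(L_\phi) = \poly(s)$, in fact $O(s)$; and each step preserves the invariant that the gadget is invertible over $\Fxx$ precisely when all the division gates inside were applied to non-zero elements (one must track the ``domain'' condition as part of the invariant). Finally, to go from ``$M_\phi$ has the value $f_\phi$ as an entry of its inverse'' to the statement of the theorem, I take $L_\phi$ to be a symbolic matrix built from $M_\phi$ by one more gadget step (bordering with a row/column built from the coefficient extracting that entry, minus a fresh affine term) so that $L_\phi$ is singular over $\Fxx$ iff the designated entry of $M_\phi^{-1}$ is $0$ iff $f_\phi \equiv 0$; here one invokes Cohn's characterization (Theorem~\ref{cohn71} and item (2) of Theorem~\ref{Equivalences}) that the field of fractions is exactly the entries of inverses of symbolic matrices, so the reduction is faithful. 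A final cleanup step converts the polynomial (degree possibly $>1$) entries that may appear into degree-$1$ entries by introducing auxiliary variables/rows in the standard way, landing in the form $L = \sum_i x_i A_i$ (up to an affine shift, which is absorbed by a constant matrix or an extra variable).

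\medskip

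\noindent The main obstacle, and the place where care is genuinely needed rather than routine, is the handling of the division gate together with the \emph{partiality} of rational expressions: the gadget construction must be arranged so that the algebraic identity ``$f_\psi$ is an entry of $M_\psi^{-1}$'' holds as an identity in the free skew field $\Fxx$ (not merely generically on the domain where $\phi$ happens to be defined), and simultaneously the singularity/invertibility of the intermediate block matrices must be controlled --- one cannot allow the construction to introduce a spurious singularity (which would make a nonzero $f_\phi$ look zero) or spuriously hide one. Resolving this requires exploiting that $\Fxx$ is a genuine skew field, so that ``$f_\psi \neq 0$'' is equivalent to ``$f_\psi$ invertible,'' and then verifying by a direct block-matrix computation (Schur-complement-style identities, valid over any ring where the relevant pivots are invertible) that each of the three gadget operations does what it should. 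The remaining bookkeeping --- bounding the size, eliminating higher-degree entries, and checking the equivalence is computable in polynomial time --- is routine once the gadgets and their correctness are in place. \qed
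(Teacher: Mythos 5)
The paper does not prove this theorem; it cites Cohn (and Malcolmson, and the later exposition in \cite{HW14}), where the construction is exactly the recursive linearization-by-matrices that you sketch. In fact the paper's Section 5 (on Problem~\ref{prob_matinv} and Problem~\ref{prob:RIT}) records precisely your two ingredients as black-box citations: a cited theorem from \cite{HW14} producing, for each formula $\Phi$ of size $s$, a linear $s'\times s'$ matrix $L_\Phi$ with $s'\le 2s$ whose inverse has the function computed by $\Phi$ in its top-right corner, and a cited bordering proposition showing that the vanishing of that top-right entry is equivalent to singularity of the augmented matrix $\begin{bmatrix} v^T & L_\Phi \\ 0 & -u \end{bmatrix}$. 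So your overall architecture --- maintain ``$f_\psi$ is a designated entry of $M_\psi^{-1}$'' as the recursive invariant, combine children via block/Schur-complement gadgets for $+,\times,{}^{-1}$, then border once at the end to turn entry-vanishing into singularity --- is exactly the intended one.

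Two small remarks. First, you are right that the genuine subtlety is that each inverse gate must be matched by an actually-invertible pivot over $\Fxx$, and that this is where the skew-field property is indispensable; your description of that obstacle is accurate, though a complete proof would have to exhibit the three gadgets and verify, by a short Schur-complement computation in each case, that the invariant is preserved and no spurious singularity is introduced. Second, your ``final cleanup step'' via auxiliary rows to force affine-linear entries is harmless but, if the gadgets are set up as in \cite{HW14}, unnecessary: they already produce affine-linear entries throughout, and the constant term is then absorbed into an extra variable (as the paper notes in its Remark following Theorem~\ref{computingNCrank}), so the matrix lands directly in the $\sum_i x_i A_i$ form expected by SINGULAR. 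With those caveats the proposal is essentially a faithful sketch of the standard argument.
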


As mentioned, the structure of the free skew field is so complex that unlike the commutative case, even decidability of SINGULAR (and rational identity testing) is far from obvious. The first to prove decidability was Cohn in~\cite{Cohn-Word, Cohn-Word2}. The first explicit bound on time was given by Cohn and Reutenauer in~\cite{CR99}, reducing it to a system of commutative polynomial equations using characterization (5) of SINGULAR, proved earlier by Cohn, which puts it in $\cPS$. The best upper bound before this work was singly exponential time, obtained by~\cite{IQS2015}, and of course yields the same bound for rational identity testing.

From our Theorem~\ref{main} we conclude a {\em deterministic, polynomial} identity test for non-commuting rational expressions.
\begin{corollary}\label{Formula-RIT}
The non-commutative rational identity testing problem is in $\cP$. Namely, there is an algorithm which for any non-commutative formula 
over $\Q$ of size $s$  and bit complexity $b$ determines in $\emph{\poly} (s, b)$ steps if it is identically zero.
\end{corollary}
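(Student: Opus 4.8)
The plan is to combine Theorem~\ref{RIT-compete} with Theorem~\ref{main}, so the entire task reduces to bookkeeping: verifying that the reduction of Theorem~\ref{RIT-compete} is not only efficient in the combinatorial sense but also keeps the bit-complexity of the constructed matrix polynomially bounded, and that the coefficient matrices $A_i$ produced have the linear (degree-1) form required by the hypothesis of Theorem~\ref{main}. First I would recall the construction behind Theorem~\ref{RIT-compete} (Cohn's / Malcolmson's linearization): given a formula $\phi(\bx)$ of size $s$, one builds by induction on the formula structure a symbolic matrix $L_\phi = \sum_{i=1}^m x_i A_i + A_0$ whose ``border'' entries encode $\phi$, so that $\phi \equiv 0$ over $\Fxx$ if and only if $L_\phi$ is singular over $\Fxx$. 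The key point is that each gate (addition, multiplication, inversion) contributes $O(1)$ new rows and columns and only constant entries or single-variable entries, so the dimension of $L_\phi$ is $\poly(s)$, it is genuinely linear in the $x_i$, and the constant matrix $A_0$ can be absorbed by introducing a fresh ``dummy'' variable $x_0$ (which does not affect singularity over the skew field). Along the way the new constant entries are $0,\pm 1$, so the bit-size of every $A_i$ stays bounded by $O(b + \log s)$; a homogenization step with the dummy variable contributes nothing further.

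Next I would invoke Theorem~\ref{main} on the matrix $L_\phi$: since it is an $n\times n$ symbolic matrix with $n = \poly(s)$, with $m+1 \le \poly(s)$ linear terms, and with integer coefficient matrices of bit-size $O(b + \log s) = \poly(b,\log s)$, the deterministic algorithm of Theorem~\ref{main} decides in time $\poly(n, m, b) = \poly(s,b)$ whether $L_\phi$ is invertible over the free skew field. By Theorem~\ref{RIT-compete} this is exactly the answer to whether $\phi$ is identically zero, and the total running time is the sum of the (polynomial) reduction time and the (polynomial) running time of the algorithm of Theorem~\ref{main}, hence $\poly(s,b)$.

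The only genuine obstacle is the careful verification that the linearization really produces a matrix that is \emph{linear} (affine) in the variables with small integer coefficients: Cohn's construction for inversion gates does this essentially for free, but one must check that nothing in the recursive composition of blocks causes the entries, or the dimension, to blow up beyond $\poly(s)$, and that the reduction can be carried out over $\Q$ (equivalently $\Z$ after clearing denominators of the formula's constants, which at most multiplies the bit-size by $\poly(s)$). Everything else is immediate from the two quoted theorems. I would also remark that the same argument, using the fact that Theorem~\ref{main} extends to computing $\ncrk(L)$, yields a deterministic polynomial-time computation of a natural ``rank'' quantity attached to the formula, but for the corollary as stated the singularity test suffices.
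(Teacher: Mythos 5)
Your proposal is correct and takes essentially the same approach the paper does: reduce RIT to SINGULAR via Cohn/Higman/\cite{HW14}-style linearization and then invoke Theorem~\ref{main}. The one place you hedge (bounding dimension and bit-complexity of the linearized matrix) is handled cleanly by the statement the paper cites from \cite{HW14} in the appendix, which gives $s' \le 2s$ with coefficient bit-sizes at most $b$, so the ``genuine obstacle'' you flag is in fact already settled there.
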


It is important to stress that unlike the commutative case, where symbolic matrix inversion can be simulated by quasi-polynomial sized 
formulas, in the non-commutative case symbolic matrix inversion is exponentially more powerful than formulas (and so Theorem~\ref{main} is far more powerful than its corollary above). We state these two results formally below for contrast\footnote{Replacing formulas by circuits there is no contrast - in both the commutative and non-commutative setting matrix inverse has a polynomial size circuit (with division of course)~\cite{HW14}}. Note that when saying ``computing the inverse'' we mean computing any entry in the inverse matrix.

\begin{theorem}[\cite{Hya,Ber}]\label{comm-Hardness}
The inverse of a commutative $n\times n$ symbolic matrix  in $\F(\by )$ can be computed by formulas of size $n^{O(\log n)}$.
\end{theorem}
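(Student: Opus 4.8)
The plan is to reduce computing the inverse to computing a single determinant by a small formula, and to obtain the latter from a known bounded-depth circuit for the determinant. First I would apply Cramer's rule: for an invertible $n\times n$ matrix $M$ with entries in $\F(\by)$, the $(i,j)$ entry of $M^{-1}$ equals $(-1)^{i+j}\Det(M^{(j,i)})/\Det(M)$, where $M^{(j,i)}$ denotes $M$ with row $j$ and column $i$ removed. Since our formulas may use division gates, it therefore suffices to exhibit, for the generic $n\times n$ matrix of distinct variables, a formula of size $n^{O(\log n)}$ computing $\Det$; an entry of $M^{-1}$ is then one such formula divided by another (and if the entries of $M$ are themselves given by formulas of size $s$, one composes, losing only a factor of $s$).

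Next I would recall that the determinant of an $n\times n$ symbolic matrix is computable by an arithmetic circuit of polynomial size and depth $O(\log^2 n)$ --- this is the classical $\mathrm{NC}^2$ bound of Csanky (with divisions) and Berkowitz (division-free), and it can also be derived by applying the Valiant--Skyum--Berkowitz--Rackoff depth reduction to the trivial $O(n^3)$-size Gaussian-elimination circuit, a degree-$n$ object. Concretely: all the traces $\tr(M), \tr(M^2), \ldots, \tr(M^n)$ are computed in depth $O(\log^2 n)$ by repeated squaring (an $O(\log n)$-deep recursion of $n\times n$ matrix multiplications, each of depth $O(\log n)$), and Newton's identities then express the coefficients of the characteristic polynomial of $M$ --- in particular $\Det(M)$ --- as the solution of a triangular linear system in these traces, which is again solvable in depth $O(\log^2 n)$ and polynomial size.

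Finally I would unfold the circuit into a formula: any fan-in-$2$ arithmetic circuit of depth $d$ becomes, after duplicating every shared subcircuit, an expression tree of size $2^{O(d)}$. With $d = O(\log^2 n)$ this gives a formula of size $2^{O(\log^2 n)} = n^{O(\log n)}$ for $\Det$, and hence, via the Cramer's rule reduction above, for every entry of $M^{-1}$.

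The one genuinely nontrivial ingredient is the $\mathrm{NC}^2$ circuit for the determinant; in a write-up I would simply cite Csanky and Berkowitz, the reduction and the formula-unfolding being routine. I would also note why the obvious alternative fails: the block inversion formula expressing $M^{-1}$ via the Schur complement $S = A - BD^{-1}C$ is useless here, because formulas cannot share subformulas, so substituting the size-$T(n/2)$ formulas for the many needed entries of $D^{-1}$ and of $S^{-1}$ yields a recursion of the shape $T(n) \le \poly(n)\cdot T(n/2)^2$, whose solution $n^{O(n)}$ is exponentially worse than the target bound.
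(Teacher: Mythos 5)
The paper offers no proof of this statement, only the citations to Hyafil and Berkowitz, and your argument is precisely the standard one those references supply: a polynomial-size, depth-$O(\log^2 n)$ circuit for the determinant (Csanky over $\Q$, Berkowitz division-free over any field), unfolded into a formula of size $2^{O(\log^2 n)} = n^{O(\log n)}$ and combined with Cramer's rule. The only small caution is your parenthetical claim that the VSBR depth reduction applies directly to the Gaussian-elimination circuit — that circuit contains division gates, so one must first invoke Strassen's division elimination — but this aside is not needed, and your primary route is complete and correct.
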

\begin{theorem}[\cite{HW14}]\label{comm-Hardness}
The inverse of a non-commutative $n\times n$ symbolic matrix  in $\Fxx$ requires formulas of size $2^{\Omega(n)}$.
\end{theorem}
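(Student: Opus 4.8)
The plan is to exhibit an explicit family of symbolic matrices whose inverse requires formulas of size $2^{\Omega(n)}$; the natural candidate is the generic matrix $X=(x_{ij})_{i,j\le n}$ over $n^2$ non-commuting variables, and it would suffice to show that the single element $r_n := (X^{-1})_{11}$ of $\Fxx$ has this complexity. The first step is to isolate the structural features of $r_n$ that should drive the bound. By the quasideterminant identity, $r_n^{-1} = x_{11} - u^{\top}(X')^{-1}v$, where $X'$ is the generic $(n-1)\times(n-1)$ matrix in the lower-right corner and $u,v$ are vectors of fresh variables; unwinding this recursion presents $r_n$ as a nesting of inversions of depth $\Theta(n)$ in which the $k$-th level combines \emph{all} $(n-k)^2$ entries of the inverse of a generic $(n-k)\times(n-k)$ matrix. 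So $r_n$ is naturally expressed with $\Theta(n)$ nested inversions and depends irredundantly on its variables — the fully unwound expression has size $\Theta(n!)$ — and the theorem asserts that no formula compresses it below $2^{\Omega(n)}$.

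It is instructive to first see which standard techniques cannot work, since this dictates the kind of argument needed. One cannot reduce to a polynomial formula lower bound by truncating $r_n$ to a homogeneous component: expanding $X^{-1}$ around a generic invertible point $A$ in new variables $z_{ij}=x_{ij}-a_{ij}$, the coefficient of a length-$d$ monomial $z_{i_1j_1}\cdots z_{i_dj_d}$ in $(X^{-1})_{11}$ is $\pm\, A^{-1}_{1,i_1}A^{-1}_{j_1,i_2}\cdots A^{-1}_{j_{d-1},i_d}A^{-1}_{j_d,1}$, a product of entries of $A^{-1}$ along a path; consequently in any partial-derivative matrix of any truncation the row half and column half of the word are coupled only through a single entry $A^{-1}_{j_t, i_{t+1}}$, so its rank is $\le n$. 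Hence neither Nisan's rank method nor any route that first eliminates divisions and then invokes a division-free formula bound can detect the hardness: it is genuinely rational, located in the nesting of the inversions, and genuinely non-commutative (commutatively $r_n$ is a ratio of two $(n{-}1)$-sized determinants, each admitting an $n^{O(\log n)}$-size formula). Note also that high inversion height alone is cheap — thin, continued-fraction-type formulas realize it in linear size — so the complexity measure we use must couple inversion depth with a notion of ``width''.

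The route I would pursue is to attach to each $r\in\Fxx$, via Cohn's theory of minimal (finest) factorizations and linearizations, an invariant $\mu(r)$ — morally a rank/dimension profile across the levels of the inversion structure of the minimal symbolic matrix presenting $r$ — and to prove: (a) if $r$ has a division formula of size $s$, then $\mu(r)\le\poly(s)$, by induction on the formula, the $+$ and $\times$ cases being routine and the crux being the division case, where one must bound how much an inversion gate can inflate $\mu$, using the uniqueness of minimal factorizations and the behaviour of linearizations under inversion; and (b) $\mu(r_n)=2^{\Omega(n)}$, from the irredundancy of the quasideterminant recursion (each of the $\Theta(n)$ levels multiplying the relevant rank by a factor bounded away from $1$). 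Together these give $s\ge\mu(r_n)^{\Omega(1)}=2^{\Omega(n)}$. The main obstacle is step (a), and within it the division case: one must design a measure robust enough that an inversion gate cannot be exploited to cheat, yet still provably large for the matrix inverse — in effect, one must quantify that the inverse of a complicated rational function is at least as complicated in a sense that formulas respect, and the failure of every polynomial-content measure (the rank computation above) shows this is far from automatic.
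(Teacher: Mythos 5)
You correctly observe that the naive Nisan-rank attack fails (the coefficient matrices of the degree truncations of $(X^{-1})_{11}$ have rank at most $n$) and that a formula's syntactic inversion depth is a misleading proxy. But after these observations the proposal has nothing concrete to hang a lower bound on. The one Cohn-theoretic invariant your sketch actually points to --- the dimension of a minimal symbolic matrix presenting $r$, i.e.\ its minimal linearization --- equals exactly $n$ for $r_n=(X^{-1})_{11}$, since $X$ itself is already a minimal presentation; so step (b), $\mu(r_n)=2^{\Omega(n)}$, fails outright for that choice. Likewise, the ``$\Theta(n)$ nested inversions'' you extract from the quasideterminant recursion is a property of that particular expression, not of the element $r_n\in\Fxx$: Schur-complement block inversion gives an expression with only $O(\log n)$ nested inversions, so any inversion-height-style measure is $O(\log n)$, nowhere near $2^{\Omega(n)}$. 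Step (a) --- controlling $\mu$ under an inversion gate --- is exactly the step you flag as the main obstacle and leave open. In short, the plan does not yet identify an invariant that is simultaneously polynomially bounded in formula size and exponentially large on the target; the two candidate invariants your description gestures at are each only polynomially large, and the one step that could distinguish your $\mu$ from them is the one you have not supplied. This is a genuine gap, not a detail to be filled in later.

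It is also worth flagging that the paper only cites this theorem to \cite{HW14}, and the argument there does not proceed by evaluating a rational-function measure on the generic inverse. The exponential hardness enters through Nisan's lower bound: a palindrome-type polynomial of full Nisan rank, computable by a width-$O(n)$ algebraic branching program, is realized via the standard ABP-to-matrix-inverse simulation as an entry of $L^{-1}$ for a linear $L$ of dimension $O(n)$, and one then argues that a small formula with division computing this particular \emph{polynomial} would contradict Nisan's division-free formula lower bound. Choosing the generic matrix as the hard instance works against you: as your own rank computation shows, $(X^{-1})_{11}$ has Hankel/Nisan rank at most $n$ at every degree, so the very canonicity of the generic inverse is precisely what makes it resistant to the rank methods that underlie every known non-commutative formula lower bound.
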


Our algorithm of Theorem~\ref{main} and Corollary~\ref{Formula-RIT}  is a ``white-box'' identity test (namely one which uses the description of the given matrix or formula), in contrast to a ``black-box'' algorithm which only has input-output access to the function computed by them. The best-known ``black-box'' identity test for these formulas, even if randomization is allowed(!), requires exponential time. It is a very interesting question to find a faster black-box algorithm.

When division is {\em not} allowed, efficient deterministic identity tests of non-commutative {\em polynomials} were known in several models. The strongest is for arithmetic branching programs (ABPs). As this model is easily simulable by matrix inversion (see Theorem 6.5 in~\cite{HW14}, our algorithm provides an alternative (and very different) proof of the following theorem of Raz and Shpilka~\cite{RazShp}.

\begin{theorem}[\cite{RazShp}]\label{ABPs}
There is a deterministic  polynomial time white-box  identity testing algorithm for non-commutative ABPs.
\end{theorem}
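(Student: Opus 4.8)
The plan is to reduce the identity testing problem for non-commutative ABPs to the problem SINGULAR and then apply Theorem~\ref{main}. Recall that a non-commutative ABP of size $s$ is a directed acyclic graph with a designated source and sink whose edges are labelled by affine linear forms in $\bx$ with coefficients in $\F$, and which computes the polynomial $f \in \Fx$ obtained by summing over all directed source-to-sink paths the product, taken in path order, of the labels along the path. Fix a topological order $1, 2, \dots, s$ of the vertices with $1$ the source and $s$ the sink, and let $E$ be the $s \times s$ symbolic matrix whose $(i,j)$ entry is the label on the edge from vertex $i$ to vertex $j$, and $0$ if there is no such edge. Since the ordering is topological, $E$ is strictly upper triangular and hence nilpotent, so $I - E$ is invertible already over $\Fx$ (indeed $(I-E)^{-1} = \sum_{k \ge 0} E^k$ is a finite sum), and by the standard path-product identity the $(1,s)$ entry of $(I-E)^{-1}$ equals precisely the polynomial $f$ computed by the ABP. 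This encoding of ABPs by symbolic matrix inversion is Theorem~6.5 of~\cite{HW14}.

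Next I would convert the statement ``the $(1,s)$ entry of $(I-E)^{-1}$ is zero'' into a singularity statement by bordering. Consider the $(s+1)\times(s+1)$ symbolic matrix
\[
\tilde{N} \;=\; \begin{pmatrix} I - E & e_s \\ e_1^{T} & 0 \end{pmatrix},
\]
where $e_1, e_s \in \F^s$ are standard basis vectors. Because the top-left block $I - E$ is invertible over $\Fxx$, the block triangular (Schur complement) factorization is available over the free skew field, and it shows that $\tilde{N}$ is invertible over $\Fxx$ if and only if its Schur complement $0 - e_1^{T}(I-E)^{-1}e_s = -f$ is nonzero. Equivalently, $\tilde{N}$ is singular over $\Fxx$ if and only if $f \equiv 0$ (as an element of $\Fx$, hence as a polynomial). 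The matrix $\tilde{N}$ has dimension $\poly(s)$ and its entries are affine forms in $\bx$ whose integer coefficients have bit-size polynomial in that of the ABP; a routine homogenization (valid over any infinite field, in particular $\Q$) brings $\tilde{N}$ to the homogeneous form $\sum_i x_i A_i$ required by the statement of Theorem~\ref{main}. Running the algorithm of Theorem~\ref{main} on this pencil then decides in deterministic time $\poly(s,b)$ whether $f \equiv 0$, which is exactly identity testing for the ABP.

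I do not expect a real obstacle here: both ingredients---the encoding of an ABP as an entry of a symbolic matrix inverse, and the passage from a zero entry of an inverse to a singular pencil via bordering---are standard and already packaged in~\cite{HW14}, and all of the substance is carried by Theorem~\ref{main}. The only points that require attention are bookkeeping ones: checking that the reduction keeps the matrix dimension, the number of variables, and the coefficient bit-lengths all polynomial, so that the $\poly(n,m,b)$ bound of Theorem~\ref{main} yields the claimed $\poly(s,b)$ running time; and verifying that the affine-to-homogeneous normalization preserves singularity over the free skew field. This theorem is included mainly to record that our main result subsumes the Raz--Shpilka theorem~\cite{RazShp} as a special case, reached here by a route completely different from their direct combinatorial argument.
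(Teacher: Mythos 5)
Your route is exactly the one the paper intends: encode the ABP as the $(1,s)$ entry of a symbolic matrix inverse (Theorem 6.5 in \cite{HW14}), reduce the zero-entry condition to singularity of a bordered pencil (the same construction appears in the paper's discussion of Problem~\ref{prob_matinv} and the accompanying proposition from \cite{HW14}), homogenize away the affine part, and invoke Theorem~\ref{main}. Your Schur-complement computation and bit-size bookkeeping are correct, so this is essentially the paper's argument spelled out in detail.
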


For certain classes computing only polynomials even efficient black-box algorithms are known, and we mention two below; 
the first deterministic, for non-commutative ABPs, and the second probabilistic (but for circuits). 
\begin{theorem}[\cite{ForShp}]\label{roABPs}
There is a deterministic quasi-polynomial time black-box  identity testing algorithm for non-commutative ABPs.
\end{theorem}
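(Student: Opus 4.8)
The plan is to prove Theorem~\ref{roABPs} by reducing black-box identity testing of non-commutative ABPs to black-box identity testing of a structured family of \emph{commutative} polynomials (read-once oblivious ABPs, ROABPs), and then constructing an explicit quasi-polynomial size hitting set for that family.

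First I would normalize the input. Given a non-commutative ABP $\Phi$ of size $s$ over $x_1,\dots,x_n$, decompose it into homogeneous components; each homogeneous degree-$d$ part is again an ABP of size $\poly(s)$ with $d\le s$, so it suffices to hit each degree separately. Fix $d$ and let $w\le s$ bound the width. For a non-commutative polynomial $f=\sum_{|W|=d}c_W\cdot W$ of degree $d$, substitute $x_i\mapsto X_i$, where $X_i$ is the $(d+1)\times(d+1)$ matrix whose only nonzero entries are $(X_i)_{k,k+1}=y_{i,k}$ and the $y_{i,k}$ are $nd$ fresh commuting variables. Then the $(1,d+1)$ entry of $f(X_1,\dots,X_n)$ equals $g(\by)=\sum_W c_W\prod_{k=1}^{d}y_{W_k,k}$, a commutative polynomial that vanishes identically iff $f$ does; writing $c_W$ as a matrix product over source-to-sink paths of $\Phi$ shows $g$ is computed by a width-$w$ ROABP whose variable blocks $\{y_{1,k},\dots,y_{n,k}\}$ are read in the fixed order $k=1,\dots,d$; and any black-box query to $g$ is just a black-box query to $f$ at an explicit tuple of matrices. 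So it suffices to exhibit a hitting set of size $(nds)^{O(\log(nds))}$ for width-$w$ ROABPs in $N=nd$ variables.

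For the hitting set I would exploit the recursive halving structure of ROABPs: a width-$w$ ROABP on blocks $V_1,\dots,V_d$ (read in order) factors as $\langle u(V_1,\dots,V_{d/2}),\,v(V_{d/2+1},\dots,V_d)\rangle$ with $u,v$ length-$w$ vectors of polynomials on disjoint variables, and more generally, for every cut $0\le j\le d$, the space spanned by the coefficient vectors of the length-$j$ prefixes has dimension at most $w$. The key gadget is a \emph{rank condenser}: an explicit, $\poly(nds)$-size family $\{E_\alpha\}$ of linear maps $\mathbb{F}^{|V|}\to\mathbb{F}^{w}$, built from Vandermonde/Reed--Solomon evaluations with $\alpha$ ranging over a small subset of a field of size $\poly(nds)$, such that for every $w$-dimensional subspace some $E_\alpha$ in the family is injective on it --- this follows from a Schwartz--Zippel argument applied to the maximal minors (or one can take a certified variant). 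Applying a condenser inside each block and recursing on the two halves, one assembles a substitution from a ``seed'' of $O(\log d)$ field elements per level over $O(\log d)$ levels, i.e.\ $O(\log^2(nds))$ field elements in total; enumerating the seed yields the quasi-polynomial hitting set, and querying $f$ on the corresponding matrix tuples gives the claimed deterministic algorithm.

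The main obstacle is keeping the recursion from blowing up exponentially: a naive product combination of two half-hitting-sets of size $S$ gives $S^2$, and $\log d$ rounds of squaring produce $S^{d}$. The substance of the Forbes--Shpilka argument is that the rank condenser lets the two halves \emph{share} one low-dimensional seed rather than being combined by a product --- each merge appends only polylogarithmically many field elements --- so the total seed length stays $O(\log^2(nds))$, which is exactly what makes the final hitting set quasi-polynomial rather than exponential. Verifying that a single small explicit family of maps is simultaneously rank-preserving for all the $w$-dimensional coefficient spaces that arise down the recursion is the technical heart of the proof.
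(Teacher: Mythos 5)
The paper does not prove Theorem~\ref{roABPs}; it cites it verbatim to Forbes and Shpilka~\cite{ForShp}, so there is no internal proof to compare against. Your sketch is a faithful high-level summary of exactly that cited argument: the matrix-shift substitution converting a homogeneous non-commutative ABP into a known-order set-multilinear ROABP, followed by the divide-and-conquer hitting set in which rank condensers allow each merge to cost only $O(\log)$ fresh seed elements, keeping the total seed length $O(\log^2)$ and hence the hitting set quasi-polynomial.
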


\begin{theorem}[\cite{BogVee,AmiLev}]\label{Poly-degree}
There is a probabilistic polynomial time black-box identity testing algorithm for non-commutative circuits.
\end{theorem}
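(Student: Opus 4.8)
The statement asks for a probabilistic polynomial-time black-box identity test for non-commutative circuits; as the reference to \cite{AmiLev} and the label suggest, the intended regime is circuits of polynomially bounded degree, and the model is one in which we may evaluate the (unknown) polynomial $f = f(x_1,\dots,x_m)$ computed by the circuit on tuples of matrices over $\F$ or an extension field (scalars alone cannot distinguish $x_1x_2$ from $x_2x_1$, so matrix evaluations are the natural black-box access). The plan is the direct analog of the Schwartz--Zippel test, with one extra ingredient up front: since a nonzero non-commutative polynomial can vanish identically on $M_d(\F)$ for small $d$, I would first invoke the Amitsur--Levitsky bound of Theorem~\ref{AL} to pin down a dimension $d$ for which $f$ provably does \emph{not} vanish, and then run Schwartz--Zippel on the entries of $f$ evaluated at generic $d\times d$ matrices.

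Concretely, let $D$ be an upper bound on the degree of $f$; for a size-$s$ circuit of polynomial degree, $D = \poly(s)$. Set $d = \lfloor D/2\rfloor + 1$. First, by Theorem~\ref{AL}, if $f\not\equiv 0$ then $f$ does not vanish on all of $M_d(\F')$ (over a field $\F'$ we are free to choose), since $d > \lfloor D/2\rfloor$. Second, substitute $x_j \mapsto Z_j$, where $Z_j$ is a $d\times d$ matrix of fresh commuting variables; each entry of the matrix $f(Z_1,\dots,Z_m)$ is a commutative polynomial of degree at most $D$ in the $md^2$ variables $(Z_j)_{ab}$, and by the previous point at least one entry, say $P_{p_0,q_0}$, is not the zero polynomial. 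Third, fix a finite set $S\subseteq\F'$ with $|S|\ge 3D$, sample each $Z_j$ with entries drawn independently and uniformly from $S$, query the oracle for the matrix $f(Z_1,\dots,Z_m)$, and declare ``$f\equiv 0$'' iff this matrix is zero. Correctness: if $f\equiv 0$ the test never errs; if $f\not\equiv 0$, then by the Schwartz--Zippel lemma $\Pr[P_{p_0,q_0}(Z)=0]\le D/|S|\le 1/3$, so $f(Z)\ne 0$ with probability at least $2/3$, and standard repetition amplifies this. The running time is dominated by $O(1)$ oracle calls on $d\times d$ matrices with $d=\poly(s)$ and $O(\log s)$-bit entries, hence $\poly(s)$ overall.

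The main obstacle --- and the reason the result is for circuits of polynomial degree rather than all circuits --- is exactly the degree: if $f$ has degree $2^{\Omega(s)}$, Amitsur--Levitsky only guarantees non-vanishing in dimension $d\approx 2^{\Omega(s)}$, and the substituted entries become commutative polynomials of exponential degree, so both the matrix size and the size of $S$ demanded by Schwartz--Zippel blow up. A secondary point is the choice of field: both Theorem~\ref{AL} and the Schwartz--Zippel step (which needs $|S|>D$) fail over small finite fields, so one should work over $\Q$ as in Theorem~\ref{main}, or pass to an extension $\F'$ of $\F$ with at least $3D$ elements, which is harmless and efficient. One should also check that the intermediate quantities (the entries of $f(Z)$, which the oracle returns) stay of polynomial bit-size; this follows from $D,d=\poly(s)$ together with a polynomial bound on the bit-size of the circuit's constants.
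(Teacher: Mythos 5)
Your proposal is correct and is exactly the standard argument behind the cited result: invoke Amitsur--Levitsky (Theorem~\ref{AL}) to pick a dimension $d = \lfloor D/2\rfloor + 1$ over which a nonzero degree-$D$ polynomial cannot vanish identically, then apply Schwartz--Zippel to a nonzero entry of $f(Z_1,\dots,Z_m)$ with the $Z_j$ generic $d\times d$ matrices. The paper cites \cite{BogVee,AmiLev} without supplying a proof, and your reconstruction (including the degree/polynomial-size caveat and the pass to a large enough extension field) matches the intended argument.
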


\subsection{Commutative and Non-commutative Rank of Symbolic Matrices}

There are many sources, motivations and results regarding invertibility, and more generally rank, of {\em commutative} symbolic matrices, which are much older than the complexity theory interest in it.  These mathematical papers, like the ones in the 
computational complexity literature, regard it as a difficult problem, and attempt to understand a variety of special cases (of a different nature than restrictions of the computational power of the model). Some of the many references to this body of work can be found in the papers~\cite{FR04,Mes}. Often, the {\em non-commutative} rank (explicitly or implicitly) is used to give upper bound on the commutative rank, and  their relationship becomes of interest. We focus on this connection here, and explain how our main result implies a deterministic {\em approximation algorithm} to the commutative rank.

In this section we assume that the underlying field $\F$ is infinite.
We will use the same notation $L$ for both a symbolic matrix, as well as the subspace of matrices spanned by it (when fixing the variables to constants in the field). We repeat the definitions  and elaborate on what is known regarding the commutative and non-commutative ranks, from now on denoted by $\rk(L)$ and $\ncrk(L)$.  Note that the characterizations allow thinking about both without reference to the respective fields of fractions (over which they are most 
naturally defined), but only to polynomials.

\begin{fact}
The following are equivalent for a matrix of linear forms over commutative variables, $L(\by)$.
\begin{itemize}
\item $\emph{\rk} (L(\by)) = r$ over $\F(\by)$
\item $r$ is the {\em maximal} rank of any matrix in the subspace $L$ over $\F$.
\end{itemize}
\end{fact}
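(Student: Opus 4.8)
The plan is to pin down the two quantities in play and show they coincide. Write $L = \sum_{i=1}^m y_i A_i$, set $r := \rk(L)$ computed over $\F(\by)$, and set $\rho := \max_{\beta \in \F^m} \rk\!\left(\sum_{i=1}^m \beta_i A_i\right)$, the maximal rank attained in the subspace $L$ over $\F$. The whole argument rests on two elementary facts: (i) for a matrix $N$ over any integral domain $K$, one has $\rk(N) \ge s$ iff some $s \times s$ minor of $N$ is nonzero in $K$; and (ii) for any point $\beta \in \F^m$, the evaluation map $\F[\by] \to \F$, $\by \mapsto \beta$, is a ring homomorphism, so it commutes with forming determinants of submatrices. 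In particular, every $s \times s$ minor of $\sum_i \beta_i A_i$ is the value at $\beta$ of the corresponding $s \times s$ minor of $L$ (which is a polynomial in $\F[\by]$).

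First I would prove $r \le \rho$. By fact (i) applied over $K = \F(\by)$, there is an $r \times r$ submatrix of $L$ whose determinant $p(\by) \in \F[\by]$ is not the zero polynomial. Since $\F$ is infinite, a nonzero polynomial cannot vanish on all of $\F^m$, so there is $\beta \in \F^m$ with $p(\beta) \ne 0$. By fact (ii), $p(\beta)$ is exactly the corresponding $r \times r$ minor of $\sum_i \beta_i A_i$, so that matrix has rank at least $r$; hence $\rho \ge r$.

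Conversely, to prove $\rho \le r$, pick $\beta^\star \in \F^m$ attaining $\rho = \rk(\sum_i \beta_i^\star A_i)$. By fact (i) over $\F$, some $\rho \times \rho$ minor of $\sum_i \beta_i^\star A_i$ is nonzero; by fact (ii) this minor equals $q(\beta^\star)$ where $q(\by) \in \F[\by]$ is the corresponding minor of $L$. Thus $q \not\equiv 0$, so by fact (i) over $\F(\by)$ we get $\rk(L) \ge \rho$, i.e. $r \ge \rho$. Combining the two inequalities gives $r = \rho$, which is the claimed equivalence.

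There is essentially no hard step here; the only point that genuinely matters is the hypothesis that $\F$ is infinite, which is what lets us convert "the minor is a nonzero polynomial" into "the minor is nonzero at some field point" (over a finite field one would instead have to pass to a sufficiently large extension, or invoke a Schwartz–Zippel-type bound). Everything else is the standard minor characterization of rank plus the fact that substitution is a homomorphism.
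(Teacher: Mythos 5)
Your proof is correct, and it is the standard argument one would write down for this fact: equate rank with the maximal order of a nonvanishing minor, observe that minors of a specialization are evaluations of the polynomial minors of $L$, and use that a nonzero polynomial over the (assumed infinite) field $\F$ has a nonvanishing point. The paper does not actually supply a proof for this Fact — it is stated as a routine characterization, with the hypothesis that $\F$ is infinite announced at the start of that subsection — so there is no alternative approach to compare against; your write-up simply makes explicit the argument the paper takes for granted, and you correctly flag that infiniteness of $\F$ is the one load-bearing hypothesis.
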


While the characterization above is simple, the one below is very substantial, mostly developed by Cohn for his construction of the free skew field. The first characterization is due to him~\cite{Cohn-SF} and the second  is due to Fortin and Reutenauer~\cite{FR04} who heavily use Cohn's techniques.
\begin{theorem}\label{cohn-nc=inner}
The following are equivalent for a matrix of linear forms over non-commutative variables, $L(\bx)$.
\begin{itemize}
\item $\emph{\ncrk} (L(\bx)) = r$ over $\Fxx$
\item The {\em inner rank} of $L$ over $\Fx$ is $r$. Namely, $r$ is the {\em minimal} number such there exists matrices 
 $K,M$ of dimensions $n\times r , r\times n$ (respectively) with {\em polynomial} entries in $\Fx$ such that $L=KM$.
 Moreover, in this decomposition the factors $K,M$ can be assumed to have {\em affine linear} entries.
\item The space $L$ is $r$-{\em decomposable}. Namely, $r$ is the {\em minimal} number such that there exists invertible matrices $B,C$ over
$\F$ such that $BLC$ has a minor of zeros of size $i \times j$ with $i+j = 2n-r$.
\end{itemize}
\end{theorem}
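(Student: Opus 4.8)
The plan is to close the cycle around two elementary implications and two substantial ones drawn from Cohn's theory. The elementary ones are $(2)\Rightarrow(1)$ and $(3)\Rightarrow(2)$. For the first: any factorization $L=KM$ of inner dimension $r$ with entries in $\Fx$ (in particular an affine linear one) becomes, after embedding $\Fx$ into its universal field of fractions $\Fxx$, a genuine rank-$r$ factorization over a division ring, so $\ncrk(L)\le r$. For $(3)\Rightarrow(2)$: if $B,C$ are invertible constant matrices and $BLC$ has an $i\times j$ all-zero block, say in the lower-left corner, so its first $j$ columns are supported on only the first $n-i$ rows while its last $n-j$ columns are unconstrained, then writing $BLC$ as the column-concatenation of these two blocks and concatenating the corresponding factorizations gives inner rank at most $(n-i)+(n-j)$; since inner rank is unchanged by multiplication by the invertible constants $B,C$, we get $\ncrk(L)=\ncrk(BLC)\le(n-i)+(n-j)=r$ when $i+j=2n-r$, and the block shape hands us an explicit affine linear factorization of inner dimension $r$. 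Combined with the reverse inequalities below, the minimal $r$ admitting zero-block data of this form is exactly $\ncrk(L)$.

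The substance is the reverse direction and leans on Cohn's structure theory of the free algebra. First, $(1)\Rightarrow(2)$: the free algebra $\Fx$ is a semifir, hence a Sylvester domain, so it admits a universal field of fractions --- which is exactly the $\Fxx$ of Theorem~\ref{free-skew-field1} --- with the property that for every matrix $A$ over $\Fx$ the inner rank of $A$ over $\Fx$ equals the ordinary rank of $A$ over $\Fxx$. This already yields an $\Fx$-factorization of $L$ of inner dimension $r=\ncrk(L)$; to force the factors to be affine linear one uses that $L$ is itself a pencil and invokes Cohn's reduction theory for non-full linear matrices over a fir, which iteratively absorbs the higher-degree parts of the factors (using that relations in a semifir are trivial) until $K$ is an $n\times r$ and $M$ an $r\times n$ matrix of degree $\le 1$. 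Finally $(2)\Rightarrow(3)$: write $K=K_0+\sum_i x_iK_i$ and $M=M_0+\sum_i x_iM_i$ and compare homogeneous components in $KM=\sum_i x_iA_i$; the constant part gives $K_0M_0=0$ and the linear part gives $A_i=K_iM_0+K_0M_i$ for every $i$. Then $W:=\operatorname{col}(K_0)$ and $U:=\ker M_0$ satisfy $A_iU\subseteq W$ for all $i$ (the $K_iM_0$ term dies on $U$), while $K_0M_0=0$ forces $\rk(K_0)+\rk(M_0)\le r$ and hence $\dim U-\dim W\ge n-r$ --- a shrunk subspace of the right size (this is the quantitative form of the equivalence of items (2) and (5) of Theorem~\ref{Equivalences}). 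Choosing invertible constant $B,C$ that send a basis of $W$ to the first $\dim W$ coordinates and a basis of $U$ to the first $\dim U$ coordinates, the matrix $BLC$ acquires an all-zero block of size $(n-\dim W)\times\dim U$, and $(n-\dim W)+\dim U\ge 2n-r$, which is $r$-decomposability.

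The main obstacle is concentrated in the two Cohn inputs used in the reverse direction: (i) that $\Fx$ is a Sylvester domain whose universal field of fractions is $\Fxx$, with inner rank preserved under this localization, and (ii) the structure theorem that a non-full linear matrix over a fir becomes block-triangular with a genuine zero block after multiplication by invertible constants (equivalently, admits a linear rank factorization). Both are genuinely deep: (i) requires constructing the skew field and verifying that the universal localization does not collapse inner rank, and (ii) rests on the freeness of one-sided ideals in $\Fx$ (the weak algorithm). We treat these as known --- they are in any case exactly the results underlying Cohn's Theorem~\ref{cohn71} --- and our downstream contribution is the complexity and continuity analysis rather than a reproof of this equivalence; so at the level of this proposal the task is to assemble the elementary implications around the cited results of Cohn and of Fortin--Reutenauer, and to check that the index bookkeeping matches $\ncrk(L)=r$ throughout.
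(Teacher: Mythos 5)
The paper does not prove this theorem; it is stated as a citation, with the first equivalence attributed to Cohn~\cite{Cohn-SF} and the second to Fortin--Reutenauer~\cite{FR04}. So there is no in-paper proof to compare against, and the question is whether your reconstruction is sound on its own terms.

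It is. The cycle $(3)\Rightarrow(2)\Rightarrow(1)\Rightarrow(2)\Rightarrow(3)$, with the two Cohn inputs (that $\Fx$ is a Sylvester domain whose universal field of fractions is $\Fxx$ with inner rank preserved, and that a minimal-inner-rank factorization of a linear matrix can be linearized) taken as black boxes, does close. The two directions you work out in detail check out. For $(3)\Rightarrow(2)$, a lower-left $i\times j$ zero block in $BLC$ gives the explicit affine linear factorization
$$BLC=\begin{pmatrix}I_{n-i}&0\\0&R\end{pmatrix}\begin{pmatrix}P&Q\\0&I_{n-j}\end{pmatrix}$$
of inner dimension $(n-i)+(n-j)$, and inner rank is invariant under left/right multiplication by invertible constant matrices, so the inner rank of $L$ is at most $2n-(i+j)=r$. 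For $(2)\Rightarrow(3)$, writing $K=K_0+\sum_i x_iK_i$, $M=M_0+\sum_j x_jM_j$, the constant term gives $K_0M_0=0$ and the degree-one term gives $A_i=K_iM_0+K_0M_i$; then $U=\ker M_0$ is mapped into $W=\operatorname{Im}K_0$ by every $A_i$, and because $\operatorname{Im}(M_0)\subseteq\ker(K_0)$ inside $\F^r$ one has $\rk(K_0)+\rk(M_0)\le r$, hence $\dim U-\dim W\ge n-r$; a constant basis change exposing $U$ and $W$ then produces a zero block of size $(n-\dim W)\times\dim U$ with $(n-\dim W)+\dim U\ge 2n-r$.

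Two small points. First, in $(3)\Rightarrow(2)$ you write ``$\ncrk(L)=\ncrk(BLC)\le(n-i)+(n-j)$'' --- what you have actually shown at that point is that the \emph{inner rank} is at most $(n-i)+(n-j)$; identifying it with $\ncrk$ is legitimate only after the cycle closes, so strictly the two quantities should be kept separate until then. Second, the degree-two coefficient of $KM$ gives the additional constraint $K_iM_j=0$ for all $i,j$ (the monomials $x_ix_j$ are linearly independent in $\Fx$); you rightly do not need it for the shrunk-subspace construction, but it is worth noting that it is there, since it is what makes these factorizations so rigid.
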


We can extend our main Theorem~\ref{main} from testing singularity to  efficiently computing the 
non-commutative rank over $\Q$, this is done in Theorem~\ref{thm:comp-ncrank}.

\begin{theorem}\label{computingNCrank}
There is a deterministic algorithm which, given $m$ $n\times n$ integer matrices $A_1,\dots A_m$ with entries of bit-size $b$, 
computes $\emph{\ncrk}(L)$ in time $\emph{\poly}(n,m,b)$ (where $L = \sum_{i=1}^m x_i A_i$).
\end{theorem}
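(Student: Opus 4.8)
The plan is to reduce the computation of $\ncrk(L)$ to (many instances of) the singularity-testing problem of Theorem~\ref{main}, using the combinatorial characterization of non-commutative rank via shrunk subspaces / decomposability (items (2),(5) of Theorem~\ref{Equivalences} and the third bullet of Theorem~\ref{cohn-nc=inner}). The key structural fact I would exploit is that $\ncrk$ behaves much more tamely than $\rk$ under padding: if $L = \sum_i x_i A_i$ and we form, for a parameter $t$, the blown-up matrix $L^{(t)} = I_t \otimes L$ on fresh disjoint variable sets (equivalently $\sum_{i} x_i (I_t\otimes A_i)$ using $tm$ new variables, or a direct sum of $t$ independent copies $L_1 \oplus \cdots \oplus L_t$), then $\ncrk(L^{(t)}) = t\cdot \ncrk(L)$. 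This is the non-commutative analogue of the fact that the Edmonds--Rado minimax for matroid-union-type quantities is ``additive'', and it is exactly where the non-commutative world is nicer than the commutative one; I would cite Fortin--Reutenauer~\cite{FR04} (and Cohn~\cite{Cohn-SF}) for it, or reprove it in a line from the inner-rank description.

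Granting this, here are the steps in order. First I would establish (or quote) the scaling identity $\ncrk(I_t\otimes L) = t\cdot\ncrk(L)$, together with the trivial bound $\ncrk(M)\le n'$ for an $n'\times n'$ symbolic matrix $M$ and the fact that $M$ is singular over the free skew field iff $\ncrk(M) < n'$. Second, fix the target: we want to decide, for each candidate value $r\in\{0,1,\dots,n\}$, whether $\ncrk(L)\ge r$. By Theorem~\ref{cohn-nc=inner}, $\ncrk(L) < r$ iff there are invertible $B,C\in\GL_n(\F)$ with $BLC$ having an $i\times j$ zero minor, $i+j = 2n-r$; this is exactly ``$L$ has a shrunk subspace of deficiency $> n-r$'', i.e. subspaces $U,W$ with $\dim U - \dim W \ge n-r+1$ and $A_i U\subseteq W$ for all $i$. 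The classical trick to detect such a deficient shrunk subspace via an exact (non-strict) one is to pad: append $n-r$ generic rows (new variables) so that the padded matrix $\tilde L$ on $n$ columns and $2n-r$ rows is singular (in the appropriate rectangular sense, i.e. $\ncrk(\tilde L) < n$) precisely when $L$ had a shrunk subspace witnessing $\ncrk(L)<r$. Third, and this is the crucial quantitative step, because the operator-scaling / capacity machinery underlying Theorem~\ref{main} only needs to distinguish $\ncrk = n$ from $\ncrk \le n-1$ for a \emph{square} symbolic matrix, I would instead use the scaling identity directly: to test $\ncrk(L) \ge r$ it suffices to test whether $I_t \otimes \tilde L$ (suitably squared up by further generic padding) is non-singular for a well-chosen $t = O(\poly(n))$, so that the only way it can be singular is the genuine rank deficiency of $L$ and not a ``rounding'' artifact. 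Running Theorem~\ref{main}'s algorithm on each of the $n+1$ values of $r$ (binary search suffices, giving $O(\log n)$ calls) yields $\ncrk(L)$.

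The bit-complexity bookkeeping is routine: the padded, blown-up matrices have dimension $\poly(n)$, use $\poly(n,m)$ variables, and have entries of bit-size $O(b + \poly(n))$ once we instantiate the ``generic'' padding rows by small explicit integers (a fresh PIT-style argument, or simply keeping those entries as extra variables so no instantiation is needed), so each call to the algorithm of Theorem~\ref{main} runs in $\poly(n,m,b)$ time and there are only $O(\log n)$ of them. The main obstacle I anticipate is the third step: making sure the blow-up factor $t$ and the generic padding are chosen so that singularity of the constructed square matrix is \emph{equivalent} to $\ncrk(L) < r$, with no slack --- i.e. correctly converting the ``strict deficiency'' in the shrunk-subspace characterization into an honest square-singularity question that Theorem~\ref{main} can answer --- and verifying that the padding variables (or their small integer instantiations) do not accidentally create or destroy shrunk subspaces. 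This is where the additivity $\ncrk(I_t\otimes L)=t\,\ncrk(L)$ does all the work, and it is the one place where the argument genuinely uses that we are over the free skew field rather than a commutative field of fractions.
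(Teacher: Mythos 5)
Your overall plan --- reduce ``$\ncrk(L)\ge r$?'' to a square-singularity test by padding, then invoke Theorem~\ref{main} --- is sound in spirit, but the proposal stops exactly where the work is. After appending $n-r$ generic rows you have a rectangular $(2n-r)\times n$ matrix, and you defer the essential step of turning this into a square fullness question to a blow-up $I_t\otimes\tilde L$ with $t=O(\poly(n))$, justified as guarding against a ``rounding artifact.'' But Theorem~\ref{main} is an \emph{exact} decision procedure, so there is no rounding to guard against, and you never show what role $t$ plays or prove that any blown-up, padded square matrix's singularity is equivalent to $\ncrk(L)<r$. The lemma you actually need is simpler than you suggest and requires no blow-up at all: pad $L$ with $n-r$ fresh generic rows \emph{and} $n-r$ fresh generic columns (one new non-commuting variable per new entry), producing a square linear symbolic matrix $\hat L$ of dimension $2n-r$; then $\hat L$ is singular over the free skew field iff $\ncrk(L)<r$. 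One proves this via the shrunk-subspace characterization (item (5) of Theorem~\ref{Equivalences}): every elementary matrix $E_{kl}$ with $(k,l)\notin[n]\times[n]$ appears among the coefficient matrices of $\hat L$, so any shrunk pair $(U,W)$ for $\hat L$ is forced to satisfy $U\subseteq\F^n\times\{0\}$ and $W\supseteq\{0\}^n\times\F^{n-r}$; stripping the padding yields $U_0,W_0\subseteq\F^n$ with $A_iU_0\subseteq W_0$ and $\dim U_0-\dim W_0\ge(n-r)+1$, i.e.\ $\ncrk(L)\le r-1$, and conversely an $(n-r+1)$-shrunk subspace of $L$ lifts to a shrunk subspace of $\hat L$ by the same recipe. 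With that lemma in hand, each singularity test is on an honest linear square matrix of dimension $O(n)$ with $O(m+n^2)$ variables and integer coefficient matrices of bit-size $O(b)$ (the padding coefficients are $0/1$), so Theorem~\ref{main} applies directly.

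For comparison, the paper's classical route is the opposite of padding: Lemma~\ref{lem:size-reduction} \emph{compresses} $L$ to the $r\times r$ matrix $U_r L V_r$ with generic rectangular $U_r,V_r$ in fresh non-commuting variables, showing $U_r L V_r$ is full iff $\ncrk(L)\ge r$; because this product has cubic entries, the paper then applies an effective Higman's trick (Proposition~\ref{prop:higman-trick}) to linearize before running Algorithm $G'$. The paper also gives a ``quantum'' reduction from $c$-rank-decreasing to rank-decreasing, the analogue of adding dummy vertices to reduce maximum matching to perfect matching, which is closest in spirit to what you are attempting. Your padding route, once the missing lemma above is supplied, is a legitimate third variant and even has the modest advantage of staying linear, thereby bypassing Higman's trick; but as written the $I_t\otimes$ device is unmotivated, unproven, and does not substitute for that lemma.
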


\begin{remark}
We note here that in some formulations of the problem, $L$ is represented in {\em affine form}, namely where an additional constant 
matrix $A_0$ is added to $L$ (this may be viewed as a non-commutative analog of the rank-completion problem). However, the 
algorithm above works in this case as well , since $\ncrk(L)$  remains unchanged if an additional variable $x_0$ was added as well. 
So, we will stick with the linear formulation. 
\end{remark}

It is not hard to see from the definitions that for every $L$ we have $\rk(L) \leq\ncrk(L)$. We have already seen that they can be different in Example~\ref{skew-symm}. Taking many copies of that $3\times 3$ matrix we see that there can be a factor $3/2$ gap between the two: for any $r$ there are matrices $L$ with $\rk(L)=2r$ and $\ncrk(L)=3r$. However, Fortin and Reutenauer~\cite{FR04} proved that this gap is never more than a factor of 2.

\begin{theorem}[\cite{FR04}]
For every $L$ we have $\emph{\ncrk}(L) \leq 2\emph{\rk}(L)$. 
\end{theorem}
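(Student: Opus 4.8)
The plan is to produce, for any $L=\sum_{i=1}^m x_iA_i$, a single \emph{shrunk subspace} witnessing $\ncrk(L)\le 2\rk(L)$, and to read off the conclusion from the shrunk‑subspace description of non‑commutative rank. Write $\mathcal L:=\spn\{A_1,\dots,A_m\}\subseteq M_n(\F)$ and, for a subspace $U\subseteq\F^n$, put $\mathcal L(U):=\sum_{i=1}^m A_iU$. Theorem~\ref{cohn-nc=inner} (third bullet, i.e.\ $r$‑decomposability; cf.\ item (5) of Theorem~\ref{Equivalences}) amounts to $\ncrk(L)=n-\max_U\big(\dim U-\dim\mathcal L(U)\big)$: a subspace $U$ with $\dim\mathcal L(U)=\dim U-d$ yields, after suitable constant invertible changes of basis on the rows and on the columns, an all‑zero minor of $BLC$ of size $(n-\dim\mathcal L(U))\times\dim U$, hence $\ncrk(L)\le n-d$. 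So it suffices to exhibit $U$ with $\dim U-\dim\mathcal L(U)\ge n-2\rk(L)$.

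Here is the choice. Since $\F$ is infinite, $\rho:=\rk(L)$ equals the largest rank of a matrix in the linear space $\mathcal L$; fix $A\in\mathcal L$ with $\rk(A)=\rho$, and take $U:=\ker A$, so $\dim U=n-\rho$. The entire argument then rests on the claim that $\mathcal L(\ker A)\subseteq\mathrm{im}(A)$; granting it, $\dim\mathcal L(U)\le\rho$, hence $\dim U-\dim\mathcal L(U)\ge(n-\rho)-\rho=n-2\rho$, and we are done. (The case $\rho=n$, and more generally $n-2\rho\le 0$, is immediate from $\ncrk(L)\le n$, so we may assume $n-2\rho\ge 1$.)

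To prove the claim, fix $B\in\mathcal L$; since the $A_i$ span $\mathcal L$ and the condition ``$B(\ker A)\subseteq\mathrm{im}(A)$'' is linear in $B$, it is enough to treat one such $B$. Pick bases of the domain and of the codomain $\F^n$ adapted to $\F^n=R\oplus\ker A$ and $\F^n=\mathrm{im}(A)\oplus T$ with $\dim R=\dim\mathrm{im}(A)=\rho$; in these bases $A=\begin{pmatrix}A'&0\\0&0\end{pmatrix}$ with $A'\in M_\rho(\F)$ invertible, and $B=\begin{pmatrix}B_{11}&B_{12}\\B_{21}&B_{22}\end{pmatrix}$ with matching block sizes, and the claim for this $B$ is exactly the assertion $B_{22}=0$. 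Now use maximality of $\rho$: $A+tB\in\mathcal L$ for every scalar $t$, so $\rk(A+tB)\le\rho$. For all but finitely many $t$ the block $A'+tB_{11}$ is invertible (its determinant is a polynomial in $t$ that is nonzero at $t=0$), and for such $t$ the Schur complement identity gives $\rk(A+tB)=\rho+\rk\big(tB_{22}-t^2B_{21}(A'+tB_{11})^{-1}B_{12}\big)$. As the left side is $\le\rho$, the matrix $tB_{22}-t^2B_{21}(A'+tB_{11})^{-1}B_{12}$ of rational functions of $t$ vanishes at infinitely many $t$ (using that $\F$ is infinite), hence identically; dividing by $t$ and evaluating at $t=0$ gives $B_{22}=0$, as required.

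The one genuine idea is the choice of shrunk subspace: the kernel of a maximum‑rank element $A$ of $\mathcal L$. Once that is in place, maximality of $\rk(A)$, applied along the pencil $A+tB$ together with a Schur complement, mechanically forces the ``off‑diagonal'' block $B_{22}$ of every $B\in\mathcal L$ to vanish, which is precisely $\mathcal L(\ker A)\subseteq\mathrm{im}(A)$; the rest is bookkeeping plus the (standard) shrunk‑subspace characterization of $\ncrk$. I do not anticipate a real obstacle — the points to watch are the degenerate regime $n\le 2\rho$ (dispatched by $\ncrk(L)\le n$) and the fact that the change of basis must be by \emph{constant} matrices, so that it affects neither $\ncrk$ nor $\rk$.
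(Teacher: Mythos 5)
The paper does not include a proof of this theorem; it is stated with a bare citation to Fortin--Reutenauer \cite{FR04}, so there is no in-text argument for me to compare against. Your proof is correct and self-contained. The key device --- taking $U=\ker A$ for a maximum-rank element $A\in\mathcal L$, then showing $\mathcal L(\ker A)\subseteq\operatorname{im}(A)$ by running the pencil $A+tB$ through a Schur complement and using maximality of $\rk A$ to force the off-corner block $B_{22}$ of every $B\in\mathcal L$ to vanish --- is exactly the standard ``spaces of matrices of bounded rank'' argument that underlies the Fortin--Reutenauer bound, and your accounting $\dim U-\dim\mathcal L(U)\ge n-2\rho$ translates correctly into $\ncrk(L)\le 2\rho$ via the shrunk-subspace/decomposability characterization (items (5) of Theorem~\ref{Equivalences} and (3) of Theorem~\ref{cohn-nc=inner}); you also correctly dispatch the degenerate regime $n\le 2\rho$ with the trivial bound $\ncrk(L)\le n$. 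The only caveat worth flagging explicitly is the one you already noted: the argument uses that $\mathbb F$ is infinite (to choose a rank-maximizer in the space and to pass from ``vanishes at infinitely many $t$'' to ``vanishes identically''), which is the standing assumption in that section.
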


An immediate corollary of our main result is thus a factor 2 approximation of commutative rank.

\begin{corollary}
There is a polynomial time algorithm which for every symbolic matrix in {\em commuting} variables over $\Q$ approximates 
$\emph{\rk}(L)$ to within a factor of 2.
\end{corollary}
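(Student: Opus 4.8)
The plan is to derive the corollary directly from Theorem~\ref{computingNCrank} (efficient computation of non-commutative rank over $\Q$) together with the Fortin--Reutenauer inequality $\ncrk(L) \le 2\rk(L)$ and the always-true inequality $\rk(L) \le \ncrk(L)$. Concretely, given a symbolic matrix $L = \sum_{i=1}^m y_i A_i$ in commuting variables over $\Q$, we simply treat the same tuple of coefficient matrices $(A_1,\dots,A_m)$ as defining a matrix $\sum_{i=1}^m x_i A_i$ in non-commuting variables, run the polynomial-time algorithm of Theorem~\ref{computingNCrank} on it, and output the resulting value $\ncrk(L)$ as our estimate for $\rk(L)$.

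The correctness is then immediate from the sandwich $\rk(L) \le \ncrk(L) \le 2\rk(L)$: the quantity we output is at least $\rk(L)$ and at most $2\rk(L)$, which is exactly a factor-$2$ approximation. The running time is $\poly(n,m,b)$ where $b$ bounds the bit-size of the entries of the $A_i$ --- and for a symbolic matrix whose entries are arbitrary affine (or linear) forms in the $y_j$ with rational coefficients, one first writes it in the normal form $\sum_i y_i A_i$ (with an extra variable absorbing any constant term, as in the Remark following Theorem~\ref{computingNCrank}), which is a polynomial-time preprocessing step that does not blow up the bit-size. Here we use the standing assumption of this subsection that $\F$ is infinite, so that $\rk(L)$ over $\F(\by)$ genuinely equals the maximal rank attained in the subspace $L$, and the notions are the ones to which the two cited inequalities apply.

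There is essentially no obstacle: the entire content has been offloaded to Theorem~\ref{computingNCrank} and to the structural inequality of Fortin--Reutenauer. The only points requiring a word of care are (i) confirming that the reduction from a general commuting symbolic matrix to the canonical form $\sum y_i A_i$ is polynomial-time and preserves rank, and (ii) noting that the factor-$2$ guarantee is tight in the worst case --- by the Example~\ref{skew-symm} construction (taking $r$ disjoint copies of the $3\times 3$ skew-symmetric matrix, giving $\rk = 2r$ but $\ncrk = 3r$) one cannot in general do better than $3/2$ with this particular estimator, and no better than $2$ in the worst case over all $L$. This last remark is not needed for the statement but explains why the approximation factor is stated as $2$.

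As the paper notes in the abstract, this corollary naturally suggests the open-ended relaxation of commutative PIT: can one deterministically and efficiently approximate $\rk(L)$ to a factor better than $2$ (ideally arbitrarily close to $1$, which would recover PIT $\in \cP$)? We leave this as motivation rather than as part of the proof.
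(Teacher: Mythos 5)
Your proposal is correct and matches the paper's intended argument exactly: the paper derives this corollary immediately from Theorem~\ref{computingNCrank} together with the sandwich $\rk(L) \le \ncrk(L) \le 2\rk(L)$ (the upper bound being the Fortin--Reutenauer theorem), and that is precisely what you do. The extra remarks you include (polynomial-time normalization to the form $\sum y_i A_i$, and the tightness discussion via Example~\ref{skew-symm}) are sound but not needed for the corollary as stated.
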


We find the question of obtaining a better approximation ratio efficiently a very interesting problem, a different relaxation of the commutative PIT problem that as far as we are aware was not studied till now.

\subsection{Compression Spaces, Optimization and Gurvits' Algorithm $G$}

An important class of spaces of matrices $L$, which is studied in many of the papers mentioned above, is the class of 
{\em compression spaces} (this notation was apparently introduced by Eisenbud and Harris~\cite{EisHar}). They are defined 
as those spaces $L$ for which  $L$ is $\rk (L)$-decomposable. A simpler definition follows the characterization of \cite{FR04} 
above, namely a space $L$ is a compression space iff $\rk(L) = \ncrk(L)$. The importance of compression spaces and their many origins 
will be surveyed below.

A deterministic polynomial time {\em commutative} PIT for compression spaces (over $\mathbb{Q}$) was discovered by Gurvits~\cite{gurvits2004}, 
a paper which serves as the starting point for this work.\footnote{It is interesting to note that most recent progress on deterministic 
PIT algorithms (e.g.~\cite{KS01, DS06, KS07, SV11, ForShp} among many others) are for polynomials computed by a variety 
of restricted classes of 
arithmetic circuits. Algorithm $G$ seems to differ from all of them in solving PIT for a very different class of polynomials, which we do 
not know how to classify in arithmetic complexity terms.} Indeed Gurvits' algorithm, which we will denote by Algorithm $G$, is 
{\em the same} one we use here for our main Theorem~\ref{main}; 
our main contribution here is the {\em analysis} of its performance for any input $L$, not necessarily a compression space. We will 
return to the algorithm and its analysis, but first let us discuss its extension from testing singularity to rank computation. Before that, we  state  Gurvits' theorem, which actually proves more about Algorithm $G$.

\begin{theorem}[\cite{gurvits2004}]\label{Gur}
There is a deterministic polynomial time algorithm, Algorithm $G$, which for every $n$ and every $n\times n$ matrix $L$ given 
by a set of integer matrices $(A_1, A_2, \dots , A_m)$, outputs ``singular'' or ``invertible'', and its output is guaranteed to be 
correct\footnote{For both the commutative and non-commutative definitions.} when either $\emph{\rk}(L)=n$ or $\emph{\ncrk}(L)<n$.
\end{theorem}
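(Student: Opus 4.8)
The plan is to recognize Algorithm $G$ as operator Sinkhorn scaling and to control it with Gurvits' \emph{capacity}. To the tuple $(A_1,\dots,A_m)$ associate the completely positive map $T(X)=\sum_i A_iXA_i^\dagger$ and set $\capac(T)=\inf\{\det T(X): X\succ 0,\ \det X=1\}$. Algorithm $G$ alternately \emph{left-normalizes} (replace $A_i$ by $(\sum_j A_jA_j^\dagger)^{-1/2}A_i$) and \emph{right-normalizes} (replace $A_i$ by $A_i(\sum_j A_j^\dagger A_j)^{-1/2}$); if at any stage one of these Gram matrices is singular it outputs ``singular'' (then the $A_i$ share a nonzero left- or right-kernel vector, so $L$ is singular over both $\F(\by)$ and $\Fxx$); otherwise after each round it tests whether $\|\sum_i A_iA_i^\dagger - I\|_F<\tfrac1{2n}$ (the matrix $\sum_i A_i^\dagger A_i$ being exactly $I$ right after a right-normalization) and outputs ``invertible'' if so; after a fixed number $N=\poly(n,m,b)$ of rounds without success it outputs ``singular''. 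All arithmetic is carried out with $\poly(n,m,b)$ bits of precision.

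Two elementary facts drive the analysis. First, scaling is multiplicative on capacity: replacing $(A_i)$ by $(BA_iC)$ multiplies $\capac(T)$ by $|\det B|^2|\det C|^2$, since $\det$ is multiplicative and the feasible set of the infimum merely rescales. Second, after a left-normalization $\sum_i A_iA_i^\dagger=I$, so $\det T(I)=1$ and hence $\capac(T)\le 1$; and after a right-normalization $\tr\sum_i A_iA_i^\dagger=\tr\sum_i A_i^\dagger A_i=n$, so AM--GM gives $\det\sum_i A_iA_i^\dagger\le 1$, with a quantitative gap from the strong concavity of $\log$ on $(0,n]$: if $\|\sum_i A_iA_i^\dagger-I\|_F^2\ge\eta$ then $\det\sum_i A_iA_i^\dagger\le e^{-c\eta/n}$ for a universal $c>0$. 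Combining these, one full round multiplies $\capac(T)$ by at least $e^{c\eta/n}$, where $\eta$ is the squared Frobenius distance of $\sum_i A_iA_i^\dagger$ from $I$ at the start of the round (the right-normalization step contributes a factor $\ge 1$ by the same estimate), and $\capac(T)\le 1$ holds after the first round.

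For the case $\rk(L)=n$ I would first lower-bound the starting capacity. By Fact~\ref{CommutEquiv}(3) some $B=\sum_i\beta_i A_i$ with integer $\beta_i$ of magnitude $\poly(n,m)$ is nonsingular, hence $|\det B|\ge 1$; and for every $X=YY^\dagger\succ 0$, Cauchy--Schwarz gives $BXB^\dagger\preceq\|\beta\|_1^2\sum_i A_iXA_i^\dagger$, so $\det T(X)\ge\|\beta\|_1^{-2n}\det(BXB^\dagger)=\|\beta\|_1^{-2n}|\det B|^2\det X$ and therefore $\capac(T)\ge\|\beta\|_1^{-2n}=2^{-\poly(n,m)}$. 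The first left-normalization can shrink this by at most the reciprocal of the positive integer $\det\sum_i A_iA_i^\dagger\le 2^{\poly(n,m,b)}$, so after round $1$ the capacity is at least $\gamma:=2^{-\poly(n,m,b)}$. Since the capacity never exceeds $1$ and each round failing the ``$\tfrac1{2n}$'' test multiplies it by at least $e^{c/(4n^3)}\ge 1+\Omega(1/n^3)$, at most $O(n^3\log(1/\gamma))=\poly(n,m,b)$ rounds can fail; as the Gram matrices stay nonsingular throughout (the scalings preserve the fact that the span contains a nonsingular matrix), choosing $N$ just above this bound forces the algorithm to output ``invertible'', which is correct because $\rk(L)=n$ implies $L$ is invertible over $\F(\by)$ and, since $\ncrk(L)\ge\rk(L)=n$, also over $\Fxx$.

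For the case $\ncrk(L)<n$ I would use that, by Theorem~\ref{Equivalences}(5), there are subspaces $U,W$ of $\F^n$ with $\dim W<\dim U$ and $A_iU\subseteq W$ for all $i$, a property preserved by all the (invertible) scalings. Suppose some round ended with $\sum_i A_i^\dagger A_i=I$ and $\|\sum_i A_iA_i^\dagger-I\|_F<\tfrac1{2n}$, so the test would fire. Writing $\Pi_U,\Pi_W$ for the orthogonal projections and using $A_iU\subseteq W$ (so $\Pi_W A_i\Pi_U=A_i\Pi_U$), one has $\sum_i\|A_i\Pi_U\|_F^2=\tr(\Pi_U(\sum_i A_i^\dagger A_i)\Pi_U)=\dim U$, while $\sum_i\|A_i^\dagger\Pi_W\|_F^2=\tr(\Pi_W(\sum_i A_iA_i^\dagger)\Pi_W)\le(1+\tfrac1{2n})\dim W$, and $\|A_i^\dagger\Pi_W\|_F\ge\|\Pi_U A_i^\dagger\Pi_W\|_F=\|\Pi_W A_i\Pi_U\|_F=\|A_i\Pi_U\|_F$; hence $(1+\tfrac1{2n})\dim W\ge\dim U$, which forces the integer inequality $\dim W\ge\dim U$ and contradicts $\dim W<\dim U$. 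So the test never fires, the algorithm outputs ``singular'', and this is correct since $\ncrk(L)<n$ means $L$ is singular over $\Fxx$ and $\rk(L)\le\ncrk(L)<n$ means it is singular over $\F(\by)$. It remains to check that $\poly(n,m,b)$ bits of precision preserve every inequality above (each round perturbs the Gram matrices by an exponentially small amount that can be absorbed into the constants) and that each round costs $\poly(n,m,b)$ arithmetic --- routine but tedious. I expect the only genuinely delicate points to be making the per-round capacity increase $1+\Omega(1/n^3)$ honest (the strong-concavity estimate for $\log$) and carrying it faithfully through the finite-precision iterates; the capacity lower bound in the $\rk(L)=n$ case is easy once the nonsingular combination $B$ is in hand.
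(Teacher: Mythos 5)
Your proof is correct and follows essentially the same strategy the paper (following Gurvits) uses: treat $\capac(T)$ as a potential function, use multiplicativity of capacity under operator scaling (Proposition~\ref{cap_mul}), the upper bound $\capac\le 1$ after a normalization (Proposition~\ref{cap_bound_1}), and a quantitative AM--GM style progress lemma to convert a persistent failure of the doubly-stochastic test into unbounded growth of capacity, which is impossible. The easy lower bound on the initial capacity when $\rk(L)=n$ (via a nonsingular integer combination $B=\sum\beta_iA_i$ with small $\beta_i$) is exactly the place where Gurvits' original argument stops short of the general case, and you reproduce it accurately.

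The one genuinely different-looking step is the treatment of $\ncrk(L)<n$. The paper proves the more general Theorem~\ref{DS_full} (``if $\ds(T)\le 1/(n+1)$ after a one-sided normalization then $T$ is rank non-decreasing'') by eigen-decomposing an arbitrary psd witness and using the Cauchy--Schwarz estimate $\tr[\sum R_i]\ge r-\sqrt{r/(n+1)}$, then combines it with the equivalence between rank-decreasing and shrunk subspaces (Proposition~\ref{shrunk=decreasing}). You instead invoke the shrunk-subspace characterization directly and run the Frobenius-norm/trace computation on the fixed witness $X=\Pi_U$, deriving $\dim U\le\dim W$ immediately. This is a cleaner, more self-contained route for this specific case and avoids restating the general lemma, at the cost of not giving the reusable ``near-DS $\Rightarrow$ rank non-decreasing'' statement that the paper needs elsewhere (e.g.\ in Lemma~\ref{approximate_fixedpt}). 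Two minor quantitative remarks: your claimed bound $\det\sum_iA_iA_i^\dagger\le e^{-c\eta/n}$ is weaker than (and, for $\eta>\Omega(n)$, not literally implied by) the sharp bound $e^{-\eta/6}$ of Lemma~\ref{lem:prod-upper-bd}; since you only ever apply it with $\eta\ge 1/(4n^2)$ and the true per-round gain is at least $e^{\min(\eta/6,\,1/6)}$, your conclusion $1+\Omega(1/n^3)$ is correct but lossy compared with the paper's $1+\Omega(1/n)$. And the finite-precision bookkeeping you defer is done in the paper via truncation (Algorithm~\ref{Gurvits_alg_trn} and Lemma~\ref{trunc}); deferring it is reasonable in a sketch but it is not entirely trivial, since one must show the error in $\ds$ stays below $\Theta(1/n)$ across $\poly$ many rounds.
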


In particular, the algorithm always gives the correct answer for compression spaces. Our result on efficient computation of the non-commutative rank implies that for compression spaces we can determine the {\em commutative rank}. 

Many natural spaces are compression spaces, and these arise from a variety of motivations and sources in linear algebra, geometry 
and optimization. One such source is the attempt to characterize singular spaces (namely when $\rk(L)<n$) by relating the rank to the 
dimension of the space matrix $L$ defines, denoted $\dim(L)$. Perhaps the earliest result of this type is due to Dieudonn\'{e}~\cite{Die} who 
proved that if $L$ is singular and $\dim(L)=n^2-n$, then for some nonsingular matrices $B,C$ over $\F$ the matrix $BLC$ has an all-zero 
row or all-zero column. In other words, it has a Hall-blocker as in condition (6) in Theorem~\ref{Equivalences}, and so is in particular 
decomposable. Many other similar results appear in e.g. \cite{Atkinson, AtkLyd,Beasley,EisHar,Mes}, which prove decomposability 
under more general conditions on the dimension, and study a variety of other properties of singular and rank-bounded spaces. 

Yet another source, elaborated on in~\cite{gurvits2004}, is geometric duality theorems, such as the ones for matroid intersection and 
matroid parity problems. These sometimes give rise to compression spaces, with one prototypical example being spaces where the 
generating matrices $A_i$ all have rank-1; this follows from the Edmonds-Rado~\cite{Edm_matroid, rado} duality theorem for matroid 
intersection\footnote{It is an interesting question whether a compression space naturally arises from matroid parity duality of 
Lovasz~\cite{Lov80, Lov1989}.}. Another, simpler example are spaces generated by upper-triangular matrices. Other examples of 
compression spaces are given in ~\cite{gurvits2004}. Following his paper, we note that our rank algorithm above can solve such 
optimization problems {\em in the dark}, namely when the given subspace has such {\em structured} spanning generators, but the 
generators $A_i$ actually given to the algorithm may be {\em arbitrary}. This is quite surprising, and in general cannot work by 
uncovering the original structured generators from the given ones: an efficient algorithm is not known for the problem of deciding if a given space
of matrices is spanned by rank-1 matrices and we believe it is hard. It would be interesting to explore which other optimization problems can be
encoded as the rank of a compression space, and whether the fact that it can be computed ``in the dark" has any applications.

In recent related work, \cite{IKQS} give a different ``in the dark'' algorithm for computing the rank of certain compression spaces (including the ones spanned by rank-1 matrices) using so-called second Wong sequences, a linear algebraic analog of ``augmenting paths for matchings". Its main advantage is that it works not only over subfields of $\C$ as our algorithm, but also over large enough finite fields. This resolves an open problem in~\cite{gurvits2004}.

\subsection{Permanents, Quantum Operators, Origins and Nature of Algorithm $G$}\label{LSWanalysis}

We will describe the algorithm formally in Section~\ref{AlgG}. Here we give an informal description, which makes it easy to explain its origins and nature. We find these aspects particularly interesting. One  (which has very few parallels) is that while the problem SINGULAR solved by Algorithm $G$ in Theorem~\ref{main} is purely {\em algebraic}, the algorithm itself is purely {\em analytic}; it generates from the input a sequence of complex-valued matrices and attempts to discover if it is convergent. Another is that the algorithm arises as a quantum analog of another algorithm with very different motivation that we now discuss.

To give more  insight to the working of algorithm $G$, let us describe another algorithm (for a different problem) which inspired it, which we  call Algorithm $S$. This {\em matrix scaling} algorithm was developed by Sinkhorn~\cite{Sink} for applications in numerical analysis, and has  since found many other applications (see survey and references in~\cite{LSW}, who used it as a basis for their deterministic algorithm for approximating the permanent of non-negative matrices).
Two different analyses of Sinkhorn's algorithm $S$, one of~\cite{LSW} and the other in the unpublished~\cite{GurYianilos} inspire the analysis of Algorithm $G$ in~\cite{gurvits2004}. 

We describe the \cite{LSW} analysis for Algorithm $S$.
We need a few definitions. For a non-negative matrix $A$, let $R(A)$ denote the diagonal matrix whose $(i,i)$-entry is the inverse of the $L_1$ norm of row $i$ (which here is simply the sum of its entries as $A$ is non-negative). Similarly $C(A)$ is defined for the columns\footnote{A ``non-triviality'' assumption is that no row or column in $A$ is all zero.}. 

Algorithm $S$ gets as input a non-negative integer matrix $A$. For a fixed polynomial (in the input size) number of iterations it repeats the following two steps
\begin{itemize} 
\item Normalize rows: $A \leftarrow R(A)\cdot A$
\item Normalize columns: $A \leftarrow A\cdot C(A)$
\end{itemize}

What does this algorithm do? It is clear that in alternate steps either $R(A)=I$ or $C(A)=I$, where $I$ is the identity matrix. Thus $A$ itself alternates being row-stochastic and column-stochastic. The question is whether both converge to $I$ together, namely, if this process converts $A$ to a {\em doubly stochastic} matrix. In \cite{LSW} it is proved that this happens if and only if $\Per(A) >0$, where $\Per$ is the permanent polynomial. Moreover, convergence is easy to detect after a few iterations! If we define $\ds(A) = || R(A)-I ||^2 + || C(A)-I ||^2$ as  a notion of distance between $A$ and the doubly stochastic matrices, then the convergence test is simply whether $\ds(A) < 1/n$. If it is that small at the end of the algorithm then $\Per(A) >0$, otherwise $\Per(A) =0$. 

The analysis of convergence of Algorithm $S$ in~\cite{LSW} is extremely simple, using the permanent itself as a progress measure (or potential function). It has the usual three parts which makes a potential function useful: 
\begin{enumerate}
\item The input size provides an exponential lower bound on the starting value of $\Per(A)$, 
\item The arithmetic-geometric mean inequality guarantees that it grows by a factor of $1+1/n$ at every iteration, and 
\item The permanent of any stochastic matrix is upper bounded by 1. 
\end{enumerate}
We shall return to this analysis of Algorithm $S$ soon.

As it happens, Algorithm $G$ is a {\em quantum} analog of Algorithm $S$! In quantum analogs of classical situations two things typically happen, diagonal matrices (which commute) become general matrices (which do not), and the $L_1$ norm is replaced by $L_2$. This happens here as well, and we do so almost syntactically, referring the reader to~\cite{gurvits2004} for their quantum information theoretic intuition and meaning of all notions we mention. 

The input  to algorithm $G$ is a symbolic matrix $L=\sum_i x_i A_i$, given by the $n\times n$ integer matrices $(A_1, A_2, \dots ,A_m)$. 
Briefly, $L$ is viewed as a {\em completely positive (quantum) operator}, or map,  on psd matrices, mapping such a (complex valued) matrix $P$ to 
$L(P)= \sum_i A_i PA_i^\dagger$ ($P$ is typically a ``density matrix'' describing a quantum 
state, namely a psd matrix with unit trace, and the operator $L$ will typically preserve trace or at least not increase it). The dual operator 
$L^*$ acts (as you'd expect) by $L^*(P)= \sum_i A_i^\dagger PA_i$. The analog ``normalizing factors'' for $L$, named $R(L)$ and $C(L)$  are defined\footnote{Again using a ``non-triviality'' assumption these matrices are invertible.} 
 by $R(L) = ( \sum_i A_i A_i^\dagger)^{-\frac12}$, and $C(L) = ( \sum_i A_i^\dagger A_i)^{-\frac12}$. Note that 
$R(L)=L(I)^{-\frac12}$ and $C(L)=L^*(I)^{-\frac12}$.

On input $(A_1, A_2, \dots ,A_m)$ Algorithm $G$ repeats, for a fixed polynomial (in the input size) number of iterations, the following analogous two steps
\begin{itemize} 
\item Normalize rows: $L \leftarrow R(L)\cdot L$
\item Normalize columns: $L \leftarrow L\cdot C(L)$
\end{itemize}

So again, row and column operations are performed alternately, simultaneously on all matrices $A_i$. It is clear, as above, that after each step either $R(L)=I$ or $C(L)=I$. It is natural to define the case when both occur as ``doubly stochastic'', and wonder under what conditions does this sequence converge, namely both $R(L)$ and $C(L)$ simultaneously approach $I$, and alternatively the limiting $L$ simultaneously fixes the identity matrix $I$. A natural guess would be that it has to do with a ``quantum permanent''. Indeed, Gurvits~\cite{gurvits2004} defines such a polynomial (in the entries of the $A_i$) $\QuantPer(L)$, and proves several properties and characterizations (for example, it is always non-negative like the permanent, and moreover specializes to the permanent when the operator $L$ is actually a ``classical'' operator described by a single non-negative matrix $A$). 

One can similarly define in an analogous way to the classical setting a ``distance from double stochastic" by 
$\ds(L) = || R(L)-I ||^2 + || C(L)-I ||^2$, and test (after polynomially many iterations) if $\ds < 1/n$. This is precisely what Algorithm $G$ 
does. It is not hard to see that if $L$ (with non-commuting variables) is singular, then there is no convergence, the test above fails
(and indeed $\QuantPer(L)=0$). However, in contrast to Algorithm $S$, the analysis in~\cite{gurvits2004} falls short of proving that, 
otherwise we have convergence (and $\QuantPer(L)>0$). It does so  {\em only} under the strong (commutative) assumption  
$\Det(L) \neq 0$, namely when the symbolic matrix, viewed with {\em commuting} variables, is nonsingular. This result was stated above as Theorem~\ref{Gur}. The 3-step complexity analysis proceeds exactly as in \cite{LSW} for the classical Algorithm $S$ described above, using the quantum permanent as a progress measure. However, if $\Det(L) = 0$ then $\QuantPer(L) =0$, and lower bounding the initial quantum permanent from below in part (1) of 
that  analysis fails.

To prove our main theorem, showing that convergence happens {\em if and only if} $L$ is non-singular over the non-commutative skew  field, we use another ingredient from Gurvits' paper. He proposes another progress measure, called {\em capacity} and denoted  $\capac(L)$. Capacity (defined in the next section) is a quantum analog of another classical progress measure, based on 
KL-divergence, which was used in~\cite{GurYianilos} for the analysis of Algorithm $S$. The advantage of capacity, as pointed out in~\cite{gurvits2004}, is that it does not necessarily vanish when $\Det(L) = 0$. Indeed, it is positive whenever $L$ is non-singular in the non-commutative sense!
However Gurvits could only bound it sufficiently well from below as a function of the input size if $\Det(L) \neq 0$.  Thus, for a general input $L$, positive capacity only guarantees that Algorithm $G$ converges in a  finite number of steps, without providing an upper bound on that number.

Our main contribution, which leads to a polynomial upper bound, is proving an explicit capacity lower bound for {\em every} $L$ that is nonsingular over the skew field. The source of the proof turns out to be commutative algebra and invariant  theory, as we discuss next.

\subsection{Invariant Theory, Left-Right Action, Capacity and the Analysis of Algorithm $G$} 

Invariant theory is a vast field; we will exposit here only the minimal background that is necessary for this paper. The books~\cite{CLO07, Derksen-Kemper, Kraft-Procesi} provide  expositions on the general theory. 
More focused discussions towards our applications appear in the appendix of~\cite{HW14} and Section 1.2 
of~\cite{Forbes-Shpilka}. 

Invariant  theory deals with understanding the symmetries of mathematical objects, namely transformations of the underlying space 
which leave an object unchanged or {\em invariant}. Such a set of transformations always form a group (and every group arises this way). 
One major question in this field is, given a group acting on a space, characterize the objects left invariant under all elements of the group. 
Here we will only discuss very specific space and actions:  polynomials (with commuting variables!) that are left invariant under certain linear transformations of the variables. The invariant polynomials under such action clearly form a ring (called the {\em invariant ring}). Important for us is the  {\em null-cone} of the action, which is simply all assignments to variables which vanish on {\em all} non-constant homogeneous invariant polynomials. Namely the null cone is the affine algebraic variety {defined} by the ideal generated by homogeneous invariant polynomials of positive degree.

Here are two motivating examples surely familiar to the reader. The first example is given by polynomials in $n$ variables $y_1,\dots ,y_n$ 
invariant under the full group of permutations $S_n$; here the invariant polynomials are (naturally) all {\em symmetric} polynomials. 
While this is an infinite set, containing polynomials of arbitrarily high degrees, note that this set of polynomials is generated (as an algebra) 
by the $n+1$ {\em elementary symmetric polynomials}, with the largest degree being $n$. This finite generation is no coincidence! A general, important 
theorem of Hilbert~\cite{Hil} assures us that  for ``such" linear actions there is always a finite generating set (and hence a finite upper 
bound on their maximum degree). Obtaining upper bounds on the degree of generating sets, finding descriptions of minimal generating 
sets for natural actions are the classical goals of this area, and a more modern one\footnote{Arising in particular in the GCT program of Mulmuley and Sohoni} is obtaining succinct descriptions and efficient 
computation of these invariants (e.g. see ~\cite{Mulmuley,Forbes-Shpilka}). The case of the action of the symmetric group is an 
excellent example of a perfect understanding of the invariant polynomial ring. Note that for this action the null-cone is simply the all-zero vector.

The second example, much closer to our interest here, is the ``Left-Right'' action of the group $(SL_n (\F))^2$ acting on $n^2$ variables 
$y_{ij}$ arranged in an $n\times n$ matrix $Y$. Specifically, two $n\times n$ matrices (of determinant 1) $(B,C)$ take $Y$ to $BYC$, changing the basis on the left and right \footnote{Note that for it to be a group action in the strict sense, one should study the action which takes $Y$ to $BYC^{-1}$ or $BYC^{T}$ but for simplicity, we will avoid this distinction.}. It is not hard to see that all invariant polynomials under this action are generated by one - the 
determinant $\Det(Y)$ (which again has degree $n$). Consequently the null-cone of this action is all singular matrices over the field.

What we consider here is the left-right action on $m$ $n\times n$ matrices, where a pair $(B,C)$ as above takes $(Y_1,Y_2, \dots ,Y_m)$ to $(BY_1C,BY_2C, \dots ,BY_mC)$. The study of the invariant ring of polynomials (in the $mn^2$ variables sitting in the entries of these matrices) for this action was achieved\footnote{We note that this is part of the larger project of understanding {\em quiver representations}, started by the works of Procesi, Razmysolov, and Formanek \cite{procesi, razmyslov, formanek}.} by \cite{DW2000, DZ2001, SVdB2001, ANS10}, and will look very familiar from condition (4) in Theorem \ref{Equivalences}, as well as from Amitsur's definition of the free skew field\footnote{Note though that the roles of which matrices in the tensor product are variable, and which are constant, has switched!}

\begin{theorem}[\cite{DW2000, DZ2001, SVdB2001, ANS10}]\label{invariant-left-right}
Over algebraically closed fields, the invariant ring of polynomials of the left-right action above is generated by all polynomials of the form $\Det \left(\sum_i D_i \otimes Y_i \right)$, for all $d$ and all $d \times d$ matrices $D_i$.
\end{theorem}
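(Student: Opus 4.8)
The plan is to prove this as the \emph{first fundamental theorem} for the left--right action, via the theory of quiver semi-invariants (the route taken in the cited works). First I would translate into quiver language: the space of $m$-tuples $(Y_1,\dots,Y_m)$ of $n\times n$ matrices is the representation space $\mathrm{Rep}(Q,(n,n))$ of the $m$-arrow Kronecker quiver $Q$ (vertices $1,2$; arrows $a_1,\dots,a_m\colon 1\to 2$) at dimension vector $(n,n)$, with $GL_n\times GL_n$ acting by base change at the two vertices --- this is exactly the left--right action (up to transposing one factor, which is harmless). Over an algebraically closed field a homogeneous $SL_n\times SL_n$-invariant $f$ of total degree $D$ is automatically a $GL_n\times GL_n$-\emph{semi}-invariant: writing $g\in GL_n$ as $(\det g)^{1/n}$ times an element of $SL_n$ (an $n$-th root exists) and using homogeneity gives $f(gY_ih)=(\det g)^{D/n}(\det h)^{D/n}f(Y)$, so $f$ has weight $(\det^{-k},\det^{k})$ with $kn=D$ (in particular $n\mid D$). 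Thus the invariant ring equals $\bigoplus_{k\ge 0}\mathrm{SI}(Q,(n,n))_{(-k,k)}$, a sum of weight spaces of the semi-invariant ring, and it suffices to describe each weight space.

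Next I invoke the first fundamental theorem for quiver semi-invariants (Derksen--Weyman, and independently Schofield--Van den Bergh and Domokos--Zubkov): for an acyclic quiver the weight space $\mathrm{SI}(Q,\beta)_\sigma$ is spanned by the \emph{determinantal} semi-invariants $c^V$, where $V$ ranges over representations with $\langle\underline{\dim}\,V,\beta\rangle=0$ and $c^V(W)=\det d^V_W$, the map $d^V_W\colon\bigoplus_i\mathrm{Hom}(V_i,W_i)\to\bigoplus_{a\colon i\to j}\mathrm{Hom}(V_i,W_j)$ being $(\phi_i)\mapsto(W(a)\phi_{s(a)}-\phi_{t(a)}V(a))_a$ (square precisely by the Euler-form condition). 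For the Kronecker quiver $\langle(d,d'),(n,n)\rangle=n\big(d'-(m-1)d\big)$, so the relevant $V$ have dimension vector $d\cdot(1,m-1)$; then $c^V$ has weight $\pm(-d,d)$ and degree $dn$ in the entries of $W$, matching the grading above.

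The heart of the proof, and the only step using the special structure of the Kronecker quiver, is to bring each $c^V$ into the stated form. The map $d^V_W$ is \emph{affine-linear in $W$}: the terms $W(a_i)\phi_{1}$ are linear in the $Y_i$, while the terms $\phi_{2}V(a_i)$ are $W$-free. As a square matrix, therefore, $d^V_W=[\,M_1(W)\mid M_2\,]$, where $M_1(W)$ is (the $mn\times n$ matrix stacking $Y_1,\dots,Y_m$) tensored with $I_d$, and $M_2=-I_n\otimes E$ for a fixed $md\times(m-1)d$ matrix $E$ assembled from the $V(a_i)$. For generic $V$ the matrix $E$ has full column rank, so a row reduction (changing $\det$ only by a nonzero scalar) turns $M_2$ into $-I_n\otimes\binom{I_{(m-1)d}}{0_{d}}$; a Laplace/Schur expansion along the resulting identity block then collapses $\det d^V_W$ to $\pm\det\big(\sum_i D_i\otimes Y_i\big)$ for $d\times d$ matrices $D_i$ that can be read off from $V$ (specifically, from the reduction of $E$). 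By density of generic $V$ and polynomiality of $\det$, the span of the weight-$(-d,d)$ semi-invariants $c^V$ lies in the span of $\{\det(\sum_i D_i\otimes Y_i):D_i\text{ are }d\times d\}$. Conversely each $\det(\sum_i D_i\otimes Y_i)$ \emph{is} invariant, since $\sum_i D_i\otimes(BY_iC)=(I_d\otimes B)\big(\sum_i D_i\otimes Y_i\big)(I_d\otimes C)$ and $\det B=\det C=1$; moreover this family is closed under products via $D_i\mapsto D_i\oplus D_i'$, so it forms a subring that, by the above and the quiver FFT, spans and hence equals the whole invariant ring. (Finite generation is anyway automatic from Hilbert's theorem, $SL_n$ being reductive.)

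The main obstacle is precisely this last reduction: the ``$\det$ of a Kronecker product'' normal form is easy to believe but needs careful bookkeeping, because the two tensor factors appear in opposite orders in $M_1$ and $M_2$, and it genuinely exploits that the quiver is Kronecker (for a general acyclic quiver the $c^V$ do not simplify this way). A more classical alternative bypasses quivers and uses the first fundamental theorem for matrix invariants of Procesi and Razmyslov: expanding $\det(\sum_i D_i\otimes Y_i)$ in the entries of the $D_i$ produces multilinear invariants that can be written through traces of words in the $Y_i$, and one shows conversely that all invariants arise this way; but making the dictionary between the left--right and the conjugation pictures precise is itself delicate, so I would keep the quiver argument as the main line.
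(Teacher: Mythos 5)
The paper cites this result to \cite{DW2000, DZ2001, SVdB2001, ANS10} and does not supply a proof of its own, so there is no in-paper argument to compare against; what you have written is a reconstruction of the proof in (especially) Derksen--Weyman, and it is essentially correct. The chain is the standard one: identify the left--right action with base change on $\mathrm{Rep}(Q,(n,n))$ for the $m$-arrow Kronecker quiver, observe that $SL_n\times SL_n$-invariants are exactly the $GL_n\times GL_n$-semi-invariants of the relevant weights (your scalar-matrix argument correctly forces $n\mid D$; the sign pattern $(-k,k)$ is the right one for the base-change convention at source and sink), invoke the FFT for quiver semi-invariants so that each weight space is spanned by the Schofield determinantal invariants $c^V$ with $\langle\underline{\dim}\,V,(n,n)\rangle=0$, and then reduce $c^V$ to the stated form. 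Your Euler-form computation $\underline{\dim}\,V = d\cdot(1,m-1)$ is right, and the degree $dn$ and weight $(-d,d)$ match.

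The one step you flagged as needing ``careful bookkeeping'' --- the Kronecker normal form for $\det d^V_W$ --- does in fact go through, and the tensor-order worry resolves cleanly if you index the codomain consistently as $\C^m\otimes(\C^d)^*\otimes\C^n$: then $M_2=-E\otimes I_n$ with $E\colon\C^{(m-1)d}\to\C^{md}$ the stack of the $V(a_i)^*$, and $M_1=\sum_i (e_i\otimes I_d)\otimes Y_i$, so left-multiplication by $P\otimes I_n$ with $PE=\binom{I_{(m-1)d}}{0_d}$ turns the right block into an identity-and-zeros column, and the bottom-left $dn\times dn$ Schur block is exactly $\sum_i D_i\otimes Y_i$ with $D_i$ the last $d$ rows of $P(e_i\otimes I_d)$. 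Moreover, you do not actually need to invoke density of generic $V$: if $E$ fails to have full column rank there is a nonzero $\phi_2$ with $\phi_2 V(a_i)=0$ for all $i$, so $(0,\phi_2)\in\ker d^V_W$ for every $W$ and $c^V\equiv 0$ --- such $V$ contribute nothing and can be discarded outright, making the argument cleaner than a limiting one. Finally, note that the quiver FFT gives you vector-space spanning of each weight component by the $c^V$; since you show each nonzero $c^V$ is a scalar times some $\det(\sum_i D_i\otimes Y_i)$, you in fact get linear spanning in each degree, which is stronger than, and immediately implies, generation as a ring (so the $D_i\oplus D_i'$ closure remark, while correct, is not logically needed).
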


It is  worthwhile to stress the connection forged between the commutative and non-commutative worlds by this theorem when combined with Amitsur's and Cohn's constructions of the skew field. A set of matrices $(A_1,A_2,\dots ,A_m)$ is in the null-cone of the left-right action if and only if the symbolic matrix $L=\sum_i x_i A_i$ is not invertible in the free skew field! In other words, the 
non-commutative SINGULAR problem (and thus rational identity testing, and the word problem in the skew field) arises completely 
naturally in commutative algebra. Of course, invertibility itself is invariant under the left-right action (indeed, even by any invertible 
matrices $B,C$, not necessarily of determinant 1), so one expects a connection in at least one direction. At any rate, one may hope now that commutative algebraic geometric tools will aid in solving these non-commutative problems, and they do!

The generating set of the invariant ring given in the theorem is still an infinite set, and as mentioned above, Hilbert proved that some finite subset of it is already generating. 

\begin{theorem}[\cite{Hil}]\label{hilbert_finite} The invariant ring of polynomials of the action of a reductive group on a finite dimensional vector space is finitely generated.
\end{theorem}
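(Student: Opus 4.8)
The plan is to follow Hilbert's original argument, which combines the Hilbert basis theorem with the existence of a \emph{Reynolds operator}. Write $S = \F[V] = \bigoplus_{d\geq 0} S_d$ for the graded ring of polynomial functions on the representation space, and let $R = S^G = \bigoplus_{d} R_d$ be the (graded) invariant subring. The first ingredient I need is a graded $\F$-linear projection $\mathcal{R}\colon S \to R$ which restricts to the identity on $R$ and is $R$-linear, in the sense that $\mathcal{R}(rf) = r\,\mathcal{R}(f)$ for all $r \in R$ and $f \in S$. Its existence is exactly where reductivity is used: in characteristic zero every rational $G$-representation is completely reducible, so each $S_d$ splits $G$-equivariantly as $R_d \oplus S_d'$ with $S_d'$ the sum of the nontrivial isotypic components, and $\mathcal{R}$ is the projection onto the trivial part; the $R$-linearity is then automatic, since multiplication by an invariant is $G$-equivariant and hence commutes with the isotypic projection. (For the groups relevant to this paper — products of $SL_n$ — one can alternatively run Weyl's unitarian trick: restrict to a maximal compact subgroup, average with respect to Haar measure, and observe that the compact subgroup has the same invariants.)

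The second ingredient is the \emph{Hilbert ideal}. Let $R_+ = \bigoplus_{d>0} R_d$ be the irrelevant ideal of $R$ and let $I = R_+ S$ be the ideal it generates inside $S$. By the Hilbert basis theorem $S$ is Noetherian, so $I$ is finitely generated; since $I$ is spanned over $\F$ by homogeneous invariants of positive degree, I may choose homogeneous invariants $f_1,\dots,f_k \in R_+$ with $I = (f_1,\dots,f_k)$.

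The heart of the proof is to show $R = \F[f_1,\dots,f_k]$, which I would prove by induction on degree. Degree $0$ is clear. Let $f \in R_d$ with $d>0$; then $f \in R_+ \subseteq I$, so $f = \sum_{i=1}^k g_i f_i$ with $g_i \in S$, and by extracting homogeneous components I may take $g_i$ homogeneous of degree $d - \deg f_i < d$. Applying $\mathcal{R}$ and using its $R$-linearity together with $f_i \in R$ and $\mathcal{R}(f)=f$ gives
\[
f = \mathcal{R}(f) = \sum_{i=1}^k \mathcal{R}(g_i)\, f_i,
\]
where each $\mathcal{R}(g_i) \in R$ is homogeneous of degree strictly less than $d$, hence lies in $\F[f_1,\dots,f_k]$ by the inductive hypothesis. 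Therefore $f \in \F[f_1,\dots,f_k]$, which closes the induction.

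The main obstacle is the construction of the Reynolds operator; everything afterwards is formal. In characteristic zero this rests on complete reducibility of rational representations of reductive groups (equivalently, linear reductivity). In positive characteristic reductive groups need not be linearly reductive, and one must instead invoke \emph{geometric reductivity} (Haboush's theorem, formerly Mumford's conjecture) and a correspondingly more delicate variant of the degree induction. Since the results of this paper live over $\Q$ (and $\R$, $\C$), I would present the characteristic-zero route via complete reducibility, which is the cleanest.
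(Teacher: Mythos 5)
The paper states this as a classical theorem and cites Hilbert without giving a proof, so there is nothing to compare against; your write-up is the standard Reynolds-operator/Hilbert-ideal argument and is essentially correct. One small imprecision: $I = R_+ S$ is not ``spanned over $\F$'' by homogeneous invariants — it is the $S$-ideal they generate; the correct justification for choosing $f_1,\dots,f_k \in R_+$ is that $S$ is Noetherian, so a finite generating set exists, and each of those finitely many generators is a finite $S$-linear combination of homogeneous invariants, whence the (finite) set of homogeneous invariants appearing in those combinations already generates $I$. With that fixed, the degree induction using $\mathcal{R}$-linearity closes cleanly as you wrote it.
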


Hilbert proved his theorem for the actions of $SL_n (\C)$ but his proof can be easily seen to be extendable to arbitrary reductive groups. Since $(SL_n (\C))^2$ is reductive, 
Hilbert's theorem implies that the invariant ring of polynomials of the left-right action is finitely generated. 
A natural definition to make now is $\beta(n)$, the smallest integer such that taking only matrices with 
$d\leq \beta(n)$ generates the invariant ring\footnote{This bound may a-priori depend on $m$, the number of matrices, but we already noted that $m\leq n^2$.}. Hilbert did not give an explicit bound. The first to do so was Popov~\cite{Popov}, and this 
bound was significantly improved by Derksen \cite{derksen}. A related quantity $\sigma(n)$ is the smallest integer such that taking only 
matrices with $d\leq \sigma(n)$ suffice for testing membership in the null-cone. Derksen  \cite{derksen} proved that $\sigma(n)$ and 
$\beta(n)$ are polynomially related (over algebraically closed fields of characteristic zero) and that (for algebraically closed fields of characteristic zero) 
$\sigma(n) \leq n^2 4^{n^2}$. All bounds above hold for general reductive groups. For the specific left-right action further progress was made! First, Derksen's bound was further improved to a smaller exponential by~\cite{IQS2015}, and extended to finite fields.

\begin{theorem}[\cite{IQS2015}]
$\sigma(n) \leq (n+1)!$
\end{theorem}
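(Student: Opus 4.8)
The plan is to translate the claim into the vocabulary of Theorems~\ref{Equivalences}, \ref{invariant-left-right} and \ref{cohn-nc=inner}, reducing it to a purely quantitative ``blow-up'' statement, and then to prove that statement by a recursion on $n$ whose multiplicative cost accumulates to $(n+1)!$.

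First I would carry out the reduction. By Theorem~\ref{invariant-left-right} the invariant ring of the left--right action on $(A_1,\dots,A_m)$ is generated by the polynomials $\Det(\sum_i D_i\otimes A_i)$ over all $d$ and all $d\times d$ matrices $D_i$; hence $(A_i)$ lies in the null-cone iff all of these vanish at $(A_i)$, and ``$\sigma(n)\le (n+1)!$'' is the assertion that it suffices to range over $d\le(n+1)!$. Taking contrapositives, and using that over an infinite field a polynomial is nonzero iff it takes a nonzero scalar value, the claim becomes: if the generic $d$-blow-up $L^{\{d\}}:=\sum_i X_i\otimes A_i$ (the $X_i$ being $d\times d$ matrices of fresh commuting indeterminates) has full commutative rank $dn$ for \emph{some} $d$, then it does so for some $d\le(n+1)!$. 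By the equivalence of items (1), (4) and (8) of Theorem~\ref{Equivalences}, ``$\sum_i X_i\otimes A_i$ has rank $kn$ for some $k$'' is exactly ``$L=\sum_i x_iA_i$ is full'', i.e. $\ncrk(L)=n$. So the entire content is: \emph{if $L$ is full then $L^{\{d\}}$ is full for some $d\le(n+1)!$.}

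Next I would set up the quantitative framework. Write $r_d=\rk(L^{\{d\}})$. Two facts organize matters. (i) $(r_d)$ is superadditive: block-diagonal substitution of optimal witnesses of sizes $d$ and $e$ gives $r_{d+e}\ge r_d+r_e$, so $r_d/d$ is nondecreasing, converging to $\sup_d r_d/d$. (ii) This supremum equals $\ncrk(L)$: ``$\le$'' because a shrunk subspace $(U,W)$ of $L$ blows up to the shrunk subspace $(\F^d\otimes U,\F^d\otimes W)$ of $L^{\{d\}}$, forcing $\ncrk(L^{\{d\}})\le d\cdot\ncrk(L)$ and hence $r_d\le\ncrk(L^{\{d\}})\le d\cdot\ncrk(L)$; ``$\ge$'' is the known identity (in the Cohn--Fortin--Reutenauer circle underlying Theorem~\ref{cohn-nc=inner}) expressing the free-field rank as this blow-up limit. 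Thus, when $L$ is full, $r_d/d\to n$ from below; but this limiting behavior does \emph{not} by itself bound the first $d$ with $r_d=dn$ (a nonnegative subadditive integer sequence tending to $0$ in density need not vanish), so an \emph{effective} stabilization statement is required. That is the crux.

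The main obstacle, and the heart of the IQS argument I would follow, is exactly this effective regularity statement, and the route is a recursion on the matrix size. I would aim to prove, by induction on $n$, something of the shape: \emph{if $L$ is a full $n\times n$ symbolic matrix, then after an invertible change of basis one can pass---guided by a shrunk-subspace / second-Wong-sequence analysis---to a full $(n-1)\times(n-1)$ instance at the cost of multiplying the required blow-up dimension by a factor linear in $n$}; equivalently $d(n)\le(n+1)\,d(n-1)$ with $d(1)=1$, whence $d(n)\le(n+1)!$. The delicate points in the inductive step are (a) exhibiting the smaller instance and showing it is again full---here one argues that if \emph{every} bounded blow-up of $L$ were singular, the second Wong sequence of a suitable blow-up would produce a shrunk subspace, contradicting fullness---and (b) bounding the dimension blow-up incurred by the reduction so that it is only linear in $n$; the length ($\le n$) of the Wong-sequence chain together with the dimensions of its members are precisely what multiply up to the factorial. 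Everything else---superadditivity, the identification of the blow-up limit with $\ncrk$, and the passage back to the invariant-ring formulation---is bookkeeping built on the equivalences already recorded in the excerpt.
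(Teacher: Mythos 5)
The paper cites this bound from \cite{IQS2015} and does not reproduce a proof, so there is no internal argument for me to compare your sketch against; I can only assess the proposal on its own terms. The reduction you set up is sound: by Theorem~\ref{invariant-left-right} and the equivalences in Theorem~\ref{Equivalences}, non-membership in the null-cone is exactly fullness of $L$ over the free skew field, and $\sigma(n)\le(n+1)!$ is the assertion that some generic blow-up $L^{\{d\}}$ with $d\le(n+1)!$ already has full commutative rank. Superadditivity of $r_d=\rk(L^{\{d\}})$ by block-diagonal substitution, and the identity $\sup_d r_d/d=\ncrk(L)$, are also correctly invoked.

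Two things, however, keep this from being a proof. First, a small but real error: superadditivity does \emph{not} make $r_d/d$ nondecreasing in $d$. Fekete gives convergence to the supremum, and block-repetition gives $r_{kd}/(kd)\ge r_d/d$ along multiples, but not monotonicity in general; your phrase ``the first $d$ with $r_d=dn$'' quietly leans on a monotonicity you have not established. Second, and decisively, the entire mathematical content of the theorem lives in the inductive step you label ``delicate points (a) and (b),'' and you give only a one-sentence gesture toward each. Saying that a second-Wong-sequence analysis ``would produce a shrunk subspace'' and that the chain lengths ``multiply up to the factorial'' names the IQS tools without specifying what sequence is computed on which blow-up, why stabilization without filling the space produces a shrunk subspace of $L$ itself (rather than merely of $L^{\{d\}}$), how the reduced $(n-1)\times(n-1)$ instance is constructed and why it is again full, or where the factor $n+1$ per dimension drop comes from. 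The recurrence $d(n)\le(n+1)d(n-1)$ is precisely what one would need to \emph{prove}; as written the sketch neither proves it nor reduces it to anything recorded in the paper, so this remains a plan for a proof rather than a proof.
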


This exponential upper bound was the state of art when we wrote our paper. Subsequent to it, this bound was greatly improved to linear! 

\begin{theorem}[\cite{derksen2015, IQS15b}]\label{degree-bound}
$$\sigma(n) \leq n-1$$
\end{theorem}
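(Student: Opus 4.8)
The plan is to translate the invariant-theoretic claim into a statement about non-commutative rank and then establish a sharp \emph{regularity lemma} for blow-ups. By Theorem~\ref{invariant-left-right} the invariant ring of the left-right action is generated by the polynomials $\Det\!\bigl(\sum_i D_i\otimes Y_i\bigr)$, so a tuple $(A_1,\dots,A_m)$ lies in the null-cone iff $\Det\!\bigl(\sum_i D_i\otimes A_i\bigr)=0$ for \emph{all} $d$ and all $D_i\in M_d(\F)$; and by the dictionary assembled earlier (Amitsur's and Cohn's constructions of the skew field together with items (1)--(2) of Theorem~\ref{Equivalences}, equivalently Theorem~\ref{cohn-nc=inner}) this happens exactly when $\ncrk(L)<n$ for $L=\sum_i x_iA_i$. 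Hence the theorem is equivalent to the assertion: \emph{whenever $\ncrk(L)=n$ there are an integer $d\le n-1$ and matrices $D_1,\dots,D_m\in M_d(\F)$ with $\sum_i D_i\otimes A_i$ non-singular.} One inequality is essentially free: substituting $x_i\mapsto D_i$ into a factorization $L=KM$ of inner dimension $r=\ncrk(L)$ (which exists by Theorem~\ref{cohn-nc=inner}), extended to $n\times n$ matrices, gives $\rank\!\bigl(\sum_i D_i\otimes A_i\bigr)\le r\,d$ for every $d$ and every $D$. So no blow-up can certify more than $\ncrk(L)$ per copy, and the whole content is that $n-1$ copies already suffice when $\ncrk(L)=n$.

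The crux is the regularity lemma: if $L$ has size $n$ and $\ncrk(L)=n$, then the \emph{generic} blow-up of size $n-1$, namely $\widehat L=\sum_i X^{(i)}\otimes A_i$ with the entries of the $X^{(i)}\in M_{n-1}$ fresh commuting indeterminates, is non-singular over the rational function field; since $\F$ is infinite this yields, upon specializing the indeterminates, numeric $D_i\in M_{n-1}(\F)$ with $\sum_i D_i\otimes A_i$ non-singular, completing the proof. I would prove the lemma by contraposition. Assuming $\widehat L$ is singular, take a nonzero kernel vector; after clearing denominators it has polynomial entries in the $X$-variables, and one processes it component by component: its homogeneous slices, moved by the $A_i$, produce an increasing chain of subspaces $U_0\subseteq U_1\subseteq\cdots$ of $\F^n$ — a Wong sequence — each degree contributing one step. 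A chain of proper subspaces of $\F^n$ stabilises in fewer than $n$ steps, and at stabilisation one obtains a subspace $U$ with $\dim\!\bigl(\sum_i A_iU\bigr)<\dim U$, i.e.\ a shrunk subspace of $L$, contradicting $\ncrk(L)=n$ by item (5) of Theorem~\ref{Equivalences}. The size $n-1$ is precisely what a blow-up needs in order to ``host'' a Wong sequence of the length forced here.

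The main obstacle is exactly this quantitative matching. A soft version of the regularity lemma — that \emph{some} polynomially bounded blow-up works — is comparatively easy and already underlies the exponential estimates of~\cite{IQS2015}; pinning the admissible size down to $n-1$ requires showing that for a matrix with no shrunk subspace the relevant chain of subspaces can grow at most $n-1$ times, and that the degree-by-degree extraction above never calls for more room than a blow-up of size $n-1$ provides, so that the two counts run in lock-step. A minor issue to dispatch along the way is that the identification with the null-cone in Theorem~\ref{invariant-left-right} is stated over algebraically closed fields, whereas for applications such as Theorem~\ref{main} one wants the conclusion over $\Q$; but passing to the algebraic closure changes neither $\ncrk(L)$ nor the existence of a non-singular numeric blow-up, so nothing is lost. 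Finally, the bound is tight — matrices in the spirit of Example~\ref{skew-symm} have $\ncrk=n$ yet admit no non-singular blow-up of size $n-2$ — so in fact $\sigma(n)=n-1$ exactly.
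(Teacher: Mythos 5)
The paper does not prove Theorem~\ref{degree-bound}; it is stated as a citation to Derksen--Makam~\cite{derksen2015} and Ivanyos--Qiao--Subrahmanyam~\cite{IQS15b}, so there is no internal proof to compare against. Judged on its own, your first paragraph is correct: $\sigma(n)\le n-1$ is equivalent (via Theorems~\ref{invariant-left-right}, \ref{Equivalences} and~\ref{cohn-nc=inner}) to the assertion that whenever $\ncrk(L)=n$ there exist $D_i\in M_{n-1}(\F)$ with $\sum_i D_i\otimes A_i$ non-singular, the one-sided inequality $\rk\bigl(\sum_i D_i\otimes A_i\bigr)\le d\cdot\ncrk(L)$ via a factorization $L=KM$ is indeed the easy direction, and the tightness observation (the $3\times 3$ skew-symmetric example needs a blow-up of size $\ge 2$) is correct.

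However, the ``regularity lemma'' that you correctly identify as the crux is not actually proved, and you flag this yourself (``the main obstacle is exactly this quantitative matching''). Your sketch --- a polynomial kernel vector of the generic $(n-1)$-blow-up, sliced by degree, producing an increasing chain $U_0\subseteq U_1\subseteq\cdots$ that stabilises at a shrunk subspace --- has two genuine gaps. First, it is not explained why the stabilising subspace $U$ of the chain you describe satisfies $\dim\bigl(\sum_i A_i U\bigr)<\dim U$; in the second-Wong-sequence analysis of~\cite{IQS15b} the stabilisation dichotomy is more delicate (either the sequence ``escapes'' and one increases the rank of the blow-up, or it is ``trapped'' inside the image and only then does a shrunk subspace emerge), and extracting that dichotomy from the homogeneous components of a single kernel vector is not a routine manipulation. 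Second, and more importantly, the assertion that ``the size $n-1$ is precisely what a blow-up needs to host a Wong sequence of the length forced here'' is a restatement of the theorem, not a derivation: nothing in the sketch ties the degree of the kernel vector, the length of the chain, and the blow-up dimension together so that all three land at $n-1$. Both cited proofs spend essentially all of their effort on exactly this quantitative step --- Derksen--Makam via a concavity/regularity analysis of the normalized blow-up rank $d\mapsto \mathrm{rk}^{(d)}(L)/d$ together with an arithmetic argument about when such a sequence must stabilise, and IQS via an induction showing each unit rank increment of the blow-up costs at most one additional blow-up dimension, with careful bookkeeping to land at $n-1$. Your proposal reproduces the correct reduction and the correct intuition (Wong sequences are indeed relevant) but leaves the entire quantitative core open.
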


This much better bound gave rise to  further important developments discussed below under ``Subsequent Work'', that highlight even more how invariant theory informs algorithmic efficiency in certain problems. We conclude this section describing the idea behind our proof, which uncovers this connection.

We plan to used the exact same 3-step complexity analysis of Algorithm $G$ as the one for Algorithm $S$ described in subsection \ref{LSWanalysis}, and which Gurvits~\cite{gurvits2004} used to analyze Algorithm $G$ for a special subset of inputs. As mentioned, the only missing ingredient is step (1), namely an explicit lower bound on the {\em capacity} of an input operator $L$. 
Just like in the classical case of Algorithm $S$, an exponential lower bound (in terms of the size of $L$) is required to establish a polynomial convergence. This is a proper time to finally define this important parameter of positive operators\footnote{We note that this notion of capacity seems to have nothing to do with the usual capacity of a quantum channel}. 

Given $L=(A_1, A_2,\ldots,A_m)$, recall that $L(P) = \sum_i A_i PA_i^\dagger$ and  define $\capac(L)$ by
$$
\capac(L) = \text{inf} \: \Det(L(P))
$$
where the infimum is taken over all psd matrices $P$ with $\Det(P) \ge 1$.

First, observe that capacity is the solution to a non-convex optimization problem (over a convex domain \footnote{Recall that log determinant is a concave function over the domain of positive definite matrices}). We will see in the next section that we can not only bound it, but actually compute it efficiently!
Next, observe that capacity is an invariant (though not a polynomial invariant) of the left-right action.
What we know is that $\capac(L)\neq 0$. From this we infer that some polynomial invariant of the form above, $\Det \left(\sum_i A_i \otimes D_i \right)$ is nonzero, where the $D_i$ are $d\times d$ matrices for some $d$.

This expression naturally suggests considering several new positive operators, and relating their capacities.
The first, in dimension $d$, is the operator, $M=(D_1, D_2,\ldots,D_m)$. Two others, in $nd$ dimensions are
$L_M=(A_1\otimes D_1, A_2\otimes D_2, \ldots,A_m\otimes D_m)$, and  $L\otimes M = (A_1\otimes D_1, \ldots \, A_i\otimes D_j, \ldots, \,A_m\otimes D_m)$. The non-vanishing of the determinant above proves that both $\capac(L_M)>0$ and $\capac(L\otimes M) >0$. And it is easy to obtain exponential lower bounds on both quantities, which however depend not only on the description size of $L$ (which is given), but also of that of $M$ which could be much larger. 

To facilitate our original capacity lower bound, which only used $d\leq \sigma(n) \leq \exp(\poly (n))$,  we proved that the normalized\footnote{Taking the dimension's root of capacity} capacity is multiplicative, a fact of independent interest (see details in Section \ref{subsec:mult_cap}):
$$\capac(L\otimes M)^{1/nd} = \capac(L)^{1/n} \cdot \capac(M)^{1/d}$$

Using only the easier direction, namely that the left hand side is at most the right hand side, it suggests that our lower bound on $\capac(L)$ will follow from a lower bound on $\capac(L\otimes M)$ and an upper bound on $\capac(M)$. Both of these in turn follow from an upper bound on the entry sizes in $M$, namely in the $D_i$, which easily follow from the given exponential upper bound on $d$. A key thing to note is that even when $d$ is exponential in $n$, and so these size bounds are doubly exponential, it is the $d$'th root of this bound that gives the required lower bound $\capac(L)$.

For the new lower bound, presented  in Section \ref{sec:capacity_lb}, we bound $\capac(L)$ using directly a bound on $\capac(L\otimes M )$. 
Using much more careful size bounds on the entries of $M$, which follow from Alon's Nullstellensatz theorem, and tightening the reduction between the two capacities, we are able to derive the desired capacity lower bound using any finite bound on $d$ !

This new proof eliminates the need for good degree upper bounds for the purpose of analyzing this algorithm (if one does not care about efficiency beyond having a polynomial bound). However good degree upper bounds (that are anyway an important goal in invariant theory) were essential for the original proof, and for general actions were useful for other algorithmic problems which arose in other contexts in computational complexity, for example in geometric complexity theory (GCT). Further, we believe they will be important for studying other problems in algebraic complexity and optimization. Indeed, in a work in progress, we need degree bounds for invariant rings whose generating sets are not understood as well as for the left-right action, but Derksen's general exponential degree upper bound still holds and is useful.

The capacity lower bound above implies, as mentioned, a polynomial upper bound on the {\em number of iterations} of Algorithm $G$, assuming we can perform exact arithmetic! To actually bound the running time the algorithm has to be refined, truncating numbers so that they can be represented by polynomially many bits, and so a careful analysis of the bit complexity is required to actually prove its running time. This issue naturally leads to the next section, in which we discuss how the algorithm can be extended to actually {\em compute} capacity of the input operator, as well as provide bounds on its continuity. These results in turn are essential in a follow-up work~\cite{GGOW2} we have done on the Brascamp-Lieb inequalities, where capacity arises naturally via a simple reduction to operator scaling.

\subsection{Computation and Continuity of Capacity}

We also show that Algorithm G can be modified to compute the capacity of a completely positive operator $L$. Recall that the capacity of an operator $L$ is defined to be $\capac(L) = \text{inf} \: \Det(L(P))$, where the infimum is taken over all psd matrices $P$ with $\Det(P)=1$. The following theorem is proven as Theorem \ref{thm:computing-capacity} in Section~\ref{compute-capacity}. 

\begin{theorem} There is a $\poly(n,b,1/\eps)$ time algorithm for computing a $(1+\eps)$-multiplicative approximation to $\capac(L)$. Here $n$ is the dimension on which $L$ acts and $b$ denotes the bit-sizes involved in description of $L$. 
\end{theorem}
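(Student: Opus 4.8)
The plan is to run Gurvits' Algorithm $G$ and recover $\capac(L)$ from the product of the normalization factors it produces along the way. The basic identity is the transformation law $\capac(BLC)=|\Det B|^{2}\,|\Det C|^{2}\,\capac(L)$, immediate from the definition by the substitution $P\mapsto C^{\dagger}PC$ together with the homogeneity $\Det(L(tP))=t^{n}\Det(L(P))$. A row step replaces $L$ by $R(L)L$ with $R(L)=L(I)^{-1/2}$, hence multiplies $\capac$ by $\Det(L(I))^{-1}$; a column step multiplies it by $\Det(L^{*}(I))^{-1}$. Therefore after $t$ iterations $\capac(L)=\Pi_{t}\cdot\capac(L^{(t)})$, where $\Pi_{t}$ is the product over all performed steps of the corresponding factor $\Det(L^{(s)}(I))$ or $\Det(L^{(s),*}(I))$ — each of which the algorithm computes as it runs. (If some $\sum_{i}A_{i}A_{i}^{\dagger}$ or $\sum_{i}A_{i}^{\dagger}A_{i}$ is singular then $L$ has a symbolic zero row or column, hence is singular; so we first run the singularity test of Theorem~\ref{main}, output $\capac(L)=0$ in that case, and otherwise all $R(L^{(s)}),C(L^{(s)})$ are well defined.)

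Next I would bound the number of iterations needed to make the operator nearly doubly stochastic, using the three‑part potential argument from the analysis of Algorithm~$G$ but with $\capac$ itself as the potential. After the first normalization every iterate is row‑ or column‑stochastic, so $\capac(L^{(s)})\le\Det(L^{(s)}(I))\le 1$; the capacity lower bound of Section~\ref{sec:capacity_lb} (applied to $L^{(0)}$ and pushed through the transformation law) gives $\capac(L^{(s)})\ge 2^{-\poly(n,b)}$; and, since $\Det(L^{(s)}(I))$ of a stochastic operator is at most $1-\Omega(\|L^{(s)}(I)-I\|^{2}/n)$ by the AM--GM gap, the potential is non‑decreasing (after the first step) and $\sum_{s}\ds(L^{(s)})\le\poly(n,b)$. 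Hence after $T=\poly(n,b)/\eps'$ steps the iterate $L^{(s^{*})}$ with $s^{*}\le T$ of smallest $\ds$ satisfies $\ds(L^{(s^{*})})\le\eps'$, and we output $\Pi_{s^{*}}$.

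The crux is to show that $\ds(L^{(s^{*})})\le\eps'$ forces $\capac(L^{(s^{*})})\in[1-\eps,1+\eps]$, so that $\Pi_{s^{*}}$ equals $\capac(L)$ up to a $(1+\eps)$ factor. The upper bound is immediate ($\capac(L^{(s^{*})})\le\Det(L^{(s^{*})}(I))\le 1$), and small $\ds$ forces $\Det(L^{(s^{*})}(I))\ge 1-O(n\sqrt{\eps'})$. For the matching lower bound I would use the convexity of the function $f(H):=\log\Det\big(L^{(s^{*})}(e^{H})\big)$ on Hermitian matrices — the ``hidden convexity'' of the capacity optimization. Writing $\log\capac(L^{(s^{*})})=\min_{\tr H=0}f(H)$ with minimizer $H^{*}$, convexity gives $f(H^{*})\ge f(0)+\langle\nabla f(0),H^{*}\rangle\ge f(0)-\|\nabla f(0)\|\,\|H^{*}\|$; here $\|\nabla f(0)\|=O(\poly(n)\sqrt{\ds(L^{(s^{*})})})$ since the gradient at $0$ is, up to the trace constraint, $L^{(s^{*}),*}(I)-I$, and $\|H^{*}\|\le\poly(n,b)$ because $H^{*}$ is the residual scaling making $L^{(s^{*})}$ doubly stochastic, which is no larger than the full scaling of $L^{(0)}$, and the latter has $\poly(n,b)$‑bounded bit‑size (the same size bound on the scaling solution used in the bit‑complexity analysis of Theorem~\ref{main}). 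Thus $|\log\capac(L^{(s^{*})})|\le\poly(n,b)\sqrt{\ds(L^{(s^{*})})}$, which is $\le\eps$ once $\eps'$ is a suitable $\poly(\eps/(nb))$, keeping $T=\poly(n,b,1/\eps)$.

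Putting these together, $\Pi_{s^{*}}$ is a $(1+O(\eps))$‑approximation to $\capac(L)$; rescaling $\eps$ gives the theorem. The remaining, more tedious, part is numerical precision: Algorithm~$G$ must be run with truncated $\poly(n,b,\log(1/\eps))$‑bit numbers, and one must verify that the round‑off errors accumulated in the $L^{(s)}$ and in $\log\Pi_{s^{*}}=\sum_{s}\log\Det(L^{(s)}(\cdot))$ stay below the target accuracy — the same bit‑complexity analysis as for Theorem~\ref{main}, using that $\log\capac(L)$ and each factor's logarithm are $\poly(n,b)$‑bounded. I expect the genuine obstacle to be the robust lower bound in the previous paragraph: the convexity of $H\mapsto\log\Det(L(e^{H}))$ combined with the polynomial size bound on the scaling solution is what makes it work, while everything else is the LSW‑style potential analysis together with careful bookkeeping.
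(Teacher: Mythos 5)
Your overall plan — run Algorithm $G$, reconstruct $\capac(L)$ as the running product of the normalization determinants, and use the LSW-style potential argument to bound the number of iterations — is exactly the paper's Algorithm~\ref{Gurvits_alg_capacity} and Theorem~\ref{thm:computing-capacity}, and your accounting of the transformation law $\capac(T_{B,C}) = |\Det B|^2|\Det C|^2\capac(T)$ and the $\poly(n,b)/\eps'$ iteration count is correct. You also correctly identify the crux: one must show that $\ds(L^{(s^*)})\le\eps'$ forces $\capac(L^{(s^*)})$ to be close to $1$. It is precisely here that you depart from the paper, and your route has genuine gaps.

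You propose a first-order argument based on convexity of $f(H):=\log\det\bigl(L^{(s^*)}(e^H)\bigr)$ on Hermitian $H$, combined with a $\poly(n,b)$ bound on $\|H^*\|$ for the exact scaling that makes $L^{(s^*)}$ doubly stochastic. Neither ingredient is available in the paper, and neither is proved by you. On (i), the paper explicitly says ``we don't know of a convex formulation for capacity''; what your inequality actually uses is convexity of $t\mapsto f(tH^*)$, i.e.\ geodesic convexity of the Kempf--Ness function along one-parameter subgroups through $I$. This is true (it can be deduced from non-negativity of mixed discriminants of PSD matrices, giving a log-sum-exp structure), but it is a nontrivial fact established in follow-up work, not something you can assume here. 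On (ii), the bound $\|H^*\|\le\poly(n,b)$ is asserted by comparison with the iterates, but the paper's bit-complexity bounds (Proposition~\ref{long_prop}, Lemma~\ref{induct_u}) control only the iterates $S_j$, with norms that grow like $\alpha^{j+1}$ in the iteration number; they give no bound on the \emph{limiting} or optimal scaling, and I don't see a short argument supplying one. The paper sidesteps both issues: it proves Lemma~\ref{cap_almostDS:quantum} ($T^*(I)=I$ and $\tr[(T(I)-I)^2]\le\eps$ imply $\capac(T)\ge(1-\sqrt{n\eps})^n$) by diagonalizing $X$ and $T(X)$ to reduce to the classical non-negative matrix case (Lemma~\ref{cap_almostDS:classical}), and there it uses a Birkhoff--von Neumann--type decomposition $A=\lambda B+Z$ with $B$ doubly stochastic, $\lambda\ge 1-\sqrt{n\eps}$, and the elementary fact that $\capac$ of a doubly stochastic matrix is $\ge 1$ by concavity of $\log$. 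That argument is entirely self-contained and needs no convexity of capacity and no size bound on the optimizer; I'd replace your ``hidden convexity'' step with it.
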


This is quite fascinating since we don't know of a convex formulation for $\capac(L)$. The idea for the computation of capacity is quite simple. We know that the capacity of a doubly stochastic operator is exactly $1$. We also know how operator scaling changes the capacity. If $L_{B,C}$ is a scaling of $L$ i.e.
$$
L_{B,C}(X) = B \cdot L(C \cdot X \cdot C^{\dagger}) \cdot B^{\dagger}
$$
then 
$$
\capac(L) = |\Det(B)|^{-2} \cdot |\Det(C)|^{-2} \cdot \capac(L_{B,C}) = |\Det(B)|^{-2} \cdot |\Det(C)|^{-2}
$$
if $L_{B,C}$ is doubly stochastic. Thus if we can find a doubly-stochastic-scaling of $L$, then we can 
compute $\capac(L)$ exactly. Algorithm G helps in finding an approximately-doubly-stochastic-scaling of $L$, 
which results in an approximation scheme for computing $\capac(L)$. It is a very intriguing open problem if a 
$(1+\eps)$ approximation to $\capac(L)$ can be computed in time $\poly(n,b,\log(1/\eps))$. In the 
aforementioned upcoming paper, we use this result (via a black-box reduction!) to approximating optimal 
constants in Brascamp-Lieb inequalities. 

Although, we don't know of a convex formulation for capacity, it does have some nice properties. Let us look at the Lagrangian for $\log(\capac(L))$:
$$
f(X,\lambda) = \log(\Det(L(X))) + \lambda \cdot \log(\Det(X))
$$
A critical point $C$ i.e. where the gradient $\nabla f(C,\lambda) = 0$ should satisfy $L^*\left(L(C)^{-1} \right) = C^{-1}$. This follows from 
$$
\nabla f(C,\lambda) = \Bigg(L^*\left(L(C)^{-1} \right) + \lambda C^{-1}, \log\left(\Det(C) \right)\Bigg)
$$
and the normalization condition
$$
\tr\left[ C \cdot  L^*\left(L(C) ^{-1}\right)\right] = \tr\left[ L(C) \cdot L(C)^{-1}\right] = n
$$
This implies by Lemma \ref{approximate_fixedpt} that 
$$
\capac(L) \ge \frac{\Det(L(C))}{\Det(C)}
$$
and hence any critical point is in fact a global minimum for $\capac(L)$! It would be interesting to see if our techniques can help solve other non-convex optimization problems which share the same property. 

Algorithm $G$ can also be thought of as an alternating minimization algorithm for computing capacity. First let us write capacity in an alternative way.
$$
n \cdot \capac(L)^{1/n} = \text{inf} \: \tr [L(X) \cdot Y] = \text{inf} \: \tr[X \cdot L^*(Y)] \: \: \text{s.t.} \: \: \Det(X) \ge 1, \Det(Y) \ge 1, X,Y \succ 0
$$
 While the constraints are convex (by log-concavity of determinant), the objective is quadratic. However if, we fix $X$ or $Y$, this program is convex. In fact, if we fix $X$, the optimum $Y$ is $\Det(L(X))^{1/n} \cdot L(X)^{-1}$. Similarly, if we fix $Y$, the optimum $X$ is $\Det(L^*(Y))^{1/n} \cdot L^*(Y)^{-1}$. Starting from $X_0 = I$, if we do alternate minimization, we get exactly Algorithm $G$ and our results imply that this process converges in polynomial number of steps!
 
From the description of capacity, it is not clear if it is continuous. The reason is that a priori, the optimizing matrices could be radically different for $\capac(L_1)$ and $\capac(L_2)$ even if $L_1$ and $L_2$ are quite close. We show that this not the case, essentially because $\capac(L)$ (as well as the optimizer) can be approximated by a simple iterative process namely Algorithm $G$! And Algorithm $G$ is clearly continuous in $L$. We prove this formally in Theorem \ref{capacity_continuity} of Section~\ref{sec:capacity_continuity}. In the aforementioned upcoming paper, we use this result to prove continuity of Brascamp-Lieb constants. 

\begin{remark} The continuity of capacity can also be proven via other methods and is already mentioned in \cite{gurvits2004}. But here we manage to get explicit bounds on the continuity parameter which we don't know how to get by methods other than analyzing Algorithm $G$. 
\end{remark}

\subsection*{Subsequent Work}
After our work, Derksen and Makam \cite{derksen2015} obtained polynomial degree bounds for the left-right action over any field! Using this bound, Ivanyos, Qiao and Subrahmanyam \cite{IQS15b} designed a completely different deterministic polynomial time algorithm for SINGULAR that works over all fields. This algorithm has a combinatorial/linear algebraic flavor, and has several important advantages over our algorithm, but also some limitations in comparison.
One clear advantage is of course that it works over every field. Another is that this algorithm can {\em certify} the non-commutative singularity of an input $L$ over $\Fxx$ by outputting a shrunk subspace when one exists! On the other hand, over the rationals, our algorithm computes capacity, and finds an operator scaling which is doubly stochastic. 

Using some of the techniques developed in \cite{IQS2015, IQS15b}, Bl\"{a}ser, Jindal and Pandey \cite{BlaserJP16} designed a deterministic PTAS for the commutative rank i.e. a $(1+\eps)$-approximation algorithm which runs in deterministic $n^{O(1/\eps)}$ time. This improves the factor-$2$ approximation algorithm given in this paper.

\subsection{Organization}

In Section~\ref{mainsection}, we describe Algorithm G for testing if quantum operators 
are rank-decreasing. We explain the notion of capacity of quantum 
operators, prove an explicit lower bound on the capacity of rank non-decreasing operators, and explain how it is used in the analysis to prove 
that Algorithm $G$ runs in polynomial time. In Section \ref{sec:cap_properties}, we study various properties of capacity, some of which are used in other places in the paper. In Section \ref{sec:bit_complexity_continuity}, we analyze the bit complexity of Algorithm $G$ and also prove an explicit bound on the continuity of capacity. In Section~\ref{compute-capacity}, we will show how a 
modification of Algorithm $G$ can be used to approximate capacity. 
In the appendix Section~\ref{computeNC}, we show how to compute the non-commutative rank.
We conclude in Section~\ref{open} with a short discussion and open problems.

\section{Quantum Operators and Analysis of Algorithm G}\label{mainsection}

This section is devoted to Algorithm $G$ and its analysis. We start with preliminaries about completely positive operators, their properties, and basic quantities associated with them. We then formally describe Algorithm $G$, and proceed to give a full analysis of its running time, proving the main theorem of this paper. Section \ref{sec:definitions_operators} contains definitions and properties of completely positive operators and their capacity. Section \ref{AlgG} describes the Algorithm $G$ and its analysis assuming an explicit lower bound on the capacity. Section \ref{sec:capacity_lb} contains the main theorem concerning the lower bound on capacity of rank non-decreasing operators. 

\subsection{Completely Positive Operators and Capacity}\label{sec:definitions_operators}

Given a complex matrix $A$, we will use $A^{\dagger}$ to denote the conjugate-transpose of $A$. For matrices 
with real entries this will just be $A^{T}$. For a matrix $A$, $\text{Im}(A)$ will denote the image of $A$ i.e. 
$\{v \in \mathbb{C}^n | v = Au \:\:\text{for some $u \in \mathbb{C}^n$}\}$.

\begin{definition}[Completely positive operators] An operator (or map) $T: M_n(\mathbb{C}) \to M_n(\mathbb{C})$ is called 
completely positive if there are $n \times n$ complex matrices 
$A_1, \ldots, A_m$ s.t. $T(X) = \sum_{i=1}^m A_i X A_i^{\dagger}$. The matrices $A_1,\ldots, A_m$ 
are called Kraus operators of $T$ (and they are not unique). $T$ is called completely positive trace 
preserving (cptp) if $T^*(I) = \sum_{i=1}^m A_i^{\dagger} A_i = I$.
\end{definition}

\begin{remark}
The above is actually not the usual definition of completely positive operators.  $T$ is defined to be positive if $T(X) \succeq 0$ whenever $X \succeq 0$. $T$ is completely positive if $I_k \tensor T$ is positive for all $k \ge 1$. Choi \cite{Choi} proved that an operator is completely positive iff 
it is of the form stated above.
\end{remark}

\begin{definition}[Tensor products of operators]\label{def:tensor} Given operators $T_1: M_{d_1}(\mathbb{C}) \to M_{d_1}(\mathbb{C})$ and $T_2: M_{d_2}(\mathbb{C}) \to M_{d_2}(\mathbb{C})$, we define their tensor product 
$T_1 \tensor T_2 : M_{d_1d_2}(\mathbb{C}) \to M_{d_1d_2}(\mathbb{C})$ in the natural way
$$
(T_1 \tensor T_2) (X \tensor Y) = T_1(X) \tensor T_2(Y)
$$
and extend by linearity to the whole of $M_{d_1d_2}(\mathbb{C})$. 
\end{definition}

\begin{definition}\label{dual,ds}
If $T(X) = \sum_{i=1}^m A_i X A_i^{\dagger}$ is a completely positive operator, we define its dual $T^*$ by
$T^*(X) = \sum_{i=1}^m A_i^{\dagger} X A_i$. If both $T$ and $T^*$ are trace preserving, namely $T(I) = T^*(I) = I$
then we call $T$ (and $T^*$) {\em doubly stochastic}.
\end{definition}

\begin{definition}[Rank Decreasing Operators] A completely positive operator $T$ is said to be rank-decreasing 
if there exists an $X \succeq 0$ s.t. 
$\text{rank}(T(X)) < \text{rank}(X)$. $T$ is said to be $c$-rank-decreasing if there exists an $X \succeq 0$ s.t. $\text{rank}(T(X)) \le \text{rank}(X)-c$. We will sometimes refer to operators that are not rank decreasing as rank non-decreasing.
\end{definition}

Now that we defined completely positive operators, we define their {\em capacity}, which is a very important complexity measure
of such operators suggested in~\cite{gurvits2004}. Its evolution will be central to the complexity analysis of our algorithm.

\begin{definition}[Capacity]\cite{gurvits2004} The capacity of a completely positive operator $T$, denoted by $\capac(T)$, is defined as
$$\capac(T) = \text{inf} \{ \text{Det}(T(X)) : \text{$X \succ 0$, Det$(X) = 1$}\}$$
\end{definition}

This notion of capacity has a very interesting history. Some special cases of capacity were defined in \cite{gur-sam1} and \cite{gur-sam2}. 
It was then extended to hyperbolic polynomials and this also led to a resolution (and extremely elegant proofs) of the 
Van der Waerden Conjecture for Mixed Discriminants \cite{gur_Waerden}.

The next proposition shows how capacity changes when linear transformations are applied (as in the algorithm)
to the completely positive operator.

\begin{proposition}[\cite{gurvits2004}]{\label{cap_mul}} Let $T$ be the operator defined by $A_1,\ldots,A_m$ and let 
$T_{B,C}$ be the operator defined by $B A_1 C,\ldots, B A_m C$, where $B,C$ are invertible matrices. Then 
$$
\capac(T_{B,C}) = |\Det(B)|^2 |\Det(C)|^2 \capac(T)
$$
\end{proposition}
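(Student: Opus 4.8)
The plan is to compute directly how the capacity functional transforms under the substitution $P \mapsto C P C^\dagger$ in the domain, keeping track of the two normalizing determinant factors that arise. Recall $\capac(T) = \inf\{\Det(T(P)) : P \succ 0,\ \Det(P) = 1\}$ and that $T_{B,C}(P) = \sum_i (B A_i C) P (B A_i C)^\dagger = B\left(\sum_i A_i (C P C^\dagger) A_i^\dagger\right) B^\dagger = B\, T(C P C^\dagger)\, B^\dagger$. First I would apply multiplicativity of the determinant to get $\Det(T_{B,C}(P)) = |\Det(B)|^2 \cdot \Det(T(C P C^\dagger))$, where the factor is $|\Det(B)|^2$ because $\Det(B^\dagger) = \overline{\Det(B)}$ and the product with $\Det(B)$ gives $|\Det(B)|^2$ (and this quantity is a positive real, as it must be since $T_{B,C}(P) \succeq 0$).

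Next I would reparametrize the infimum. As $P$ ranges over positive definite matrices with $\Det(P) = 1$, the matrix $Q := C P C^\dagger$ ranges over positive definite matrices with $\Det(Q) = |\Det(C)|^2$ (since $C$ is invertible, $P \mapsto C P C^\dagger$ is a bijection on the positive definite cone). So
\[
\capac(T_{B,C}) = |\Det(B)|^2 \cdot \inf\{ \Det(T(Q)) : Q \succ 0,\ \Det(Q) = |\Det(C)|^2 \}.
\]
To relate this last infimum to $\capac(T)$, I would use the homogeneity of $\Det(T(\cdot))$: since $T$ is linear and $\Det$ on $n \times n$ matrices is degree-$n$ homogeneous, $\Det(T(\lambda Q)) = \lambda^n \Det(T(Q))$ for $\lambda > 0$. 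Writing $Q = |\Det(C)|^{2/n} Q'$ with $\Det(Q') = 1$, the constrained infimum equals $|\Det(C)|^{2} \cdot \inf\{\Det(T(Q')) : Q' \succ 0, \Det(Q') = 1\} = |\Det(C)|^{2} \cdot \capac(T)$. Combining gives $\capac(T_{B,C}) = |\Det(B)|^2 |\Det(C)|^2 \capac(T)$, as claimed.

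The argument is essentially a change of variables plus two scaling identities, so there is no real obstacle; the only point requiring a little care is the bookkeeping of absolute values versus squared moduli for complex (as opposed to real) $B, C$ — one must consistently use $\Det(M^\dagger) = \overline{\Det(M)}$ so that every determinant of a psd matrix stays real and positive, and verify that the reparametrized domain is exactly $\{Q \succ 0 : \Det Q = |\Det C|^2\}$ with no loss. One should also note the non-triviality assumption implicit throughout (that $\capac(T) > 0$, or at least that the infimum is attained in a regime where these manipulations are valid); but since the identity is between two infima over genuinely bijective domains, it holds as stated regardless, interpreting both sides as extended reals if necessary.
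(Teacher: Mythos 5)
Your proof is correct and follows essentially the same route as the paper's: pull out $|\Det(B)|^2$ by multiplicativity of the determinant, change variables $X \mapsto C X C^{\dagger}$ to shift the constraint to $\Det(X) = |\Det(C)|^2$, and then use degree-$n$ homogeneity of $\Det(T(\cdot))$ to extract the final $|\Det(C)|^2$ factor (the paper leaves this last step implicit, while you spell it out).
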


\begin{proof}
\begin{align*}
\capac(T_{B,C}) &= \text{inf} \left\{ \text{Det} \left( \sum_{i=1}^m B A_i C X C^{\dagger} A_i^{\dagger} B^{\dagger}\right) : \text{$X \succ 0$, Det$(X) = 1$} \right\} \\
&= |\Det(B)|^2 \cdot \text{inf} \left\{ \text{Det} \left( \sum_{i=1}^m A_i C X C^{\dagger} A_i^{\dagger} \right) : \text{$X \succ 0$, Det$(X) = 1$} \right\} \\
&= |\Det(B)|^2 \cdot \text{inf} \left\{ \text{Det} \left( \sum_{i=1}^m A_i X A_i^{\dagger} \right) : \text{$X \succ 0$, Det$(X) = |\Det(C)|^2$} \right\} \\
&= |\Det(B)|^2 \cdot |\Det(C)|^2 \cdot \capac(T)
\end{align*}
\end{proof}

The next proposition gives a useful upper bound on the capacity of trace preserving completely positive operators; this will be used for the convergence analysis of the algorithm.

\begin{proposition}[Capacity Upper Bound] {\label{cap_bound_1}} Let $T$ be a completely positive operator 
with Kraus operators $A_1,\ldots, A_m$ (which are $n \times n$ complex matrices). Also suppose that either 
$\sum_{i=1}^m A_i^{\dagger} A_i = I$ or $\sum_{i=1}^m A_i A_i^{\dagger} = I$. Then $\capac(T) \le 1$.
\end{proposition}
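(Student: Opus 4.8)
The plan is to simply test the infimum defining $\capac(T)$ at the single point $X = I$, which is feasible since $\Det(I) = 1$ and $I \succ 0$. This immediately gives $\capac(T) \le \Det(T(I)) = \Det\left(\sum_{i=1}^m A_i A_i^{\dagger}\right)$, so the whole task reduces to showing that this determinant is at most $1$ under either of the two hypotheses.

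First I would dispose of the easy case: if $\sum_{i=1}^m A_i A_i^{\dagger} = I$, then $T(I) = I$, so $\Det(T(I)) = 1$ and we are done. The slightly less trivial case is when instead $\sum_{i=1}^m A_i^{\dagger} A_i = I$. Here $T(I) = \sum_i A_i A_i^{\dagger}$ need not equal $I$, but it is positive semidefinite, and crucially its trace is controlled: using cyclicity of trace, $\tr(T(I)) = \sum_i \tr(A_i A_i^{\dagger}) = \sum_i \tr(A_i^{\dagger} A_i) = \tr(I) = n$. Now apply the AM--GM inequality to the (nonnegative) eigenvalues $\mu_1, \dots, \mu_n$ of $T(I)$: $\Det(T(I)) = \prod_j \mu_j \le \left(\frac{1}{n}\sum_j \mu_j\right)^n = \left(\frac{\tr(T(I))}{n}\right)^n = 1$. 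Combining the two cases gives $\capac(T) \le 1$.

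There is no real obstacle here; the only thing worth a remark is that in the second case one should not expect equality in general (unlike the first), since $T(I)$ can be a non-scalar psd matrix of trace $n$, for which the determinant is strictly less than $1$. If desired, one could also phrase the AM--GM step via concavity of $X \mapsto \log\Det(X)$ on the psd cone (equivalently, the Jensen/log-concavity inequality used elsewhere in the paper), but the elementary eigenvalue argument is shortest and self-contained.
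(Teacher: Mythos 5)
Your proof is correct and follows essentially the same route as the paper: test the infimum at $X = I$ and bound $\Det(T(I))$ by AM--GM after noting $\tr(T(I)) = n$. The paper just unifies your two cases by observing that $\tr(T(I)) = n$ holds under either hypothesis (trivially in the first, by cyclicity of trace in the second), so the case split is unnecessary but harmless.
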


\begin{proof}
Note that $\tr(T(I)) = n$ in either case.
\begin{align*}
\capac(T) \le \Det(T(I)) &\le \left( \tr(T(I))/n \right)^n \\
&= 1
\end{align*}
The second inequality follows from the AM-GM inequality.
\end{proof}

\subsection{Algorithm $G$ and its convergence rate}\label{AlgG}

We now describe Algorithm $G$ to test if a completely positive operator $T$ (given in terms of Kraus operators $A_1,\ldots,A_m$) is rank non-decreasing (equivalently properties $(1)-(8)$ in Theorem \ref{Equivalences} one of which is that $A_1,\ldots,A_m$ admit no shrunk subspace). We will then analyze its convergence rate, namely the number of scaling iterations needed as a  function of the input size. In section \ref{sec:bit_complexity_continuity}, we will continue with the finer analysis of the bit complexity of this algorithm.

Since the property of $A_1,\ldots, A_m$ having a shrunk subspace or not remains invariant if we replace $A_1, \ldots, A_m$ by a basis spanning the subspace spanned by $A_1,\ldots, A_m$, we can always assume wlog that $m \le n^2$. Suppose the maximum bit size of the entries of $A_1,\ldots,A_m$ is $b$. Since scaling the matrices $A_1,\ldots,A_m$ doesn't change the problem, we can assume that $A_1,\ldots,A_m$ have integer entries of magnitude at most $M = 2^{O(b)}$. 

\noindent Algorithm $G$ below is essentially Gurvits' algorithm \cite{gurvits2004} for Edmonds' problem for subspaces of matrices having Edmonds-Rado property. It is a non-commutative generalization of the Sinkhorn scaling procedure to scale non-negative matrices to doubly stochastic matrices (see for example \cite{LSW} and references therein). The algorithm alternates applying a ``normalizing'' basis change from the left and right to the given matrices, so as to alternately make the operator or its dual trace preserving. The idea is that this process will converge to a doubly stochastic operator iff it is not rank-decreasing, and furthermore, we can bound the number $t$
of iterations by $\poly(n, \log(M))$. We will use the following to measure of our operator to being doubly stochastic.

\begin{definition}
$$
\ds(T) = \tr \left[ \left(T(I) - I\right)^2 \right] + \tr \left[ \left( T^*(I)   - I\right)^2 \right]
$$
\end{definition}

\begin{definition} Given a completely positive operator $T$, define its {\em right normalization} $T_R$ as follows:
\begin{align*}
T_R(X) = T \left(T^*(I)^{-1/2} X T^*(I)^{-1/2} \right)
\end{align*}

\end{definition}

\noindent Note that $T_R^*(I) = I$.

\begin{definition} Given a completely positive operator $T$, define its {\em left normalization} $T_L$ as follows:
\begin{align*}
T_L(X) =  T(I)^{-1/2} T(X) T(I)^{-1/2}
\end{align*}

\end{definition}

\noindent Note that $T_L(I) = I$. These operations are referred to as row and column operations in \cite{gurvits2004}. 

We next prove that if a completely positive operator $T$ is doubly stochastic, or close to being one in this measure, then it is rank non-decreasing. This will be important for the termination condition of Algorithm $G$.

\begin{theorem}[\cite{gurvits2004}]{\label{DS_full}} If $T$ is a completely positive operator which is right $($or left$)$ normalized satisfying $\ds(T) \le 1/(n+1)$, then $T$ is rank non-decreasing.
\end{theorem}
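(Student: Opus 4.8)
The plan is to prove the statement by contradiction: assume $\ds(T)\le 1/(n+1)$ but that $T$ is rank-decreasing, and derive $1<1$. First I would translate ``rank-decreasing'' into the existence of a \emph{shrunk subspace}: subspaces $U,W\subseteq \C^n$ with $\dim W<\dim U$ and $A_iU\subseteq W$ for all $i$. This is immediate --- given $X\succeq 0$ with $\rk(T(X))<\rk(X)$, set $U=\text{Im}(X)$ and $W=\text{Im}(T(X))$; since $T(X)=\sum_i A_iXA_i^\dagger$ is a sum of psd matrices, $W=\sum_i \text{Im}(A_iXA_i^\dagger)=\sum_i A_iU$, so $A_iU\subseteq W$ and $\dim W=\rk(T(X))<\rk(X)=\dim U$ (this is also one of the equivalences in Theorem~\ref{Equivalences}). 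I may assume WLOG that $T$ is right-normalized, i.e.\ $T^*(I)=\sum_i A_i^\dagger A_i=I$, so that $\ds(T)=\tr[(T(I)-I)^2]$; the left-normalized case is entirely symmetric (alternatively, apply the result to $T^*$, using $\ds(T^*)=\ds(T)$ and that $T$ has a shrunk subspace iff $T^*$ does, by passing to orthogonal complements).

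The crux is to estimate $\tr[T(P_U)]$ two ways, where $P_U,P_W$ are the orthogonal projections onto $U,W$. For its exact value, cyclicity of the trace gives $\tr[T(P_U)]=\tr\left[\sum_i A_iP_UA_i^\dagger\right]=\tr[P_U\,T^*(I)]=\tr[P_U]=\dim U$. For the upper bound, $A_iU\subseteq W$ gives $A_iP_U=P_WA_iP_U$, hence $T(P_U)=P_WT(P_U)P_W$; combining with $T(P_U)\preceq T(I)$ (valid since $I-P_U\succeq 0$ and $T$ is positive) and conjugating by $P_W$ yields $T(P_U)\preceq P_WT(I)P_W$, so $\dim U=\tr[T(P_U)]\le \tr[P_WT(I)]=\dim W+\tr[P_W(T(I)-I)]$. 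Finally, Cauchy--Schwarz for the trace inner product bounds $\tr[P_W(T(I)-I)]\le \sqrt{\tr[P_W]}\cdot\sqrt{\tr[(T(I)-I)^2]}=\sqrt{\dim W}\cdot\sqrt{\ds(T)}$.

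Putting these together, $1\le \dim U-\dim W\le \sqrt{\dim W}\cdot\sqrt{\ds(T)}$. Since $\dim W<\dim U\le n$ we have $\dim W\le n-1$, and with $\ds(T)\le 1/(n+1)$ this gives $1\le \sqrt{(n-1)/(n+1)}<1$, a contradiction. Hence $T$ is rank non-decreasing. I do not expect a genuine obstacle here: the proof is short once one hits on $\tr[T(P_U)]$ as the quantity to double-estimate. The only point needing a little care is that the slack from Cauchy--Schwarz stays strictly below $1$, which is precisely what the threshold $1/(n+1)$ (together with $\dim W\le n-1$) guarantees; a slightly sharper argument in fact works with threshold $1/(n-1)$.
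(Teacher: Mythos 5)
Your proof is correct, and it takes a somewhat different route from the paper's. The paper argues directly: assuming $T(I)=I$, for any $X\succeq 0$ of rank $r$ with $\text{Im}(X)=U$, it observes that $0\preceq T(P_U)\preceq T(I)=I$, so $\rk(T(X))=\rk(T(P_U))\ge \tr[T(P_U)] = r+\tr[(T^*(I)-I)P_U]$, and the error term is small by Cauchy--Schwarz. You instead pass to the shrunk-subspace characterization and argue by contradiction, using the \emph{other} normalization $T^*(I)=I$ to pin down $\tr[T(P_U)]=\dim U$ exactly, and then the structural fact $T(P_U)=P_W T(P_U)P_W$ (which replaces the paper's ``$\rk\ge\tr$ for a contraction'' step) to force $\tr[T(P_U)]\le \dim W + \tr[P_W(T(I)-I)]$; Cauchy--Schwarz again closes the gap. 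So the two proofs are dual to each other in which normalization carries the error term and in which side of $\tr[T(P_U)]$ you estimate: the paper lower-bounds rank by trace, you upper-bound the trace by its confinement to the $W$-block. Your version is marginally tighter, since the error scales with $\sqrt{\dim W}\le\sqrt{n-1}$ rather than $\sqrt{\dim U}\le\sqrt{n}$, which is exactly why the threshold improves to roughly $1/(n-1)$ as you note. Both proofs are valid; the paper's avoids explicitly invoking the shrunk-subspace equivalence, while yours makes the geometric picture (the image of $T(P_U)$ is trapped in $W$) more transparent.
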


\begin{proof} Wlog assume that $T(I) = I$ and 
$$
\tr \left[ \left( T^*(I)-I\right)^2\right] \le 1/(n+1)
$$  
Suppose $X \succeq 0$ is a psd matrix s.t. $\text{Rank}(X) = r$. We would like to prove that $\text{Rank}(T(X)) \ge r$. Let
$$
X = \sum_{i=1}^r \lambda_i v_i v_i^{\dagger}
$$
be the eigenvalue decomposition of $X$ where $v_i$'s are orthonormal and $\lambda_i > 0$. Then
$$
T(X) = \sum_{i=1}^r \lambda_i T\left( v_i v_i^{\dagger} \right)
$$
Let us denote $T\left( v_i v_i^{\dagger} \right)$ by $R_i$. Since $T(I) = I$, we get that $\sum_{i=1}^r R_i \preceq T(I) = I$. Also note that 
$$
\text{Rank} \left( \sum_{i=1}^r \lambda_i R_i \right) = \text{Rank} \left( \sum_{i=1}^r R_i \right)
$$
This is because $R_i$'s are psd matrices and hence a vector is in the kernel of $\sum_{i=1}^r \lambda_i R_i$ iff it is in the kernel of all the $R_i$'s. Because of $\sum_{i=1}^r R_i \preceq I$, we get that
$$
\text{Rank} \left( T(X)\right) = \text{Rank} \left( \sum_{i=1}^r R_i \right) \ge \tr \left[  \sum_{i=1}^r R_i \right]
$$
Suppose $T^*(I) = I + \Delta$. We know that $\tr[\Delta^2] \le 1/(n+1)$. Then 
$$
\tr[R_i] = \tr \left[ T^*(I) v_i v_i^{\dagger} \right] = 1 + \tr\left[ \Delta v_i v_i^{\dagger} \right]
$$
Let $P$ denote the projection onto $\{v_1,\ldots,v_r\}$. Adding the above equations for all $i \in \{1,\ldots,r\}$, we get
\begin{align*}
\tr \left[  \sum_{i=1}^r R_i \right] &= r + \tr\left[ \Delta P \right] \\
&\ge r - \tr[\Delta^2]^{1/2} \tr[P^2]^{1/2} \\
&\ge r - \sqrt{\frac{r}{n+1}} 
\end{align*}
Since $r < n+1$, this gives us 
$$
\text{Rank} \left( T(X)\right) > r-1
$$
which completes the proof since $\text{Rank} \left( T(X)\right)$ is an integer. 
\end{proof}

We are now ready to present the algorithm.

\begin{Algorithm}
\textbf{Input}: Completely positive operator $T$ given in terms of Kraus operators $A_1, \ldots, A_m \in \mathbb{Z}^{n \times n}$. Each entry of $A_i$ has absolute value at most $M$. \\
\textbf{Output}: Is $T$ rank non-decreasing?  

\begin{enumerate}
\item Check if $T(I)$ and $T^*(I)$ are singular. If any one of them is singular, then output that the operator is rank decreasing, otherwise proceed to step 2. 
\item Perform right and left normalizations on $T$ alternatively for $t$  steps. Let $T_j$ be the operator after $j$ steps. Also let $\eps_j = \ds(T_j)$. 
\item Check if $\min \{\eps_j : 1 \le j \le t\} \le 1/(6n)$. If yes, then output that the operator is rank non-decreasing otherwise output rank decreasing.
\end{enumerate}
\caption{Algorithm $G$}
\label{Gurvits_alg}
\end{Algorithm}

\begin{remark}
In algorithm $G$, the operators are maintained in terms of the Kraus operators and it is easy to see 
the effect of right and left normalizations on the Kraus operators. In fact, they are named such 
since for a right (left) normalization, the Kraus operators multiplied on the right (left) by 
$T^*(I)^{-1/2} \left( T(I)^{-1/2} \right)$. 
\end{remark}

The main objective is to analyze the minimum number of steps $t$ for which this algorithm terminates with the correct answer.
The following theorem of~\cite{gurvits2004} gives the time analysis of the algorithm {\em assuming} 
an initial lower bound on the capacity of the input completely positive operator. In the next subsection we will 
prove our main result, an appropriate lower bound, which shows that the algorithm terminates with the correct answer in 
polynomial time.

\begin{theorem}[\cite{gurvits2004}]{\label{gurvitsalg}} Let $T$ be a completely positive operator 
that is rank non-decreasing, with Kraus operators $A_1, \ldots, A_m$, which are 
$n \times n$ integer matrices such that each entry of $A_i$ has absolute value at most $M$. 
Suppose $T_R$ is the right normalization of $T$. If $\capac(T_R) \ge f(n, M)$, then 
Algorithm $G$ when applied for at least $t = 2 + 36n \cdot \log(1/f(n, M))$ steps is correct. 
\end{theorem}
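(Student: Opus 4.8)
The plan is to run the by-now standard three-part potential argument — the same one used in~\cite{LSW} for Sinkhorn's algorithm and sketched by Gurvits~\cite{gurvits2004} — with $\capac$ as the progress measure. Write $T_0=T$ and let $T_1,\dots,T_t$ be the operators obtained by the alternating normalizations, so that (after Step~1 guarantees $T(I),T^*(I)$ are invertible) $T_1=T_R$ and, for every $j\ge 1$, $T_j^*(I)=I$ when $j$ is odd and $T_j(I)=I$ when $j$ is even. The three ingredients I want are: (i) a lower bound on the starting value, which is exactly the hypothesis $\capac(T_1)=\capac(T_R)\ge f(n,M)$; (ii) a uniform upper bound $\capac(T_j)\le 1$ for every $j\ge 1$, which is Proposition~\ref{cap_bound_1} since each such $T_j$ is left- or right-normalized; and (iii) a per-step multiplicative gain $\capac(T_{j+1})\ge \capac(T_j)\cdot\exp(\ds(T_j)/6)$. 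Granting (iii), telescoping over $j=1,\dots,t-1$ gives $\capac(T_t)\ge f(n,M)\cdot\exp\!\big(\tfrac16\sum_{j=1}^{t-1}\ds(T_j)\big)$, and comparing with (ii) yields $\sum_{j=1}^{t-1}\ds(T_j)\le 6\log(1/f(n,M))$.

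The content is in (iii). Passing from a right-normalized $T_j$ to $T_{j+1}=(T_j)_L$ multiplies the Kraus operators on the left by $T_j(I)^{-1/2}$, so Proposition~\ref{cap_mul} gives $\capac(T_{j+1})=\capac(T_j)/\Det(T_j(I))$; symmetrically, from a left-normalized $T_j$ we get $\capac(T_{j+1})=\capac(T_j)/\Det(T_j^*(I))$. In either case the matrix being inverted is positive definite, has trace exactly $n$ (the trace of $T_j(I)$ equals that of $T_j^*(I)$ and is preserved at $n$ by each normalization), and its squared Frobenius distance to $I$ is precisely $\ds(T_j)$. So (iii) reduces to the purely scalar inequality: for every $A\succ 0$ with $\tr(A)=n$ one has $\Det(A)\le \exp\!\big(-\tr[(A-I)^2]/6\big)$. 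This is a statement about the eigenvalues $\mu_1,\dots,\mu_n>0$ with $\sum_i\mu_i=n$, namely $\sum_i\log\mu_i\le -\tfrac16\sum_i(\mu_i-1)^2$; I would prove it by a Lagrange/compactness argument, reducing to configurations in which the $\mu_i$ take at most two distinct values (since the stationarity condition forces $1/\mu_i+\tfrac13(\mu_i-1)$ to be constant, and this function of $\mu>0$ is convex hence at most two-to-one) and then checking the resulting two-parameter family directly, with the boundary case $\min_i\mu_i\to 0^+$ being trivial as $\Det(A)\to 0$. I expect this determinant inequality — in particular pinning down an $n$-free absolute constant that is at least $6$ — to be the only genuinely technical step; everything else is bookkeeping.

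It remains to convert $\sum_{j=1}^{t-1}\ds(T_j)\le 6\log(1/f(n,M))$ into correctness of the algorithm on a rank non-decreasing $T$. First, $T$ rank non-decreasing forces $T(I)$ and $T^*(I)$ nonsingular: a singular $T(I)$ certifies that $T$ drops rank on $I$, and $T^*$ is rank non-decreasing whenever $T$ is (a shrunk subspace $U\to W$ for $(A_i)$ yields $W^{\perp}\to U^{\perp}$ for $(A_i^{\dagger})$ by taking orthogonal complements), so Step~1 never outputs ``rank decreasing'' and $T_1=T_R$ is well defined. Now suppose for contradiction that Step~3 outputs ``rank decreasing'', i.e.\ $\ds(T_j)>1/(6n)$ for all $1\le j\le t$. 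Then $\sum_{j=1}^{t-1}\ds(T_j)>(t-1)/(6n)$, hence $(t-1)/(6n)<6\log(1/f(n,M))$, i.e.\ $t<1+36n\log(1/f(n,M))$, contradicting $t=2+36n\log(1/f(n,M))$. Therefore $\min_{1\le j\le t}\ds(T_j)\le 1/(6n)\le 1/(n+1)$, so Algorithm $G$ outputs ``rank non-decreasing'', and this is correct: by Theorem~\ref{DS_full} the minimizing $T_j$ is rank non-decreasing, and this property transfers back to $T$ because each $T_j$ is obtained from $T$ by multiplying the Kraus operators on the left and right by invertible matrices, an operation that does not change the existence of a shrunk subspace (equivalently, preserves properties $(1)$–$(8)$ of Theorem~\ref{Equivalences}).
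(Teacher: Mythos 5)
Your outline follows the right three-part potential argument, but the scalar inequality on which step~(iii) rests is simply false as stated. You claim that for every $A\succ 0$ with $\tr(A)=n$ one has $\Det(A)\le\exp(-\tr[(A-I)^2]/6)$, i.e.\ $\sum_i\log\mu_i\le-\frac16\sum_i(\mu_i-1)^2$ whenever $\mu_i>0$ and $\sum_i\mu_i=n$. Take $n=100$ with $\mu_1=10$ and $\mu_2=\dots=\mu_{100}=10/11$: then $\sum\mu_i=100$, $\sum(\mu_i-1)^2=81+99/121\approx 81.8$, so the claimed upper bound on $\log\Det$ is about $-13.6$; but $\log\Det=\ln 10+99\ln(10/11)\approx -7.1$, which is much bigger. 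In fact no $n$-free constant $c$ can make $\Det(A)\le\exp(-\tr[(A-I)^2]/c)$ hold unconditionally: the pointwise inequality $\log x\le(x-1)-(x-1)^2/c$ already fails for $x$ large, and the sum form does too, as above. The correct statement --- the paper's Lemma~\ref{lem:prod-upper-bd} --- is the \emph{conditional} bound $\Det(A)\le\exp(-\alpha/6)$ when $\alpha=\tr[(A-I)^2]\le 1$, and $\Det(A)\le\exp(-1/6)$ when $\alpha>1$. Consequently your telescoping inequality $\capac(T_t)\ge f\cdot\exp\bigl(\tfrac16\sum_{j<t}\ds(T_j)\bigr)$ and the clean summed bound $\sum_j\ds(T_j)\le 6\log(1/f)$ are both false, and the Lagrange/compactness argument you sketch cannot salvage them because the target inequality itself is wrong.

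The fix is minor and gives the same numerology, which is precisely what the paper does. Under the standing contradiction hypothesis that $\ds(T_j)>1/(6n)$ for all $j$, the corrected per-step gain is $\exp(\min(\ds(T_j),1)/6)\ge\exp(1/(36n))$ for every $j\le t$ (using $n\ge 1$ to absorb the $\alpha>1$ case), so $\capac(T_t)\ge f\cdot\exp\bigl((t-1)/(36n)\bigr)$, and comparison with $\capac(T_t)\le 1$ forces $t\le 1+36n\log(1/f)$ --- contradicting $t=2+36n\log(1/f)$. Everything else in your write-up is fine, and in fact your justification that $T^*(I)$ is nonsingular --- passing a shrunk subspace $U\to W$ for $(A_i)$ to the shrunk subspace $W^\perp\to U^\perp$ for $(A_i^\dagger)$ by duality --- is a clean alternative to the image-containment argument in the paper's proof (which, as written, appears to have $A_i$ where $A_i^\dagger$ was intended).
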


\noindent For completeness sake, we provide a full proof of this theorem. Again, this analysis 
follows similar ones for the classical Sinkhorn iterations, e.g. as in ~\cite{LSW,GurYianilos}. 
Basically, capacity increases by a factor roughly $1+1/36n$ per iteration as long as it is 
not too close to 1.

\begin{proof}
	If either $T(I)$ or $T^*(I)$ is singular, then $T$ is rank-decreasing. 
	When $T(I)$ is singular, $T$ decreases the rank of $I$. When $T^*(I)$ is singular, any vector in the kernel of $T^*(I)$ lies in the kernels of all the $A_i$'s. If $T(I)$ and $T^*(I)$ are both non-singular, it is easy to verify that $T_j(I)$ and $T_j^*(I)$ will remain non-singular for all $j$ and hence step 2 is well defined. Also using Theorem \ref{DS_full} and the fact that right and left normalizations don't change the property of being rank decreasing, Algorithm $G$ will always output rank decreasing if $T$ is rank-decreasing.

So what is left to prove is if $T$ is rank non-decreasing, then $\min \{\eps_j : 1 \le j \le t\} \le 1/6n$. Assume to the contrary that it is not. 
Denote by $\capac_j$ to be the capacity of the operator $T_j$. By Lemma~\ref{lem:quant-progress} below (which essentially is a robust version of the AM-GM inequality),
if $\eps_j > 1/6n$, then $\capac_{j+1} \ge \expon(1/36n) \cdot \capac_j$.

%

\noindent From the assumption of the theorem, we know that $\capac_1 \ge f(n, M)$. Also it is easy to 
see that $\capac_j \le 1$ for all $j$ (Proposition \ref{cap_bound_1}). However by the 
assumption that $\min \{\eps_j : 1 \le j \le t\} > 1/6n$ and the increase in capacity per iteration seen above, 
we get that 
$$
1 \ge \capac_t \ge \expon \left(\frac{t-1}{36n} \right) \cdot \capac_1 = \expon \left(\frac{t-1}{36n} \right) \cdot f(n,M)
$$
Plugging in $t = 2 + 36n \cdot \log(1/f(n, M))$ gives us the required contradiction.

\end{proof}

In the main Theorem \ref{capacity_lb} below we will prove that the quantity $f(n, M)$ used in the statement of Theorem \ref{gurvitsalg} is $\ge \expon(-4n \log(Mmn))$, which will prove that the number of {\em iterations} needed in Algorithm $G$ is $O(n^2 \log(Mmn))$. But this alone doesn't guarantee that the algorithm is actually polynomial time, since the bit sizes of numbers involved might get exponential.
As it happens, simple truncation suffices to overcome this problem, as shown in~\cite{gurvits2004}, and we reproduce this analysis in Subsection~\ref{bit_complexity} for completeness.

\subsection{Main Theorem: The Lower Bound on Capacity}\label{sec:capacity_lb}

\noindent In this subsection we prove our main theorem, a lower bound on capacity of an operator in terms of its description size, in Theorem \ref{capacity_lb}. Before diving into the proof we provide a high level plan. We will first prove that if a completely positive operator with integer entries is rank non-decreasing (that is has positive capacity), then the capacity is actually non-negligible.
Our starting point  (Theorem~\ref{Equivalences}) is the statement of the equivalence between the rank non-decreasing property and an algebraic condition (non-vanishing of a certain determinant). Using Alon's Combinatorial Nullstellensatz, we can ensure small coefficients in the algebraic condition above. 
We then state and prove  (Lemma~\ref{lemm:matrix_CS})  a Cauchy-Schwartz inequality for matrices needed next.
The main result (Theorem~\ref{lem:cap-square} ) proceeds by applying
 an appropriate scaling to the original integral operator, which  reduces it to one that preserves the identity matrix. The latter property (which provides a trace bound), the integrality of the original operator (which provides an integral determinantal lower bound) and the multiplicativity of capacity under scaling combine (via the above inequality) to give the desired bound. We now proceed with the details of this plan.

We will need the fact that (nonzero) 
polynomials of degree $d$ cannot vanish on all points with non-negative integer
coordinates with sum $\leq d$. This follows from Alon's combinatorial nullstellensatz~\cite[Theorem 1.2]{Alon_CN}.

\begin{lemma}[\cite{Alon_CN}]\label{lem:small-SZ}
	If $p(x_1, \ldots, x_n) \in \C[x_1, \ldots, x_n]$ is a (nonzero) polynomial of degree $d$,
      where the individual degree of the variable $x_i$ is at most $d_i$, 
	then there exists $(a_1, \ldots, a_n) \in \mathbb{Z}_{\ge 0}^n$ such that 
	$\dst\sum_{i=1}^n a_i \leq d$ and $a_i \le d_i$, for which $p(a_1, \ldots, a_n) \neq 0$.
\end{lemma}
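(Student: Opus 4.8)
The plan is to deduce the lemma directly from Alon's Combinatorial Nullstellensatz in its ``coefficient'' form~\cite[Theorem 1.2]{Alon_CN}: if $p$ has degree $\sum_{i=1}^n t_i$ and the coefficient of the monomial $\prod_{i=1}^n x_i^{t_i}$ in $p$ is nonzero, then for any sets $S_1,\dots,S_n \subseteq \C$ with $|S_i| \ge t_i + 1$ there is a point $(a_1,\dots,a_n) \in S_1 \times \cdots \times S_n$ at which $p$ does not vanish. Everything then comes down to choosing the witnessing monomial and the sets $S_i$ so that the resulting point is small, integral, and respects the individual-degree bounds.

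First I would pick the monomial: since $p \neq 0$ has (total) degree $d$, it contains some monomial $c \prod_{i=1}^n x_i^{t_i}$ with $c \neq 0$ and $\sum_{i=1}^n t_i = d$; fix any such monomial. Because the individual degree of $x_i$ in $p$ is at most $d_i$, we automatically have $0 \le t_i \le d_i$ for every $i$. Next I would take $S_i = \{0, 1, \dots, t_i\} \subseteq \mathbb{Z}_{\ge 0} \subseteq \C$, which has exactly $t_i + 1$ elements, and invoke the Nullstellensatz with this monomial and these sets. This produces a point $(a_1,\dots,a_n)$ with $a_i \in \{0,\dots,t_i\}$ and $p(a_1,\dots,a_n) \neq 0$. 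Finally I would verify the two required constraints: $a_i \le t_i \le d_i$ for each $i$, and $\sum_{i=1}^n a_i \le \sum_{i=1}^n t_i = d$; together with $a_i \in \mathbb{Z}_{\ge 0}$, this is precisely the assertion of the lemma.

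I do not expect any real obstacle here; the single point worth a moment's care is to invoke the version of the Combinatorial Nullstellensatz that takes the exponent vector $(t_1,\dots,t_n)$ of a top-degree monomial into account, so that each $|S_i|$ need only exceed $t_i$ rather than $d$ — this is exactly what keeps every coordinate $a_i$ bounded by $d_i$ and not merely by $d$. If a self-contained argument were preferred, the same conclusion would come out of the standard induction on $n$ underlying the Nullstellensatz (peel off the highest power of $x_n$, apply the inductive hypothesis to a suitable coefficient polynomial in $x_1,\dots,x_{n-1}$, and close with a one-variable non-vanishing step), while keeping track of the total-degree budget; but quoting Alon's theorem as a black box is the cleanest route.
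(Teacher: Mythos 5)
Your proof is correct and is exactly the intended argument: the paper itself gives no proof, only the citation to Alon's Combinatorial Nullstellensatz, and your proposal supplies the natural way to instantiate it — pick a top-degree monomial $c\prod x_i^{t_i}$ (so $\sum t_i = d$ and, since the individual degree of $x_i$ is bounded by $d_i$, each $t_i \le d_i$), take $S_i = \{0,1,\dots,t_i\}$, and apply Theorem 1.2 of \cite{Alon_CN} to obtain a non-vanishing point with $a_i \le t_i$. This yields precisely the claimed bounds $\sum a_i \le d$ and $a_i \le d_i$.
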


As a corollary of Theorem~\ref{Equivalences} and of Lemma~\ref{lem:small-SZ}, we get the following:

\begin{corollary}\label{thm:dim-bounds} Let $A_1, \ldots, A_m$ be $n \times n$ complex matrices s.t. 
	the completely positive operator $T_A$ defined by $A_1,\ldots,A_m$ is rank non-decreasing. 
	Then there exist matrices $D_1, \ldots, D_m \in \cM_d(\C)$ for some $d$
	such that $$ \Det(D_1 \otimes A_1 + \cdots + D_m \otimes A_m) \neq 0.  $$
	Furthermore, $D_1,\ldots,D_m$ can be chosen to be matrices with non-negative integer entries s.t. 
$$
\sum_{j=1}^m \sum_{k,l=1}^d D_j(k,l)^2 \le n^2 d
$$
\end{corollary}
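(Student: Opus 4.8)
The plan is to extract the existence of \emph{some} matrices $D_j$ from the equivalences of Theorem~\ref{Equivalences}, and then to make them integral and small by applying Lemma~\ref{lem:small-SZ}. Throughout we work over $\C$, so all of the equivalences in Theorem~\ref{Equivalences} are available.

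First, since $T_A$ is rank non-decreasing it is not rank-decreasing, so condition (7) of Theorem~\ref{Equivalences} fails for the symbolic matrix $L = \sum_j x_j A_j$; as conditions (1)--(8) are equivalent, the negation of (4) holds. Thus there is an integer $d$ for which the polynomial
$$
p \;:=\; \Det\!\Bigl(\sum_{j=1}^m X_j \otimes A_j\Bigr)
$$
in the $md^2$ commuting variables that are the entries of $d \times d$ matrices $X_1,\dots,X_m$ is not identically zero.

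Next I would bound the degrees of $p$. The matrix $\sum_j X_j \otimes A_j$ is $dn \times dn$ with entries that are linear forms in the $X_j(k,l)$, so $p$ has total degree at most $dn$. Moreover, in this matrix the variable $X_j(k,l)$ occurs only among the $n$ rows forming the $k$-th block-row, and each monomial in the Leibniz expansion of the determinant selects exactly one entry (of degree $1$) from each of those rows; hence the individual degree of every $X_j(k,l)$ in $p$ is at most $n$. Applying Lemma~\ref{lem:small-SZ} to $p$ with each $d_i = n$ produces a point in $\Z_{\ge0}^{md^2}$ whose coordinates are non-negative integers, each at most $n$, with coordinate-sum at most $dn$, at which $p$ does not vanish. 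Let $D_1,\dots,D_m \in \cM_d(\C)$ be the matrices whose entries $D_j(k,l)$ are these coordinates. Then $\Det(\sum_j D_j \otimes A_j)\neq 0$, the entries $D_j(k,l)$ are non-negative integers at most $n$, and $\sum_{j,k,l} D_j(k,l) \le dn$.

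Finally, for the claimed size bound, since $0 \le D_j(k,l)\le n$ we have $D_j(k,l)^2 \le n\,D_j(k,l)$, so
$$
\sum_{j=1}^m \sum_{k,l=1}^d D_j(k,l)^2 \;\le\; n\sum_{j=1}^m\sum_{k,l=1}^d D_j(k,l) \;\le\; n\cdot dn \;=\; n^2 d .
$$
The only subtle point in this argument is the individual-degree estimate: it is exactly what lets us bound the \emph{entries} of the $D_j$ (and hence the displayed $\ell_2$ norm), as opposed to only their sum, and it is the reason we invoke Alon's Combinatorial Nullstellensatz rather than a Schwartz--Zippel type argument. The genuinely deep input is the existence of the dimension $d$ in the first step, which is quoted from Theorem~\ref{Equivalences}.
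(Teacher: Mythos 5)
Your proof is correct and follows essentially the same route as the paper's: both pass from the equivalences of Theorem~\ref{Equivalences} to the non-vanishing of the polynomial $p = \Det\bigl(\sum_j X_j \otimes A_j\bigr)$, apply Alon's Combinatorial Nullstellensatz with the total-degree bound $nd$ and individual-degree bound $n$, and finish with $D_j(k,l)^2 \le n\,D_j(k,l)$. Your block-row argument for the individual-degree bound and the explicit final estimate merely spell out steps the paper leaves implicit.
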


\begin{proof}
	By equivalences (3) and (7) of Theorem~\ref{Equivalences}, we know that there exist matrices $F_1, \ldots, F_m \in \cM_d(\C)$, for some $d$, 
	such that $\Det(F_1 \otimes A_1 + \cdots + F_m \otimes A_m) \neq 0$. This implies that the 
	polynomial $p(X_1, \ldots, X_m) = \Det(X_1 \otimes A_1 + \cdots + X_m \otimes A_m)$,
	where $X_i$ are symbolic $d \times d$ matrices, is nonzero. Hence, by Lemma~\ref{lem:small-SZ},
	we know that there exist matrices of positive integers $D_1, \ldots, D_m$ such that 
	$\Det(D_1 \otimes A_1 + \cdots + D_m \otimes A_m) \neq 0$ and that 
	$$
\sum_{j=1}^m \sum_{k,l=1}^d D_j(k,l) \leq \deg(p) = nd
$$
$$
D_j(k,l) \le n
$$ 
This is because the total degree of $p$ is $nd$ and the individual degree of each variable $X_j(k,l)$ in $p$ is at most $n$. This implies the desired upper bound.
\end{proof}

We will also need the following Cauchy-Schwarz like inequality:

\begin{lemma}\label{lemm:matrix_CS}
Let $C_1,\ldots,C_m$ and $D_1,\ldots,D_m$ be $d_1$ and $d_2$ dimensional complex matrices, respectively. Then
$$
\tr\left[ \left( \sum_{i=1}^m C_i \otimes D_i\right) 
\left( \sum_{j=1}^m C_j^{\dagger} \otimes D_j^{\dagger}\right)\right] \le 
\tr\left[ \sum_{i=1}^m C_i C_i^{\dagger}\right]  \cdot \tr\left[ \sum_{j=1}^m D_j D_j^{\dagger}\right]
$$
\end{lemma}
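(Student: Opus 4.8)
The plan is to expand both sides in terms of traces and reduce the matrix inequality to the ordinary Cauchy--Schwarz inequality on a suitable inner product space. First I would recall that for any complex matrices $M, N$ of the same size, $\tr[MN^{\dagger}] = \langle M, N\rangle$ defines an inner product (the Hilbert--Schmidt, or Frobenius, inner product), so that $\tr[MM^{\dagger}] = \|M\|^2 \ge 0$. The key observation is that the left-hand side is exactly a sum of such inner products over pairs of indices: since $(C_i \otimes D_i)(C_j^{\dagger}\otimes D_j^{\dagger}) = (C_i C_j^{\dagger}) \otimes (D_i D_j^{\dagger})$ and $\tr[P \otimes Q] = \tr[P]\cdot\tr[Q]$, the left side equals $\sum_{i,j} \tr[C_i C_j^{\dagger}]\cdot \tr[D_i D_j^{\dagger}]$.

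Next I would introduce the vectors $u = (\tr[C_i C_j^{\dagger}])_{i,j}$ — more precisely, think of $C = (C_1, \ldots, C_m)$ as a single element and similarly $D = (D_1,\ldots,D_m)$. The cleanest route: define, on the space of $m$-tuples of $d_1$-dimensional matrices, the Gram-type scalar $\langle C, C\rangle_{\text{tot}} := \sum_{i} \tr[C_i C_i^{\dagger}] = \tr[\sum_i C_i C_i^{\dagger}]$, which is exactly the right-hand factor, and similarly for $D$. Then the claim is $\sum_{i,j}\tr[C_i C_j^{\dagger}]\,\tr[D_i D_j^{\dagger}] \le \big(\sum_i \tr[C_i C_i^{\dagger}]\big)\big(\sum_j \tr[D_j D_j^{\dagger}]\big)$. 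I would prove this by writing the $m\times m$ matrix $G^C$ with entries $G^C_{ij} = \tr[C_i C_j^{\dagger}]$ and $G^D$ with $G^D_{ij} = \tr[D_i D_j^{\dagger}]$; both are positive semidefinite Gram matrices. The left side is $\sum_{i,j} G^C_{ij} G^D_{ij} = \tr[G^C (G^D)^{T}]$, and since $G^D$ is Hermitian PSD, $(G^D)^T = \overline{G^D}$ is also PSD; for PSD matrices $P, Q$ one has $\tr[PQ] \le \tr[P]\tr[Q]$ (because $\tr[PQ] \le \|P\|_{\mathrm{op}}\tr[Q] \le \tr[P]\tr[Q]$). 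Here $\tr[G^C] = \sum_i \tr[C_i C_i^{\dagger}]$ and $\tr[G^D] = \sum_j \tr[D_j D_j^{\dagger}]$, which are exactly the two factors on the right-hand side. This closes the argument.

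Alternatively — and perhaps more transparently — I could avoid Gram matrices entirely: let $a_i := C_i$ viewed as a vector in $\C^{d_1^2}$ and $b_i := D_i$ viewed as a vector in $\C^{d_2^2}$, and note the left-hand side is $\big|\sum_i a_i \otimes b_i\big|^2$ evaluated via $\sum_{i,j}\langle a_i, a_j\rangle\langle b_i, b_j\rangle = \langle \sum_i a_i\otimes b_i, \sum_j a_j \otimes b_j\rangle = \|\sum_i a_i \otimes b_i\|^2$. Then I want $\|\sum_i a_i \otimes b_i\|^2 \le \big(\sum_i\|a_i\|^2\big)\big(\sum_i\|b_i\|^2\big)$. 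Writing $a_i = \sum_k \alpha_{ik} e_k$ in an orthonormal basis, $\sum_i a_i\otimes b_i = \sum_k e_k \otimes (\sum_i \alpha_{ik} b_i)$, so by orthogonality of the $e_k$ the left side is $\sum_k \|\sum_i \alpha_{ik} b_i\|^2 \le \sum_k \big(\sum_i |\alpha_{ik}|^2\big)\big(\sum_i \|b_i\|^2\big)$ by ordinary Cauchy--Schwarz (triangle inequality in $L^2$, applied to the vector-valued sum), and $\sum_k \sum_i |\alpha_{ik}|^2 = \sum_i \|a_i\|^2$, giving exactly the bound.

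I do not expect any serious obstacle here; this is a routine reduction to Cauchy--Schwarz, and the only mild subtlety is bookkeeping the identification of matrices with vectors and verifying $\tr[(P\otimes Q)(R\otimes S)] = \tr[PR]\tr[QS]$ together with the PSD fact $\tr[PQ]\le\tr[P]\tr[Q]$. I would present the Gram-matrix version as the main proof since it is the shortest, with the inner-product-space computation as the conceptual backbone.
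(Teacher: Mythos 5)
Your proof is correct, and it takes a genuinely different route from the paper's. The paper proceeds by a direct sum-of-squares argument: for each pair $(i,j)$ it observes that
$$\tr\left[\left(C_i\otimes D_j - C_j\otimes D_i\right)\left(C_i\otimes D_j - C_j\otimes D_i\right)^{\dagger}\right]\ge 0,$$
expands this to obtain the pairwise inequality $\tr[C_iC_j^\dagger]\tr[D_jD_i^\dagger] + \tr[C_jC_i^\dagger]\tr[D_iD_j^\dagger]\le \tr[C_iC_i^\dagger]\tr[D_jD_j^\dagger] + \tr[C_jC_j^\dagger]\tr[D_iD_i^\dagger]$, and sums over all $i<j$ plus the diagonal. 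This is completely elementary and self-contained: the only input is the positivity of $\tr[YY^\dagger]$. Your Gram-matrix version instead packages the left-hand side as $\tr[G^C\overline{G^D}]$ for two PSD Gram matrices and invokes the operator-norm bound $\tr[PQ]\le\|Q\|_{\mathrm{op}}\tr[P]\le\tr[P]\tr[Q]$; your tensor-product version rewrites the whole quantity as $\bigl\|\sum_i a_i\otimes b_i\bigr\|^2$ and applies ordinary Cauchy--Schwarz in an orthonormal coordinate. What your approach buys is conceptual clarity: it makes visible that the lemma is ``Cauchy--Schwarz for tensors'' and that the two factors on the right are exactly $\|C\|^2$ and $\|D\|^2$ for natural Hilbert--Schmidt norms, and it generalizes instantly (e.g.\ to infinite families, or to replacing $\tr$ with any positive linear functional). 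What the paper's approach buys is that it needs nothing beyond nonnegativity of $\tr[YY^\dagger]$ — no spectral facts about PSD matrices, no operator-norm bounds — which keeps the prerequisites minimal. Both arguments are correct and short; either would have been acceptable here.
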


\begin{proof}
We start with the inequality:
\begin{align*}
0 &\le \tr\left[ \left( C_i \otimes D_j - C_j \otimes D_i \right) \left( C_i^{\dagger} \otimes D_j^{\dagger} - C_j^{\dagger} \otimes D_i^{\dagger} \right)\right] \\
&= \tr\left[ C_i C_i^{\dagger} \otimes D_j D_j^{\dagger} + C_j C_j^{\dagger} \otimes D_i D_i^{\dagger}\right] - \tr\left[ C_i C_j^{\dagger} \otimes D_j D_i^{\dagger} + C_j C_i^{\dagger} \otimes D_i D_j^{\dagger}\right] \\
&= \tr\left[ C_i C_i^{\dagger}\right] \cdot \tr\left[ D_j D_j^{\dagger}\right] +  \tr\left[ C_j C_j^{\dagger}\right] \cdot \tr\left[ D_i D_i^{\dagger}\right] - \tr\left[ C_i C_j^{\dagger} \otimes D_i D_j^{\dagger}\right] - \tr\left[ C_j C_i^{\dagger} \otimes D_j D_i^{\dagger}\right]
\end{align*}
The first inequality is just $\tr\left[ Y Y^{\dagger}\right] \ge 0$. The second equality follows from linearity of trace, $\tr\left[Y\otimes Z\right] = \tr[Y]\cdot \tr[Z]$ and $\tr[XY^{\dagger}] = \tr[YX^{\dagger}]$. Rearranging the above, we get the following inequality:
\begin{align} 
\tr\left[ C_i C_j^{\dagger} \otimes D_i D_j^{\dagger}\right] + \tr\left[ C_j C_i^{\dagger} \otimes D_j D_i^{\dagger}\right] \le  \tr\left[ C_i C_i^{\dagger}\right] \cdot \tr\left[ D_j D_j^{\dagger}\right] +  \tr\left[ C_j C_j^{\dagger}\right] \cdot \tr\left[ D_i D_i^{\dagger}\right] \label{eqn:ankit20}
\end{align}
Summing Equation (\ref{eqn:ankit20}) over pairs $(i,j)$ with $i<j$, we obtain the following:
\begin{align*}
\sum_{i \neq j} \tr\left[ C_i C_j^{\dagger} \otimes D_i D_j^{\dagger}\right] \le \sum_{i \neq j} \tr\left[ C_i C_i^{\dagger}\right] \cdot \tr\left[ D_j D_j^{\dagger}\right]
\end{align*}
Adding $\sum_i \tr\left[ C_i C_i^{\dagger}\right] \cdot \tr\left[ D_i D_i^{\dagger}\right]$ to both sides completes the proof (due to linearity of trace). 
\end{proof}





Now we are ready to prove our main theorem for operators with integral Kraus operators.

\begin{theorem}[\textbf{Capacity of Square Operators}]\label{lem:cap-square}
	Suppose $T_A$ is a completely positive operator which is rank non-decreasing and has Kraus operators 
	$A_1, \ldots, A_m \in \cM_n(\Z)$. In this case:
	$$ \capac(T_A) \ge  \frac{1}{n^{2n}} $$
\end{theorem}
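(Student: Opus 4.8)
The plan is to combine the algebraic certificate for rank non-decreasing operators (Corollary~\ref{thm:dim-bounds}) with the scaling formula for capacity (Proposition~\ref{cap_mul}), the trace upper bound for normalized operators (Proposition~\ref{cap_bound_1}), and the matrix Cauchy--Schwarz inequality (Lemma~\ref{lemm:matrix_CS}), together with the (easy direction of) multiplicativity of capacity under tensor products. Concretely, by Corollary~\ref{thm:dim-bounds} there exist non-negative integer matrices $D_1,\ldots,D_m \in \cM_d(\Z)$ with $\sum_{j,k,l} D_j(k,l)^2 \le n^2 d$ and $\Det(\sum_i D_i \otimes A_i) \ne 0$. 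Introduce the operator $M$ with Kraus operators $D_1,\ldots,D_m$ and the ``mixed'' operator $L_M$ with Kraus operators $A_1 \otimes D_1,\ldots, A_m \otimes D_m$. The non-vanishing of that determinant says precisely that $L_M$ is \emph{not} rank-decreasing at the identity in the strongest sense: $\Det(L_M(I)) = \Det(\sum_i (A_i \otimes D_i)(A_i^\dagger \otimes D_i^\dagger)) \ne 0$, and more importantly that $\capac(L_M) > 0$.

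First I would establish the key inequality relating $\capac(T_A)$ to $\capac(L_M)$ and $\capac(M)$. The idea is that $L_M$ sits ``between'' $T_A \otimes M$ (whose Kraus operators are all $A_i \otimes D_j$) and has fewer Kraus operators, so $\capac(L_M) \le \capac(T_A \otimes M)$; by the easy direction of multiplicativity, $\capac(T_A \otimes M)^{1/(nd)} \le \capac(T_A)^{1/n}\cdot \capac(M)^{1/d}$. Rearranging gives $\capac(T_A)^{1/n} \ge \capac(L_M)^{1/(nd)} / \capac(M)^{1/d}$. So the job reduces to a lower bound on $\capac(L_M)$ and an upper bound on $\capac(M)$. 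For the upper bound on $\capac(M)$: since $D_i$ have small entries, $\tr(M(I)) = \sum_{j,k,l} D_j(k,l)^2 \le n^2 d$, so after left-normalizing (which multiplies capacity by $|\Det M(I)^{-1/2}|^2 = \Det(M(I))^{-1}$) we get $1 \ge \capac(M_L) = \Det(M(I))^{-1}\capac(M)$, hence $\capac(M) \le \Det(M(I)) \le (\tr(M(I))/d)^d \le (n^2)^d$ by AM-GM; actually I will keep $\capac(M)^{1/d} \le n^2$. For the lower bound on $\capac(L_M)$: right-normalize $L_M$ to get $N = (L_M)_R$ with $N^*(I) = I$; by Proposition~\ref{cap_mul}, $\capac(L_M) = \Det(L_M^*(I)) \cdot \capac(N)$, and $\Det(L_M^*(I)) = \Det(\sum_i (A_i^\dagger \otimes D_i^\dagger)(A_i \otimes D_i)) = \Det(\sum_i A_i^\dagger A_i \otimes D_i^\dagger D_i)$ which is a non-negative integer (it is $\Det$ of a psd \emph{integer} matrix, nonzero since $L_M$ is rank non-decreasing), hence $\ge 1$. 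So $\capac(L_M) \ge \capac(N)$, and it remains to lower bound $\capac(N)$ for the normalized operator $N$.

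The main obstacle — and the place the matrix Cauchy--Schwarz inequality (Lemma~\ref{lemm:matrix_CS}) enters — is the lower bound $\capac(N) \ge 1$, i.e. showing $\Det(N(X)) \ge 1$ for all psd $X$ with $\Det(X) = 1$. Write $N(X) = \sum_i B_i X B_i^\dagger$ where $B_i = (A_i \otimes D_i) \cdot (L_M^*(I))^{-1/2}$, so $\sum_i B_i^\dagger B_i = I$. The strategy is: $\Det(N(X)) \ge \Det(\sum_i B_i X B_i^\dagger)$, and one wants to bound this below by the product over $X$'s eigenvalues, using that $\sum_i B_i^\dagger B_i = I$. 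The clean route, following the paper's outline, is to instead bound $\tr$ rather than $\Det$: apply Lemma~\ref{lemm:matrix_CS} with the blocks $C_i, D_i$ being the tensor factors of $B_i$, together with the AM-GM bound $\Det(N(X)) \ge (\tr(N(X))/(nd))^{nd}$ reversed appropriately — wait, that goes the wrong way, so instead I use the other formulation of capacity, $nd \cdot \capac(N)^{1/(nd)} = \inf \tr[N(X) Y]$ over $\Det(X),\Det(Y) \ge 1$, $X,Y \succ 0$, and lower-bound this trace by Lemma~\ref{lemm:matrix_CS} applied to $N(X)^{1/2}$-weighted versions of the Kraus operators, exploiting $N^*(I) = I$. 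This yields $\capac(N) \ge $ some explicit constant; tracking constants through, the clean bound is $\capac(N) \ge 1$, giving $\capac(T_A)^{1/n} \ge 1/(1 \cdot n^2)$, i.e. $\capac(T_A) \ge n^{-2n} = \expon(-2n\log n)$, as claimed. The single trickiest point is getting the Cauchy--Schwarz step to interact correctly with the normalization $N^*(I) = I$ so that the constant comes out to exactly $1$ rather than something dimension-dependent; everything else is bookkeeping with $\Det$, $\tr$, and AM-GM.
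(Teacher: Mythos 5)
Your argument breaks down at the final step, where you need $\capac(N) \ge 1$ for the right-normalized operator $N = (L_M)_R$. Since $N^*(I) = I$, Proposition~\ref{cap_bound_1} gives $\capac(N) \le 1$ unconditionally, with equality only in the degenerate case where $N$ is already doubly stochastic after a trivial scaling; so the bound you are aiming for is false in general, and the Cauchy--Schwarz step you sketch cannot rescue it. The structural problem underneath is that the chain $\capac(T_A)^{1/n} \ge \capac(L_M)^{1/(nd)}/\capac(M)^{1/d}$ forces you to produce a lower bound on $\capac(L_M)^{1/(nd)}$ that is \emph{uniform in $d$}, and $d$ is not bounded here. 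The only such bound available is the very theorem you are proving, applied to the $nd$-dimensional integral operator $L_M$, which is circular. This is exactly why the paper abandoned its earlier multiplicativity-based argument (which relied on an explicit exponential cap on $d$ from degree bounds) in favor of the direct route.

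The paper's proof never introduces $L_M$ or $N$ as objects whose capacity must be bounded from below. Instead, it fixes an arbitrary $X \succ 0$ with $\Det(X) = 1$, forms the rescaled Kraus operators $C_i = T_A(X)^{-1/2} A_i X^{1/2}$ (so that $T_C(I)=I$ and $\tr[\sum_i C_iC_i^\dagger]=n$), and sets $Y = \sum_i D_i \otimes C_i$. Factoring $I_d \otimes T_A(X)^{-1/2}$ and $I_d \otimes X^{1/2}$ out of $Y$ and using integrality of $\Det(\sum_i D_i\otimes A_i)$ gives $\Det(YY^\dagger) \ge \Det(T_A(X))^{-d}$, while Lemma~\ref{lemm:matrix_CS}, the Nullstellensatz bound $\sum_{j,k,l}D_j(k,l)^2 \le n^2d$, and AM--GM give $\Det(YY^\dagger) \le (n^2)^{nd}$. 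Combining the two yields $\Det(T_A(X)) \ge n^{-2n}$ directly; the parameter $d$ cancels, which is precisely what the multiplicativity route cannot arrange. You identified the right ingredients (Nullstellensatz certificate, the scaling identity, Cauchy--Schwarz, AM--GM); the missing idea is to apply them to this rescaled $Y$ for a fixed test matrix $X$, not to a normalization of $L_M$.
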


\begin{proof}
Since $T_A$ is rank non-decreasing, by Corollary~\ref{thm:dim-bounds}, there exist non-negative 
integer matrices $D_1,\ldots,D_m$ of some dimension $d$ s.t. $\Det(D_1 \otimes A_1 + \cdots + D_m \otimes A_m) \neq 0$ and also 
\begin{align}
\sum_{j=1}^m \sum_{k,l=1}^d D_j(k,l)^2 \le n^2 d \label{eqn:ankit25}
\end{align}
Since $A_1,\ldots,A_m$ are also integer matrices, $\Det(D_1 \otimes A_1 + \cdots + D_m \otimes A_m)$ is a 
non-zero integer and hence
\begin{align}
|\Det(D_1 \otimes A_1 + \cdots + D_m \otimes A_m)| \ge 1 \label{eqn:ankit1}
\end{align}
Let $X \succ 0$ be such that $\Det(X)=1$. Consider the matrices $C_i = T_A(X)^{-1/2}A_i X^{1/2}$. This is a scaling intended so that the operator $T_C$ defined by the matrices $C_i$ preserves identity i.e. satisfies
\begin{align*}
T_C(I) = \sum_{i=1}^m C_i C_i^{\dagger} 
&= T_A(X)^{-1/2} \left( \sum_{i=1}^m A_i X^{1/2} X^{1/2}A_i^{\dagger} \right) T_A(X)^{-1/2}\\
&= I_n
\end{align*}
Hence 
\begin{align}
\tr\left[\sum_{i=1}^m C_i C_i^{\dagger} \right] = n \label{eqn:ankit37}
\end{align}
Let $Y$ denote the matrix $D_1 \otimes C_1 + \cdots + D_m \otimes C_m$. Then 
\begin{align}
|\Det(Y)| &= |\Det(D_1 \otimes C_1 + \cdots + D_m \otimes C_m)|  \\
&= |\Det(D_1 \otimes A_1 + \cdots + D_m \otimes A_m)| \cdot \Det(X)^{d/2} \cdot 
\Det\left( T_A(X)\right)^{-d/2}\nonumber \\
&\ge  \Det\left( T_A(X)\right)^{-d/2} \label{eqn:ankit2}
\end{align}
The equality follows from multiplicativity of determinant and the fact that 
$\Det\left( I_d \otimes Y\right) = \Det(Y)^d$. The inequality follows from $\Det(X)=1$ and 
Equation (\ref{eqn:ankit1}). 
Hence we obtain
\begin{align}
\Det \left( Y Y^{\dagger}\right) = \left| \Det(Y)\right|^2 \ge \Det\left( T_A(X)\right)^{-d} \label{eqn:ankit21}
\end{align}
On the other hand, by the AM-GM inequality
\begin{align}
\Det \left( Y Y^{\dagger}\right) \le \left(\frac{\tr\left[ Y Y^{\dagger}\right]}{n d}\right)^{n d} \label{eqn:ankit22}
\end{align}
Now consider
\begin{align}
\tr\left[ Y Y^{\dagger}\right] &=\tr\left[ \left( \sum_{i=1}^m D_i \otimes C_i\right) \left( \sum_{j=1}^m D_j^{\dagger} \otimes C_j^{\dagger}\right)\right] \nonumber \\
&\le \tr\left[ \sum_{i=1}^m D_i D_i^{\dagger}\right]  \cdot \tr\left[ \sum_{j=1}^m C_j C_j^{\dagger}\right] \nonumber \\
&= \left( \sum_{i=1}^m \sum_{k,l=1}^d D_i(k,l)^2 \right) \cdot n \nonumber \\
&\le n^3 d \label{eqn:ankit24}
\end{align}
The first inequality follows from Lemma \ref{lemm:matrix_CS}. The second equality follows from Equation (\ref{eqn:ankit37}). The second inequality follows from Equation (\ref{eqn:ankit25}). Combining equations (\ref{eqn:ankit21}), (\ref{eqn:ankit22})  and (\ref{eqn:ankit24}), we obtain:
\begin{align*}
\Det\left( T_A(X)\right) \ge \frac{1}{n^{2n}}
\end{align*}
Since $X$ was an arbitrary matrix s.t. $X \succ 0$ and $\Det(X)=1$, we obtain the desired lower bound on $\capac(T_A)$. 
\end{proof}

\begin{remark} A slightly stronger bound can be obtained by using quantum permanents \cite{gurvits2004} in the case when the $D_i$'s are of constant dimension.
\end{remark}

Theorem \ref{lem:cap-square} immediately implies the following capacity lower bound that we need.

\begin{theorem}{\label{capacity_lb}}
Suppose $T$ is a cptp map which is rank non-decreasing and is obtained by right normalization of an operator with Kraus operators 
$A_1, \ldots, A_m$, which are $n \times n$ integer matrices such that each entry of $A_i$ has absolute value at most $M$. 
Then $\capac(T) \ge \expon(-4 n \log(Mmn))$. 
\end{theorem}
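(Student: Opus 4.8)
The plan is to realize $T$ as a scaling of the integral operator $T_A$ with Kraus operators $A_1,\dots,A_m$, use the multiplicativity of capacity under scalings (Proposition~\ref{cap_mul}), feed in the lower bound $\capac(T_A)\ge\expon(-2n\log n)$ from Theorem~\ref{lem:cap-square}, and control the resulting scaling factor using integrality and the entrywise bound $|A_i(k,l)|\le M$.

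First I would unwind the definition of right normalization: since $T_A^*(I)^{-1/2}$ is Hermitian, the Kraus operators of $T=(T_A)_R$ are exactly $A_iC$ with $C := T_A^*(I)^{-1/2} = \left(\sum_{i=1}^m A_i^{\dagger}A_i\right)^{-1/2}$. Here $C$ is well defined: if $\sum_i A_i^{\dagger}A_i$ were singular then (as recorded in the proof of Theorem~\ref{gurvitsalg}) $T_A$, and hence $T$, would be rank-decreasing, contrary to hypothesis. Thus $T=(T_A)_{I,C}$ in the notation of Proposition~\ref{cap_mul}, so
$$
\capac(T) \;=\; |\Det(C)|^2\cdot\capac(T_A) \;=\; \Det\!\left(\sum_{i=1}^m A_i^{\dagger}A_i\right)^{-1}\cdot\capac(T_A).
$$
Since left and right normalizations preserve the rank non-decreasing property, $T_A$ is itself rank non-decreasing, so Theorem~\ref{lem:cap-square} applies and gives $\capac(T_A)\ge\expon(-2n\log n)$.

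Next I would bound $B := \sum_{i=1}^m A_i^{\dagger}A_i$ from above. Since $B\succeq 0$ and $\tr(B)=\sum_{i=1}^m\sum_{k,l}|A_i(k,l)|^2\le mn^2M^2$, the AM-GM inequality gives $\Det(B)\le(\tr(B)/n)^n\le(mnM^2)^n$. Combining the two displays,
$$
\capac(T)\;\ge\;(mnM^2)^{-n}\,\expon(-2n\log n)\;=\;\expon\!\big(-n(2\log M+\log m+3\log n)\big).
$$
Finally, since $M,m,n\ge 1$ (note $M\ge 1$, as $M=0$ would force $T_A\equiv 0$, which is rank-decreasing), we have $2\log M+\log m+3\log n\le 4\log(Mmn)$, which yields $\capac(T)\ge\expon(-4n\log(Mmn))$, as claimed.

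There is no serious obstacle here; this is essentially bookkeeping layered on top of Theorem~\ref{lem:cap-square}. The only points that deserve a moment's care are (i) checking that the normalizing factor $C$ is genuinely invertible, which is handled by the rank non-decreasing hypothesis, and (ii) the elementary trace-versus-determinant bound for a positive semidefinite matrix.
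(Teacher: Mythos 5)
Your proposal is correct and follows essentially the same route as the paper: apply Proposition~\ref{cap_mul} to write $\capac(T) = \Det(T_A^*(I))^{-1}\capac(T_A)$, bound $\Det(T_A^*(I))$ via the trace and AM--GM, and invoke Theorem~\ref{lem:cap-square} for the lower bound on $\capac(T_A)$. Your additional remarks on the invertibility of $T_A^*(I)^{-1/2}$ and on $M\ge 1$ are sensible sanity checks but do not change the argument.
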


\begin{proof}
Let $T_A$ be the operator defined by $A_1,\ldots,A_m$. Since $T$ is the right normalization of $T_A$, by Proposition \ref{cap_mul}, we get that
\begin{align}
\capac(T) = \frac{\capac(T_A)}{\Det \left( T^*(I)\right)} \label{eqnankit:100}
\end{align}
Note that $T^*(I)$ is a psd matrix. Also 
\begin{align*}
\tr \left[ T^*(I)\right] &= \sum_{i=1}^m \tr \left[ A_i^{\dagger} A_i\right] \\
&= \sum_{i=1}^m \sum_{k,l=1}^n |A_i(k,l)|^2 \\
&\le M^2 m n^2 
\end{align*}
This implies (via the AM-GM inequality) that
\begin{align}
\Det \left( T^*(I)\right) &\le \left( \frac{\tr \left[ T^*(I)\right]}{n}\right)^n \nonumber \\
&\le (M^2mn)^n \label{eqnankit:101}
\end{align}
Combining Equations (\ref{eqnankit:100}) and (\ref{eqnankit:101}), and Theorem \ref{lem:cap-square} gives the desired bound. 
\end{proof}

\section{Properties of Capacity}\label{sec:cap_properties}

In this section, we prove some interesting properties of capacity. In subsection \ref{subsec:cap_ds}, we prove that the capacity of operators which are close to doubly stochastic is close to $1$. This is used in the proof of Theorem \ref{mult_cap_hard}. In subsection \ref{subsec:charac_opt}, we characterize the almost optimal points of capacity in terms of approximate fixed points of an operator. This will be used later in section \ref{sec:bit_complexity_continuity} in the proof of continuity of capacity (theorem \ref{capacity_continuity}). In \ref{subsec:mult_cap}, we prove a multiplicativity property of capacity under tensor products of operators. This property was used in the lower bound on capacity in the previous version of this paper. While it is no longer needed for this purpose now, it is an interesting and intriguing fact by itself, so we include a proof. 

\subsection{Capacity of almost doubly stochastic operators}\label{subsec:cap_ds}

The following definition of capacity for non-negative matrices is due to \cite{GurYianilos}. They used it to analyze Sinkhorn's algorithm for matrix scaling. 

\begin{definition}[Capacity of non-negative matrices] Given a non-negative matrix $A$, its capacity is defined as follows:
$$
\capac(A) = \inf \left\{ \prod_{i=1}^n (Ax)_i : \prod_{i=1}^n x_i = 1, x > 0\right\} 
$$
\end{definition}

The next lemma states that the capacity of almost doubly stochastic matrices is close to $1$.

\begin{lemma}{\label{cap_almostDS:classical}}
Let $A$ be a non-negative $n \times n$ matrix s.t. the row sums of $A$ are $1$ and 
$$
\sum_{i=1}^n (c_i-1)^2 \le \eps
$$
where $c_1,\ldots,c_n$ are the column sums of $A$. Then 
$$
\capac(A) \ge (1-\sqrt{n\eps})^n
$$
\end{lemma}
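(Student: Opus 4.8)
The plan is to reduce the bound to a single convexity inequality by writing $A$ as a convex combination of a doubly stochastic matrix and a row‑stochastic matrix. Concretely, I will show that $A=(1-s)B+sC$ for some doubly stochastic $B$, some nonnegative $C$ with all row sums equal to $1$, and $s=\sqrt{n\eps}$ (we may assume $s<1$, since otherwise $(1-\sqrt{n\eps})^n\le 0$ and there is nothing to prove). Granting this, the lemma is immediate: for any $x>0$ we have $(Ax)_i=(1-s)(Bx)_i+s(Cx)_i\ge (1-s)(Bx)_i$, hence $\prod_i(Ax)_i\ge(1-s)^n\prod_i(Bx)_i$; and since each row of $B$ is a probability vector and the columns of $B$ also sum to $1$, the weighted AM--GM inequality gives $\prod_i(Bx)_i\ge\prod_i\prod_j x_j^{B_{ij}}=\prod_j x_j^{\sum_i B_{ij}}=\prod_j x_j$. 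Taking the infimum over $x>0$ with $\prod_j x_j=1$ yields $\capac(A)\ge(1-s)^n=(1-\sqrt{n\eps})^n$.

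To construct the decomposition it suffices to find a doubly stochastic $B$ with $(1-s)B\le A$ entrywise: then $A-(1-s)B$ is nonnegative with every row sum equal to $1-(1-s)=s$, and $C:=\tfrac1s\bigl(A-(1-s)B\bigr)$ does the job. Existence of such a $B$ is exactly feasibility of the polytope $\{X: 0\le X\le\tfrac1{1-s}A,\ X\mathbf 1=\mathbf 1,\ X^{T}\mathbf 1=\mathbf 1\}$, which I would analyze as a max‑flow problem: a source sends capacity $1$ into each row node, row $i$ connects to column $j$ with capacity $\tfrac1{1-s}A_{ij}$, and each column node sends capacity $1$ to a sink. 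A flow of value $n$ — which, read off on the middle edges, yields the desired $B$ — exists if and only if, by the max‑flow--min‑cut theorem, $\sum_{i\in S,\,j\in T}A_{ij}\ge(1-s)\bigl(|S|+|T|-n\bigr)$ for every $S,T\subseteq[n]$.

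It remains to check this family of inequalities. First, Cauchy--Schwarz turns the hypothesis into an $\ell_1$ bound: writing $c_j=1+\delta_j$, we get $\sum_j|\delta_j|\le\sqrt n\,\bigl(\sum_j\delta_j^2\bigr)^{1/2}\le\sqrt{n\eps}=s$. Now fix $S,T$; if $|S|+|T|\le n$ the inequality is trivial (left side nonnegative, right side $\le 0$), so let $a:=|S|+|T|-n\ge 1$. Using that the rows of $A$ sum to $1$,
\[
\sum_{i\in S,\,j\in T}A_{ij}=|S|-\sum_{i\in S}\sum_{j\notin T}A_{ij}\ \ge\ |S|-\sum_{j\notin T}c_j\ =\ a-\sum_{j\notin T}\delta_j\ \ge\ a-s\ \ge\ (1-s)a,
\]
where the last step is the elementary estimate $a-s\ge a-as$, valid since $a\ge1$ and $s\ge 0$. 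This establishes all the cut inequalities, hence the decomposition, and with it the lemma.

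The genuinely nontrivial input is the characterization of which nonnegative matrices entrywise dominate a doubly stochastic matrix (equivalently, the max‑flow--min‑cut step above); I would invoke it as a known fact rather than reprove it. Everything else is routine bookkeeping: converting the $\ell_2$ hypothesis into $\|c-\mathbf 1\|_1\le\sqrt{n\eps}$, and the one‑line inequality $\tfrac{a-s}{1-s}\ge a$ for integers $a\ge1$ and $s\in[0,1)$.
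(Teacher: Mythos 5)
Your proof is correct, and it reaches the same intermediate decomposition $A = (1-s)B + sC$ (with $B$ doubly stochastic, $s = \sqrt{n\eps}$) by a genuinely different route. The paper builds the decomposition greedily: repeatedly peel off a scaled permutation matrix until the remainder $Z$ has $\Per(Z)=0$, then invoke Lemma~5.2 of \cite{LSW} (a matrix with zero permanent and unit row sums has $\sum_j(c_j'-1)^2\ge 1/n$) to conclude the peeled-off weight is at least $1-\sqrt{n\eps}$. You instead fix $s=\sqrt{n\eps}$ up front and ask directly whether the polytope $\{0\le X\le \tfrac1{1-s}A,\ X\mathbf 1=\mathbf 1,\ X^T\mathbf 1=\mathbf 1\}$ is nonempty, reducing it to a transportation/max-flow feasibility question and verifying the cut inequalities $\sum_{i\in S, j\in T}A_{ij}\ge(1-s)(|S|+|T|-n)$ by hand via Cauchy--Schwarz and the estimate $a-s\ge(1-s)a$ for integer $a\ge 1$. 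Both steps are sound. The paper's route is shorter on the page because it offloads the quantitative content to the cited LSW lemma (which, unwound, is itself a K\"onig/Hall-type combinatorial fact, so the two arguments are close cousins). Your route is more self-contained, makes the origin of the $\sqrt{n\eps}$ transparent (it is exactly what the $\ell_1$ version of the column-sum hypothesis buys in the cut bound), and avoids the slight awkwardness of the Birkhoff-peeling step (proving that the sum of the peeled permutations is itself a multiple of a doubly stochastic matrix, and that the process terminates with zero permanent). The only cosmetic caveat, shared with the paper: both arguments implicitly work in the regime $\sqrt{n\eps}\le 1$, which is the only regime in which the stated bound is non-vacuous; you flag this and the paper does not, but neither treatment is fully rigorous for even $n$ with $\sqrt{n\eps}>1$, a degenerate case of no consequence here.
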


\begin{proof}
We start by proving that for doubly stochastic matrices $B$, $\capac(B) \ge 1$. This will follow from concavity of log. Indeed, let $x \in \mathbb{R}^n$ s.t. $\prod_{i=1}^n x_i = 1$ and $x > 0$. Then 
\begin{align*}
\sum_{i=1}^n \log \left((Bx)_i \right) &= \sum_{i=1}^n \log \left( \sum_{j=1}^n B_{i,j} x_j\right) \\
&\ge \sum_{i=1}^n \sum_{j=1}^n B_{i,j} \log(x_j) \\
&= \sum_{j=1}^n \log(x_j) \sum_{i=1}^n B_{i,j} \\
&= \sum_{j=1}^n \log(x_j) \\
&= 0
\end{align*}
For the inequality, we used concavity of log and the fact that row sums of $B$ are $1$. For the third equality, we used that the column sums of $B$ are $1$. 

Now we need to move on to $A$ which is almost doubly stochastic. A direct argument like for the doubly stochastic case does not work. We need to first prove the following claim (\cite{LSW}):

\begin{claim}{\label{almostDS:decomp}} There exists a decomposition of $A = \lambda B + Z$, where $B$ is doubly stochastic, $Z$ is non-negative and $\lambda \ge 1-\sqrt{n \eps}$. 
\end{claim}

Given the claim, it is easy to complete the proof of the lemma. 
$$
\capac(A) \ge \lambda^n \capac(B) \ge (1-\sqrt{n\eps})^n
$$
So we end with a proof of the claim from \cite{LSW}. We will first prove that there is a decomposition $A = D + Z$ where $D$ is multiple of a doubly stochastic matrix and $\text{per}(Z)=0$. Initially, we start with $D=0$ and $Z=A$. As long as $\text{per}(Z)>0$, there is an $\alpha > 0$ and a permutation matrix $P$ s.t. $Z' = Z-\alpha P$ is nonnegative and the number of non-zero entries in $Z'$ is strictly less than that of $Z$. Replace $D$ by $D' = D + \alpha P$ and $Z$ by $Z'$. Note that $D'$ is also a multiple of a doubly stochastic matrix. After at most $n^2$ iterations, we will find a decomposition $A = \lambda B + Z$, where $\text{per}(Z)=0$ (and $B$ is doubly stochastic). We will now prove that $\lambda \ge 1-\sqrt{n\eps}$. If $\lambda = 1$, then we are already done, so assume $\lambda < 1$. In this case, consider the matrix $C = \frac{Z}{1-\lambda}$. Row sums of $C$ are $1$ and the $i^{\text{th}}$ column sum $c_i' = \frac{c_i - \lambda}{1-\lambda}$. Then
$$
\sum_{i=1}^n (c_i'-1)^2 = \frac{1}{(1-\lambda)^2} \sum_{i=1}^n (c_i - 1)^2 \le \frac{\eps}{(1-\lambda)^2}
$$
Since $\text{per}(C)=0$ and row sums are $1$, it follows that (see Lemma 5.2 in \cite{LSW})
$$
\sum_{i=1}^n (c_i'-1)^2 \ge 1/n
$$
and hence 
$$
\frac{\eps}{(1-\lambda)^2} \ge 1/n
$$
which implies $\lambda \ge 1-\sqrt{n \eps}$
\end{proof}

The next lemma says that capacity of almost doubly stochastic operators is close to $1$. The proof will proceed by reducing the operator case to the non-negative matrix case established above.
\begin{lemma}\label{cap_almostDS:quantum}
Let $T$ be a positive operator acting on $n \times n$ matrices such that $\tr[(T(I)-I)^2] \le \eps$ and $T^*(I)=I$ (equivalently $T$ is trace-preserving). Then $\capac(T) \ge (1 - \sqrt{n \eps})^n$. 
\end{lemma}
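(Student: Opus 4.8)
The plan is to reduce the operator statement to the non-negative matrix statement of Lemma~\ref{cap_almostDS:classical} by extracting a suitable non-negative matrix from the Kraus operators of $T$ and bounding the operator capacity from below by the matrix capacity of that extracted matrix. The natural candidate is the matrix $A$ whose $(i,j)$ entry records the ``amount of mass'' that $T$ moves from the $j$-th coordinate direction to the $i$-th coordinate direction; concretely, fixing the standard basis $e_1,\dots,e_n$, set $A_{ij} = \langle e_i, T(e_j e_j^\dagger) e_i\rangle = \sum_k |A_k(i,j)|^2$, where $A_1,\dots,A_m$ are Kraus operators of $T$. One checks directly that the column sums of $A$ are $\tr[T(e_je_j^\dagger)] = \langle e_j, T^*(I)e_j\rangle = 1$ (since $T$ is trace-preserving), and the row sums of $A$ are $\langle e_i, T(I) e_i\rangle$, so that $\sum_i (\mathrm{rowsum}_i - 1)^2 = \sum_i (T(I)_{ii}-1)^2 \le \tr[(T(I)-I)^2] \le \eps$. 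Thus $A^T$ (or $A$ with the roles of rows and columns swapped) satisfies the hypotheses of Lemma~\ref{cap_almostDS:classical}, giving $\capac(A^T) \ge (1-\sqrt{n\eps})^n$.

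The heart of the argument is then the inequality $\capac(T) \ge \capac(A^T)$ relating the operator capacity to the matrix capacity of the extracted matrix. The plan is to take any $X \succ 0$ with $\Det(X)=1$ and produce a competitor for the matrix-capacity infimum. Write $x_i = \langle e_i, X e_i\rangle > 0$ for the diagonal entries of $X$. Since $X \succ 0$ with $\Det(X)=1$, Hadamard's inequality gives $\prod_i x_i \ge \Det(X) = 1$; by rescaling $X$ (which only helps, since it multiplies $\Det(T(X))$ by a power of the scaling factor $\ge$ what $\prod x_i$ contributes — this needs a small check) we may take $\prod_i x_i = 1$, so the vector $(x_i)$ is feasible for $\capac(A^T)$. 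Next I would bound $\Det(T(X))$ from below by $\prod_i \langle e_i, T(X) e_i\rangle$ — again Hadamard's inequality applied to the psd matrix $T(X)$ — and then bound each diagonal entry $\langle e_i, T(X)e_i\rangle$ from below in terms of the $x_j$. The key pointwise estimate is that, by operator convexity/concavity (specifically, using that for psd $X$ one has $T(X) \succeq$ the operator obtained by first pinching $X$ to its diagonal only after a concavity step — here one must be careful: the honest inequality that works is Jensen applied inside $\log$, or the weighted AM-GM on the spectrum), $\langle e_i, T(X) e_i\rangle \ge \prod_j x_j^{\,A_{ij}}$ when $\sum_j A_{ij} \le 1$, which is exactly $(A x')_i$-type behavior after taking logs, where $x'_j = \log x_j$. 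Assembling these: $\log \Det(T(X)) \ge \sum_i \log\langle e_i,T(X)e_i\rangle \ge \sum_i \sum_j A_{ij}\log x_j = \sum_j \log x_j \cdot \mathrm{colsum}_j(A)$, but since we need a bound of matrix-capacity form I would instead route through $A^T$ and its column/row sums so the indices match Lemma~\ref{cap_almostDS:classical}.

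The step I expect to be the main obstacle is the pointwise lower bound $\langle e_i, T(X)e_i\rangle \ge (\text{product of } x_j^{A_{ij}})$, or whatever the correct clean form turns out to be: the naive application of concavity of $\log$ to $\langle e_i, T(X)e_i\rangle = \sum_k \langle A_k^\dagger e_i, X A_k^\dagger e_i\rangle$ does not immediately separate into a product over the coordinates $x_j$ because $X$ is not diagonal and $A_k^\dagger e_i$ is a general vector. The fix will be to first observe that $\langle v, X v\rangle \ge \prod_j x_j^{|v_j|^2/\|v\|^2}\cdot\|v\|^2$ fails in general, so instead one reduces to diagonal $X$: since $\capac(T)$ as an infimum over all $X \succ 0$ is at most the infimum over \emph{diagonal} $X \succ 0$, and for diagonal $X = \mathrm{diag}(x_1,\dots,x_n)$ one computes $\langle e_i, T(X) e_i\rangle = \sum_j A_{ij} x_j$ exactly, reducing the operator capacity (restricted to diagonal inputs) precisely to $\capac$ of the non-negative matrix $A$ in the Gurvits–Yianilos sense. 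Wait — that only gives an upper bound on $\capac(T)$, the wrong direction. So the genuinely correct route, and the one I would pursue, is: prove $\capac(T) \ge \capac(A^T)$ by the log-concavity argument above done carefully (diagonalizing $X$ via its own eigenbasis and using that $T$ applied to $v v^\dagger$ has trace-one-bounded diagonal blocks), and this reduction — identifying the right extracted matrix and the right concavity inequality so the row/column sum hypotheses transfer correctly — is the crux; once it is in place, plugging into Lemma~\ref{cap_almostDS:classical} finishes the proof immediately.
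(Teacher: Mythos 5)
There is a genuine gap, and it has two independent components.

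First, a directional error in the proposed use of Hadamard's inequality. For a psd matrix $M$, Hadamard gives $\Det(M) \le \prod_i M_{ii}$, so you \emph{cannot} ``bound $\Det(T(X))$ from below by $\prod_i \langle e_i, T(X) e_i\rangle$'' as written; the inequality goes the other way. The only way to make the product of diagonal entries equal (rather than dominate) the determinant is to choose the orthonormal basis to be the eigenbasis of $T(X)$ itself, which you never propose.

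Second, and more fundamentally, the reduction as proposed runs in the wrong direction, which you yourself notice but do not fix. Your matrix $A_{ij} = \sum_k |A_k(i,j)|^2$ is fixed (built from the standard basis, independent of $X$). Restricting $X$ to diagonal matrices shows $\capac(T) \le \capac(A^T)$, and as you observe this is an upper bound — useless here. Your final paragraph gestures at diagonalizing $X$ in its own eigenbasis, which is indeed half of the right idea, but the construction is left entirely unspecified and you declare it ``the crux'' without resolving it. The missing ingredient is that the extracted non-negative matrix must depend on $X$: one diagonalizes \emph{both} $X = \sum_j \lambda_j v_j v_j^\dagger$ and $T(X) = \sum_i \sigma_i u_i u_i^\dagger$, and takes $A_{ij} = u_i^\dagger\, T(v_j v_j^\dagger)\, u_i$. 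Then $\sigma = A\lambda$ is an exact identity (no Hadamard slack needed, since the $u_i$ are the eigenvectors of $T(X)$), trace preservation makes one set of marginals equal to $1$, the other set is controlled by $\tr[(T(I)-I)^2]\le\eps$, and $\Det(T(X)) = \prod_i(A\lambda)_i \ge \capac(A)$ with $\prod_j\lambda_j = \Det(X) = 1$. Applying Lemma~\ref{cap_almostDS:classical} to this $X$-dependent matrix $A$ gives the uniform lower bound $(1-\sqrt{n\eps})^n$ for every feasible $X$, hence for $\capac(T)$. Without the $X$-dependent choice of both eigenbases, your plan as written does not close, and the Hadamard step would need to be reversed in any case.
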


\begin{remark} The proof can be adapted to obtain a similar statement when both $T(I), T^*(I) \approx I$.
\end{remark}

\begin{proof}
Let $X$ be a positive definite matrix s.t. $\Det(X)=1$. Let 
$$
X = \sum_{j=1}^n \lambda_j v_j v_j^{\dagger}
$$
be an eigenvalue decomposition of $X$ with $v_1,\ldots,v_n$ forming a orthonormal basis. Then
$$
T(X) = \sum_{j=1}^n \lambda_j T(v_j v_j^{\dagger})
$$ 
Let us denote $T(v_j v_j^{\dagger})$ by $R_j$. Then since $T$ is trace preserving, we have that $\tr(R_j) = 1$ for all $j$.
Also let
$$
T(X) = \sum_{i=1}^n \sigma_i u_i u_i^{\dagger}
$$
be an eigenvalue decomposition for $T(X)$ with $u_1,\ldots,u_n$ forming a orthonormal basis. It follows that for all $i$, 
$$
\sigma_i = \sum_{j=1}^n \lambda_j u_i^{\dagger} R_j u_i
$$
Let $A$ denote the non-negative matrix s.t. $A_{i,j} = u_i^{\dagger} R_j u_i$. Then $\sigma = A \lambda$. Also the column sums of $A$ are all $1$ since $\tr(R_j) = 1$ for all $j$. We will also prove that 
$$
\sum_{i=1}^n (r_i-1)^2 \le \eps
$$
where $r_1,\ldots,r_n$ denote the column sums of $A$. Then the lemma follows from Lemma \ref{cap_almostDS:classical} (applied with row sums switched with column sums) and the facts that $\Det(T(X)) = \prod_{i=1}^n \sigma_i$ and $\Det(X) = \prod_{i=1}^n \lambda_i$. Now note that
\begin{align*}
r_i = \sum_{j=1}^n A_{i,j} 
= \sum_{j=1}^n u_i^{\dagger} R_j u_i
= u_i^{\dagger} \left( \sum_{j=1}^n T(v_j v_j^{\dagger})\right) u_i
= u_i^{\dagger} T\left( \sum_{j=1}^n v_j v_j^{\dagger}\right) u_i
= u_i^{\dagger} T(I) u_i
\end{align*}
Hence
\begin{align*}
\sum_{i=1}^n (r_i-1)^2 = \sum_{i=1}^n \left(u_i^{\dagger} T(I) u_i-1 \right)^2 &= \sum_{i=1}^n \left(u_i^{\dagger} (T(I)-I)u_i \right)^2 \\
&\le \sum_{i=1}^n u_i^{\dagger} (T(I)-I)^2 u_i\\
&= \tr[(T(I)-I)^2] \le \eps
\end{align*}
The first inequality can be proved via convexity of square.
\end{proof}

We also prove an easy lemma for the other direction: operators with capacity almost $1$ are almost doubly stochastic.

\begin{lemma}\label{cap_almost1}
Let $T$ be a positive operator acting on $n \times n$ matrices s.t. $\capac(T) \ge \expon(-\delta)$, $\delta \le 1/6$ and also $T^*(I)=I$. Then $\ds(T) = \tr[(T(I)-I)^2] \le 6 \delta$.
\end{lemma}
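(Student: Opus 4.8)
The plan is to bound $\ds(T)$ by testing the capacity at the single feasible point $X=I$. Since $T^*(I)=I$, we have $\ds(T)=\tr[(T(I)-I)^2]$, so it is enough to control the eigenvalues of $T(I)$. As $I\succeq 0$, the matrix $T(I)$ is Hermitian positive semidefinite; let $\mu_1,\dots,\mu_n\ge 0$ be its eigenvalues. Because $T$ is trace preserving ($T^*(I)=I$), $\tr[T(I)]=\tr[I]=n$, so $\sum_i\mu_i=n$; and since $I\succ 0$ with $\Det(I)=1$, the point $X=I$ is admissible in the definition of capacity, whence $\prod_i\mu_i=\Det(T(I))\ge\capac(T)\ge e^{-\delta}$. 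In particular every $\mu_i>0$, and note $\tr[(T(I)-I)^2]=\sum_i(\mu_i-1)^2$.

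Writing $\mu_i=1+t_i$ with $t_i>-1$, the constraints become $\sum_i t_i=0$ and $\sum_i\log(1+t_i)\ge-\delta$; using $\sum_i t_i=0$, the latter is exactly $\sum_i g(t_i)\le\delta$, where $g(t):=t-\log(1+t)\ge 0$. The target $\tr[(T(I)-I)^2]=\sum_i t_i^2\le 6\delta$ would follow from the pointwise estimate $g(t)\ge t^2/6$ applied to each $t_i$. First I would use the hypothesis $\delta\le 1/6$ to locate the $t_i$: each $g(t_i)\le\delta\le 1/6<1-\log 2=g(1)$, and $g$ is increasing on $[0,\infty)$, so $t_i<1$; hence $t_i\in(-1,1)$. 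On this interval the estimate is elementary from the power series $g(t)=\sum_{k\ge 2}(-1)^k t^k/k$: for $t\le 0$ every term is nonnegative, so $g(t)\ge t^2/2$, while for $t\in[0,1]$ grouping consecutive terms shows $g(t)\ge\frac{t^2}{2}-\frac{t^3}{3}=t^2(\frac12-\frac t3)\ge\frac{t^2}{6}$. Summing over $i$, $\delta\ge\sum_i g(t_i)\ge\frac16\sum_i t_i^2=\frac16\tr[(T(I)-I)^2]$, which is the claim.

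There is no real obstacle here; the only point requiring care is the step confining each eigenvalue deviation $t_i$ to $(-1,1)$, since the quadratic lower bound $g(t)\ge t^2/6$ degrades for large positive $t$. This is precisely where the assumption $\delta\le 1/6$ enters, together with the monotonicity of $g$ on the positive axis and the trivial estimate $e^{5/6}>1+\frac56+\frac{25}{72}>2$, i.e. $\log 2<\frac56$, which makes $g(1)>\frac16$ explicit.
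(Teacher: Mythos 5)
Your proof is correct and shares the same overall strategy as the paper's: evaluate capacity at the feasible point $X=I$ to get $\Det(T(I)) \ge \capac(T) \ge e^{-\delta}$, and then bound $\Det(T(I)) = \prod_i \mu_i$ from above using only that $\sum_i \mu_i = n$. The technical realization of the scalar step is different, though. The paper invokes Lemma~\ref{lem:prod-upper-bd}, which is proved via the inequality $1+t \le \exp\left(t - t^2/2 + t^3/3\right)$, the power-mean estimate $\sum_i (x_i-1)^3 \le \alpha^{3/2}$, and a case split on whether $\alpha \le 1$; applying that lemma here one also has to observe that the branch $\alpha > 1$ is excluded by the hypothesis $\delta \le 1/6$, a point the paper's proof glides over. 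You instead prove the pointwise inequality $t - \log(1+t) \ge t^2/6$ on $(-1,1)$ directly from the power series and use $\delta \le 1/6$ at the outset to confine each eigenvalue deviation $t_i$ to that interval (each $g(t_i) \le \delta < g(1)$ together with monotonicity of $g$ on $[0,\infty)$ forces $t_i < 1$), which removes both the case split and the power-mean bound. Both arguments are elementary and correct; yours is self-contained and slightly cleaner for this particular lemma, while the paper's route through Lemma~\ref{lem:prod-upper-bd} has the benefit that the same scalar statement is reused verbatim in Lemma~\ref{lem:quant-progress}, where the case $\alpha > 1$ genuinely occurs.
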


\begin{proof}
$\capac(T) \ge \expon(-\delta)$ implies that $\Det(T(I)) \ge \expon(-\delta)$. Also $\tr[T(I)] = \tr\left[ T^*(I)\right] = n$. Suppose $ \tr[(T(I)-I)^2] = \alpha$. Then Lemma \ref{lem:prod-upper-bd} below implies that 
$$
\Det(T(I)) \le \expon(-\alpha/6)
$$
This implies that $\alpha \le 6 \delta$.
\end{proof}

\begin{remark}
Note that the parameters in Lemmas \ref{cap_almostDS:quantum} and \ref{cap_almost1} don't match and that is because capacity and the distance to doubly stochasticity don't characterize each other exactly. 
\end{remark}

\subsection{Characterization of approximate optimizers for capacity}\label{subsec:charac_opt}

The following definition will be useful.

\begin{definition}[Fixed points] Let $T$ be a completely positive operator. Define $\text{Fixed}(T,\eps)$ to be the set of hermitian positive-definite matrices which are $\eps$-approximate fixed point of the operator $X \rightarrow T^*(T(X)^{-1})^{-1}$ i.e. $C \in \text{Fixed}(T,\eps)$ if the following holds:
$$
\tr \left[ \Bigg( C \cdot T^* \left(T(C)^{-1} \right)  - I \Bigg)^2\right] \le \eps
$$
\end{definition}

The following lemma essentially says that the elements of $\text{Fixed}(T,\eps)$ are approximate minimizers for $\capac(T)$. 

\begin{lemma}\label{approximate_fixedpt} Suppose $C \in \text{Fixed}(T,\eps)$.
If $\eps \le 1/(n+1)$, then $T$ is rank non-decreasing. Furthermore, 
$$
(1-\sqrt{n \eps})^n \cdot \frac{\Det(T(C))}{\Det(C)} \le \capac(T) \le \frac{\Det(T(C))}{\Det(C)}
$$
Here $n$ is the size of matrices on which $T$ acts. Similar statement also holds for the operator $X \rightarrow T(T^*(X)^{-1})^{-1}$. 
\end{lemma}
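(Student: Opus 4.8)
The plan is to reduce everything to the almost-doubly-stochastic regime of Lemma~\ref{cap_almostDS:quantum} by one explicit scaling, and then read off all three conclusions from the behavior of capacity under scaling. Note first that the hypothesis $C \in \text{Fixed}(T,\eps)$ already forces $T(C)$ to be invertible (its defining expression contains $T(C)^{-1}$), so the operator
$$\tilde T(X) \;:=\; T(C)^{-1/2}\, T\!\left(C^{1/2} X C^{1/2}\right) T(C)^{-1/2}$$
is well defined; in terms of Kraus operators it is precisely $T_{B,C'}$ with $B = T(C)^{-1/2}$ and $C' = C^{1/2}$. A direct computation gives $\tilde T(I) = I$, while $\tilde T^*(I) = C^{1/2} T^*(T(C)^{-1}) C^{1/2}$ is conjugate (by $C^{1/2}$) to $C\cdot T^*(T(C)^{-1})$ and hence has the same eigenvalues; therefore $\tr[(\tilde T^*(I) - I)^2] = \tr[(C\cdot T^*(T(C)^{-1}) - I)^2] \le \eps$, i.e.\ $\ds(\tilde T)\le\eps$. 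By Proposition~\ref{cap_mul}, $\capac(\tilde T) = |\Det(B)|^2|\Det(C')|^2\,\capac(T) = \frac{\Det(C)}{\Det(T(C))}\,\capac(T)$, so it suffices to prove $(1-\sqrt{n\eps})^n \le \capac(\tilde T) \le 1$.

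The upper bound is immediate: since $\tilde T(I) = I$, Proposition~\ref{cap_bound_1} gives $\capac(\tilde T)\le 1$, which yields $\capac(T)\le \Det(T(C))/\Det(C)$. For the lower bound I would apply Lemma~\ref{cap_almostDS:quantum}. Its proof uses only that \emph{one} of $\tilde T(I),\tilde T^*(I)$ equals $I$ and the \emph{other} is $\eps$-close to $I$ in the $\tr[(\cdot-I)^2]$ metric: it expands $\Det(\tilde T(X))$ for $\Det(X)=1$ as $\prod_i(A\lambda)_i$ for a nonnegative matrix $A$ that is stochastic along one axis and $\eps$-close to stochastic along the other, and then invokes Lemma~\ref{cap_almostDS:classical}. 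Hence it applies verbatim to $\tilde T$ (equivalently, one applies it to the operator $\tilde T^*$, whose dual fixes $I$ and which satisfies $\tr[(\tilde T^*(I)-I)^2]\le\eps$). This gives $\capac(\tilde T)\ge(1-\sqrt{n\eps})^n$ — a positive quantity, since $\eps\le 1/(n+1)<1/n$ — and hence the full sandwich $(1-\sqrt{n\eps})^n\,\frac{\Det(T(C))}{\Det(C)} \le \capac(T) \le \frac{\Det(T(C))}{\Det(C)}$.

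It remains to see that $\eps\le 1/(n+1)$ forces $T$ to be rank non-decreasing. We have $\ds(\tilde T)\le\eps\le 1/(n+1)$ and $\tilde T$ is left-normalized ($\tilde T(I)=I$), so Theorem~\ref{DS_full} shows $\tilde T$ is rank non-decreasing; since $\tilde T$ differs from $T$ only by left- and right-multiplying all Kraus operators by fixed invertible matrices, having a shrunk subspace (equivalently, being rank-decreasing) is preserved, so $T$ is rank non-decreasing as well. Finally, the analogous statement for the operator $X\mapsto T^*(T(X)^{-1})^{-1}$ follows by running the same argument with $T$ replaced by $T^*$: the scaling becomes $X\mapsto T^*(C)^{-1/2}\,T^*(C^{1/2}XC^{1/2})\,T^*(C)^{-1/2}$, one obtains the analogous bounds with $\Det(T^*(C))/\Det(C)$ in place of $\Det(T(C))/\Det(C)$, and $T^*$ (hence $T$) is rank non-decreasing.

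The one genuinely delicate point — and the thing I would be most careful about in the write-up — is matching the ``direction'' of Lemma~\ref{cap_almostDS:quantum}: the natural scaling produces a \emph{unital} operator ($\tilde T(I)=I$) with near-unital dual, whereas the lemma is stated for a \emph{trace-preserving} operator with near-unital primal. I would resolve this by making explicit that the lemma's proof is symmetric under swapping $T\leftrightarrow T^*$ (so it applies directly to $\tilde T$), rather than invoking any unstated adjoint-invariance of capacity. Everything else is routine determinant bookkeeping via Proposition~\ref{cap_mul}.
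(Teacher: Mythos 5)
Your proof is correct and is essentially the paper's argument: the scaling $Z(X)=T(C)^{-1/2}T(C^{1/2}XC^{1/2})T(C)^{-1/2}$, the computation showing $Z(I)=I$ and $\ds(Z)\le\eps$, and the appeals to Theorem~\ref{DS_full}, Lemma~\ref{cap_almostDS:quantum}, Proposition~\ref{cap_mul}, and Proposition~\ref{cap_bound_1} all match the paper's proof line by line. The ``delicate point'' you flag is real, and the paper handles it by simply asserting $\capac(Z)=\capac(Z^*)$ (a consequence of the minimax formulation $n\,\capac(T)^{1/n}=\inf\{\tr[T(X)Y]\}=\inf\{\tr[XT^*(Y)]\}$ mentioned earlier in the paper but not proven as a standalone lemma); note however that your claim that Lemma~\ref{cap_almostDS:quantum} applies ``verbatim'' to a unital operator with near-unital dual slightly overstates the symmetry --- running that proof with the roles swapped yields a nonnegative matrix $A$ with stochastic \emph{columns} and near-stochastic \emph{rows}, so you would still need $\capac(A)=\capac(A^T)$ (for the classical case) or equivalently $\capac(T)=\capac(T^*)$, which is precisely the unstated invariance you were trying to avoid.
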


We will use a $C \in \text{Fixed}(T,\eps)$ to find a scaling of $T$ which is almost doubly stochastic and then apply Theorem \ref{DS_full} (saying that almost doubly stochastic operators are rank non-decreasing) and Lemma \ref{cap_almostDS:quantum} (giving a lower bound on capacity of almost doubly stochastic operators).

\begin{proof} Consider the operator 
$$
Z(X) = T(C)^{-1/2} \cdot T \left( C^{1/2} \cdot X \cdot C^{1/2} \right) \cdot T(C)^{-1/2}
$$ 
Then $Z(I) = I$ and also 
$$
\tr \left[ \left( Z^*(I) - I\right)^2 \right] = \tr \left[ \left( C^{1/2} \cdot T^* \left( T(C)^{-1}\right) \cdot C^{1/2} - I\right)^2 \right] = \tr \left[ \Bigg( C \cdot T^* \left(T(C)^{-1} \right)  - I \Bigg)^2\right] \le \eps
$$
When $\eps \le 1/(n+1)$, by Theorem \ref{DS_full}, $Z$ is rank non-decreasing and hence $T$ is rank non-decreasing. Also by Lemma \ref{cap_almostDS:quantum}, 
$$
\capac(Z) = \capac(Z^*) \ge (1-\sqrt{n\eps})^n
$$
Then by Proposition \ref{cap_mul}
$$
\capac(T) = \capac(Z) \cdot \frac{\Det(T(C))}{\Det(C)} \ge (1-\sqrt{n \eps})^n \cdot \frac{\Det(T(C))}{\Det(C)} 
$$
The upper bound on $\capac(T)$ is clear from the definition of capacity. 
\end{proof}

Next we prove the other direction but again, not with matching parameters. 

\begin{lemma} Suppose $C \succ 0$ is s.t.
$$
 \capac(T) \ge \expon(-\delta) \cdot \frac{\Det(T(C))}{\Det(C)}
$$
where $\delta \le 1/6$. Then $C \in \text{Fixed}(T,6 \delta)$.
\end{lemma}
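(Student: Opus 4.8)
The plan is to reverse the chain of implications in the proof of Lemma~\ref{approximate_fixedpt}, invoking Lemma~\ref{cap_almost1} in the place where that proof used Lemma~\ref{cap_almostDS:quantum}.

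First I would introduce the same scaled operator used there,
$$Z(X) = T(C)^{-1/2}\cdot T\!\left(C^{1/2} X C^{1/2}\right)\cdot T(C)^{-1/2},$$
whose Kraus operators are $T(C)^{-1/2}A_i C^{1/2}$, where $A_i$ are the Kraus operators of $T$ (note $T(C)\succ 0$: the hypothesis is only non-vacuous when $\Det(T(C))>0$, which forces $\capac(T)>0$, hence $T$ rank non-decreasing, hence $T(C)$ invertible since $C\succ 0$). As already computed in the proof of Lemma~\ref{approximate_fixedpt}, this operator satisfies $Z(I)=I$ and
$$\tr\!\left[\left(Z^*(I)-I\right)^2\right] = \tr\!\left[\Bigl(C\cdot T^*\!\left(T(C)^{-1}\right)-I\Bigr)^2\right],$$
so it suffices to bound the left-hand side by $6\delta$, since that is exactly the condition $C\in\textnormal{Fixed}(T,6\delta)$.

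Next I would transfer the hypothesis to $Z$. By Proposition~\ref{cap_mul},
$$\capac(Z) = \bigl|\Det(T(C)^{-1/2})\bigr|^2\,\bigl|\Det(C^{1/2})\bigr|^2\,\capac(T) = \frac{\Det(C)}{\Det(T(C))}\,\capac(T),$$
so the assumption $\capac(T)\ge \expon(-\delta)\cdot \Det(T(C))/\Det(C)$ gives $\capac(Z)\ge\expon(-\delta)$, and hence also $\capac(Z^*)\ge\expon(-\delta)$, since capacity is invariant under $L\mapsto L^*$ (this is used already for $Z$ in the proof of Lemma~\ref{approximate_fixedpt}, and is in any case immediate from the symmetric formula $n\,\capac(L)^{1/n}=\inf\tr[L(X)Y]=\inf\tr[X\,L^*(Y)]$).

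Finally, $Z^*$ is a positive operator with $(Z^*)^*(I)=Z(I)=I$, with $\capac(Z^*)\ge\expon(-\delta)$ and $\delta\le 1/6$, so Lemma~\ref{cap_almost1} applies to $Z^*$ and yields $\ds(Z^*)=\tr[(Z^*(I)-I)^2]\le 6\delta$; combined with the displayed identity above, this is precisely $C\in\textnormal{Fixed}(T,6\delta)$. The argument is routine once the scaling $Z$ is in place; the only delicate point is using the duality $\capac(Z)=\capac(Z^*)$ to feed $Z^*$ (rather than $Z$) into Lemma~\ref{cap_almost1}, whose hypothesis is normalized on the dual side.
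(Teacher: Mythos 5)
Your proof is correct and is essentially the paper's argument: both introduce the scaled operator $Z$, apply Proposition~\ref{cap_mul} to get $\capac(Z)=\capac(Z^*)\ge\expon(-\delta)$, and then invoke Lemma~\ref{cap_almost1} on $Z^*$ to deduce $\tr[(Z^*(I)-I)^2]\le 6\delta$, which by construction of $Z$ is the statement $C\in\textnormal{Fixed}(T,6\delta)$. Your write-up is slightly more careful (justifying $\capac(Z)=\capac(Z^*)$ and noting $T(C)\succ 0$), but the route is identical.
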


\begin{proof} As in the proof of Lemma \ref{approximate_fixedpt}, define the operator 
$$
Z(X) = T(C)^{-1/2} \cdot T \left( C^{1/2} \cdot X \cdot C^{1/2} \right) \cdot T(C)^{-1/2}
$$ 
Then $Z(I) = I$ and by Proposition \ref{cap_mul}
$$
\capac(Z^*) = \capac(Z) \ge 1-\delta
$$
Now by Lemma \ref{cap_almost1} (applied to $Z^*$), 
$$
\ds(Z) = \ds(Z^*) = \tr[(Z^*(I)-I)^2] \le 6 \delta
$$
This implies that  $C \in \text{Fixed}(T,6 \delta)$ because of the way operator $Z$ is set up.
\end{proof}

\subsection{Multiplicativity of capacity}\label{subsec:mult_cap}

\noindent In this subsection, we prove a curious multiplicativity property of capacity of completely positive operators. This is not required for the lower bound on capacity presented in this version but the easy direction of this theorem was required in the lower bound on capacity in the previous version. 

\begin{theorem}{\label{mult_cap_hard}}
Let $T_1$ and $T_2$ be completely positive operators where $T_1$ acts on matrices of dimension $d_1$ and $T_2$ acts on matrices of dimension $d_2$. Then 
$$
\capac(T_1 \tensor T_2)^{1/(d_1 d_2)} = \capac(T_1)^{1/d_1} \cdot \capac(T_2)^{1/d_2}
$$
\end{theorem}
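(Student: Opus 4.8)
The plan is to prove the two inequalities separately, since they are very different in difficulty. Throughout, write $d_1,d_2$ for the dimensions on which $T_1,T_2$ act.

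\emph{The easy inequality $\capac(T_1\otimes T_2)^{1/d_1d_2}\le\capac(T_1)^{1/d_1}\capac(T_2)^{1/d_2}$.} Here I would just feed product test matrices into the infimum. If $X_i\succ 0$ has $\Det(X_i)=1$ then $X_1\otimes X_2\succ 0$ with $\Det(X_1\otimes X_2)=\Det(X_1)^{d_2}\Det(X_2)^{d_1}=1$, and since $(T_1\otimes T_2)(X_1\otimes X_2)=T_1(X_1)\otimes T_2(X_2)$ we get $\Det\big((T_1\otimes T_2)(X_1\otimes X_2)\big)=\Det(T_1(X_1))^{d_2}\Det(T_2(X_2))^{d_1}$. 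Taking infima over $X_1,X_2$ along near-optimal sequences (the infima need not be attained) gives the inequality. In particular this already settles the degenerate case $\capac(T_1)=0$ or $\capac(T_2)=0$: then the right-hand side of the theorem is $0$ and so, by the above, is $\capac(T_1\otimes T_2)$. Hence from now on I may assume $\capac(T_1),\capac(T_2)>0$, equivalently both are rank non-decreasing; the consequence I will actually use is that $\Det(S(P))\ge\capac(S)\Det(P)>0$ for $P\succ 0$, so $S(P)\succ 0$, for $S\in\{T_1,T_2\}$.

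\emph{The hard inequality $\capac(T_1\otimes T_2)\ge\capac(T_1)^{d_2}\capac(T_2)^{d_1}$.} The heart of the matter is a single-factor estimate, which I would isolate as a lemma: for completely positive $S$ on $M_n$, an integer $k\ge 1$, and \emph{any} $Y\succeq 0$ in $M_{nk}=M_n\otimes M_k$,
\[
\Det\big((S\otimes\mathrm{id}_k)(Y)\big)\ \ge\ \capac(S)^k\,\Det(Y).
\]
Granting it, the hard direction follows by composition: for $X\succ 0$ with $\Det(X)=1$, write $(T_1\otimes T_2)(X)=(T_1\otimes\mathrm{id}_{d_2})\big((\mathrm{id}_{d_1}\otimes T_2)(X)\big)$, apply the lemma with $S=T_1,k=d_2,Y=(\mathrm{id}_{d_1}\otimes T_2)(X)\succeq 0$, and then (after conjugating by the tensor-swap unitary, which changes no determinants) apply it again with $S=T_2,k=d_1$ to bound $\Det\big((\mathrm{id}_{d_1}\otimes T_2)(X)\big)\ge\capac(T_2)^{d_1}\Det(X)=\capac(T_2)^{d_1}$; taking the infimum over $X$ finishes.

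\emph{Proof of the lemma (the main work).} Induct on $k$. The base case $k=1$ is the definition of $\capac(S)$ homogenized ($\Det(S(Y))\ge\capac(S)\Det(Y)$ for $Y\succ 0$, trivial if $\Det(Y)=0$). For the step, decompose $\mathbb{C}^{nk}=\mathbb{C}^n\oplus\mathbb{C}^{n(k-1)}$ and write $Y=\begin{pmatrix}P&Q\\ Q^\dagger&R\end{pmatrix}$ with $P\in M_n$; if $P$ is singular then $\Det(Y)=0$ while $(S\otimes\mathrm{id}_k)(Y)\succeq 0$, so the inequality is trivial, and otherwise $P\succ 0$ hence $S(P)\succ 0$. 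In this decomposition $S\otimes\mathrm{id}_k$ acts by $S(\cdot)$ on the $(1,1)$-block, by $S\otimes\mathrm{id}_{k-1}$ on the $(2,2)$-block, and by a fixed linear map $\Lambda$ on each off-diagonal block, with $\Lambda(Z^\dagger)=\Lambda(Z)^\dagger$. Applying complete positivity of $S\otimes\mathrm{id}_k$ to the PSD matrix $\begin{pmatrix}P&Q\\ Q^\dagger&Q^\dagger P^{-1}Q\end{pmatrix}$ and then the Schur-complement criterion (using $S(P)\succ 0$) yields the operator inequality $(S\otimes\mathrm{id}_{k-1})(Q^\dagger P^{-1}Q)\succeq\Lambda(Q)^\dagger S(P)^{-1}\Lambda(Q)$; therefore the Schur complement of $(S\otimes\mathrm{id}_k)(Y)$ relative to $S(P)$, namely $(S\otimes\mathrm{id}_{k-1})(R)-\Lambda(Q)^\dagger S(P)^{-1}\Lambda(Q)$, dominates $(S\otimes\mathrm{id}_{k-1})\big(R-Q^\dagger P^{-1}Q\big)$. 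Taking determinants, using monotonicity of $\Det$ on the PSD order, the block-determinant identity $\Det\big((S\otimes\mathrm{id}_k)(Y)\big)=\Det(S(P))\cdot\Det(\text{Schur complement})$, the definition of $\capac(S)$ on $P$, and the inductive hypothesis on the $(k-1)$-sized PSD matrix $R-Q^\dagger P^{-1}Q$, gives $\Det\big((S\otimes\mathrm{id}_k)(Y)\big)\ge\capac(S)\Det(P)\cdot\capac(S)^{k-1}\Det(R-Q^\dagger P^{-1}Q)=\capac(S)^k\Det(Y)$.

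I expect the only genuinely delicate points to be (i) the block bookkeeping that pins down the action of $S\otimes\mathrm{id}_k$ in the $\mathbb{C}^n\oplus\mathbb{C}^{n(k-1)}$ splitting, so that the \emph{same} $\Lambda(Q)$ appears in $(S\otimes\mathrm{id}_k)(Y)$ and in $(S\otimes\mathrm{id}_k)\begin{pmatrix}P&Q\\ Q^\dagger&Q^\dagger P^{-1}Q\end{pmatrix}$, and (ii) the boundary cases where a principal block is not strictly positive — both handled by noting that $\capac(S)>0$ forces $S$ to preserve strict positivity and that every inequality is trivial once a relevant determinant vanishes. As a fallback for the hard direction I would keep the scaling argument: scale $T_1,T_2$ arbitrarily close to doubly stochastic (possible precisely because they are rank non-decreasing, by the analysis of Algorithm $G$ and Theorem~\ref{capacity_lb}), use that doubly stochastic operators have capacity $1$ and that their tensor product is again doubly stochastic, and transport the identity back via Proposition~\ref{cap_mul} and Lemma~\ref{cap_almostDS:quantum}, letting the scaling error tend to $0$; this trades the operator Cauchy–Schwarz step for an appeal to the convergence machinery.
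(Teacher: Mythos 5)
Your proof is correct, and it takes a genuinely different route from the paper. The paper proves the hard inequality $\capac(T_1 \tensor T_2) \ge \capac(T_1)^{d_2}\capac(T_2)^{d_1}$ by leveraging the operator-scaling machinery: it scales $T_1$ and $T_2$ to be $\eps$-close to doubly stochastic (invoking the convergence analysis of Algorithm $G$ and the capacity lower bound Theorem~\ref{capacity_lb}), checks that the tensor product of two near--doubly-stochastic operators is near--doubly-stochastic, applies Lemma~\ref{cap_almostDS:quantum} and Proposition~\ref{cap_mul} to both factors and the product, and then lets $\eps \to 0$. You instead isolate and prove directly, by induction on $k$ via Schur complements and the operator inequality $(S\otimes\mathrm{id}_{k-1})(Q^\dagger P^{-1}Q)\succeq\Lambda(Q)^\dagger S(P)^{-1}\Lambda(Q)$ (obtained by applying complete positivity to the PSD test matrix $\begin{pmatrix}P&Q\\ Q^\dagger&Q^\dagger P^{-1}Q\end{pmatrix}$), the clean estimate
\[
\Det\bigl((S\otimes\mathrm{id}_k)(Y)\bigr)\ \ge\ \capac(S)^k\,\Det(Y)\qquad\text{for all }Y\succeq0,
\]
and then compose it with the factorization $T_1\otimes T_2=(T_1\otimes\mathrm{id}_{d_2})\circ(\mathrm{id}_{d_1}\otimes T_2)$. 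I checked the boundary cases and the block bookkeeping and they go through as you describe. The trade-off: your argument is self-contained, purely algebraic, avoids any $\eps\to0$ limit, and isolates a lemma of independent interest (it amounts to $\capac(S\otimes\mathrm{id}_k)\ge\capac(S)^k$, which combined with the easy direction gives equality), whereas the paper's argument reuses the scaling machinery already developed for Algorithm $G$, which keeps the section short at the cost of importing convergence results that are logically heavier than what is actually needed here.
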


\noindent We first prove the easy $\le$ direction of the above theorem. If $X$ and $Y$ are the minimizers for $\capac(T_1)$ and $\capac(T_2)$, then looking at how much $T_1 \tensor T_2$ shrinks the determinant of $X \tensor Y$ will give us the required statement. The infimum for $\capac(T_1)$ and $\capac(T_2)$ might not be achieved but that is fine.

\begin{lemma}{\label{mult_cap_easy}}
Let $T_1$ and $T_2$ be completely positive operators where $T_1$ acts on matrices of dimension $d_1$ and $T_2$ acts on matrices of dimension $d_2$. Then 
$$
\capac(T_1 \tensor T_2)^{1/(d_1 d_2)} \le \capac(T_1)^{1/d_1} \cdot \capac(T_2)^{1/d_2}
$$
\end{lemma}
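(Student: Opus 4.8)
The plan is to exhibit explicit feasible test matrices for the optimization problem defining $\capac(T_1 \tensor T_2)$, built from near-optimal test matrices for $\capac(T_1)$ and $\capac(T_2)$, and to compute the value of the objective on this choice. Concretely, fix $\delta > 0$ and pick $X \succ 0$ with $\Det(X) = 1$ such that $\Det(T_1(X)) \le \capac(T_1) + \delta$, and similarly $Y \succ 0$ with $\Det(Y) = 1$ and $\Det(T_2(Y)) \le \capac(T_2) + \delta$. Then $X \tensor Y \succ 0$ and $\Det(X \tensor Y) = \Det(X)^{d_2} \Det(Y)^{d_1} = 1$, so $X \tensor Y$ is a legal test matrix for the operator $T_1 \tensor T_2$ acting on $M_{d_1 d_2}(\C)$.

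The next step is to evaluate the objective. By Definition~\ref{def:tensor}, $(T_1 \tensor T_2)(X \tensor Y) = T_1(X) \tensor T_2(Y)$. Using the standard identity $\Det(P \tensor Q) = \Det(P)^{d_2} \Det(Q)^{d_1}$ for $P \in M_{d_1}$, $Q \in M_{d_2}$, we get
$$
\capac(T_1 \tensor T_2) \le \Det\bigl((T_1 \tensor T_2)(X \tensor Y)\bigr) = \Det(T_1(X))^{d_2} \cdot \Det(T_2(Y))^{d_1} \le (\capac(T_1)+\delta)^{d_2}(\capac(T_2)+\delta)^{d_1}.
$$
Taking $(d_1 d_2)$-th roots on both sides yields $\capac(T_1 \tensor T_2)^{1/d_1 d_2} \le (\capac(T_1)+\delta)^{1/d_1}(\capac(T_2)+\delta)^{1/d_2}$, and letting $\delta \to 0$ gives the claimed inequality. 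One should check the edge case where $\capac(T_1) = 0$ or $\capac(T_2) = 0$ (i.e.\ the infimum is $0$ but perhaps not attained), but there the argument goes through verbatim: the right-hand side can be made arbitrarily small, forcing the left-hand side to be $0$ as well.

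There is essentially no obstacle here — the only things to be careful about are (i) that the infimum defining capacity need not be attained, which is handled cleanly by the $\delta$-slack and a limiting argument (and is exactly the point the paragraph preceding the lemma flags), and (ii) getting the exponents in $\Det(P \tensor Q)$ right, which is a one-line linear-algebra fact (the eigenvalues of $P \tensor Q$ are the pairwise products $\mu_i \nu_j$, so the determinant is $\prod_{i,j} \mu_i \nu_j = (\prod_i \mu_i)^{d_2}(\prod_j \nu_j)^{d_1}$). The genuinely hard direction is the reverse inequality $\ge$, proved separately as part of Theorem~\ref{mult_cap_hard}; the $\le$ direction recorded in this lemma is the routine half.
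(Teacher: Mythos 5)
Your proof is correct and follows essentially the same route as the paper's: form $X \tensor Y$ from near-optimal test matrices for $\capac(T_1)$ and $\capac(T_2)$, use $\Det(T_1(X) \tensor T_2(Y)) = \Det(T_1(X))^{d_2}\Det(T_2(Y))^{d_1}$, and take the infimum. The paper phrases the limiting step via ``sequences of matrices which approach'' the infima rather than an explicit $\delta$-slack, but the argument is the same.
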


\begin{proof}

\noindent Suppose $X$, $Y$ be $d_1$ and $d_2$ dimensional matrices respectively s.t. $X,Y \succ 0$ and \linebreak $\Det(X), \Det(Y) = 1$. Then $X \tensor Y \succ 0$ and $\Det(X \tensor Y) = 1$ as well. Also
$$
\Det \left( (T_1 \tensor T_2) (X \tensor Y)\right) = \Det \left( T_1(X) \tensor T_2(Y)\right) = \Det(T_1(X))^{d_2} \cdot \Det(T_2(Y))^{d_1}
$$
This proves that $\capac(T_1)^{d_2} \cdot \capac(T_2)^{d_1} \ge \capac(T_1 \tensor T_2)$ (by taking $X$ and $Y$ to be sequences of matrices which approach $\capac(T_1)$ and $\capac(T_2)$ respectively). Taking $1/(d_1d_2)$ powers on both sides completes the proof of the easy direction.

\end{proof}

\noindent We are now ready to prove Theorem \ref{mult_cap_hard}. The main ingredient is the duality between capacity being $0$ and the existence of a scaling to almost doubly stochastic position. 

\begin{proof}{(Of Theorem \ref{mult_cap_hard})} The easier $\le$ direction was proved above in Lemma \ref{mult_cap_easy}. Now we prove the harder $\ge$ direction. We can assume wlog that $\capac(T_1) > 0$ and $\capac(T_2) > 0$ (otherwise $\ge$ direction is trivial). Now by the analysis of Algorithm G (Theorem \ref{gurvitsalg}), we know that for any arbitrary $\eps > 0$, there exist invertible matrices $B, C \in M_{d_1}(\mathbb{C})$ and $D, E \in M_{d_2}(\mathbb{C})$ s.t. the scaled operators $S_1$ and $S_2$, defined below, are almost doubly stochastic. 
\begin{align*}
S_1(X) = B \cdot T_1(C \cdot X \cdot C^{\dagger}) \cdot B^{\dagger} \\
S_2(Y) = D \cdot T_2(E \cdot Y \cdot E^{\dagger}) \cdot D^{\dagger} 
\end{align*}
More formally, $S_1^*(I)=I, S_2^*(I)=I$ and $\tr \left[ (S_1(I) - I)^2 \right] \le \eps, \tr \left[ (S_2(I) - I)^2 \right] \le \eps$. Now consider the operator $S = S_1 \otimes S_2$. 
Note that $S(I) = S_1(I) \otimes S_2(I)$ and $S^*(I) = S_1^*(I) \otimes S_2^*(I)$. Hence $S^*(I) = I$ and $\tr \left[ (S(I) - I)^2 \right] \le \eps' = 2 (d_1 + d_2 + \eps) \eps$. This can be seen from the following chain of inequalities (and equalities):
\begin{align*}
\tr \left[ (S(I) - I)^2 \right] &= \tr \left[ (S_1(I) \otimes S_2(I) - I)^2 \right] \\
&= \tr \left[ (S_1(I) \otimes S_2(I) - S_1(I) \otimes I + S_1(I) \otimes I - I)^2 \right] \\
&\le 2 \tr \left[ \left(S_1(I) \otimes (S_2(I)-I) \right)^2 \right] + 2 \tr \left[ \left((S_1(I)-I) \otimes I\right)^2 \right] \\
&\le 2 (d_1+\eps) \eps + 2 \eps d_2 \\
&= 2 (d_1 + d_2 + \eps) \cdot \eps
\end{align*}
The first inequality follows from the following Cauchy-Schwarz inequality for psd matrices: 
$$\tr\left[ (M_1 + M_2)^2\right] \le 2 \tr[M_1^2] + 2 \tr[M_2^2]
$$
The second inequality follows from $\tr[S_1(I)^2] \le d_1 + \eps$, which follows from the fact that $\tr[S_1(I)] = \tr[S_1^*(I)] = \tr\left[I_{d_1} \right] = d_1$ and that $\tr \left[ (S_1(I) - I)^2 \right] \le \eps$, as well as $\tr \left[ (S_2(I) - I)^2 \right] \le \eps$. 
\\
Note that the operator $S$ is explicitly given as follows:
$$
S(Z) = \bigg(B \tensor D \bigg) \bigg(T_1 \tensor T_2 \bigg) \left( \bigg(C \tensor E \bigg) Z \left(C^{\dagger} \tensor E^{\dagger} \right) \right) \left(B^{\dagger} \tensor D^{\dagger} \right)
$$
Hence by Proposition \ref{cap_mul}, 
\begin{align*}
\capac(S) &= |\Det(B \tensor D)|^2 \cdot |\Det(C \tensor E)|^2 \cdot \capac(T_1 \tensor T_2) \\
&= |\Det(B)|^{2 d_2} \cdot |\Det(D)|^{2 d_1} \cdot |\Det(C)|^{2d_2} \cdot |\Det(E)|^{2 d_1} \cdot \capac(T_1 \tensor T_2)
\end{align*}
By Lemma \ref{cap_almostDS:quantum}, we have that $\capac(S) \ge (1 - \sqrt{d_1 d_2 \eps'})^{d_1 d_2}$ and hence
\begin{align}
\capac(T_1 \tensor T_2) \ge |\Det(B)|^{-2 d_2} \cdot |\Det(D)|^{-2 d_1} \cdot |\Det(C)|^{-2d_2} \cdot |\Det(E)|^{-2 d_1} \cdot (1 - \sqrt{d_1 d_2 \eps'})^{d_1 d_2} \label{eqn1:multiplicative}
\end{align}
Combining the fact that $S_1^*(I)=I$ and $S_2^*(I)=I$ along with Proposition \ref{cap_bound_1}, we get that $\capac(S_1), \capac(S_2) \le 1$. Also by Proposition \ref{cap_mul},we have that
\begin{align}
&\capac(T_1) = |\Det(B)|^{-2} \cdot |\Det(C)|^{-2} \cdot \capac(S_1) \le |\Det(B)|^{-2} \cdot |\Det(C)|^{-2} \label{eqn2:multiplicative} \\
&\capac(T_2) = |\Det(D)|^{-2} \cdot |\Det(E)|^{-2} \cdot \capac(S_2) \le |\Det(D)|^{-2} \cdot |\Det(E)|^{-2} \label{eqn3:multiplicative}
\end{align}
Combining equations (\ref{eqn1:multiplicative}), (\ref{eqn2:multiplicative}) and (\ref{eqn3:multiplicative}), we get that 
$$
\capac(T_1 \tensor T_2)^{1/d_1d_2} \ge \capac(T_1)^{1/d_1} \cdot \capac(T_2)^{1/d_2} \cdot (1 - \sqrt{d_1 d_2 \eps'})
$$
Since $\eps' = 2 (d_1 + d_2 + \eps) \eps$ can be taken to be arbitrarily close to $0$, this completes the proof.
\end{proof}

\section{Bit Complexity Analysis of Algorithm $G$ and Continuity of Capacity}\label{sec:bit_complexity_continuity}

In this section, we will provide the bit complexity analysis of Algorithm $G$ and also provide explicit bounds on the continuity of capacity by a sensitivity analysis
of Algorithm $G$. We start by analyzing a natural iterative sequence associated with an operator $T$ that will be very useful, both, in the bit complexity and continuity analysis. Section \ref{sec:evolution} will describe how the operators $T_j$ in Algorithm $G$ evolve with respect this iterative sequence. Section \ref{bit_complexity} provides the bit complexity analysis of Algorithm $G$ and Section \ref{sec:capacity_continuity} provides explicit bounds for continuity of capacity. 

\noindent Consider the sequence of matrices given by $S_0 = T^*(I)$ ($I$ is of size $n \times n$), and
\begin{equation}\label{eq:scaling-matrices} 
   S_j = 
   \begin{cases} 
        T(S_{j-1}^{-1}) & \text{$j$ odd}, j \ge 1 \\
        T^*(S_{j-1}^{-1}) & \text{$j$ even}, j \ge 1.
   \end{cases}
\end{equation}

\noindent The next proposition studies the stability properties of this sequence of matrices. 

\begin{proposition}{\label{long_prop}} For a real symmetric positive definite matrix $X$, let $l(X)$ denote its smallest eigenvalue and $u(X)$ its largest.
\begin{enumerate}
\item Suppose $A$ is an $n \times n$ real symmetric positive definite matrix with integer entries s.t. the magnitude of any entry of $A$ is at most $M$. Then $l(A) \ge \frac{1}{(Mn)^{n-1}}$ and $u(A) \le Mn$. 
\item Define $\alpha = (M^2 n^2 m)^{n-1}$. Let $T$ be a completely positive operator whose Kraus operators $A_1,\ldots,A_m$ are $n \times n$ integer matrices and each entry of $A_i$ is of magnitude at most $M$. Also assume that $T(I), T^*(I)$ are both non-singular. Then
$$
(M^2 n^2 m) I \succeq T(I), T^*(I) \succeq \frac{1}{(M^2 n^2 m)^{n-1}} I
$$
\item Let $T, M, n, m$ be as before. Let $X$ be a real symmetric matrix. Then $||T(X)|| \le \alpha ||X||$. Also if X is real symmetric positive definite, then $l(T(X)) \ge \alpha^{-1} \cdot l(X)$ and $u(T(X)) \le \alpha \cdot u(X)$. Similarly $||T^*(X)|| \le \alpha ||X||$, $l(T^*(X)) \ge \alpha^{-1} \cdot l(X)$ and $u(T^*(X)) \le \alpha \cdot u(X)$.
\item $S_j$ are real symmetric positive definite matrices for all $0 \le j \le t$. Also $l(S_j) \ge \alpha^{-(j+1)}$ and $u(S_j) \le \alpha^{j+1}$ for all $0 \le j \le t$. 
\item For all $1 \le k, l \le n$, $|S_j(k,l)| \le \alpha^{j+1}$. 
\end{enumerate}
\end{proposition}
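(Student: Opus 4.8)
The plan is to prove the five items in the order given, with items 1--3 supplying elementary spectral bounds that then drive a short induction for items 4--5. Throughout, $\|\cdot\|$ denotes the operator (spectral) norm, which for a real symmetric matrix equals the larger of $|l(\cdot)|$ and $|u(\cdot)|$. For item 1, bound $u(A) = \|A\| \le \|A\|_F \le Mn$ (the Frobenius norm of an $n\times n$ integer matrix with entries of magnitude $\le M$ is at most $Mn$); for $l(A)$, observe that $\Det(A)$ is a positive integer (as $A$ is positive definite with integer entries), hence $\Det(A)\ge 1$, so $l(A) = \Det(A)\big/\prod_{\lambda\ne l(A)}\lambda \ge u(A)^{-(n-1)} \ge (Mn)^{-(n-1)}$. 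Item 2 is then immediate from item 1 applied to $T(I)=\sum_i A_iA_i^\dagger$ and $T^*(I)=\sum_i A_i^\dagger A_i$, which are real symmetric positive semidefinite with integer entries and nonsingular by hypothesis: the upper bound follows from $\|T(I)\| \le \sum_i \|A_i\|^2 \le m(Mn)^2 = M^2n^2m$ (same for $T^*(I)$), and the lower bound from $\Det(T(I))\ge 1$ together with this top-eigenvalue bound, giving $l(T(I)) \ge (M^2n^2m)^{-(n-1)} = \alpha^{-1}$.

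For item 3 I would use only that $T$ and $T^*$ are linear and preserve the psd order. If $X$ is real symmetric, then $-\|X\|I \preceq X \preceq \|X\|I$ implies $-\|X\|T(I) \preceq T(X) \preceq \|X\|T(I) \preceq \|X\|\,\|T(I)\|\,I$, and symmetrically from below, so $\|T(X)\| \le \|X\|\,\|T(I)\| \le M^2n^2m\,\|X\| \le \alpha\|X\|$ by item 2. If moreover $X \succ 0$, then $X \succeq l(X)I$ gives $T(X) \succeq l(X)T(I) \succeq l(X)\,l(T(I))\,I \succeq \alpha^{-1}l(X)\,I$, hence $l(T(X)) \ge \alpha^{-1}l(X)$, while $u(T(X)) = \|T(X)\| \le \alpha\|X\| = \alpha u(X)$. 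The estimates for $T^*$ are identical.

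Items 4 and 5 follow by induction on $j$. The base case $j=0$ is item 2: $S_0 = T^*(I)$ is real symmetric positive definite with $l(S_0) \ge \alpha^{-1}$ and $u(S_0) \le M^2n^2m \le \alpha$. For the inductive step, if $S_{j-1}$ is real symmetric positive definite with $l(S_{j-1}) \ge \alpha^{-j}$ and $u(S_{j-1}) \le \alpha^{j}$, then $S_{j-1}^{-1}$ is real symmetric positive definite with $l(S_{j-1}^{-1}) = u(S_{j-1})^{-1} \ge \alpha^{-j}$ and $u(S_{j-1}^{-1}) = l(S_{j-1})^{-1} \le \alpha^{j}$; applying item 3 to $S_j \in \{\,T(S_{j-1}^{-1}),\ T^*(S_{j-1}^{-1})\,\}$ yields $l(S_j) \ge \alpha^{-(j+1)}$ and $u(S_j) \le \alpha^{j+1}$, and $S_j$ is real symmetric (the $A_i$ being real) and positive definite since $l(S_j)>0$. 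Finally item 5 is immediate: for real symmetric $X$ one has $|X(k,l)| = |e_k^\top X e_l| \le \|X\| = u(X)$, so $|S_j(k,l)| \le u(S_j) \le \alpha^{j+1}$.

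None of this is deep; the only points needing care are keeping the two-sided spectral bounds synchronized through the alternating recursion so that the exponent of $\alpha$ increments by exactly one per step, and the bookkeeping convention $M^2n^2m \le \alpha = (M^2n^2m)^{n-1}$, which holds since $n\ge 2$ (the case $n=1$ being trivial). The rest is routine spectral-norm and determinant estimation.
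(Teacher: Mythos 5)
Your proof is correct and follows essentially the same route as the paper's: bound $u(A)$ by a direct norm estimate, use $\Det(A)\ge 1$ for the lower eigenvalue, push these bounds through $T$ and $T^*$ using linearity and order-preservation, and then run a one-step-at-a-time induction on the $S_j$. The only cosmetic differences are that you bound $u(A)$ via the Frobenius norm rather than a direct Cauchy--Schwarz estimate, and you obtain $\|T(I)\|\le M^2n^2m$ by subadditivity of the spectral norm over the Kraus sum instead of by applying item 1 to the integer matrix $T(I)$; both give the same constant, and your explicit remark that $M^2n^2m\le\alpha$ requires $n\ge2$ is a small but welcome precision that the paper leaves implicit.
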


\begin{proof}
\begin{enumerate}
\item Let us first prove that $u(A) \le Mn$. 
\begin{align*}
u(A) &= \max_{\text{$v$ st. $||v||_2 = 1$}} ||Av||_2 \\
&= \max_{\text{$v$ st. $||v||_2 = 1$}} \sqrt{\sum_{k=1}^n \left( \sum_{l=1}^n A_{k,l} v_l \right)^2} \\
&\le \max_{\text{$v$ st. $||v||_2 = 1$}} \sqrt{\sum_{k=1}^n \left( \sum_{l=1}^n M |v_l| \right)^2} \\
&= \max_{\text{$v$ st. $||v||_2 = 1$}} \sqrt{n} \cdot M \cdot \left( \sum_{l=1}^n |v_l| \right) \\
&\le n M
\end{align*}
The last inequality follows from Cauchy-Schwarz. Now since $A$ is positive definite and the determinant is an integer, $\Det(A) \ge 1$. Then
\begin{align*}
1 \le \Det(A) \le u(A)^{n-1} l(A) \le (Mn)^{n-1} l(A)
\end{align*} 
from which it follows that $l(A) \ge \frac{1}{(Mn)^{n-1}}$.
\item Follows from previous point by noting that each entry of $T(I), T^*(I)$ is an integer of magnitude at most $M^2 n m$ and $T(I), T^*(I)$ are both symmetric matrices. 
\item We know that 
$$
\alpha I \succeq T(I), T^*(I) \succeq \alpha^{-1} I
$$
Suppose $X$ is a real symmetric matrix satisfying $||X|| = \beta$. Then 
$$
\beta I \succeq X \succeq -\beta I
$$
Then 
\begin{align*}
T(X + \beta I) \succeq 0
\end{align*}
since $T$ is completely positive and hence it maps psd matrices to psd matrices. Thus we get
$$
T(X) + \beta T(I) = T(X + \beta I) \succeq 0
$$ 
This follows from linearity of $T$. Then
$$
T(X) \succeq -\beta T(I) \succeq - \beta \alpha I
$$
Similarly one can prove that $\beta \alpha I \succeq T(X)$ which would imply $||T(X)|| \le \alpha \beta = \alpha ||X||$. Other statements can be proven in a similar manner.
\item We will prove this via induction on $j$. It is true for $S_0$ by point 2 in the proposition. Suppose the statement holds for $S_{j-1}$. Then $S_j = T(S_{j-1}^{-1})$ or  $S_j = T^*(S_{j-1}^{-1})$. It does not really matter which case it is for the purpose of this proof. Lets assume the former.
Then
\begin{align*}
l(S_j) &= l(T(S_{j-1}^{-1})) \\
&\ge \alpha^{-1} \cdot l(S_{j-1}^{-1}) \\
&= \alpha^{-1} \cdot u(S_{j-1})^{-1} \\
&\ge \alpha^{-1} \cdot \alpha^{-j} \\
&= \alpha^{-(j+1)}
\end{align*}
The first inequality is by point 3 in the proposition. The second inequality is by induction hypothesis. Similarly we can also prove that $u(S_j) \le \alpha^{j+1}$, which would complete the induction step. 
\item 
\begin{align*}
|S_j(k,l)| &= |e_k^T S_j e_l| \\
&\le ||e_k||_2 ||S_j e_l||_2 \\
&\le \alpha^{j+1}
\end{align*}
Here $e_k$ and $e_l$ are the standard basis vectors. 
\end{enumerate}
\end{proof}

\noindent Define another sequence of matrices defined by $U_0 = S_0$ and 
\[ U_j = \begin{cases} 
      T(U_{j-1}^{-1}) + \Delta_j& \text{$j$ odd}, j \ge 1 \\
      T^*(U_{j-1}^{-1}) + \Delta_j& \text{$j$ even}, j \ge 1 \\
   \end{cases}
\]
where $\Delta_j$'s are small perturbations. We now study the closeness of the sequences $\{U_j\}$ and $\{S_j\}$.

\begin{lemma}{\label{induct_u}}
If $||\Delta_j|| \le \frac{1}{2^j \cdot \alpha^{j+1}}$ for all 
$j \ge 1$, then $l(U_j) \ge \frac{1}{2^j \cdot \alpha^{j+1}}$ and $u(U_j) \le 2^j \cdot \alpha^{j+1}$, 
for all $j \ge 0$. Here $\alpha = (M^2 n^2 m)^{n-1}$. 
\end{lemma}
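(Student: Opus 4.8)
The plan is to prove both bounds simultaneously by induction on $j$, the two ingredients being (i) the multiplicative eigenvalue estimates for $T$ and $T^*$ from Proposition~\ref{long_prop}(3), namely $l(T(X)) \ge \alpha^{-1} l(X)$ and $u(T(X)) \le \alpha\, u(X)$ for symmetric positive definite $X$ (and likewise for $T^*$), and (ii) Weyl's inequality, which says that adding a symmetric perturbation of operator norm $\delta$ shifts every eigenvalue by at most $\delta$. Along the way one checks that every $U_j$ is real symmetric positive definite: $U_0 = S_0 = T^*(I)$ is, $T$ and $T^*$ send real symmetric psd matrices to real symmetric psd matrices (as the $A_i$ are real), and the $\Delta_j$ are symmetric; positive-definiteness of $U_j$ then also drops out of the lower bound $l(U_j) > 0$ we establish, so the inverses $U_{j-1}^{-1}$ appearing in the recursion are always well defined.

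For the base case $j=0$, Proposition~\ref{long_prop}(2) gives $l(S_0) \ge (M^2 n^2 m)^{-(n-1)} = \alpha^{-1}$ and $u(S_0) \le M^2 n^2 m \le \alpha$ (the last inequality since $\alpha = (M^2n^2m)^{n-1} \ge M^2n^2m \ge 1$); as $2^0 = 1$ this is exactly $l(U_0) \ge \tfrac{1}{2^0\alpha^{1}}$ and $u(U_0) \le 2^0\alpha^{1}$.

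For the inductive step, assume $l(U_{j-1}) \ge \tfrac{1}{2^{j-1}\alpha^{j}}$ and $u(U_{j-1}) \le 2^{j-1}\alpha^{j}$. Then $U_{j-1}^{-1}$ is positive definite with $l(U_{j-1}^{-1}) = u(U_{j-1})^{-1} \ge \tfrac{1}{2^{j-1}\alpha^{j}}$ and $u(U_{j-1}^{-1}) = l(U_{j-1})^{-1} \le 2^{j-1}\alpha^{j}$. Applying Proposition~\ref{long_prop}(3) to $T$ if $j$ is odd, or to $T^*$ if $j$ is even (the two cases are word-for-word identical), we get $l(T(U_{j-1}^{-1})) \ge \alpha^{-1}\cdot \tfrac{1}{2^{j-1}\alpha^{j}} = \tfrac{1}{2^{j-1}\alpha^{j+1}}$ and $u(T(U_{j-1}^{-1})) \le \alpha\cdot 2^{j-1}\alpha^{j} = 2^{j-1}\alpha^{j+1}$. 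Now add $\Delta_j$ with $\|\Delta_j\| \le \tfrac{1}{2^{j}\alpha^{j+1}}$ and use Weyl's inequality: $l(U_j) \ge \tfrac{1}{2^{j-1}\alpha^{j+1}} - \tfrac{1}{2^{j}\alpha^{j+1}} = \tfrac{1}{2^{j}\alpha^{j+1}}$, and $u(U_j) \le 2^{j-1}\alpha^{j+1} + \tfrac{1}{2^{j}\alpha^{j+1}} \le 2^{j}\alpha^{j+1}$, the last step using $\tfrac{1}{2^{j}\alpha^{j+1}} \le 2^{j-1}\alpha^{j+1}$ (valid since $\alpha \ge 1$). This closes the induction.

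The argument is entirely routine, so there is no real obstacle; the only points needing a moment's care are confirming that the iterates stay symmetric positive definite so that the recursion and the eigenvalue-perturbation step are legitimate, and keeping track of the geometric factors of $2$. The hypothesis $\|\Delta_j\| \le 2^{-j}\alpha^{-(j+1)}$ is precisely calibrated so that the perturbation absorbed at step $j$ is exactly half the current lower-bound budget $2^{-(j-1)}\alpha^{-(j+1)}$, which is what makes the telescoping $\tfrac{1}{2^{j-1}} - \tfrac{1}{2^{j}} = \tfrac{1}{2^{j}}$ come out cleanly.
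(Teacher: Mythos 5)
Your proof is correct and follows essentially the same route as the paper's: induction on $j$, with Proposition~\ref{long_prop}(3) supplying the multiplicative eigenvalue estimate and the elementary perturbation bound $l(A+\Delta) \ge l(A) - \|\Delta\|$, $u(A+\Delta) \le u(A) + \|\Delta\|$ (which you name as Weyl's inequality, while the paper uses it silently) absorbing the error term $\Delta_j$. The only cosmetic difference is that the paper's inductive step is phrased from $j$ to $j+1$ and yours from $j-1$ to $j$, and you supply a bit more explicit bookkeeping for the base case and for why the iterates stay symmetric positive definite.
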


\begin{proof}
We will prove this using induction on $j$. It is true for $j=0$ since $U_0 = S_0$. Assume that $l(U_j) \ge \frac{1}{2^j \cdot \alpha^{j+1}}$ and $u(U_j) \le 2^j \cdot \alpha^{j+1}$. Then
$U_{j+1} = T(U_j^{-1}) + \Delta_{j+1}$ or $U_{j+1} = T^*(U_j^{-1}) + \Delta_{j+1}$. Suppose it is $T(U_j^{-1}) + \Delta_{j+1}$. The other case is similar.
\begin{align*}
l(U_{j+1}) &\ge l(T(U_j^{-1})) - ||\Delta_{j+1}|| \\
&\ge \alpha^{-1} \cdot l(U^{-1}_j) -  ||\Delta_{j+1}|| \\
&=  \alpha^{-1} \cdot u(U_j)^{-1} - ||\Delta_{j+1}|| \\
&\ge \alpha^{-1} \cdot \frac{1}{2^j \alpha^{j+1}} - \frac{1}{2^{j+1} \alpha^{j+2}} \\
&= \frac{1}{2^{j+1} \alpha^{j+2}}
\end{align*}
Also
\begin{align*}
u(U_{j+1}) &\le u(T(U_j^{-1})) + ||\Delta_{j+1}|| \\
&\le \alpha \cdot u(U_j^{-1}) + ||\Delta_{j+1}|| \\
&= \alpha \cdot l(U_j)^{-1} + ||\Delta_{j+1}|| \\
&\le \alpha \cdot 2^j \alpha^{j+1} + \frac{1}{2^{j+1} \alpha^{j+2}} \\
&\le 2^{j+1} \cdot \alpha^{j+2}
\end{align*}
Note that there is a lot of slack in this part, but it does not matter for our purposes. 
\end{proof}

\begin{lemma}{\label{spectral_diff}} The following holds for every integer $t$: suppose $||\Delta_j|| \le \delta$ for all $1 \le j \le t$, where $\delta \le \frac{1}{(2\alpha)^{t+1}}$. Then 
$$
||S_j - U_j|| \le (2\alpha)^{(2t+1)\cdot(t+1)} \delta
$$ 
for all $0 \le j \le t$. Here $\alpha = (M^2 n^2 m)^{n-1}$.
\end{lemma}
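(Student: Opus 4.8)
The plan is to prove the bound by induction on $j$, comparing the two sequences term by term. The base case $j=0$ is immediate since $S_0 = U_0 = T^*(I)$, so the error $E_j := \|S_j - U_j\|$ vanishes there. For the inductive step I will first derive a one-step recursion for $E_j$ and then unroll it, keeping in mind that all the amplification comes from the matrix inversions in \eqref{eq:scaling-matrices}.

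\textbf{Setting up the recursion.} Take an odd index $j$ (the even case is identical with $T^{*}$ in place of $T$, since part (3) of Proposition~\ref{long_prop} applies verbatim to $T^{*}$). From the definitions $S_j = T(S_{j-1}^{-1})$ and $U_j = T(U_{j-1}^{-1}) + \Delta_j$, so by linearity of $T$ and the triangle inequality, and then part (3) of Proposition~\ref{long_prop} applied to the real symmetric matrix $S_{j-1}^{-1} - U_{j-1}^{-1}$,
$$
E_j \le \big\|T\big(S_{j-1}^{-1} - U_{j-1}^{-1}\big)\big\| + \|\Delta_j\| \le \alpha\,\big\|S_{j-1}^{-1} - U_{j-1}^{-1}\big\| + \delta .
$$
Now apply the resolvent identity $A^{-1} - B^{-1} = A^{-1}(B-A)B^{-1}$ and submultiplicativity of the operator norm to get $\|S_{j-1}^{-1} - U_{j-1}^{-1}\| \le \|S_{j-1}^{-1}\|\cdot\|U_{j-1}^{-1}\|\cdot E_{j-1}$. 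This is the point where conditioning estimates are needed. Part (4) of Proposition~\ref{long_prop} gives $l(S_{j-1}) \ge \alpha^{-j}$, hence $\|S_{j-1}^{-1}\| \le \alpha^{j}$. For $U_{j-1}$ I invoke Lemma~\ref{induct_u}; its hypothesis $\|\Delta_j\| \le 1/(2^j\alpha^{j+1})$ holds for every $1\le j\le t$ because $\delta \le (2\alpha)^{-(t+1)} = 1/(2^{t+1}\alpha^{t+1}) \le 1/(2^j\alpha^{j+1})$ (using $\alpha\ge 1$), and it then yields $l(U_{j-1}) \ge 1/(2^{j-1}\alpha^{j})$, i.e.\ $\|U_{j-1}^{-1}\| \le 2^{j-1}\alpha^{j}$. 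Combining the three estimates,
$$
E_j \le 2^{j-1}\alpha^{2j+1}\, E_{j-1} + \delta .
$$

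\textbf{Unrolling.} Write $c_k = 2^{k-1}\alpha^{2k+1}$. Since $E_0 = 0$, iterating the recursion gives, for each $1\le j\le t$,
$$
E_j \le \delta\big(1 + c_j + c_jc_{j-1} + \cdots + c_jc_{j-1}\cdots c_2\big),
$$
a sum of at most $t$ terms, each a product of at most $t$ of the $c_k$'s. For $1\le k\le t$ we have $c_k \le 2^{t}\alpha^{2t+1} \le (2\alpha)^{2t+1}$, so every term in the sum is at most $(2\alpha)^{(2t+1)t}$, and therefore
$$
E_j \le t\,(2\alpha)^{(2t+1)t}\,\delta \le (2\alpha)^{(2t+1)(t+1)}\,\delta,
$$
where the last inequality uses $t \le (2\alpha)^{2t+1}$ (true since $\alpha\ge 1$). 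As $j=0$ also satisfies the bound trivially, this establishes the lemma.

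\textbf{Main obstacle.} The only genuinely delicate point is that the error propagates through the inversions $S_{j-1}\mapsto S_{j-1}^{-1}$ and $U_{j-1}\mapsto U_{j-1}^{-1}$, which are badly conditioned (smallest eigenvalues as small as $\alpha^{-(j+1)}$), so the per-step amplification is $\sim\alpha^{2j}$; to even make sense of $U_{j-1}^{-1}$ one must first know $U_{j-1}$ is bounded away from singular, which is exactly why Lemma~\ref{induct_u} is proved first and why the hypothesis $\delta \le (2\alpha)^{-(t+1)}$ is imposed. Everything else is exponent bookkeeping, and the stated bound is deliberately loose to keep that bookkeeping painless.
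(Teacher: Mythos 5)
Your proof is correct and follows essentially the same route as the paper's: derive the one-step recursion $E_j \le 2^{j-1}\alpha^{2j+1}E_{j-1} + \delta$ via linearity, the resolvent identity $S^{-1}-U^{-1}=S^{-1}(U-S)U^{-1}$, and the spectral bounds from Proposition~\ref{long_prop} and Lemma~\ref{induct_u}, then unroll. You spell out the unrolling (the geometric sum and the estimates $c_k\le(2\alpha)^{2t+1}$, $t\le(2\alpha)^{2t+1}$) more explicitly than the paper, which just states the final bound after establishing the recursion; otherwise the arguments coincide.
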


\begin{proof}
\begin{align*}
||S_j - U_j|| &= ||T(S_{j-1}^{-1}) - T(U_{j-1}^{-1}) - \Delta_j|| \\
&\le ||T(S_{j-1}^{-1} - U_{j-1}^{-1})|| + \delta \\
&\le \alpha \cdot ||S_{j-1}^{-1} - U_{j-1}^{-1}|| + \delta \\
&= \alpha \cdot ||S_{j-1}^{-1} (U_{j-1} - S_{j-1}) U_{j-1}^{-1}|| + \delta \\
&\le \alpha \cdot ||S_{j-1}^{-1}|| \cdot  ||S_{j-1} - U_{j-1}|| \cdot ||U_{j-1}^{-1}|| + \delta \\
&\le 2^{j-1} \alpha^{2j+1} \cdot ||S_{j-1} - U_{j-1}|| + \delta \\
&\le (2\alpha)^{2t+1} \cdot ||S_{j-1} - U_{j-1}|| + \delta
\end{align*}
For the fourth inequality we used Proposition \ref{long_prop} and Lemma \ref{induct_u}. Note that value of $\delta$ is small enough so that conditions of Lemma \ref{induct_u} are satisfied. 
\end{proof}

\subsection{Evolution of the Operator through Scaling}\label{sec:evolution}

In this subsection we compute succinct expressions for the operators $T_j$ which appear in Algorithm $G$,
together with expressions for their capacity and distance to doubly stochastic. These expressions will 
involve the matrices $S_j$ defined in the previous subsection. These operators $T_j$ are 
the intermediary operators in the scaling procedure, and the succinct expressions will be important for us
when approximating capacity, or proving the stability of the capacity of a quantum operator. 

Starting from a completely positive operator $T$ which satisfies $T(I)=I$, denote the sequence of 
operators obtained after right and left normalizations by $\{T_j\}$ i.e. $T_j$ is the operator obtained after 
$j$ iterations of right and left normalizations (as in Algorithm $G$). Note that
\[ T_j(X) = \begin{cases} 
      T_{j-1} \left(T_{j-1}^*(I)^{-1/2} \cdot X \cdot T_{j-1}^*(I)^{-1/2} \right) &\text{$j$ odd}, j \ge 1 \\
      T_{j-1}(I)^{-1/2} \cdot T_{j-1}(X) \cdot T_{j-1}(I)^{-1/2} & \text{$j$ even}, j \ge 1 \\
   \end{cases}
\]

\noindent For each $j$, $T_j$ is an operator scaling of $T$. So 
\begin{equation}\label{eq:scaled-operator}
T_j(X) = C_j \cdot T \left( D_j^{\dagger} \cdot X \cdot D_j\right) \cdot C_j^{\dagger}
\end{equation}
for some non-singular matrices $C_j$ and $D_j$. Let us denote $C_j^{\dagger} C_j$ by $P_j$ and $D_j^{\dagger} D_j$ by $Q_j$. It turns out $P_j$ and $Q_j$ are the matrices that really matter for the purpose of analyzing Algorithm $G$ and we will see next how these evolve. 

When $j$ is odd, $T_j^*(I)=I$. Note that 
\begin{equation}\label{eq:dual-scaled-operator}
T_j^*(X) = D_j \cdot T^*(C_j^{\dagger} \cdot X \cdot C_j) \cdot D_j^{\dagger}
\end{equation}
So the condition $T_j^*(I)=I$ gives us that $Q_j = T^*(P_j)^{-1}$. Also note that when $j$ is odd, $P_j = P_{j-1}$. When $j$ is even, we have, $T_j(I)=I$, which implies $P_j = T(Q_j)^{-1}$ and also $Q_j = Q_{j-1}$ holds. This can be summarized as follows:

\[ (P_j, \ Q_j) = 
\begin{cases} 
      (P_{j-1}, \  T^*(P_{j-1})^{-1}) &\text{$j$ odd}, j \ge 1 \\
      (T(Q_{j-1})^{-1}, \ Q_{j-1}) & \text{$j$ even}, j \ge 1 \\
   \end{cases}
\]
along with $T_0 = T$ and $P_0, Q_0 = I$. Thus we can see that
\begin{equation}\label{eq:scaling-factors} 
(P_j, \ Q_j) = 
\begin{cases} 
      (S_{j-2}^{-1}, \ S_{j-1}^{-1}) &\text{$j$ odd}, j \ge 1 \\
      (S_{j-1}^{-1}, \ S_{j-2}^{-1}) & \text{$j$ even}, j \ge 1 \\
   \end{cases}
\end{equation}
with the convention $S_{-1}=I$. For Algorithm $G$, what matters is 
\[ \eps_j = 
\begin{cases} 
      \tr \left[ \left(T_j(I)-I \right)^2 \right] &\text{$j$ odd}, j \ge 1 \\ \\
      \tr \left[ \left(T^*_j(I)-I \right)^2 \right] & \text{$j$ even}, j \ge 1 
   \end{cases}
\]
Since 
$$
T_j(I) = C_j \cdot T \left( D_j^{\dagger} D_j\right) \cdot C_j^{\dagger} = \left( C_j^{\dagger} \right)^{-1} \cdot P_j \cdot T(Q_j) \cdot C_j^{\dagger}
$$
and 
$$
T_j^*(I) = D_j \cdot T^*(C_j^{\dagger}C_j) \cdot D_j^{\dagger} = \left( D_j^{\dagger} \right)^{-1} \cdot Q_j \cdot T^*(P_j) \cdot D_j^{\dagger}
$$
we get (because similar matrices have the same trace) that 
\[ \eps_j = 
\begin{cases} 
      \tr \left[ \left(P_j \cdot T(Q_j)-I \right)^2 \right] &\text{$j$ odd}, j \ge 1 \\ \\
      \tr \left[ \left(Q_j \cdot T^*(P_j)-I \right)^2 \right] & \text{$j$ even}, j \ge 1 
   \end{cases}
\]
and hence 
\begin{subnumcases}{\eps_j = }
   \tr \left[ \left(S_{j-2}^{-1} \cdot T(S_{j-1}^{-1})-I \right)^2 \right] = \tr \left[ \left(S_{j-2}^{-1} \cdot S_j-I \right)^2 \right] & $\text{$j$ odd}, j \ge 1$ \label{eqn:epsj_1}
   \\
    \tr \left[ \left(S_{j-2}^{-1} \cdot T^*(S_{j-1}^{-1})-I \right)^2 \right] = \tr \left[ \left(S_{j-2}^{-1} \cdot S_j-I \right)^2 \right]  & $ \text{$j$ even}, j \ge 1 $ \label{eqn:epsj_2}
\end{subnumcases}

\noindent For the computation of capacity, the determinants that matter are
\[ d_j = 
\begin{cases} 
     \Det(T_{j-1}^*(I))^{-1} &\text{$j$ odd}, j \ge 1 \\ 
      \Det(T_{j-1}(I))^{-1} & \text{$j$ even}, j \ge 1 
   \end{cases}
\]
Then by Proposition \ref{cap_mul}, we have that
$$
\capac(T_r) = \capac(T_0) \cdot \prod_{j=1}^r d_j = \capac(T) \cdot \prod_{j=1}^r d_j 
$$
By the discussion above and the fact that similar matrices have the same determinants, we get that
\[ d_j = 
\begin{cases} 
      \Det \left(Q_{j-1}^{-1} \right) \cdot \Det \left(T^*(P_{j-1}) \right)^{-1}  = \Det(S_{j-3}) \cdot \Det(S_{j-1})^{-1} &\text{$j$ odd}, j \ge 1 \\ \\
     \Det \left(P_{j-1}^{-1} \right) \cdot \Det \left(T(Q_{j-1}) \right)^{-1} = \Det(S_{j-3}) \cdot \Det(S_{j-1})^{-1}  & \text{$j$ even}, j \ge 1 
   \end{cases}
\]
with the convention $S_{-2}, S_{-1} = I$. Thus we get
\begin{align}\label{eq:capacity-product}
\nonumber \capac(T_r) &= \capac(T) \cdot \prod_{j=1}^r d_j \\
\nonumber &=  \capac(T) \cdot \prod_{j=1}^r  \Det(S_{j-3}) \cdot \Det(S_{j-1})^{-1} \\
\nonumber &= \capac(T) \cdot \Det(S_{-2}) \cdot \Det(S_{-1}) \cdot \Det(S_{r-2})^{-1} \cdot \Det(S_{r-1})^{-1} \\
&= \capac(T) \cdot \Det(S_{r-2})^{-1} \cdot \Det(S_{r-1})^{-1} 
\end{align}

\subsection{Bit Complexity Analysis of Algorithm $G$}\label{bit_complexity}

We are now ready to analyze the bit complexity of Algorithm $G$. We will prove that if one runs Algorithm $G$ while truncating the numbers to an appropriate polynomial number of bits there is essentially no change in the convergence and required number of iterations. We will call the algorithm working with truncated inputs Algorithm $G'$.

Given a matrix $A$, let $\text{Trn}(A)$ be the matrix obtained by truncating the entries of A up to $P(n,\log(M))$ bits after the decimal point. $P(n,\log(M))$ is a polynomial which we will specify later. Note that $||A - \text{Trn}(A)||_{\infty} \le 2^{-P(n, \log(M))}$. 
We now describe Algorithm $G'$, which is the variant of Algorithm $G$ with truncation.

\begin{Algorithm}
\textbf{Input}: Completely positive operator $T$ given in terms of Kraus operators $A_1, \ldots, A_m \in \mathbb{Z}^{n \times n}$. Each entry of $A_i$ has absolute value at most $M$. \\
\textbf{Output}: Is $T$ rank non-decreasing?  

\begin{enumerate}
\item Check if $T(I)$ and $T^*(I)$ are singular. If any one of them are singular, then output that the operator is rank decreasing, otherwise proceed to step 2. 
\item Let $U_{-1} = S_{-1} = I$ and $U_0 = S_0 = T^*(I)$. \\
For($j = 1$ to $t$): 
$$ U_j = 
\begin{cases} 
	\text{Trn}(T(U_{j-1}^{-1})), \text{ if $j$ odd and } \\ 
	\text{Trn}(T^*(U_{j-1}^{-1})), \text{ if $j$ even.}
	\end{cases}
$$ 
Let $\tilde{\eps}_j = \tr \left[ \left(U_{j-2}^{-1} \cdot U_j - I \right)^2\right]$.
\item Check if $\min \{\tilde{\eps}_j : 1 \le j \le t\} \le 1/6n$. If yes, then output that the operator is rank non-decreasing otherwise output rank decreasing.
\end{enumerate}
\caption{Algorithm $G'$ (Algorithm $G$ with truncation)}
\label{Gurvits_alg_trn}
\end{Algorithm}

The parameter $t$ will be chosen as before: $t = 2 + 36n \cdot \log(1/f(n, M))$. We now show that throughout the iterations, distances to double stochasticity is essentially the same for the original and truncated algorithms $G$ and $G'$.

\begin{lemma}{\label{trunc}}
For an appropriate choice of $P(n,\log(M))$, $|\eps_j - \tilde{\eps}_j| \le 1/12n$, for $1 \le j \le t = 2 + 36n \cdot \log(1/f(n, M))$. 
\end{lemma}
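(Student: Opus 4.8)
The plan is to reduce everything to the stability estimates for the matrix sequences $\{S_j\}$ and $\{U_j\}$ already proved in Lemmas~\ref{induct_u} and~\ref{spectral_diff}. By the expressions \eqref{eqn:epsj_1}--\eqref{eqn:epsj_2} we have $\eps_j = \tr\left[(S_{j-2}^{-1}S_j - I)^2\right]$ (with $S_{-1}=I$), and by construction $\tilde\eps_j = \tr\left[(U_{j-2}^{-1}U_j - I)^2\right]$ (with $U_{-1}=I$). Writing $E_j := S_{j-2}^{-1}S_j - I$ and $F_j := U_{j-2}^{-1}U_j - I$ and using $E_j^2 - F_j^2 = E_j(E_j - F_j) + (E_j - F_j)F_j$, we get $|\eps_j - \tilde\eps_j| \le n\,\|E_j - F_j\|\,\big(\|E_j\| + \|F_j\|\big)$, so it suffices to bound $\|E_j - F_j\|$ by something tiny while keeping $\|E_j\|, \|F_j\|$ under control, and then choose the truncation precision $P(n,\log M)$ large enough.

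First I would fix the per-step perturbation size. Each call to $\text{Trn}$ perturbs every entry by at most $2^{-P(n,\log M)}$, so $\Delta_j := U_j - T(U_{j-1}^{-1})$ for odd $j$ (resp. $U_j - T^*(U_{j-1}^{-1})$ for even $j$) satisfies $\|\Delta_j\| \le \delta := n\,2^{-P(n,\log M)}$. Provided $P$ is chosen so that $\delta \le (2\alpha)^{-(t+1)}$, with $\alpha = (M^2n^2m)^{n-1}$, the hypotheses of Lemmas~\ref{induct_u} and~\ref{spectral_diff} hold; this is a polynomial constraint on $P$ because $\log\alpha = O(n\log(Mn))$ and $t = 2 + 36n\log(1/f(n,M)) = O(n^2\log(Mn))$ is polynomial (here $f(n,M) = \expon(-4n\log(Mmn))$ from Theorem~\ref{capacity_lb} and $m \le n^2$). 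This gives, for all $0 \le j \le t$, the eigenvalue bounds $\alpha^{-(j+1)}2^{-j} \le l(U_j) \le u(U_j) \le 2^j\alpha^{j+1}$ from Lemma~\ref{induct_u} and $l(S_j) \ge \alpha^{-(j+1)}$, $u(S_j) \le \alpha^{j+1}$ from Proposition~\ref{long_prop}, together with $\|S_j - U_j\| \le \eta := (2\alpha)^{(2t+1)(t+1)}\delta$ from Lemma~\ref{spectral_diff}. In particular every $U_j$ is positive definite, so $U_j^{-1}$ and hence $\tilde\eps_j$ are well defined.

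Next I would propagate $\eta$ through to $\|E_j - F_j\|$ via the identities
\begin{align*}
S_{j-2}^{-1}S_j - U_{j-2}^{-1}U_j &= S_{j-2}^{-1}(S_j - U_j) + \big(S_{j-2}^{-1} - U_{j-2}^{-1}\big)U_j, \\
S_{j-2}^{-1} - U_{j-2}^{-1} &= S_{j-2}^{-1}(U_{j-2} - S_{j-2})\,U_{j-2}^{-1}.
\end{align*}
Since each of $\|S_{j-2}^{-1}\|, \|U_{j-2}^{-1}\|, \|S_j\|, \|U_j\|$ is at most $(2\alpha)^{t+1}$ by the eigenvalue bounds above, submultiplicativity of the operator norm gives $\|E_j - F_j\| = \|S_{j-2}^{-1}S_j - U_{j-2}^{-1}U_j\| \le (2\alpha)^{O(t)}\,\eta$, and the same bounds give $\|E_j\|, \|F_j\| \le (2\alpha)^{O(t)}$. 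Hence, uniformly for $1 \le j \le t$,
$$
|\eps_j - \tilde\eps_j| \le n\,(2\alpha)^{O(t)}\,\eta = n\,(2\alpha)^{O(t)}\,(2\alpha)^{(2t+1)(t+1)}\,n\,2^{-P(n,\log M)} = n^2\,(2\alpha)^{O(t^2)}\,2^{-P(n,\log M)}.
$$

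Finally, taking logarithms, the exponent multiplying $2^{-P}$ is $O(t^2\log\alpha) + O(\log n) = \poly(n,\log M)$, so choosing $P(n,\log M)$ to be a polynomial exceeding this quantity by $\log(12n)$ yields $|\eps_j - \tilde\eps_j| \le 1/(12n)$ for all $1 \le j \le t$, as required. The only real work here is the bookkeeping of these norm bounds and the verification that every blow-up factor has logarithm polynomial in $n$ and $\log M$ --- the crux being that $t$ is itself polynomial, which is exactly where the capacity lower bound of Theorem~\ref{capacity_lb} enters. I expect no conceptual obstacle beyond keeping the exponents straight.
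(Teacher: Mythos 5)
Your proposal is correct and follows essentially the same route as the paper's proof: both reduce to bounding $\|S_{j-2}^{-1}S_j - U_{j-2}^{-1}U_j\|$ via the same two algebraic identities, invoke Proposition~\ref{long_prop} and Lemmas~\ref{induct_u} and~\ref{spectral_diff} for the norm/eigenvalue control, and finish by choosing $P(n,\log M)$ to dominate the $(2\alpha)^{O(t^2)}$ blow-up. The only cosmetic difference is that you decompose $\tr[E_j^2 - F_j^2]$ with $E_j = M_j - I$, $F_j = N_j - I$, while the paper splits $|\eps_j - \tilde\eps_j| \le |\tr(M_j^2 - N_j^2)| + 2|\tr(M_j - N_j)|$; these are equivalent up to cyclicity of trace and both lead to the same polynomial bound.
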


\noindent Here $\eps_j = \ds(T_j)$ as defined in Algorithm $G$. Also note that by equations (\ref{eqn:epsj_1}) and (\ref{eqn:epsj_2}), 
$$
\eps_j = \tr \left[ \left(S_{j-2}^{-1} \cdot S_j - I \right)^2\right] 
$$

Let us first prove the correctness of Algorithm $G'$ assuming Lemma \ref{trunc}. As before, we can 
assume that $T(I)$ and $T^*(I)$ are both non-singular. If $\min \{\tilde{\eps}_j : 1 \le j \le t\} \le 1/6n$, 
then by Lemma \ref{trunc}, 
$$
\min \{\eps_j : 1 \le j \le t\} \le 1/6n + 1/12n \le 1/3n
$$
Hence by Theorem \ref{DS_full}, $T$ is rank non-decreasing.
Now assume, for the reverse direction, that $T$ is rank non-decreasing. Then by the analysis in 
Section~\ref{AlgG}, $\min \{\eps_j : 1 \le j \le t\} \le 1/12n$. Hence by Lemma~\ref{trunc}
$$
\min \{\tilde{\eps}_j : 1 \le j \le t\} \le 1/12n + 1/12n = 1/6n
$$
This proves the correctness of Algorithm $G'$. 

\begin{proof}(Of Lemma \ref{trunc}) $\eps_j = \tr \left[ \left(S_{j-2}^{-1} \cdot S_j - I \right)^2\right]$ 
and $\tilde{\eps}_j = \tr \left[ \left(U_{j-2}^{-1} \cdot U_j - I \right)^2\right]$. 
Let $M_j = S_{j-2}^{-1} \cdot S_j $ and $N_j = U_{j-2}^{-1} \cdot U_j$. Then
$$
||M_j||_{\infty} = ||S_{j-2}^{-1} \cdot S_j||_{\infty} \le ||S_{j-2}^{-1} \cdot S_j|| \le ||S_{j-2}^{-1}|| \cdot ||S_j|| \le \alpha^{2j}
$$
where the last inequality follows from Proposition \ref{long_prop}. Also
$$
||N_j||_{\infty} = ||U_{j-2}^{-1} \cdot U_j||_{\infty} \le ||U_{j-2}^{-1} \cdot U_j|| \le ||U_{j-2}^{-1}|| \cdot ||U_j|| \le 2^{2j-2} \cdot \alpha^{2j}
$$
where the last inequality follows from Lemma \ref{induct_u} and the fact that $P(n, \log(M))$ will be chosen to be a large enough polynomial so that the perturbations $\Delta_j$ in the lemma satisfy the condition $||\Delta_j|| \le 2^{-P(n,\log(M))} \le \frac{1}{2^j \cdot \alpha^{j+1}}$. 
Now
\begin{align*}
|\eps_j - \tilde{\eps_j}| &\le |\tr (M_j^2 - N_j^2)| + 2 |\tr(M_j - N_j)| \\
&= |\tr[(M_j+N_j)(M_j-N_j)]| + 2 |\tr(M_j - N_j)| \\
&\le (2 \alpha)^{2j} \cdot \sum_{k,l = 1}^n |(M_j - N_j)(k,l)| + 2 n ||M_j - N_j|| \\
&\le (2 \alpha)^{2j} \cdot n \cdot \sqrt{\sum_{k,l=1}^n |(M_j - N_j)(k,l)|^2} + 2 n ||M_j - N_j|| \\
&\le (2 \alpha)^{2j} \cdot n \cdot \sqrt{n} \cdot ||M_j - N_j|| + 2 n ||M_j-N_j|| \\
\end{align*}
In the first equality, we used the fact that $\tr(M_j N_j) = \tr(N_j M_j)$. In the second inequality, we used the fact that the maximum magnitude of any entry in $M_j+N_j$ is bounded by $(2 \alpha)^{2j}$ and that $|\tr(M_j - N_j)|$ is upper bounded by $n ||M_j-N_j||$. The third inequality is just Cauchy-Schwarz and the fourth is the fact that Frobenius norm of a matrix is upper bounded by $n ||M_j-N_j||^2$. Let us try to upper bound $||M_j-N_j||$ now.
\begin{align*}
||M_j-N_j|| &= ||S_{j-2}^{-1} \cdot S_j - U_{j-2}^{-1} \cdot U_j|| \\ &= ||S_{j-2}^{-1} \cdot S_j - S_{j-2}^{-1} \cdot U_j + S_{j-2}^{-1} \cdot U_j + U_{j-2}^{-1} \cdot U_j|| \\
&\le ||S_{j-2}^{-1} \cdot S_j - S_{j-2}^{-1} \cdot U_j|| + ||S_{j-2}^{-1} \cdot U_j + U_{j-2}^{-1} \cdot U_j|| \\
&\le ||S_{j-2}^{-1}|| \cdot ||S_{j} - U_j|| + ||S_{j-2}^{-1} - U_{j-2}^{-1}|| \cdot ||U_j|| \\
&=  ||S_{j-2}^{-1}|| \cdot ||S_{j} - U_j|| + ||S_{j-2}^{-1}\left(U_{j-2} - S_{j-2}\right)U_{j-2}^{-1}|| \cdot ||U_j|| \\
&\le ||S_{j-2}^{-1}|| \cdot ||S_{j} - U_j|| + ||S_{j-2}^{-1}|| \cdot ||U_{j-2} - S_{j-2}|| \cdot |U_{j-2}^{-1}|| \cdot ||U_j|| \\
&\le \alpha^{j-1} \cdot (2 \alpha)^{(2t+1) \cdot (t+1)} \delta + \alpha^{j-1} \cdot (2 \alpha)^{(2t+1) \cdot (t+1)} \delta \cdot 2^{j-2} \alpha^{j-1} \cdot 2^j \alpha^{j+1} \\
&\le (2 \alpha)^{10 t^2} \cdot \delta
\end{align*} 
The second last inequality follows by application of Proposition \ref{long_prop} and Lemmas \ref{induct_u} and \ref{spectral_diff}. Also $\delta$ here is at most $n \cdot 2^{-P(n,\log(M))}$ since $||\Delta_j||_{\infty} \le 2^{-P(n,\log(M))}$. 

Now $\alpha = (M^2 n^2 m)^{n-1} = \expon(\Theta(n \log(n) \log(M)))$ (since $m \le n^2$). Hence
\begin{align*}
(2 \alpha)^{10 t^2}&= \expon(\Theta(n \log(n) \log(M) \cdot t^2)) \\
&= \expon\left(\Theta\left(n^3 \log(n) \log(M) \log^2\left(1/f(n,M)\right)\right)\right)
\end{align*}
Hence choosing 
$$
P(n, \log(M)) = n^4 \log(M) \log^2\left(1/f(n,M)\right)
$$ 
suffices to get 
$$
|\eps_j - \tilde{\eps_j}| \le 1/12n
$$
We proved in Theorem \ref{capacity_lb} that $\log(1/f(n,M))$ is $\poly(n,\log(M))$, so $P(n,\log(M))$ is also a polynomial in $n$ and $\log(M)$. 
\end{proof}

\subsection{Continuity of Capacity}\label{sec:capacity_continuity}
In this section, we prove the continuity of capacity. We prove the following theorem:

\begin{theorem}\label{capacity_continuity}
Suppose $A_1,\ldots,A_m$ and $B_1,\ldots,B_m$ are two tuples of $n \times n$ matrices such that the bit-complexity of elements of $A_i$'s is $b$ and $||A_i - B_i|| \le \delta$ for all $i$. Let $T_A$ be the operator defined by $A_1,\ldots,A_m$ and $T_B$ be the operator defined by  $B_1,\ldots,B_m$. Then there exists a polynomial $P(n, b, \log(m))$ s.t. if $\delta \le \expon(-P(n,b, \log(m)))$, then $\capac(T_A) > 0$ implies $\capac(T_B) > 0$. Furthermore, 
$$
\left(1 - \frac{P(n,b, \log(m))}{\log(1/\delta)^{1/3}} \right) \le \frac{\capac(T_A)}{\capac(T_B)} \le \left( 1 + \frac{P(n,b, \log(m))}{\log(1/\delta)^{1/3}} \right)
$$
\end{theorem}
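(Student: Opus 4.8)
The plan is to run Algorithm $G$ on $T_A$ and $T_B$ for the same number of steps $t$, use \eqref{eq:capacity-product} to read off tight estimates for $\capac(T_A)$ and $\capac(T_B)$ from the scaling matrices $S_j$, and then exploit the stability of these matrices under perturbation of the operator. After a left normalization — which multiplies each capacity by the explicit, well-conditioned factor $\Det(T_A(I))^{-1}$ resp.\ $\Det(T_B(I))^{-1}$ by Proposition~\ref{cap_mul} — we may assume $T_A(I)=T_B(I)=I$, and form the sequences $S^{(A)}_j$, $S^{(B)}_j$ of \eqref{eq:scaling-matrices} attached to $T_A$, $T_B$. For $\star\in\{A,B\}$, combining Proposition~\ref{cap_bound_1}, Lemma~\ref{cap_almostDS:quantum} (using that $\capac$ agrees with that of the dual and that one of the $r$-th normalizations is exact), the identity $\eps^{(\star)}_r:=\ds((T_\star)_r)=\tr[((S^{(\star)}_{r-2})^{-1}S^{(\star)}_{r}-I)^2]$ from \eqref{eqn:epsj_1}--\eqref{eqn:epsj_2}, and \eqref{eq:capacity-product}, gives
$$(1-\sqrt{n\,\eps^{(\star)}_r})^{\,n}\cdot\Det(S^{(\star)}_{r-2})\Det(S^{(\star)}_{r-1})\ \le\ \capac(T_\star)\ \le\ \Det(S^{(\star)}_{r-2})\Det(S^{(\star)}_{r-1}),$$
valid for every $r\le t$ for which $\eps^{(\star)}_r$ is below the threshold of Lemma~\ref{cap_almostDS:quantum}. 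Since $\capac(T_A)>0$, the potential argument from the proof of Theorem~\ref{gurvitsalg} (capacity grows by roughly $\expon(\eps^{(A)}_j)$ per step, is bounded above by $1$, and is bounded below by $\expon(-\poly(n,b))$ via Theorem~\ref{capacity_lb}) yields $\sum_{j\le t}\eps^{(A)}_j\le\poly(n,b)$; taking $r$ to be the argmin over $j\le t$ gives $\eps^{(A)}_r\le\poly(n,b)/t$.

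\textbf{Transporting the estimate from $T_A$ to $T_B$.}
The sequence $S^{(B)}_j$ is exactly the perturbed iteration $S^{(B)}_j=T_A((S^{(B)}_{j-1})^{-1})+\Delta_j$ with $\Delta_j=(T_B-T_A)\big((S^{(B)}_{j-1})^{-1}\big)$, so $\|\Delta_j\|\le\|T_B-T_A\|\cdot\|(S^{(B)}_{j-1})^{-1}\|\le\poly(M,n)\,\delta\cdot\|(S^{(B)}_{j-1})^{-1}\|$. For $\delta\le\expon(-P(n,b,\log m))$ this perturbation is small enough to apply Lemma~\ref{induct_u} (so $S^{(B)}_j$ stays sandwiched between $\sim\alpha^{-(j+1)}I$ and $\sim\alpha^{j+1}I$, $\alpha=(M^2n^2m)^{n-1}$, just like $S^{(A)}_j$ by Proposition~\ref{long_prop}) and then Lemma~\ref{spectral_diff}, yielding $\|S^{(A)}_j-S^{(B)}_j\|\le E$ with $E:=(2\alpha)^{O(t^2)}\poly(M,n)\,\delta$ for all $j\le t$. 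Because all the matrices involved are well-conditioned, this closeness upgrades (after absorbing $\alpha^{O(t)}$ factors into $E$) to $|\eps^{(A)}_r-\eps^{(B)}_r|\le E$ and $|\log\Det(S^{(A)}_k)-\log\Det(S^{(B)}_k)|\le E$ for $k\in\{r-2,r-1\}$. In particular, once $\delta$ is small, $\eps^{(B)}_r\le 2\eps^{(A)}_r$ is below the Lemma~\ref{cap_almostDS:quantum} threshold, so $(T_B)_r$ is approximately doubly stochastic; by Theorem~\ref{DS_full} it is rank non-decreasing, and then \eqref{eq:capacity-product} for $T_B$ forces $\capac(T_B)>0$, which is the first assertion. (Note also that $T_B(I),T_B^*(I)$ are nonsingular in this regime, since $T_A(I),T_A^*(I)$ are integer nonsingular matrices, hence bounded away from singular by Proposition~\ref{long_prop}.)

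\textbf{Combining the sandwiches and balancing the errors.}
Dividing the $\star=A$ and $\star=B$ sandwiches gives
$$(1-\sqrt{n\,\eps^{(A)}_r})^{\,n}\,\expon(-2E)\ \le\ \frac{\capac(T_A)}{\capac(T_B)}\ \le\ (1-\sqrt{n\,\eps^{(B)}_r})^{-n}\,\expon(2E).$$
There are two competing contributions: the \emph{approximation} error $n^{3/2}\sqrt{\eps^{(A)}_r}\le\poly(n,b)/\sqrt t$ coming from the factors $(1-\sqrt{n\eps})^{\pm n}$, which shrinks as $t$ grows, and the \emph{transport} error $E=\expon(O(t^2)\,\poly(n,b))\,\delta$, which blows up (doubly) in $t$ because $\alpha=\expon(\poly(n,b))$. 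Choosing $t$ of order $(\log(1/\delta))^{\Theta(1)}/\poly(n,b)$ — large enough that the approximation error is small, small enough that $t^2\log\alpha$ stays comparable to $\log(1/\delta)$ so that $E\le\delta^{1/2}$ is negligible — leaves the approximation term dominant and of size $\poly(n,b,\log m)/(\log(1/\delta))^{\Theta(1)}$. Carrying the exponents through Lemmas~\ref{induct_u} and \ref{spectral_diff} and through \eqref{eq:capacity-product} with the sharpest available constants (and, if needed, tightening the conditioning estimate in Lemma~\ref{spectral_diff}) then produces the stated $\poly(n,b,\log m)/(\log(1/\delta))^{1/3}$ bound on $|\capac(T_A)/\capac(T_B)-1|$.

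\textbf{Main obstacle.}
I expect the crux to be precisely this balancing. The per-step perturbation is only $O(\delta)$, but it is amplified by the (worsening) ill-conditioning of the scaling matrices, so the accumulated transport error is $\expon(O(t^2)\,\poly(n,b))\,\delta$; one must pick $t=t(\delta)$ so that $\sqrt{\eps^{(A)}_r}\sim1/\sqrt t$ is small while this amplification cannot overwhelm the $\expon(-\log(1/\delta))$ saving carried by $\delta$, and track the constants precisely enough to land on the $1/3$ exponent rather than a weaker power. Everything else is routine given the stability lemmas of Section~\ref{sec:bit_complexity_continuity} and the capacity estimates of Sections~\ref{mainsection}--\ref{sec:cap_properties}.
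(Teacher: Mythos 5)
Your plan correctly identifies the two error sources — an \emph{approximation} error from stopping Algorithm $G$ after finitely many steps, and a \emph{transport} error from replacing $T_A$ by $T_B$ — and the paper's proof does balance these two. But the route you take to transport the estimate is genuinely different from the paper's, and it loses a polynomial factor in the balancing that you cannot recover by "tightening Lemma~\ref{spectral_diff}."

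You run the iteration \eqref{eq:scaling-matrices} for \emph{both} operators and control $\|S^{(A)}_j - S^{(B)}_j\|$ by Lemma~\ref{spectral_diff}. That lemma, however, multiplies the accumulated discrepancy by the condition number $\sim\alpha^{2j}$ of $S_{j-1}$ at each step, so after $t$ steps the transport error is inevitably $(2\alpha)^{\Theta(t^2)}\delta$. The quadratic exponent is intrinsic to an iterated comparison — it is the product $\prod_{j\le t}\alpha^{O(j)}$ — not an artifact of sloppy constants. Balancing $\sqrt{\eps^{(A)}_r}\sim\sqrt{\poly(n,b)/t}$ against $(2\alpha)^{\Theta(t^2)}\delta$ forces $t\sim\sqrt{\log(1/\delta)/\log\alpha}$, which yields a final bound of order $\poly(n,b,\log m)/\bigl(\log(1/\delta)\bigr)^{1/4}$. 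This is \emph{weaker} than the theorem's $1/3$ exponent (the error $1/\log^{1/4}$ is larger than $1/\log^{1/3}$ for large $\log(1/\delta)$). So as written your argument does not establish the stated bound.

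The paper avoids this by never iterating on $T_B$. It runs Algorithm $G$ only on $T_A$ until it finds a round $r\le t$ with $\eps_r\le\eps$, sets the single matrix $C=S^{(A)}_{r-2}{}^{-1}$, and then invokes Lemma~\ref{approximate_fixedpt} — the approximate-fixed-point sandwich from Section~\ref{subsec:charac_opt} — to bound $\capac(T_A)$ by $\Det(T_A(C))/\Det(C)$ up to $(1-\sqrt{n\eps})^n$. The transport step is then a \emph{one-shot} comparison: one shows that the same $C$ is also an $\eps'$-approximate fixed point of $X\mapsto T_B(T_B^*(X)^{-1})^{-1}$, and that $\Det(T_A(C))\approx\Det(T_B(C))$. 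The amplification there involves only $\|C\|,\|C^{-1}\|,\|T_A^*(C)^{\pm1}\|$, all of magnitude $\expon(O(r\cdot n\log(nMm)))=\expon(O(t)\,\poly(n,b))$ by Proposition~\ref{long_prop}. With a \emph{linear-in-$t$} transport error $\expon(O(t)\poly(n,b))\,\delta$, one may take $t\sim\log(1/\delta)/\poly(n,b)$, making the transport term $\delta^{\Omega(1)}$ negligible and leaving the approximation error $n^{3/2}\sqrt{\eps}\sim\poly(n,b)/\sqrt{\log(1/\delta)}$, comfortably within the claimed $1/3$. Switching to this one-shot comparison (run the iteration on $T_A$ only, extract $C$, then estimate $\|T_A(T_A^*(C)^{-1})-T_B(T_B^*(C)^{-1})\|$ directly) is not an optimization of your scheme but a structurally different argument, and it is the step you are missing.
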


The fact that capacity is continuous is already mentioned in \cite{gurvits2004} and can be proved by other methods. But here we provide explicit bounds on the continuity parameters. Recall that $\text{Fixed}(T,\eps)$ (defined in subsection \ref{subsec:charac_opt}) is the set of hermitian positive-definite matrices $C$ which are $\eps$-approximate fixed points of the operator $X \rightarrow T(T^*(X)^{-1})^{-1}$ i.e. satisfy the following condition:
$$
\tr \left[ \Bigg( C \cdot T^* \left(T(C)^{-1} \right)  - I \Bigg)^2\right] \le \eps
$$
The main insight is that by the analysis of Algorithm $G$, since $\capac(T_A) > 0$, there exists a $C \in \text{Fixed}(T_A,\eps)$ with low spectral norm. Since $T_B$ is close to $T_A$, $C \in \text{Fixed}(T_B,\eps')$ for $\eps'$ close to $\eps$ and then the proof is finished by applying Lemma \ref{approximate_fixedpt}. 

\begin{proof}{(Of Theorem \ref{capacity_continuity})}
The proof of Theorem \ref{gurvitsalg} can be modified to prove the following: there exists a polynomial $Q(n,b)$ s.t. for all $\eps > 0$, if we run $t = Q(n,b)/\eps$ iterations of Algorithm $G$ starting from $T_0 = T_A$ satisfying $\capac(T_A) > 0$, then for some $1 \le j \le t$, $\eps_j = \ds(T_j) \le \eps$. Essentially, at each step capacity increases by roughly $\expon(\Omega(\eps))$ if $\eps_j > \eps$, capacity is lower bounded by $\expon(-Q(n,b))$ initially and upper bounded by $1$ always. 
\\
\\
\noindent By equations (\ref{eqn:epsj_1}) and (\ref{eqn:epsj_2}), we know that 
$$
\eps_j = \tr \left[ \left( S_{j-2}^{-1} S_j - I\right)^2\right]
$$
where $\{S_i\}$ is the sequence of matrices given by $S_0 = T_A^*(I)$, and
\[ S_i = \begin{cases} 
      T_A(S_{i-1}^{-1}) & \text{$i$ odd}, i \ge 1 \\
      T_A^*(S_{i-1}^{-1}) & \text{$i$ even}, i \ge 1 \\
   \end{cases}
\]
Suppose $1 \le r \le t$ be such that $\eps_r \le \eps$. Wlog assume that $r$ is odd. Then $S_r = T_A \left( T_A^* \left(S_{r-2}^{-1} \right)^{-1}\right)$. $\eps_r \le \eps$ implies that $S_{r-2}^{-1}$ is an $\eps$-approximate fixed point of the operator $X \rightarrow T_A \left( T_A^*(X)^{-1} \right)^{-1}$. Hence by Lemma \ref{approximate_fixedpt}, 
\begin{align}
(1-\sqrt{n\eps})^n \cdot \frac{\Det \left(T_A \left( S_{r-2}^{-1}\right)\right)}{\Det \left( S_{r-2}^{-1} \right)} \le \capac(T_A) \le \frac{\Det \left(T_A \left( S_{r-2}^{-1}\right)\right)}{\Det \left( S_{r-2}^{-1} \right)} \label{eqn:cap_TA}
\end{align}
We will prove now that $S_{r-2}^{-1}$ is also an $\eps'$-approximate fixed point of the operator $X \rightarrow T_B \left( T_B^*(X)^{-1} \right)^{-1}$ for an appropriate choice of $\eps'$. Let us denote $S_{r-2}^{-1}$ by $C$. By an application of Proposition \ref{long_prop}, it follows that the lowest and highest eigenvalues of $C$, $l(C)$ and $u(C)$ satisfy
\begin{align}
\frac{1}{l(C)}, u(C) \le \expon(O(n \cdot b \cdot r \cdot \log(nm))) \le \expon(Q_1(n,b, \log(m))/\eps) \label{eqn:newankit_1}
\end{align}
where $Q_1(n,b, \log(m))$ is another polynomial s.t. $Q_1(n,b, \log(m)) = O(Q(n,b) \cdot n \cdot b \cdot \log(nm))$. Let $D$ be an arbitrary matrix. Then
\begin{align}
||T_A(D) - T_B(D)|| &= ||\sum_{i=1}^m A_i D A_i^{\dagger} - \sum_{i=1}^m B_i D B_i^{\dagger}|| \nonumber \\
&\le \sum_{i=1}^m ||A_i D A_i^{\dagger} - B_i D B_i^{\dagger}|| \nonumber\\
&\le \sum_{i=1}^m \left( ||A_i D A_i^{\dagger} - A_i D B_i^{\dagger}|| + ||A_i D B_i^{\dagger} - B_i D B_i^{\dagger}||\right) \nonumber\\
&\le \sum_{i=1}^m \left( ||A_i|| \cdot ||D|| \cdot ||A_i^{\dagger} - B_i^{\dagger}|| + ||A_i - B_i|| \cdot ||D|| \cdot ||B_i^{\dagger}||\right) \nonumber\\
&\le 2m \cdot \left( n \cdot \expon(b)  + \delta \right)\cdot ||D|| \cdot \delta  \nonumber \\
&= \expon(Q_2(n,b, \log(m)) \cdot ||D|| \cdot \delta \label{eqn:closeTA_TB_1}
\end{align}
The first two inequalities are just the triangle inequality. The third inequality follows from submultiplicativity of the spectral norm. The fourth inequality follows from the fact that 
$$
||A_i|| \le n \cdot ||A_i||_{\infty} \le n \cdot \expon(b)
$$ and that 
$$
||B_i^{\dagger}|| = ||B_i|| \le ||A_i|| + ||A_i - B_i|| \le ||A_i|| +\delta
$$
Similarly, we have that
\begin{align}
||T_A^*(D) - T_B^*(D)|| \le  \expon(Q_2(n,b, \log(m)) \cdot ||D|| \cdot \delta \label{eqn:closeTA_TB_2}
\end{align}
We will now upper bound $||T_A\left( T_A^*(C)^{-1}\right) - T_B\left( T_B^*(C)^{-1}\right)||$:
\begin{align}
&||T_A\left( T_A^*(C)^{-1}\right) - T_B\left( T_B^*(C)^{-1}\right)|| \nonumber \\ &\le ||T_A\left( T_A^*(C)^{-1}\right) - T_B\left( T_A^*(C)^{-1}\right)|| + ||T_B\left( T_A^*(C)^{-1}\right) - T_B\left( T_B^*(C)^{-1}\right)|| \nonumber \\
&\le ||T_A\left( T_A^*(C)^{-1}\right) - T_B\left( T_A^*(C)^{-1}\right)|| + ||T_A\left( T_A^*(C)^{-1} - T_B^*(C)^{-1}\right) - T_B\left( T_A^*(C)^{-1} - T_B^*(C)^{-1}\right)|| \nonumber \\ 
&+ ||T_A\left( T_A^*(C)^{-1} - T_B^*(C)^{-1}\right)|| \label{eqn:newankit_7}
\end{align}
Equation (\ref{eqn:newankit_1}) along with Proposition \ref{long_prop} implies that the lowest and highest eigenvalues of $T_A^*(C)$ satisfy the following:
\begin{align}
\frac{1}{l\left(T_A^*(C)\right)}, u\left(T_A^*(C) \right) \le \expon(Q_4(n,b, \log(m))/\eps) \label{eqn:newankit_2}
\end{align}
Now applying equation (\ref{eqn:closeTA_TB_1}) with $D=T_A^*(C)^{-1}$ along with equation (\ref{eqn:newankit_2}) gives us the following:
\begin{align}
||T_A\left( T_A^*(C)^{-1}\right) - T_B\left( T_A^*(C)^{-1}\right)|| \le \expon(Q_5(n,b, \log(m))/\eps) \cdot \delta \label{eqn:newankit_3}
\end{align}
Now we will upper bound $||T_A^*(C)^{-1} - T_B^*(C)^{-1}||$. 
\begin{align}
||T_A^*(C)^{-1} - T_B^*(C)^{-1}|| &= ||T_A^*(C)^{-1} T_B^*(C)^{-1} \left( T_A^*(C) - T_B^*(C)\right)|| \nonumber \\
&\le ||T_A^*(C)^{-1}|| \cdot ||T_B^*(C)^{-1}|| \cdot ||T_A^*(C) - T_B^*(C)|| \nonumber \\
&\le \expon(Q_6(n,b, \log(m))/\eps) \cdot \delta \label{eqn:newankit_4}
\end{align}
The last inequality follows from equations (\ref{eqn:newankit_2}) and (\ref{eqn:closeTA_TB_2}). Note that we need $\delta \le \expon(-P(n,b, \log(m)))$ for a sufficiently large polynomial $P$ here to upper bound $||T_B^*(C)^{-1}||$ via equations  (\ref{eqn:newankit_2}) and (\ref{eqn:closeTA_TB_2}). Now applying equation (\ref{eqn:closeTA_TB_1}) with $D =  T_A^*(C)^{-1} - T_B^*(C)^{-1}$ along with equation (\ref{eqn:newankit_4}) gives us that
\begin{align}
||T_A\left( T_A^*(C)^{-1} - T_B^*(C)^{-1}\right) - T_B\left( T_A^*(C)^{-1} - T_B^*(C)^{-1}\right)|| \le  \expon(Q_7(n,b, \log(m))/\eps) \cdot \delta^2 \label{eqn:newankit_5}
\end{align}
We are left to upper bound $||T_A\left( T_A^*(C)^{-1} - T_B^*(C)^{-1}\right)||$. This follows from Proposition \ref{long_prop} and equation (\ref{eqn:newankit_4}).
\begin{align}
||T_A\left( T_A^*(C)^{-1} - T_B^*(C)^{-1}\right)|| \le  \expon(Q_8(n,b, \log(m))/\eps) \cdot \delta \label{eqn:newankit_6}
\end{align}
Combining equations (\ref{eqn:newankit_7}), (\ref{eqn:newankit_3}), (\ref{eqn:newankit_5}) and (\ref{eqn:newankit_6}), we get the following
\begin{align}
||T_A\left( T_A^*(C)^{-1}\right) - T_B\left( T_B^*(C)^{-1}\right)|| \le \expon(Q_9(n,b, \log(m))/\eps) \cdot \delta \label{eqn:newankit_8}
\end{align}
Let us denote the matrix $D_1 = C \cdot T_A\left( T_A^*(C)^{-1}\right)$ and $D_2 = C \cdot  T_B\left( T_B^*(C)^{-1}\right)$. $D_1$ determines whether $C$ is an approximate-fixed point of $T_A$ and $D_2$ determines whether $C$ is an approximate-fixed point of $T_B$. 
\begin{align}
||D_1 - D_2|| &\le ||C|| \cdot ||T_A\left( T_A^*(C)^{-1}\right) - T_B\left( T_B^*(C)^{-1}\right)|| \nonumber \\
&\le \expon(Q_{10}(n,b, \log(m))/\eps) \cdot \delta \label{eqn:newankit_9}
\end{align}
The second inequality follows from equations (\ref{eqn:newankit_1}) and (\ref{eqn:newankit_8}). We also have the following elementary inequality:
\begin{align}
\tr\left[ (D_1-I)^2\right] - \tr\left[ (D_2-I)^2\right] \le (||D_1|| + ||D_2|| + 2n) \cdot ||D_1 - D_2||
\end{align}
The above inequality implies that $C$ is an $\eps'$-approximate fixed point of $T_B$ for 
$$
\eps' = \eps + \expon(Q_{11}(n,b,\log(m))/\eps) \cdot \delta
$$
We can now choose $\eps = \frac{1}{2(n+1)}$. Then as long as $\delta \le \frac{\expon(-Q_{11}(n,b,\log(m))/\eps)}{2(n+1)}$, then $C$ is a $1/(n+1)$-approximate fixed point of the operator $X \rightarrow T_B \left( T_B^*(X)^{-1} \right)^{-1}$ and by Lemma \ref{approximate_fixedpt}, $T_B$ is rank non-decreasing and hence $\capac(T_B) > 0$. This proves first part of the theorem. 
\\
\\
\noindent For the second part of the theorem, since $C$ is an $\eps'$-approximate fixed point of $T_B$ for 
$$
\eps' = \eps + \expon(Q_{11}(n,b,\log(m))/\eps) \cdot \delta
$$
by Lemma \ref{approximate_fixedpt}, we have that 
\begin{align}
(1-\sqrt{n\eps'})^n \cdot \frac{\Det \left(T_B \left( C\right)\right)}{\Det \left( C\right)} \le \capac(T_B) \le \frac{\Det \left(T_B \left( C\right)\right)}{\Det \left( C \right)} \label{eqn:cap_TB} 
\end{align}
Also note that
\begin{align*}
||T_A(C) \cdot T_B(C)^{-1} - I|| &= ||\left(T_A(C) - T_B(C) \right) \cdot T_B(C)^{-1}|| \\
&\le ||\left(T_A(C) - T_B(C) \right)|| \cdot ||T_B(C)^{-1}|| \\
&\le \expon(Q_{12}(n,b,\log(m))/\eps) \cdot \delta
\end{align*}
by equation (\ref{eqn:closeTA_TB_1}) and Proposition \ref{long_prop}. Hence 
\begin{align}
\left( 1- \expon(Q_{12}(n,b,\log(m))/\eps) \cdot \delta\right)^n \le \frac{\Det(T_A(C))}{\Det(T_B(C))} \le \left( 1+ \expon(Q_{12}(n,b,\log(m))/\eps) \cdot \delta\right)^n \label{eqn:dets_close}
\end{align}
Now, combining equations (\ref{eqn:cap_TA}) and (\ref{eqn:cap_TA}), we get that 
$$
(1-\sqrt{n\eps})^n \cdot  \frac{\Det(T_A(C))}{\Det(T_B(C))} \le \frac{\capac(T_A)}{\capac(T_B)} \le \frac{1}{(1-\sqrt{n\eps'})^n} \cdot \frac{\Det(T_A(C))}{\Det(T_B(C))}
$$
Combining this with equation (\ref{eqn:dets_close}) gives us
$$
(1-\sqrt{n\eps})^n \cdot \left( 1- \expon(Q_{12}(n,b,\log(m))/\eps) \cdot \delta\right)^n \le \frac{\capac(T_A)}{\capac(T_B)} \le \frac{1}{(1-\sqrt{n\eps'})^n} \cdot \left( 1+ \expon(Q_{12}(n,b,\log(m))/\eps) \cdot \delta\right)^n
$$
Now choose $\eps = 2 \cdot \text{max}\left\{ Q_{11}(n,b,\log(m)),  Q_{12}(n,b,\log(m))\right\}/\log(1/\delta)$. This ensures that $\eps' \le 2 \eps$ and elementary calculations can then finish the proof of the theorem. 
\end{proof}

\section{Computing the Capacity of a Quantum Operator}\label{compute-capacity}

In this section, we show how algorithm $G$ can be used to compute an approximation to the 
capacity of any quantum operator. For simplicity of exposition, we will present in this section
an analysis of convergence of algorithm $G$ without truncation. Afterwards, in subsection \ref{sec:comp_cap_truncation}, we show how to
adapt the analysis of algorithm $G$ to handle the truncation. This corresponds to the analysis
of algorithm $G'$ in the previous section. 

We begin with the following lemma, which is an adaptation of Lemma 3.10 from~\cite{LSW}.

\begin{lemma}\label{lem:prod-upper-bd}
	Let $x_1, \ldots, x_n$ be positive real numbers such that $\dst\sum_{i=1}^n x_i = n$
	and $\dst\sum_{i=1}^n (x_i-1)^2 = \alpha$. Then
	$$ \prod_{i=1}^n x_i \le 
	\begin{cases} 
	\exp\left(- \alpha/6 \right), \ \text{ if } \alpha \leq 1,  \\ 
	\exp\left(\frac{-1}{6} \right), \ \text{ otherwise.}
	\end{cases}$$
\end{lemma}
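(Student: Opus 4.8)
The plan is to pass to the shifted variables $y_i = x_i - 1$, so that the hypotheses become $\sum_i y_i = 0$ and $\sum_i y_i^2 = \alpha$, with each $y_i > -1$, and the goal is to upper bound $\sum_i \log(1+y_i)$. The whole argument rests on two ingredients: a sharp one-variable estimate, and a monotonicity/rescaling trick that handles the large-$\alpha$ regime.

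First I would establish the pointwise inequality
$$ \log(1+y) \le y - \frac{y^2}{6} \qquad \text{for all } y \in [-1,1]. $$
To prove it, set $g(y) = y - y^2/6 - \log(1+y)$. Then $g(0) = 0$ and $g'(0) = 1 - 0 - 1 = 0$, while $g''(y) = (1+y)^{-2} - 1/3$ is nonnegative exactly on $[-1,\sqrt3-1]$. Hence $g$ is convex on $[-1,\sqrt3-1]$ with an interior critical point of value $0$ at $y=0$, so $g \ge 0$ there; in particular $g(\sqrt3-1)\ge 0$. On $[\sqrt3-1,1]$ the function $g$ is concave, and it is nonnegative at both endpoints ($g(\sqrt3-1)\ge 0$ and $g(1) = 5/6 - \log 2 > 0$), hence $g \ge 0$ on all of $[-1,1]$. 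This pins down the constant $1/6$.

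Next, the case $\alpha \le 1$: here $y_i^2 \le \sum_j y_j^2 = \alpha \le 1$, so $y_i \in [-1,1]$ for every $i$, and applying the pointwise bound and summing, using $\sum_i y_i = 0$ and $\sum_i y_i^2 = \alpha$,
$$ \sum_i \log(1+y_i) \le \sum_i \Bigl(y_i - \tfrac{1}{6}y_i^2\Bigr) = -\frac{\alpha}{6}, $$
which gives $\prod_i x_i \le \exp(-\alpha/6)$, as desired.

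Finally, the case $\alpha > 1$, which I would reduce to the previous one. Define $h(t) = \sum_i \log(1+ty_i)$ for $t \in [0,1]$; this is well defined since $1+ty_i \ge \min(1, x_i) > 0$ throughout that range (if $y_i \ge 0$ then $1+ty_i \ge 1$; if $y_i < 0$ and $t \le 1$ then $1+ty_i \ge 1 + y_i = x_i > 0$). Then $h'(t) = \sum_i \frac{y_i}{1+ty_i}$ with $h'(0) = \sum_i y_i = 0$, and $h''(t) = -\sum_i \frac{y_i^2}{(1+ty_i)^2} \le 0$, so $h'$ is non-increasing and hence $h'(t)\le 0$ on $[0,1]$, i.e. $h$ is non-increasing. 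Taking $t_0 = 1/\sqrt\alpha \in (0,1)$ and $y_i' = y_i/\sqrt\alpha$, we obtain $\sum_i \log(1+y_i) = h(1) \le h(t_0) = \sum_i \log(1+y_i')$, and the vector $(y_i')$ satisfies $\sum_i y_i' = 0$ and $\sum_i (y_i')^2 = 1$. Applying the $\alpha \le 1$ case with parameter $1$ gives $\prod_i (1+y_i') \le \exp(-1/6)$, hence $\prod_i x_i \le \exp(-1/6)$. The only genuinely non-routine steps are spotting the quadratic bound with its convex/concave split and the monotonicity of $t \mapsto \sum_i \log(1+ty_i)$; everything else is bookkeeping.
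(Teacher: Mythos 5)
Your proof is correct. The overall architecture matches the paper's: a one-variable upper bound for $\log(1+y)$ in the small-$\alpha$ regime, and a monotonicity/rescaling argument for $\alpha>1$. Case 2 is essentially the paper's argument in a slightly different dress: the paper bounds $f'(\lambda) = f(\lambda)\sum y_i/(1+\lambda y_i) \le f(\lambda)\sum y_i = 0$ directly (using $y/(1+\lambda y)\le y$ pointwise), whereas you deduce $h'\le 0$ from $h'(0)=0$ together with $h''\le 0$; these are two ways of saying the same thing. The genuine divergence is in Case 1: the paper uses the \emph{cubic} Taylor-type bound $\log(1+t)\le t - t^2/2 + t^3/3$, which is valid for all $t>-1$ and so needs no restriction on the range of the $y_i$; it then needs the $\ell^3$-versus-$\ell^2$ comparison $\sum y_i^3 \le (\sum y_i^2)^{3/2}$ and the observation $\alpha^{3/2}\le\alpha$ to finish. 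You instead use the \emph{quadratic} bound $\log(1+y)\le y - y^2/6$, which holds only on a bounded interval, and you correctly observe that $\alpha\le 1$ forces every $|y_i|\le 1$ so the bound applies. What you buy is a shorter summation step (no cubic term, no power-mean inequality); what you pay for is a more delicate one-variable verification — the convex/concave split of $g(y)=y-y^2/6-\log(1+y)$ about $y=\sqrt3-1$ — and the need to track the range restriction. Both routes are sound; yours is a legitimately different proof of the pointwise ingredient.
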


\begin{proof}
	We have two cases to analyze:
	
	\textbf{Case 1:} $\alpha \leq 1$. 
	
	In this case, by using the inequality 
	$1 + t \leq \dst\exp\left(t - \frac{t^2}{2} + \frac{t^3}{3}\right)$, 
	which holds for all $t \in \R$, we have:
	\begin{align*}
		\prod_{i=1}^n x_i = \prod_{i=1}^n [1 + (x_i -1)] &\leq 
		\prod_{i=1}^n \exp\left((x_i-1) - \frac{(x_i-1)^2}{2} + \frac{(x_i-1)^3}{3}\right) \\
		&= \exp\left(\sum_{i=1}^n(x_i-1) - \frac{1}{2} \cdot \sum_{i=1}^n (x_i-1)^2 + 
		\frac{1}{3} \cdot \sum_{i=1}^n (x_i-1)^3 \right) \\
		&\leq \exp\left(- \frac{1}{2} \cdot \alpha + 
		\frac{1}{3} \cdot \alpha^{3/2} \right) \leq \exp\left(- \frac{\alpha}{6} \right)
	\end{align*}	 
	Where in the last inequalities we used the fact that 
	$\dst\sum_{i=1}^n (x_i-1)^3 \leq \left( \dst\sum_{i=1}^n (x_i-1)^2 \right)^{3/2}$
	and $\alpha^{3/2} \le \alpha$.
	
	\textbf{Case 2:} $\alpha > 1$. 		

	Consider the function $f(\lambda) = \prod_{i=1}^n (1+\lambda (x_i-1))$. We will prove that 
	$f$ is a decreasing function of $\lambda$ when $\lambda \in [0,1]$. In that case 
	\begin{align*}
	\prod_{i=1}^n x_i = \prod_{i=1}^n [1 + (x_i -1)] = f(1) \le f\left( \frac{1}{\sqrt{\alpha}} \right) 
	= \prod_{i=1}^n \left[1 + \frac{x_i -1}{\sqrt{\alpha}} \right] \le \exp\left(- \frac{1}{6} \right)
	\end{align*}
	where the last inequality follows from Case 1. Now let us prove that $f$ is decreasing.
	$$
	f'(\lambda) = f(\lambda) \cdot \left( \sum_{j=1}^n \frac{x_i-1}{1+\lambda (x_i-1)} \right) \le f(\lambda) \cdot \left( \sum_{j=1}^n (x_i-1)\right) = 0
	$$
	This completes the proof.

\end{proof}

As a corollary of Lemma~\ref{lem:prod-upper-bd}, we obtain the following 
quantitative progress measure towards computing capacity:

\begin{lemma}[Quantitative Progress]\label{lem:quant-progress}
	Let $T$ be a right (left) normalized quantum operator such that $\ds(T) = \alpha$.
	Additionally, let $\widetilde{T}$ be the left (right) normalization of operator $T$. Then,
	$$ \capac(\widetilde{T}) \geq
	\begin{cases} 
	\capac(T) \cdot \exp\left(\alpha/6 \right), 
	\ \text{ if } \alpha \leq 1,  \\ 
	\capac(T) \cdot \exp\left(\frac{1}{6} \right), \ \text{ otherwise.}
	\end{cases} $$
\end{lemma}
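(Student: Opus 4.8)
The plan is to reduce the statement to Proposition~\ref{cap_mul} together with Lemma~\ref{lem:prod-upper-bd}. By symmetry (swapping the roles of $T$ and $T^*$) it suffices to treat the case where $T$ is right normalized, so that $T^*(I) = I$ and therefore $\alpha = \ds(T) = \tr\left[\left(T(I)-I\right)^2\right]$. The left normalization is $\widetilde T(X) = T(I)^{-1/2}\, T(X)\, T(I)^{-1/2}$, which has Kraus operators $T(I)^{-1/2}A_1,\dots,T(I)^{-1/2}A_m$ (here $T(I)$ is nonsingular, as guaranteed in Step~2 of Algorithm~$G$, so $T(I)^{-1/2}$ is well defined). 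Hence Proposition~\ref{cap_mul}, applied with $B = T(I)^{-1/2}$ and $C = I$, gives
$$
\capac(\widetilde T) = \left|\Det\!\left(T(I)^{-1/2}\right)\right|^2 \cdot \capac(T) = \Det\!\left(T(I)\right)^{-1} \cdot \capac(T).
$$

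So it remains to show $\Det(T(I)) \le \exp(-\alpha/6)$ when $\alpha \le 1$, and $\Det(T(I)) \le \exp(-1/6)$ otherwise. For this I would observe that $T(I) = \sum_i A_i A_i^\dagger$ is positive definite with $\tr[T(I)] = \sum_i \tr[A_iA_i^\dagger] = \sum_i \tr[A_i^\dagger A_i] = \tr[T^*(I)] = \tr[I] = n$. Letting $x_1,\dots,x_n > 0$ be the eigenvalues of $T(I)$, we then have $\sum_{i=1}^n x_i = n$ and $\sum_{i=1}^n (x_i-1)^2 = \tr\left[\left(T(I)-I\right)^2\right] = \alpha$, so Lemma~\ref{lem:prod-upper-bd} applies and yields $\Det(T(I)) = \prod_{i=1}^n x_i \le \exp(-\alpha/6)$ if $\alpha\le 1$ and $\le \exp(-1/6)$ otherwise. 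Plugging this into the displayed identity for $\capac(\widetilde T)$ finishes the proof.

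There is no real obstacle here; the only things to be careful about are the bookkeeping identity $\tr[T(I)] = \tr[T^*(I)] = n$ (which uses the right-normalization hypothesis), the fact that $T(I)$ is nonsingular so that the left normalization and the determinant inversion in Proposition~\ref{cap_mul} are legitimate, and the case split on $\alpha$, which is inherited verbatim from Lemma~\ref{lem:prod-upper-bd}.
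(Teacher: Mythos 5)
Your proposal is correct and follows essentially the same route as the paper: apply Proposition~\ref{cap_mul} to obtain $\capac(\widetilde T) = \Det(T(I))^{-1}\capac(T)$, pass to the eigenvalues of $T(I)$ using $\tr[T(I)]=\tr[T^*(I)]=n$, and invoke Lemma~\ref{lem:prod-upper-bd}. The bookkeeping you flag (nonsingularity of $T(I)$, the trace identity, the case split on $\alpha$) is exactly the content the paper relies on.
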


\begin{proof}
	Suppose $T$ is right normalized and $\widetilde{T}$ is the left normalization of $T$. 
	Proposition~\ref{cap_mul} tells us that
	$$ \capac(\widetilde{T}) = \det(T(I))^{-1} \cdot \capac(T). $$
	Let $\lambda_1, \ldots, \lambda_n$ be the eigenvalues of $T(I)$. 
	As $n = \tr(T(I)) = \dst\sum_{i=1}^n \lambda_i$ and 
	$\alpha = \ds(T) = \tr[(T(I) - I)^2] = \dst\sum_{i=1}^n (\lambda_i -1)^2$, the conditions
	of Lemma~\ref{lem:prod-upper-bd} apply and we have 
	$$
	\det(T(I)) = \prod_{i=1}^n \lambda_i \leq
	\begin{cases} 
	\exp\left(- \alpha/6 \right), \ \text{ if } \alpha \leq 1,  \\ 
	\exp\left(\frac{-1}{6} \right), \ \text{ otherwise.}
	\end{cases}
	$$
	This implies the desired lower bounds on $\capac(\widetilde{T})$. Since the case
	where $T$ is left normalized is analogous, we omit the argument.
\end{proof}

We now state a slight modification of Algorithm G, with a view towards computing 
the capacity of a quantum operator.

\begin{Algorithm}
\textbf{Input}: Quantum operator $T$ given in terms of Kraus operators 
$A_1, \ldots, A_m \in \mathbb{Z}^{n \times n}$ and approximation parameter $\epsilon > 0$. 
Each entry of $A_i$ has absolute value at most $M$. \\
\textbf{Output}: $\capac(T)$ with multiplicative error of $(1\pm\epsilon)$. 

\begin{enumerate}
\item Check if $T(I)$ and $T^*(I)$ are singular. If any one of them is singular, then output 
$\capac(T) = 0$, otherwise proceed to step 2. 
\item Alternately perform right and left normalizations on $T = T_0$ for $t$ steps. 
Let $T_j$ be the operator after $j$ steps. Also let $\eps_j = \ds(T_j)$. 
Go to step 3.
\item If $\eps_j  \le \dfrac{\eps^2}{n^3}$ for any $0 \leq j \leq t$, go to step 4. 
Otherwise, output $\capac(T) = 0$. 
\item For the smallest $0 \leq j \leq t$ such that $\eps_j  \le \dfrac{\eps^2}{n^3}$,
output 
$$ \capac(T) = \dst\prod_{i=0}^{j-1} \det(R_i) ,
\text{ where } R_i = \begin{cases} T_i(I), \text{ if $i$ is odd}, \\ T_i^*(I), \text{ otherwise.} \end{cases}$$
\end{enumerate}
\caption{Algorithm G, computing capacity}
\label{Gurvits_alg_capacity}
\end{Algorithm}

\begin{theorem}\label{thm:computing-capacity} 
Let $T$ be a completely positive operator, whose Kraus 
operators are given by $n \times n$ integer matrices $A_1, \ldots, A_m$, 
such that each entry of $A_i$ has absolute value at most $M$.  
Algorithm $G$ when applied for 
$t = \dfrac{n^3}{\eps^2} \cdot \left(1 + 10n^2\log(Mn) \right)$ 
steps approximates $\capac(T)$ within a multiplicative factor of $1 \pm \epsilon$. 
\end{theorem}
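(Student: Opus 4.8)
The plan is to run the same three‑part potential argument used for Theorem~\ref{gurvitsalg}, but with the finer stopping threshold $\eps^2/n^3$ in place of $1/6n$, and to exploit the closed form for how capacity changes under scaling. Write $T=T_0,T_1,T_2,\dots$ for the successively right/left–normalized operators produced in Step~2, and let $d_i$ denote the determinantal factors from Proposition~\ref{cap_mul}, so that $\capac(T_j)=\capac(T)\prod_{i=1}^j d_i$ (this is Proposition~\ref{cap_mul} iterated, exactly as in~\eqref{eq:capacity-product}). Matching the parity conventions one checks $d_{i+1}^{-1}=\det(R_i)$ for every $i\ge 0$, hence the number output in Step~4 is precisely $\prod_{i=0}^{j-1}\det(R_i)=\prod_{k=1}^{j}d_k^{-1}=\capac(T)/\capac(T_j)$. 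So the theorem reduces to two claims: (a) if $\capac(T)>0$ the algorithm reaches Step~4, i.e.\ $\eps_j=\ds(T_j)\le\eps^2/n^3$ for some $j\le t$; and (b) whenever $\eps_j\le\eps^2/n^3$ one has $\capac(T_j)\in[1-\eps,\,1]$, so that $\capac(T)/\capac(T_j)$ is a $(1\pm\eps)$‑approximation of $\capac(T)$ (up to a harmless adjustment of the constant in the threshold).

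Claim (b) is quick. For $j\ge 1$ the operator $T_j$ is a normalization, so $\capac(T_j)\le 1$ by Proposition~\ref{cap_bound_1}; and since $T_j$ (for $j$ odd) or $T_j^*$ (for $j$ even) is trace preserving with $\tr[(T_j(I)-I)^2]$ resp.\ $\tr[(T_j^*(I)-I)^2]$ equal to $\eps_j\le\eps^2/n^3$, Lemma~\ref{cap_almostDS:quantum} applied to $T_j$ resp.\ $T_j^*$ (together with the symmetry $\capac(T_j)=\capac(T_j^*)$) gives $\capac(T_j)\ge(1-\sqrt{n\eps_j})^n\ge(1-\eps/n)^n\ge 1-\eps$. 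The case $j=0$ is even simpler: $\ds(T_0)<1$ forces the integer matrices $T_0(I)-I$ and $T_0^*(I)-I$ to vanish, so $T_0$ is doubly stochastic, $\capac(T)=1$, and the empty product is exact.

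Claim (a) is the convergence estimate. If $\capac(T)>0$ then $T$, hence each $T_j$, is rank non‑decreasing, so Step~1 is passed (rank non‑decreasing forces $T(I),T^*(I)$ nonsingular) and, exactly as in the proof of Theorem~\ref{gurvitsalg}, all $T_j(I),T_j^*(I)$ remain nonsingular. Suppose for contradiction $\eps_j>\eps^2/n^3$ for every $0\le j\le t$. Since $\eps^2/n^3\le 1$, Lemma~\ref{lem:quant-progress} yields $\capac(T_{j+1})\ge\capac(T_j)\cdot\exp(\eps^2/6n^3)$ for $1\le j\le t-1$, so $\capac(T_t)\ge\capac(T_1)\cdot\exp\!\big((t-1)\eps^2/6n^3\big)$. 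Now $T_1$ is the right normalization of the integral operator $T$, so Theorem~\ref{capacity_lb} gives $\capac(T_1)\ge\expon(-\poly(n,\log M))$, while $\capac(T_t)\le 1$ by Proposition~\ref{cap_bound_1}; plugging the capacity lower bound into this inequality and solving for $t$ shows the stated value of $t$ forces a contradiction, so some $\eps_j\le\eps^2/n^3$. For the ``output $0$'' branches: if $\capac(T)=0$ then $T$ is rank‑decreasing, so by Theorem~\ref{DS_full} every normalized $T_j$ has $\ds(T_j)>1/(n+1)>\eps^2/n^3$, and $\ds(T_0)<1$ is impossible by the integrality argument above, so the algorithm correctly outputs $0$; conversely Step~1 outputs $0$ only when $T$ is rank‑decreasing, and if Step~3 outputs $0$ then no $\eps_j$ is small, which by the above forces $\capac(T)=0$.

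The only real work is bookkeeping: fixing the parity conventions so that $\prod_{i=0}^{j-1}\det(R_i)=\capac(T)/\capac(T_j)$, and verifying that the stated $t=\tfrac{n^3}{\eps^2}\big(1+10n^2\log(Mn)\big)$ dominates the $O(n^4\log(Mn)/\eps^2)$ bound coming from the capacity lower bound. All the analytic substance is already contained in Lemmas~\ref{lem:quant-progress} and~\ref{cap_almostDS:quantum} and in Theorem~\ref{capacity_lb}; the truncated variant is then handled exactly as in Subsection~\ref{bit_complexity} and is deferred to Subsection~\ref{sec:comp_cap_truncation}.
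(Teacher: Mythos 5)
Your proof is correct and follows exactly the paper's argument: reduce to the closed-form identity $\prod_{i=0}^{j-1}\det(R_i)=\capac(T)/\capac(T_j)$, then show (a) the quantitative progress Lemma~\ref{lem:quant-progress} plus the capacity lower bound of Theorem~\ref{capacity_lb} forces $\eps_j\le\eps^2/n^3$ within $t$ steps, and (b) Lemma~\ref{cap_almostDS:quantum} and Proposition~\ref{cap_bound_1} pin $\capac(T_j)$ into $[1-\eps,1]$. You are actually a bit more careful than the paper on two minor points — the $j=0$ integrality observation and the parity bookkeeping for applying Lemma~\ref{cap_almostDS:quantum} via $T_j$ or $T_j^*$ — but these are refinements of the same argument, not a different route.
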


\begin{proof}
	If either $T(I)$ or $T^*(I)$ is singular, then $T$ decreases the rank of $I$. 
	When $T(I)$ is singular, rank decreasing follows by definition. When $T^*(I)$ is singular, one way 
	to see it is that $\text{Im}(A_i) \subseteq \text{Im} \left( T^*(I)\right)$ for all $i$.
 	Since $\text{Im}\left(A_i A_i^{\dagger}\right) = \text{Im}(A_i)$, we get that 
	$ \text{Im} \left( T(I)\right) \subseteq  \text{Im} \left( T^*(I)\right)$ 
	and hence $T(I)$ is singular. Therefore, the algorithm is 
	correct on step 1, by outputting $\capac(T) = 0$.
	
	If $T(I)$ and $T^*(I)$ are both non-singular, it is easy to verify that $T_j(I)$ and $T_j^*(I)$ 
	will remain non-singular for all $j$ and hence step 3 is well defined. 
	
	We now divide the proof into two cases:
	
	\textbf{Case 1:} $T$ is rank decreasing.
	
	In this case, since right and left normalizations don't change the 
	property of being rank decreasing, we have $\capac(T_j) = 0$ for all $0 \leq j \leq t$. 
	Hence, Lemma~\ref{cap_almostDS:quantum} implies that 
	$\ds(T_j) > \dfrac{\eps^2}{n^3}$ for all $0 \leq j \leq t$. In this case, step 3 of Algorithm G
	will correctly output $\capac(T) = 0$.
	
	\textbf{Case 2:} $T$ is rank non-decreasing.	
	
	In this case, we will show that there must exist $0 \leq j \leq t$ such that 
	$\eps_j \leq \dfrac{\eps^2}{n^3}$. Assume the contrary, for the sake of contradiction. 
	By Theorem~\ref{capacity_lb}, we know that $\capac(T_1) \geq \exp(-10n^2\log(Mn))$.
	Also Proposition~\ref{cap_bound_1} implies that $\capac(T_j) \le 1$ for all $j$. 
	However by the assumption that $\eps_j > \dfrac{\eps^2}{n^3}$, 
	Lemma~\ref{lem:quant-progress} implies that 
	$\capac(T_{j+1}) \geq \exp(\eps^2/n^3) \cdot \capac(T_j)$ for all $0 \leq j \leq t$.
	Hence, we obtain:
	$$
	1 \ge \capac(T_{t+1}) \ge \exp\left(\dfrac{t\eps^2}{n^3} \right) \cdot \capac(T_1) \geq 
	\exp\left(\dfrac{t\eps^2}{n^3} \right) \cdot \exp(-10n^2\log(Mn))
	$$
	Plugging in $t = \dfrac{n^3}{\eps^2} \cdot \left(1 + 10n^2\log(Mn) \right)$ gives us 
	the required contradiction.	
	
	Now that we showed the existence of $0 \leq j \leq t$ such that $\eps_j \leq \dfrac{\eps^2}{n^3}$, 
	we will show that step 4 indeed computes a good approximation to capacity. 
	For the first $\eps_j$ such that $\eps_j \leq \dfrac{\eps^2}{n^3}$, 
	Lemma~\ref{cap_almostDS:quantum} implies that 
	$\capac(T_j) \geq (1 - \sqrt{n\eps_j})^n \geq (1 - \eps/n)^n \geq 1 - \eps$.
	Since $\capac(T_j) = \capac(T) \cdot \left( \dst\prod_{i=0}^{j-1} \det(R_i) \right)^{-1}$,
	we have 
	$$ \capac(T) = \capac(T_j) \cdot \dst\prod_{i=0}^{j-1} \det(R_i).$$
	As $1- \eps \leq \capac(T_j) \leq 1$, we obtain the correct approximation.
\end{proof}

\subsection{Computing Capacity with Truncation}\label{sec:comp_cap_truncation}

In this subsection, we analyze the computation of capacity when we truncate the input matrices.
This analysis will be similar to the one in Section~\ref{bit_complexity}. We begin with some intuition
on why truncation works. 

Notice that to approximate the capacity, all we need is to compute the determinants of the matrices 
$U_i$ in Algorithm~\ref{Gurvits_alg_trn}. The $U_i$'s are the truncations of the matrices 
$S_i$ from equation~\eqref{eq:scaling-matrices}, the latter matrices being important as 
they describe the scaled operator $T_j$ in terms of the original operator $T$, see 
equations~(\ref{eq:scaled-operator},~\ref{eq:dual-scaled-operator},~\ref{eq:scaling-factors}). 
The reason why truncating 
the input works is because the eigenvalues of $U_i$ are very similar to the eigenvalues of $S_i$.
Therefore, we can show that $\det(S_i) \approx \det(U_i)$. This will imply that the truncated capacity
is a good approximation to the actual capacity. The analysis will rely mainly on Lemma~\ref{spectral_diff},
which gives a good bound on the spectral norm of $S_i - U_i$. Now we state the truncated algorithm.

Given a matrix $A$, let $\trn(A)$ be the matrix obtained by truncating the entries of $A$ up to 
$P(n, 1/\epsilon, \log(M)) = \frac{1}{\epsilon} \cdot (n^{12}\log^4(Mn)) \cdot \log(n^4/\epsilon^2)$ 
bits after the decimal point. 

\begin{Algorithm}
\textbf{Input}: Quantum operator $T$ given in terms of Kraus operators 
$A_1, \ldots, A_m \in \mathbb{Z}^{n \times n}$ and approximation parameter $\epsilon > 0$. 
Each entry of $A_i$ has absolute value at most $M$. \\
\textbf{Output}: $\capac(T)$ with multiplicative error of $(1\pm\epsilon)$. 

\begin{enumerate}
\item Check if $T(I)$ and $T^*(I)$ are singular. If any one of them is singular, then output 
$\capac(T) = 0$, otherwise proceed to step 2. 
\item Let $U_{-1} = S_{-1} = I_n$ and $U_0 = S_0 = T^*(I)$. Additionally, for $1 \le j \le t$, let
$$ U_j = 
	\begin{cases} 
		\trn(T(U_{j-1}^{-1})), \text{ if $i$ is odd}, \\ 
		\trn(T^*(U_{j-1}^{-1})), \text{ otherwise.} 
	\end{cases}
$$ 
Also let $\wt{\eps}_j = \tr\left[ (U_{j-2}^{-1} \cdot U_j - I_n)^2 \right]$. Go to step 3.
\item If $\wt{\eps}_j  \le \dfrac{\eps^2}{4n^3}$ for any $0 \leq j \leq t$, go to step 4. 
Otherwise, output $\capac(T) = 0$. 
\item For the first $0 \leq j \leq t$ such that $\wt{\eps}_j  \le \dfrac{\eps^2}{4n^3}$,
output 
$$ \capac(T) = \det(U_{j-1}) \cdot \det(U_{j-2}).$$
\end{enumerate}
\caption{Algorithm G with truncation, computing capacity}
\label{Gurvits_alg_capacity_trn}
\end{Algorithm}

We now proceed to the analysis of Algorithm~\ref{Gurvits_alg_capacity_trn}. 
In Theorem~\ref{thm:computing-capacity}, we proved the correctness
of Algorithm G without truncation. Thus, to prove correctness of 
Algorithm~\ref{Gurvits_alg_capacity_trn}, it is enough to prove two statements: 
\begin{enumerate}
	\item if $\wt{\epsilon}_j \leq \eps^2/4n^3$, then the 
	operators $T_j$ will satisfy the $\epsilon_j \leq \eps^2/n^3$ bounds
	\item $\|U_i - S_i \| \leq  2^{- P(n, 1/\eps, \log(M))/2}$.
\end{enumerate} 
The first item implies that steps 1 to 3 of the algorithm above are correct, and the second item will tell us
that step 4 indeed computes an $1 \pm \eps$ approximation to capacity. More formally, we have the
following theorem.

\begin{theorem}\label{thm:computing-capacity-truncated} 
Let $T$ be a completely positive operator, whose Kraus 
operators are given by $n \times n$ matrices $A_1, \ldots, A_m \in \mathbb{Z}^{n \times n}$, 
such that each entry of $A_i$ has absolute value at most $M$.  
Algorithm~\ref{Gurvits_alg_capacity_trn}, with truncation parameter 
$P(n, 1/\epsilon, \log(M)) = \frac{1}{\epsilon} \cdot (n^{12}\log^4(Mn)) \cdot \log(n^4/\epsilon^2)$ 
when applied for $t = \dfrac{4n^3}{\eps^2} \cdot \left(1 + 10n^2\log(Mn) \right)$ 
steps approximates $\capac(T)$ within a multiplicative factor of $1 \pm \epsilon$. 
\end{theorem}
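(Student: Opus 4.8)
The plan is to reduce the correctness of Algorithm~\ref{Gurvits_alg_capacity_trn} to that of the untruncated Algorithm~\ref{Gurvits_alg_capacity}, established in Theorem~\ref{thm:computing-capacity}, combined with a perturbation analysis identical in spirit to the proof of Lemma~\ref{trunc}. The first step is to observe that the matrices $U_j$ produced by the truncated algorithm are exactly the perturbed scaling matrices of Lemma~\ref{spectral_diff}, with perturbations $\|\Delta_j\| \le 2^{-P(n,1/\eps,\log M)}$ and with the $S_j$ from~\eqref{eq:scaling-matrices} as the corresponding unperturbed matrices. Proposition~\ref{long_prop} and Lemma~\ref{induct_u} provide the eigenvalue bounds needed to invoke Lemma~\ref{spectral_diff}, so $\|S_j - U_j\| \le (2\alpha)^{(2t+1)(t+1)} \cdot n \cdot 2^{-P}$ for all $0 \le j \le t$, which for the stated $P$ and $t$ is at most $2^{-P/2}$, in particular far below $\eps^2/n^{O(1)}$.

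Given this spectral closeness, the argument splits into the two halves flagged before the statement. For the first (steps 1--3 are correct), I would bound $|\eps_j - \wt{\eps}_j|$ exactly as in the proof of Lemma~\ref{trunc}: writing $\eps_j = \tr[(S_{j-2}^{-1} S_j - I)^2]$ and $\wt{\eps}_j = \tr[(U_{j-2}^{-1} U_j - I)^2]$ using~\eqref{eqn:epsj_1}--\eqref{eqn:epsj_2}, the difference is controlled by $\|S_{j-2}^{-1} S_j - U_{j-2}^{-1} U_j\|$, which the triangle inequality together with the eigenvalue bounds of Proposition~\ref{long_prop} and Lemma~\ref{induct_u} reduces to a $(2\alpha)^{O(t^2)}$ multiple of $\max_j \|S_j - U_j\|$. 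The stated choice of $P$ makes $|\eps_j - \wt{\eps}_j| \le \eps^2/(4n^3)$. Consequently $\wt{\eps}_j \le \eps^2/(4n^3)$ forces $\eps_j \le \eps^2/n^3$, and conversely the untruncated analysis (Theorem~\ref{thm:computing-capacity}, via Theorem~\ref{capacity_lb} and Lemma~\ref{lem:quant-progress}) guarantees that if $T$ is rank non-decreasing then some $\eps_j \le \eps^2/(4n^3)$ with $j \le t$, hence $\wt{\eps}_j \le \eps^2/n^3$; the extra factor $4$ in the threshold and in the iteration count $t$ is exactly what absorbs this slack, so the verdict of steps 1--3 of the truncated algorithm agrees with the untruncated one.

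For the second half (step 4 outputs a $1 \pm \eps$ approximation), I would invoke the closed form~\eqref{eq:capacity-product}, namely $\capac(T) = \capac(T_j) \cdot \Det(S_{j-2}) \cdot \Det(S_{j-1})$. Since $\wt{\eps}_j$, hence $\eps_j$, is small, Lemma~\ref{cap_almostDS:quantum} gives $1 - \eps \le \capac(T_j) \le 1$ (the upper bound from Proposition~\ref{cap_bound_1}), so $\Det(S_{j-2})\Det(S_{j-1})$ is already a $(1\pm\eps)$-approximation to $\capac(T)$. It then remains to replace $S_{j-1}, S_{j-2}$ by the computed $U_{j-1}, U_{j-2}$: using $\|S_i - U_i\| \le 2^{-P/2}$ and the eigenvalue lower bounds of Lemma~\ref{induct_u}, the expansion $\det(A+E) = \det(A)\det(I + A^{-1}E)$ shows $\Det(U_{j-1})\Det(U_{j-2}) = (1 \pm 2^{-\Omega(P)})\,\Det(S_{j-1})\Det(S_{j-2})$, which is swallowed by the $1 \pm \eps$ error. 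Combining the two halves proves the theorem; since $P$, $t$, and the bit-lengths of all intermediate matrices are $\poly(n, \log M, 1/\eps)$, the running time is $\poly(n, b, 1/\eps)$.

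I expect the main obstacle to be the first half: controlling how the truncation errors $\Delta_j$ propagate through the nonlinear recursion $U_j = \trn(T(U_{j-1}^{-1}))$, i.e. checking that the $(2\alpha)^{(2t+1)(t+1)}$ amplification of Lemma~\ref{spectral_diff}—now with the larger iteration count $t \sim n^5 \log(Mn)/\eps^2$—is still compensated by the truncation parameter $P(n, 1/\eps, \log M)$, so that both $|\eps_j - \wt{\eps}_j|$ and the determinant discrepancies end up below the required $\eps$-dependent thresholds. Everything else is a matter of re-running the Section~\ref{bit_complexity} estimates with the threshold $\eps^2/(4n^3)$ in place of the constant $1/(12n)$.
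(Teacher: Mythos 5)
Your plan follows the paper's own proof almost exactly: both invoke Lemma~\ref{spectral_diff} to get $\|S_j - U_j\| \le 2^{-P/2}$, both re-use the Lemma~\ref{trunc} estimate to bound $|\eps_j - \wt{\eps}_j|$, and both close via the expression~\eqref{eq:capacity-product} $\capac(T) = \capac(T_j)\Det(S_{j-1})\Det(S_{j-2})$ together with Lemma~\ref{cap_almostDS:quantum}. The only cosmetic difference is the last determinant comparison: you expand $\Det(S_i) = \Det(U_i)\Det\bigl(I + U_i^{-1}(S_i - U_i)\bigr)$, while the paper compares eigenvalues of $S_i$ and $U_i$ directly through Weyl's inequality and the spectral lower bounds from Lemma~\ref{induct_u}; either route is fine, though the paper only asserts the weaker $(1\pm\eps/2)$ ratio, not $(1\pm 2^{-\Omega(P)})$.

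There is one parameter slip in your first half that would need fixing. You take the truncation-error bound $|\eps_j - \wt{\eps}_j| \le \eps^2/(4n^3)$, the same magnitude as the algorithm's acceptance threshold. The forward implication then is fine ($\wt{\eps}_j \le \eps^2/(4n^3)$ gives $\eps_j \le \eps^2/(2n^3) \le \eps^2/n^3$), but the converse is not: if the untruncated analysis produces some $\eps_j \le \eps^2/(4n^3)$, your bound only yields $\wt{\eps}_j \le \eps^2/(2n^3)$, which does \emph{not} pass step~3 of Algorithm~\ref{Gurvits_alg_capacity_trn} (whose test is $\wt{\eps}_j \le \eps^2/(4n^3)$). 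The ``extra factor $4$'' in $t$ and in the threshold was already spent getting $\eps_j$ down to $\eps^2/(4n^3)$; it leaves no room for an additive truncation error of that same size. The paper avoids this by extracting the strictly smaller bound $|\eps_j - \wt{\eps}_j| < \eps^2/n^4$ from Lemma~\ref{trunc}, so that $\eps_j \le \eps^2/(4n^3) - \eps^2/n^4$ forces $\wt{\eps}_j \le \eps^2/(4n^3)$. The stated $P$ already has the slack to support this; you just need to carry a tighter intermediate bound.
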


\begin{proof}
	By applying Lemma~\ref{trunc} with parameters $t$ and $P(n, 1/\epsilon, \log(M))$ as above, we get 
	that 
	$$|\wt{\epsilon}_i - \epsilon_i| < \dfrac{\eps^2}{n^4} \text{ for all } 0 \leq i \leq t. $$
	Therefore, steps 1 to 3 of Algorithm~\ref{Gurvits_alg_capacity_trn} work just as if we had not done any
	truncation (as in Algorithm~\ref{Gurvits_alg_capacity}). This implies that we will always output 
	$\capac(T) = 0$ whenever the operator $T$ is rank decreasing.
	
	We are now left with the computation of capacity when $T$ is rank non-decreasing, which is done in
	step 4. By applying Lemma~\ref{spectral_diff} with parameter $\delta = 2^{- P(n, 1/\epsilon, \log(M))}$,
	we get
	$$ \|U_i - S_i \| \leq  2^{- P(n, 1/\eps, \log(M))/2}, \text{ for all } 0 \leq i \leq t. $$
	Let $0 \leq \mu_{i1} \leq \mu_{i2} \leq \dots \leq \mu_{in}$ be the eigenvalues of $U_i$ and  
	$0 \leq \lambda_{i1} \leq \lambda_{i2} \leq \dots \leq \lambda_{in}$ be the eigenvalues of $S_i$.
	From $\|U_i - S_i \| \leq  2^{- P(n, 1/\eps, \log(M))/2}$ and Lemma~\ref{induct_u}, we have
	$$ | \mu_{i \ell} - \lambda_{i \ell} | \leq 2^{- P(n, 1/\eps, \log(M))/2} \leq \frac{\eps \mu_{i \ell}}{4n}  \then
	\mu_{i \ell}\left(1 - \frac{\eps}{4n} \right) \leq \lambda_{i \ell} \leq \mu_{i \ell} \left(1 + \frac{\eps}{4n} \right).$$
	Hence, we have that
	\begin{align*}
		\det(S_{j-1}) \cdot \det(S_{j-2}) = \prod_{i=j-2}^{j-1} \prod_{\ell=1}^n \lambda_{i, \ell} &\geq 
		\prod_{i=j-2}^{j-1} \prod_{\ell=1}^n \mu_{i, \ell} \left(1- \frac{\epsilon}{4nt}\right) \\
		&\geq \left(1- \frac{\epsilon}{4n}\right)^{2n} \cdot \prod_{i=j-2}^{j-1} \prod_{\ell=1}^n \mu_{i, \ell} \\
		&\geq (1 - \eps/2) \cdot \prod_{i=j-2}^{j-1} \prod_{\ell=1}^n \mu_{i, \ell} 
		= (1- \eps/2) \cdot \det(U_{j-1}) \cdot \det(U_{j-2})
	\end{align*} 
	Similarly, we have that $\det(S_{j-1}) \cdot \det(S_{j-2}) \leq (1+\eps/2)\cdot \det(U_{j-1}) \cdot \det(U_{j-2})$.
	
	As $\wt{\eps}_j \leq \dfrac{\eps^2}{4n^3}$ implies that 
	$\eps_j \leq \dfrac{\eps^2}{4n^3} + \dfrac{\eps^2}{n^4}$, by Lemma~\ref{cap_almostDS:quantum} we
	have $\capac(T_j) \in [(1-\eps/2), 1]$. Thus, equation~\eqref{eq:capacity-product} yields
	\begin{align*}
		\capac(T) &= \capac(T_j) \det(S_{j-1}) \det(S_{j-2}) \\
		&\then (1-\eps/2) \det(S_{j-1}) \det(S_{j-2}) \leq \capac(T) \leq \det(S_{j-1}) \det(S_{j-2}) \\
		&\then (1-\eps/2)^2\det(U_{j-1}) \det(U_{j-2}) \leq 
		\capac(T) \leq (1+\eps/2) \det(U_{j-1}) \det(U_{j-2}).
	\end{align*}
	The inequalities above imply that $\det(U_{j-1}) \cdot \det(U_{j-2})$, that is, the output of 
	Algorithm~\ref{Gurvits_alg_capacity_trn}, lies in the interval $[(1-\eps) \capac(T), (1+\eps)\capac(T)]$.
\end{proof}



\section{Conclusion and Open Problems}\label{open}

In this paper we gave a polynomial time algorithm for computing the non-commutative rank of a symbolic matrix over any subfield of 
the complex numbers. We stated its different incarnations and implications to the many different areas in which this problem arises 
(indeed we feel that expositing these many connections, some essential to the present result, may yield better future interaction 
between them with possible more benefits). We note that our algorithm and the analysis bypasses the need to use any degree bounds at all. We further note again that despite the purely algebraic nature of the problem our algorithm is purely analytic, generating a  
sequence of complex matrices and testing its convergence.

We collect now the most obvious directions for future research, some of them already mentioned in the paper.
\begin{itemize}
\item Find more applications of this algorithm to optimization problems.
\item Can we use these techniques to design an efficient deterministic algorithm for the orbit-closure intersection problem for the Left-Right action? In terms of invariants, this is equivalent to asking if two tuples of matrices can be separated by invariants of the Left-Right action (over algebraically closed fields of characteristic $0$). More formally given two tuples of matrices, $(A_1, \ldots, A_m)$ and $(B_1,\ldots,B_m)$, check whether for all $(T_1,\ldots, T_m)$ of arbitrary dimension, 
$$
\Det \left( T_1 \tensor A_1 + \cdots T_m \tensor A_m\right) = \Det \left( T_1 \tensor B_1 + \cdots T_m \tensor B_m\right)
$$
The results of \cite{derksen2015} give a randomized polynomial time algorithm for this problem (over algebraically closed fields of characteristic $0$): just plug in random $(T_1,\ldots,T_m)$ of polynomial dimension.
\item Find a black-box algorithm for SINGULAR. That is, efficiently produce (deterministically) a polynomial size set $\mathcal{S}$ of tuples of polynomial dimension matrices s.t. for all $L = \sum_{i=1}^m x_i A_i$ s.t. $L$ is non-singular, it holds that
for some $(T_1,\ldots,T_m) \in \mathcal{S}$, 
$$
\Det \left( T_1 \tensor A_1 + \cdots T_m \tensor A_m \right) \neq 0
$$
Due to the recent polynomial dimension bounds of $\cite{derksen2015}$, it can be proved that a random set $\mathcal{S}$ works. The challenge is to produce it deterministically. As a special case, this captures deterministic parallel algorithms for the decision version of bipartite perfect matching (when $A_1,\ldots,A_m$ are elementary matrices $E_{i,j}$ representing the edges of a bipartite graph). So perhaps, techniques from the recent breakthrough work \cite{FGT2016} can be useful. 
\item Explore further the connection between commutative and non-commutative PIT problems. We feel that beyond the many 
connections between commutative and non-commutative settings that arise here, this different angle of looking at the  commutative 
PIT problem, relating it to its non-commutative cousin, may help in the major quest of finding an efficient  deterministic algorithm for it. As mentioned above, this viewpoint has already resulted in a deterministic PTAS for the commutative rank \cite{BlaserJP16}.
\item We design an efficient algorithm for checking if a completely positive operator is rank-decreasing. Can we do the same for positive operators? Algorithm G in fact works for positive operators as well and all that is needed is to prove an effective lower bound on the capacity $\capac(T)$ of a positive operator $T$ which is rank non-decreasing (similar to Theorem \ref{capacity_lb}). It was already proven in $\cite{gurvits2004}$ that $\capac(T) > 0$ for a positive operator $T$ which is rank non-decreasing. 
\item Design a strongly polynomial time algorithm for operator scaling. Strongly polynomial time algorithms for matrix scaling were given by \cite{LSW}. Can they be extended to the operator case?
\item Can we compute $(1+\eps)$ approximation to $\capac(T)$ in time $\poly(n,b,\log(1/\eps))$? For computing capacity of non-negative matrices, such algorithms exist. One of the algorithms in \cite{LSW} has this stronger convergence rate. Also for matrices, capacity can be formulated as a convex program and hence the Ellipsoid algorithm also gives this stronger convergence rate \cite{GurYianilos}.
\item Can we design efficient algorithms for testing the null-cone of general quivers? There is reduction from general quivers to Kronecker-quiver or the left-right action (e.g. see \cite{derksen2015}) but the reduction is not always efficient. What about the general problem of testing the null-cone of actions of reductive groups?
\end{itemize}

\subsection*{Acknowledgments} 
We would like to thank Harm Derksen, Pavel Hrubes, Louis Rowen and K. V. Subrahmanyam for helpful discussions. We would also like to thank Oded Regev for suggesting us that operator scaling could be used for approximating capacity. Finally, we thank the anonymous reviewers for a comprehensive reading of the paper and pointing several typographical errors and minor bugs.

\bibliographystyle{alpha}
\bibliography{refs}

\begin{thebibliography}{GGOW18}

\bibitem[AL50]{AmiLev}
S.~A. Amistur and J.~Levitzki.
\newblock Minimal identities for algebras.
\newblock {\em Proceedings of the of the American Mathematical Society},
  1:449--463, 1950.

\bibitem[AL80]{AtkLyd}
M.~D. Atkinson and S.~Lloyd.
\newblock Large spaces of matrices of bounded rank.
\newblock {\em Quarterly Journal of Math. Oxford}, 31:253--262, 1980.

\bibitem[Alo99]{Alon_CN}
Noga Alon.
\newblock Combinatorial {Nullstellensatz}.
\newblock {\em Combinatorics, Probability and Computing}, 8(1-2):7--29, 1999.

\bibitem[Ami66]{amitsur66}
Shimshon Amitsur.
\newblock Rational identities and applications to algebra and geometry.
\newblock {\em Journal of Algebra}, 3:304--359, 1966.

\bibitem[ANS10]{ANS10}
Bharat Adsul, Suresh Nayak, and K.~V. Subrahmanyam.
\newblock A geometric approach to the kronecker problem ii : rectangular
  shapes, invariants of n × n matrices, and a generalization of the
  artin-procesi theorem.
\newblock {\em Manuscript, available at http://www.cmi.ac.in/~kv/ANS10.pdf},
  2010.

\bibitem[Atk80]{Atkinson}
M.~D. Atkinson.
\newblock Spaces of matrices with several zero eigenvalues.
\newblock {\em Bulletin of the London Mathematical Society}, 12(89-95), 1980.

\bibitem[Bea87]{Beasley}
L.~B. Beasley.
\newblock Nullspaces of spaces of matrices of bounded rank.
\newblock {\em Current trends in matrix theory}, 1987.

\bibitem[Ber84]{Ber}
Stuart~J. Berkowitz.
\newblock On computing the determinant in small parallel time using a small
  number of processors.
\newblock {\em Information Processing Letters}, 18(3):147--150, 1984.

\bibitem[BJP17]{BlaserJP16}
Markus Bl{\"a}ser, Gorav Jindal, and Anurag Pandey.
\newblock Greedy strikes again: A deterministic ptas for commutative rank of
  matrix spaces.
\newblock In {\em LIPIcs-Leibniz International Proceedings in Informatics},
  volume~79. Schloss Dagstuhl-Leibniz-Zentrum fuer Informatik, 2017.

\bibitem[BW05]{BogVee}
Andrej Bogdanov and Hoeteck Wee.
\newblock More on noncommutative polynomial identity testing.
\newblock {\em Computational Complexity}, pages 92--99, 2005.

\bibitem[Cho75]{Choi}
M.~Choi.
\newblock Completely positive linear maps on complex matrices.
\newblock {\em Linear Algebra and Its Applications}, pages 285--290, 1975.

\bibitem[CLO07]{CLO07}
D.~Cox, J.~Little, and D.~O'Shea.
\newblock {\em Ideals, Varieties, and Algorithms}.
\newblock Undergraduate Texts in Mathematics. Springer, New York, third edition
  edition, 2007.

\bibitem[Coh71]{cohn1971-sf}
P.~M. Cohn.
\newblock The embedding of firs in skew fields.
\newblock {\em Proceedings of the London Mathematical Society}, 23:193--213,
  1971.

\bibitem[Coh73]{Cohn-Word}
P.~M. Cohn.
\newblock The word problem for free fields.
\newblock {\em The Journal of Symbolic Logic}, 38(2):309--314, 1973.

\bibitem[Coh75]{Cohn-Word2}
P.~M. Cohn.
\newblock The word problem for free fields: A correction and an addendum.
\newblock {\em Journal of Symbolic Logic}, 40(1):69--74, 1975.

\bibitem[Coh95]{Cohn-SF}
P.~M. Cohn.
\newblock {\em Skew Fields, Theory of General Division Rings}.
\newblock Cambridge University Press, 1995.

\bibitem[CR99]{CR99}
P.~M. Cohn and C.~Reutenauer.
\newblock On the construction of the free field.
\newblock {\em International journal of Algebra and Computation},
  9(3):307--323, 1999.

\bibitem[Der01]{derksen}
Harm Derksen.
\newblock Polynomial bounds for rings of invariants.
\newblock {\em Proceedings of the American Mathematical Society},
  129(4):955--964, 2001.

\bibitem[Die49]{Die}
J.~Dieudonn{\'e}.
\newblock Sur une g{\'e}n{\'e}ralisation du groupe orthogonal {\`a} quatre
  variables.
\newblock {\em Arch. Math.}, 1:282--287, 1949.

\bibitem[DK02]{Derksen-Kemper}
H.~Derksen and G.~Kemper.
\newblock {\em Computational Invariant Theory}, volume 130.
\newblock Springer-Verlag, Berlin, 2002.

\bibitem[DM17]{derksen2015}
Harm Derksen and Visu Makam.
\newblock Polynomial degree bounds for matrix semi-invariants.
\newblock {\em Advances in Mathematics}, 310:44--63, 2017.

\bibitem[DS06]{DS06}
Z.~Dvir and A.~Shpilka.
\newblock Locally decodable codes with 2 queries and polynomial identity
  testing for depth 3 circuits.
\newblock {\em SIAM J. Comput}, 2006.

\bibitem[DW00]{DW2000}
Harm Derksen and Jerzy Weyman.
\newblock Semi-invariants of quivers and saturation for
  {Littlewood}-{Richardson} coefficients.
\newblock {\em Journal of the American Mathematical Society}, 13(3):467--479,
  2000.

\bibitem[DZ01]{DZ2001}
M{\'a}ty{\'a}s Domokos and A.~N. Zubkov.
\newblock Semi-invariants of quivers as determinants.
\newblock {\em Transformation Groups}, 6(1):9--24, 2001.

\bibitem[Edm67]{Edm67}
Jack Edmonds.
\newblock Systems of distinct representatives and linear algebra.
\newblock {\em Journal of research of the National Bureau of Standards},
  71(241-245), 1967.

\bibitem[Edm69]{Edm_matroid}
Jack Edmonds.
\newblock Submodular functions, matroids, and certain polyhedra.
\newblock {\em Lectures, Calgary International Symposium on Combinatorial
  Structures}, 1969.

\bibitem[EH88]{EisHar}
David Eisenbud and Joe Harris.
\newblock Vector spaces of matrices of low rank.
\newblock {\em Advances in Math}, 70:135--155, 1988.

\bibitem[FGT16]{FGT2016}
Stephen~A. Fenner, Rohit Gurjar, and Thomas Thierauf.
\newblock Bipartite perfect matching is in quasi-nc.
\newblock {\em STOC}, 2016.

\bibitem[For86]{formanek}
Edward Formanek.
\newblock Generating the ring of matrix invariants.
\newblock {\em Ring Theory}, pages 73--82, 1986.

\bibitem[FR04]{FR04}
Marc Fortin and Christophe Reutenauer.
\newblock Commutative/noncommutative rank of linear matrices and subspaces of
  matrices of low rank.
\newblock 2004.

\bibitem[FS13a]{Forbes-Shpilka}
Michael Forbes and Amir Shpilka.
\newblock Explicit noether normalization for simultaneous conjugation via
  polynomial identity testing.
\newblock {\em RANDOM}, 2013.

\bibitem[FS13b]{ForShp}
Michael Forbes and Amir Shpilka.
\newblock Quasipolynomial-time identity testing of non-commutative and
  read-once oblivious algebraic branching programs.
\newblock {\em FOCS}, pages 243--252, 2013.

\bibitem[GGOW18]{GGOW2}
Ankit Garg, Leonid Gurvits, Rafael Oliveira, and Avi Wigderson.
\newblock Algorithmic and optimization aspects of brascamp-lieb inequalities,
  via operator scaling.
\newblock {\em Geometric and Functional Analysis}, 28(1):100--145, 2018.

\bibitem[GGRW02]{quasidet}
I.~Gelfand, S.~Gelfand, V.~Retakh, and R.~Wilson.
\newblock Quasideterminants.
\newblock {\em http://arxiv.org/abs/math/0208146}, 2002.

\bibitem[GM02]{Mes}
Boaz Gelbord and Roy Meshulam.
\newblock Spaces of singular matrices and matroid parity.
\newblock {\em European Journal of Combinatorics}, 23(4):389--397, 2002.

\bibitem[GS00]{gur-sam1}
Leonid Gurvits and Alex Samorodnitsky.
\newblock A deterministic polynomial-time algorithm for approximating mixed
  discriminant and mixed volume.
\newblock {\em STOC}, 2000.

\bibitem[GS02]{gur-sam2}
Leonid Gurvits and Alex Samorodnitsky.
\newblock A deterministic algorithm approximating the mixed discriminant and
  mixed volume, and a combinatorial corollary.
\newblock {\em Discrete Computational Geometry}, 27(531-550), 2002.

\bibitem[Gur04]{gurvits2004}
Leonid Gurvits.
\newblock Classical complexity and quantum entanglement.
\newblock {\em Journal of Computer and System Sciences}, 69(3):448--484, 2004.

\bibitem[Gur06]{gur_Waerden}
Leonid Gurvits.
\newblock Hyperbolic polynomials approach to {Van} der
  {Waerden}/{Schrijver}-{Valiant} like conjectures: sharper bounds, simpler
  proofs and algorithmic applications.
\newblock {\em STOC}, pages 417--426, 2006.

\bibitem[GY98]{GurYianilos}
Leonid Gurvits and Peter~N. Yianilos.
\newblock The deflation-inflation method for certain semidefinite programming
  and maximum determinant completion problems.
\newblock {\em Technical Report, NECI}, 1998.

\bibitem[Hak61]{Hak}
Wolfgang Haken.
\newblock Theorie der {Normalflachen}.
\newblock {\em Acta Math}, 105:245--375, 1961.

\bibitem[Hig40]{higman40}
Graham Higman.
\newblock {\em Units in group rings}.
\newblock PhD thesis, Balliol College, 1940.

\bibitem[Hil93]{Hil}
David Hilbert.
\newblock Uber die vollen invariantensysteme.
\newblock {\em Math. Ann.}, 42:313--370, 1893.

\bibitem[Hua49]{Hua}
Loo-Keng Hua.
\newblock Some properties of a sfield.
\newblock {\em Proceedings of National Academy of Sciences USA}, 35:533--537,
  1949.

\bibitem[HW14]{HW14}
Pavel Hrubes and Avi Wigderson.
\newblock Non-commutative arithmetic circuits with division.
\newblock {\em ITCS}, 2014.

\bibitem[HWY10]{HWY10}
Pavel Hrubes, Avi Wigderson, and Amir Yehudayoff.
\newblock Relationless completeness and separations.
\newblock In {\em Computational Complexity (CCC), 2010 IEEE 25th Annual
  Conference on}, pages 280--290. IEEE, 2010.

\bibitem[HWY11]{HWY11}
Pavel Hrubes, Avi Wigderson, and Amir Yehudayoff.
\newblock Non-commutative circuits and the sum-of-squares problem.
\newblock {\em Journal of the American Mathematical Society}, 24(3):871--898,
  2011.

\bibitem[Hya79]{Hya}
Laurent Hyafil.
\newblock On the parallel evaluation of multivariate polynomials.
\newblock {\em SIAM Journal on Computing}, 8(2):120--123, 1979.

\bibitem[IKQS15]{IKQS}
Gabor Ivanyos, Marek Karpinski, Youming Qiao, and Miklos Santha.
\newblock Generalized {Wong} sequences and their applications to edmonds'
  problems.
\newblock {\em Journal of Computer and System Sciences}, 81(7):1373--1386,
  2015.

\bibitem[IQS17]{IQS2015}
G{\'a}bor Ivanyos, Youming Qiao, and KV~Subrahmanyam.
\newblock Non-commutative {Edmonds’} problem and matrix semi-invariants.
\newblock {\em Computational Complexity}, 26(3):717--763, 2017.

\bibitem[IQS18]{IQS15b}
G{\'a}bor Ivanyos, Youming Qiao, and K.~V. Subrahmanyam.
\newblock Constructive noncommutative rank computation in deterministic
  polynomial time over fields of arbitrary characteristics.
\newblock {\em Computational Complexity}, 27(4):561--593, December 2018.

\bibitem[KI04]{KabImp}
Valentine Kabanets and Russell Impagliazzo.
\newblock Derandomizing polynomial identity tests means proving circuit lower
  bounds.
\newblock {\em Computational Complexity}, 13:1--46, 2004.

\bibitem[KP96]{Kraft-Procesi}
H.~Kraft and C.~Procesi.
\newblock Classical invariant theory, a primer.
\newblock {\em https://math.unibas.ch/uploads/x4epersdb/files/primernew.pdf},
  1996.

\bibitem[KS01]{KS01}
A.~Klivans and D.~Spielman.
\newblock Randomness efficient identity testing of multivariate polynomials.
\newblock In {\em Proceedings of the 33rd Annual STOC}, 2001.

\bibitem[KS07]{KS07}
N.~Kayal and N.~Saxena.
\newblock Polynomial identity testing for depth 3 circuits.
\newblock {\em Computational Complexity}, 2007.

\bibitem[KVV10]{vinnikov10}
D.~S. Kaliuzhnyi-Verbovetskyi and Victor Vinnikov.
\newblock Noncommutative rational functions, their difference-differential
  calculus and realizations.
\newblock {\em Multidimensional Systems and Signal Processing}, 2010.

\bibitem[LMS15]{limaye2015}
Nutan Limaye, Guillaume Malod, and Srikanth Srinivasan.
\newblock Lower bounds for non-commutative skew circuits.
\newblock In {\em Electronic Colloquium on Computational Complexity (ECCC)},
  volume~22, page~22, 2015.

\bibitem[Lov79]{Lov79}
Laszlo Lovasz.
\newblock On determinants, matchings, and random algorithms.
\newblock {\em Fundamentals of Computation Theory}, pages 565--574, 1979.

\bibitem[Lov80]{Lov80}
Laszlo Lovasz.
\newblock Selecting independent lines from a family of lines in a space.
\newblock {\em Acta Sci. Math.}, 42(121-131), 1980.

\bibitem[Lov89]{Lov1989}
Laszlo Lovasz.
\newblock Singular spaces of matrices and their application in combinatorics.
\newblock {\em Bulletin of the Brazilian Mathematical Society}, 20:87--99,
  1989.

\bibitem[LSW98]{LSW}
Nati Linial, Alex Samorodnitsky, and Avi Wigderson.
\newblock A deterministic strongly polynomial algorithm for matrix scaling and
  approximate permanents.
\newblock {\em STOC}, pages 644--652, 1998.

\bibitem[Mal78]{malcolmson}
P.~Malcolmson.
\newblock A prime matrix ideal yields a skew field.
\newblock {\em Journal of the London Mathematical Society}, 18(221-233), 1978.

\bibitem[Mul12]{Mulmuley}
Ketan Mulmuley.
\newblock Geometric complexity theory v: Equivalence between blackbox
  derandomization of polynomial identity testing and derandomization of
  noether's normalization lemma.
\newblock {\em FOCS}, pages 629--638, 2012.

\bibitem[Nis91]{nisan1991}
Noam Nisan.
\newblock Lower bounds for non-commutative computation.
\newblock In {\em Proceedings of the twenty-third annual ACM symposium on
  Theory of computing}, pages 410--418. ACM, 1991.

\bibitem[Pop82]{Popov}
Vladimir~L Popov.
\newblock The constructive theory of invariants.
\newblock {\em Izvestiya: Mathematics}, 19(2):359--376, 1982.

\bibitem[Pro76]{procesi}
C.~Procesi.
\newblock The invariant theory of n×n matrices.
\newblock {\em Advances in Mathematics}, 19:306--381, 1976.

\bibitem[Rab58]{Rab}
Michael~O. Rabin.
\newblock Recursive unsolvability of group theoretic problems.
\newblock {\em Annals of Mathematics}, 67(172-194), 1958.

\bibitem[Rad42]{rado}
R.~Rado.
\newblock A theorem on independence relations.
\newblock {\em Quarterly Journal of Math. Oxford}, 13:83--89, 1942.

\bibitem[Raz74]{razmyslov}
Ju.~P. Razmyslov.
\newblock Trace identities of full matrix algebras over a field of
  characteristic zero.
\newblock {\em Mathematics of the USSR-Izvestiya}, 8(4):727, 1974.

\bibitem[Reu96]{Reut96}
Christophe Reutenauer.
\newblock Inversion height in free fields.
\newblock {\em Selecta Mathematica}, 2(1):93--109, 1996.

\bibitem[Row80]{Rowen}
Louis~Halle Rowen.
\newblock {\em Polynomial identities in ring theory}.
\newblock Academic Press, New York, 1980.

\bibitem[RS05]{RazShp}
Ran Raz and Amir Shpilka.
\newblock Deterministic polynomial identity testing in non commutative models.
\newblock {\em Computational Complexity}, 14:1--19, 2005.

\bibitem[SdB01]{SVdB2001}
Aidan Schofield and Michel~Van den Bergh.
\newblock Semi-invariants of quivers for arbitrary dimension vectors.
\newblock {\em Indagationes Mathematicae}, 12(1):125--138, 2001.

\bibitem[Sin64]{Sink}
R.~Sinkhorn.
\newblock A relationship between arbitrary positive matrices and doubly
  stochastic matrices.
\newblock {\em The Annals of Mathematical Statistics}, 35:876--879, 1964.

\bibitem[Str73]{Str}
V.~Strassen.
\newblock Vermeidung von {Divisionen}.
\newblock {\em Journal fr Reine Angew. Math}, 264:182--202, 1973.

\bibitem[SV11]{SV11}
S.~Saraf and I.~Volkovich.
\newblock Black-box identity testing of depth 4 multilinear circuits.
\newblock In {\em Proceedings of the 43rd annual STOC}, 2011.

\bibitem[SY10]{Shpilka-Yehudayoff}
Amir Shpilka and Amir Yehudayoff.
\newblock {\em Arithmetic Circuits: A Survey of Recent Results and Open
  Questions}, volume~5.
\newblock NOW, Foundations and Trends in Theoretical Computer Science, 2010.

\bibitem[Val79]{Val}
Leslie Valiant.
\newblock The complexity of computing the permanent.
\newblock {\em Theoretical Computer Science}, 8:189--201, 1979.

\end{thebibliography}

\appendix

\section{Symbolic matrices with polynomial entries and non-commutative rank}\label{computeNC}

In this section, we show how to compute the non-commutative rank of any (not
necessarily square) matrix with linear entries over the free skew field $\Qxx$. 
This will be achieved in two ways: the first, in Subsection~\ref{classical-red}, by reducing this problem to testing 
singularity of a certain square matrix with linear entries, and the second, in Subsection~\ref{quantum}, by a purely quantum
approach which in a sense mimics the reduction from maximum matching to perfect matching.

In fact, we solve a more general problem. Subsection~\ref{Higman} starts with a reduction of computing $\ncrank$ of a matrix with 
{\em polynomial} entries (given by formulae), to the problem of computing the $\ncrank$ of a matrix with linear entries, 
via the so-called  ``Higman's trick'' (Proposition~\ref{prop:higman-trick}). We give the simple  quantitative analysis of this reduction, which as far as 
we know does not appear in the literature and may be useful elsewhere. This reduction, with the two above, allow computing the 
non-commutative rank of any matrix in time polynomial in the description of its entries.

\subsection{Higman's Trick}~\label{Higman}

Before stating the full version of the effective Higman trick, we need to define the bit complexity of a
formula computing a non-commutative polynomial.

\begin{definition}[Bit Complexity of a Formula]
	Let $\Phi$ be a non-commutative formula without divisions such that each of its gates
	computes a polynomial in $\QX$ (i.e., the inputs to the formula are either rational numbers 
	or non-commutative variables). The \emph{bit complexity} of $\Phi$ 
	is the maximum bit complexity of any rational input appearing in the formula $\Phi$.
\end{definition}

With this definition in hand, we can state and prove Higman's trick, which first appeared in~\cite{higman40}.
In the proposition below, it will be useful to have the following notation to denote the 
direct sum of two matrices $A$ and $B$:

$$ A \oplus B = \begin{pmatrix} A & 0 \\ 0 & B \end{pmatrix}, $$

where the zero matrices in the top right and bottom left corners are of appropriate dimensions. Before stating and proving Higman's trick, 
let us work through a small example which showcases the essence of the trick.

Suppose we want to know the $\ncrk$ of matrix $\begin{pmatrix} 1 & x \\ y & z + xy \end{pmatrix}$. The problem here is that this 
matrix is not linear, and we need to have a linear matrix. How can we convert this matrix into a linear matrix while preserving the 
rank, or the complement of the rank? To do this, we need to remove the multiplication happening in $z + xy$.

Notice that the complement of its rank does not change after the following transformation:
$$ \begin{pmatrix} 1 & x \\ y & z + xy \end{pmatrix} \mapsto \begin{pmatrix} 1 & x & 0 \\ y & z + xy & 0 \\ 0 & 0 & 1 \end{pmatrix}. $$
Since the complement of the rank does not change after we perform elementary row or column operations, we can first
add $x \cdot \text{(third row)}$ to the second row, and then subtract $\text{(third column)} \cdot y$ to the second column, to obtain:
$$\begin{pmatrix} 1 & x & 0 \\ y & z + xy & 0 \\ 0 & 0 & 1 \end{pmatrix} \mapsto 
\begin{pmatrix} 1 & x & 0 \\ y & z + xy & x \\ 0 & 0 & 1 \end{pmatrix} \mapsto 
\begin{pmatrix} 1 & x & 0 \\ y & z & x \\ 0 & -y & 1 \end{pmatrix} $$

The complement of the rank of this last matrix is the same as the complement of the rank of our original matrix! In particular, if this
last matrix is full rank, it implies that our original matrix is also full rank. This is the essence of Higman's trick. We now proceed to
its full version.

\begin{proposition}[Effective Higman's Trick]\label{prop:higman-trick}
	Let $A \in \QX^{m \times n}$ be a matrix where each entry $a_{ij}$ is computed by
	a non-commutative formula of size $\le s$ and bit complexity $\le b$ without divisions.
	Let $k$ be the total number of multiplication gates used 
	in the computation of the entries of $A$. There exist matrices $P \in \GL_{m+k}(\QX)$,
	$Q \in \GL_{n+k}(\QX)$ such that  $P (A \oplus I_k) Q$ is a matrix with linear entries
	and coefficients with bit complexity bounded by $b$. Moreover, given access to the formulas computing the 
	entries, one can construct $P$ and $Q$ efficiently in time $\poly(m, n, s, b)$. Since $P$ and $Q$ are non-singular
	matrices, the {\em co-rank} and the {\em co-nc-rank} of $P (A \oplus I_k) Q$ are the same as the {\em co-rank} and
	the {\em co-nc-rank} of $A$.
\end{proposition}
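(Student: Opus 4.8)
The plan is to induct on $k$, the total number of multiplication gates appearing across the formulas for the entries of $A$. The base case $k=0$ is immediate: then every entry is computed by a multiplication-free formula, hence is an affine combination of the variables, so $A$ already has linear entries and we take $P=I_m$, $Q=I_n$ (with $A\oplus I_0 = A$). All the work is in the inductive step, which is a single ``Higman surgery'' that removes exactly one multiplication gate at the cost of one extra row and one extra column.

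For the inductive step, suppose $k\ge 1$. First I would locate a \emph{topmost} multiplication gate: among all multiplication gates occurring in the formula computing some entry $a_{i_0 j_0}$, pick one, say $v = v_\ell \times v_r$, that has no multiplication gate as an ancestor. Because the entry is computed by a \emph{formula} (a tree), the unique path from $v$ to the output consists entirely of addition gates, so the entry's value can be written as $a_{i_0 j_0} = g\,h + \sigma$, where $g = \widehat{v_\ell}$, $h = \widehat{v_r}$, and $\sigma$ is the sum of the sibling subformulas hanging off that path; each of $g,h,\sigma$ is the value of a sub-formula of the original one, hence of size $\le s$ and bit complexity $\le b$. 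Now perform the Higman step: pass to $A\oplus(1)$, of size $(m+1)\times(n+1)$ with the new entry $1$ in position $(m+1,n+1)$; add $g$ times row $m+1$ to row $i_0$ (left multiplication by the invertible matrix $I + g\,E_{i_0,m+1}$, whose inverse is $I - g\,E_{i_0,m+1}$), placing $g$ in position $(i_0,n+1)$; then subtract column $n+1$ multiplied on the right by $h$ from column $j_0$ (right multiplication by a suitable invertible matrix of the form $I - h\,E_{n+1,j_0}$), which turns position $(i_0,j_0)$ into $g h + \sigma - g h = \sigma$ and position $(m+1,j_0)$ into $-h$, leaving all other entries untouched. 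The resulting $(m+1)\times(n+1)$ matrix $A'$ has, over all its entries, exactly $k-1$ multiplication gates: the gate $v$ is gone, and the freshly written entries $g$, $-h$, $1$ inherit only the multiplication gates already inside $g$ and $h$, with none created — this is precisely where topmostness is used, since $g$ and $h$ are moved into new entries without being multiplied by anything. Applying the induction hypothesis to $A'$ gives invertible $P',Q'$ with $P'(A'\oplus I_{k-1})Q'$ linear; composing with the two elementary matrices above (padded by identities to the final size), and noting $A'\oplus I_{k-1}$ equals $A\oplus I_k$ after permuting the identity rows and columns, yields the desired $P$ and $Q$ with $P(A\oplus I_k)Q$ linear.

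Two bookkeeping points then finish the proof. For the coefficient bound: the linear forms that end up as entries of $P(A\oplus I_k)Q$ are exactly values of multiplication-free sub-formulas of the original formulas, and such a value is a signed sum of at most $s$ leaf labels, each a variable or a rational of bit complexity $\le b$ — no products of constants are ever formed — so the coefficients are controlled by $b$ (up to an additive $O(\log s)$ from summing constants, which I would absorb into the statement or flag explicitly). For efficiency: there are $k = O(mns)$ Higman steps, each requiring only locating a topmost multiplication gate and a constant amount of surgery on the formulas and on the matrix, so the whole construction runs in time $\poly(m,n,s,b)$, with $P,Q$ output as explicit products of $O(k)$ elementary matrices (each with a polynomial inverse). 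Finally, since $P$ and $Q$ are invertible over $\QX$, multiplication by them preserves rank both over $\mathbb{Q}(\bx)$ and over $\Qxx$; and $\rk(A\oplus I_k) = \rk(A)+k$, $\ncrk(A\oplus I_k) = \ncrk(A)+k$ in the respective settings, while the minimum of the two dimensions also grows by $k$, so the co-rank and co-nc-rank of $P(A\oplus I_k)Q$ equal those of $A$.

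The only real subtlety — more a pitfall than a difficulty — is the gate-count decrement: one must verify the Higman step deletes exactly one multiplication gate and introduces none, which is why the gate $v$ must be chosen \emph{topmost} rather than, say, innermost; an innermost choice would force products like $\ell g$ and $h\rho$ into the new entries and would not visibly make progress. Everything else is routine tracking of elementary operations, matrix sizes, and formula sizes.
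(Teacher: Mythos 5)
Your proof is correct and follows essentially the same route as the paper: induct on the total number of multiplication gates, peel off a topmost multiplication gate to write the affected entry as $a + b\cdot c$ (your $\sigma + gh$), apply the Higman elementary row/column surgery to $A\oplus(1)$, and recurse. Your explicit remark that topmostness is what guarantees the gate count drops by exactly one is the same observation the paper encodes in its equation $\mult(a_{mn}) = \mult(a)+\mult(b)+\mult(c)+1$, and your caveat that the final affine coefficients may pick up an additive $O(\log s)$ in bit length from summing leaf constants is a fair and slightly more careful reading than the paper's blanket assertion that bit complexity ``does not change.''
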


\begin{proof}
	Let $\mult(a_{ij})$ be the number of multiplication gates in the formula computing entry $a_{ij}$ and 
	$$ T = \dst\sum_{\substack{1 \le i \le m \\ 1 \le j \le n}} \mult(a_{ij}). $$ 
	That is, $T$ is the total number of multiplication gates used to compute all entries of the matrix $A$.
	
	We prove this proposition by induction on $T$, for matrices of all dimensions. 
	The base case, when $T = 0$, is trivial, as in this case $A$ itself has linear entries. 
	Suppose now that the proposition is true for all matrices (regardless of their dimensions) 
	which can be computed by formulas using $< T$ multiplication gates. 
	
	Let $A$ be our matrix, which can be computed using $T$ multiplications.
	W.l.o.g., we can assume that $\mult(a_{mn}) \ge 1$.
	Then, by finding a multiplication gate in the formula for $a_{mn}$ that has no other multiplication gate as 
	an ancestor, we can write $a_{mn}$ in the form $a_{mn} = a + b \cdot c$, where 
	$$ \mult(a_{mn}) = \mult(a) + \mult(b) + \mult(c) + 1. $$
	
	Hence, the matrix
	$$ B = \left(I_{m-1} \oplus \begin{pmatrix} 1 & b \\ 0 & 1 \end{pmatrix} \right) (A \oplus 1) 
	\left(I_{n-1} \oplus \begin{pmatrix} 1 & 0 \\ -c & 1 \end{pmatrix} \right) $$  
	is such that 
	$$ b_{ij} = 
		\begin{cases}
			a_{ij}, \text{ if } i \le m, j \le n \text{ and } (i,j) \neq (m,n) \\
			a, \text{ if } (i,j) = (m,n) \\
			b, \text{ if } (i,j) = (m, n+1) \\
			-c, \text{ if } (i,j) = (m+1, n) \\
			1, \text{ if } (i,j) = (m+1, n+1) \\
			0 \text{ otherwise} 
		\end{cases}
	$$
	
	Therefore, the number of multiplications needed to compute $B$ is given by 
	\begin{align*} 
		\dst\sum_{\substack{1 \le i \le m+1 \\ 1 \le j \le n+1}} \mult(b_{ij}) &= 
	\left(\dst\sum_{\substack{1 \le i \le m \\ 1 \le j \le n}} \mult(a_{ij}) \right) - \mult(a_{mn}) + \mult(a) + \mult(b) + \mult(c) \\
	&= T - \mult(a_{mn}) + \mult(a) + \mult(b) + \mult(c) \\
	&= T -1
	 \end{align*}
	 Since $B$ is an $(m+1) \times (n+1)$ matrix which can be computed by using a total of $T-1$ multiplication gates,
	 by the induction hypothesis, there exist $P' \in \GL_{m+1 + (T-1)}(\QX) = \GL_{m+T}(\QX)$ and 
	 $Q' \in \GL_{n+1+(T-1)}(\QX) = \GL_{n+T}(\QX)$ such that $P'(B \oplus I_{T-1})Q'$ is a linear matrix.
	 Since 
	 \begin{align*} 
	 	B \oplus I_{T-1} &=  \left(I_{m-1} \oplus \begin{pmatrix} 1 & b \\ 0 & 1 \end{pmatrix} \oplus I_{T-1} \right) 
						(A \oplus I_T) 
						\left(I_{n-1} \oplus \begin{pmatrix} 1 & 0 \\ -c & 1 \end{pmatrix} \oplus I_{T-1} \right) \\
					&= R(A \oplus I_T)S,
	\end{align*}
	where $R = \left(I_{m-1} \oplus \begin{pmatrix} 1 & b \\ 0 & 1 \end{pmatrix} \oplus I_{T-1} \right) \in \GL_{m+T}(\QX)$ and
	$S = \left(I_{n-1} \oplus \begin{pmatrix} 1 & 0 \\ -c & 1 \end{pmatrix} \oplus I_{T-1} \right) \in \GL_{n+T}(\QX)$,
	we have that
	$$ P'(B \oplus I_{T-1})Q' = (P'R) (A \oplus I_T) (SQ'). $$
	Setting $P = P'R$ and $Q = SQ'$ proves the inductive step and completes the proof. Since we only use subformulas
	of the formulas computing the entries of $A$, the bound on the bit complexity does not change.
\end{proof}

\subsection{Classical Reduction}\label{classical-red}

Having shown the effective version of Higman's trick, we can now compute the $\ncrk$ of a matrix over $\QX$.
We begin with a lemma which will tell us that we can reduce the problem of computing the $\ncrk$ of a matrix
by testing fullness of a smaller matrix with polynomial entries.

\begin{lemma}[Reduction to Fullness Testing]\label{lem:size-reduction}
	Let $M \in \FX^{m \times n}$ be any matrix. In addition, let $U = (u_{ij})$ and $V = (v_{ij})$
	be generic matrices in new, non-commuting variables $u_{ij}$ and $v_{ij}$, of 
	dimensions $r \times m$ and $n \times r$, respectively. Then, $\ncrank(M) \ge r$ iff the matrix $UMV$ is full.
\end{lemma}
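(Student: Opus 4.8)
The plan is to prove the two implications separately. The direction ``$UMV$ full $\Rightarrow \ncrank(M)\ge r$'' I would dispatch by contraposition as a one-line factorization argument; essentially all the content sits in the converse, and the only nontrivial external inputs there are two standard structural facts about the free skew field, which I would isolate as small preliminary claims.

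\emph{Easy direction.} Suppose $\ncrank(M)\le r-1$. By Theorem~\ref{cohn-nc=inner} (non-commutative rank equals inner rank, attained already with polynomial entries) we may write $M = KL$ with $K\in\FX^{m\times(r-1)}$ and $L\in\FX^{(r-1)\times n}$. Then $UMV = (UK)(LV)$ exhibits $UMV$ as a product of an $r\times(r-1)$ matrix and an $(r-1)\times r$ matrix over the polynomial ring in all the variables $x_i,u_{ij},v_{ij}$, so its inner rank, i.e.\ its non-commutative rank, is at most $r-1<r$; hence $UMV$ is not full.

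\emph{Main direction.} Suppose $\ncrank(M)\ge r$, i.e.\ $M$ has rank $\ge r$ as a matrix over the skew field $\mathbb{F}\langle\langle \bx\rangle\rangle$. First I would invoke the division-ring analogue of ``determinantal rank equals rank'': a matrix of rank $\ge r$ over a skew field contains an $r\times r$ submatrix of rank exactly $r$ (pass to $r$ right-linearly independent columns, obtaining a block of full column rank, then to $r$ left-linearly independent rows of that block). Fix such rows $i_1<\dots<i_r$ and columns $j_1<\dots<j_r$ and let $M'$ be the corresponding $r\times r$ submatrix, so $\ncrank(M')=r$, i.e.\ $M'$ is full. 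The crucial observation is that $M'$ is a \emph{specialization} of $UMV$: substituting the field constants $u_{ki}\mapsto\delta_{i,i_k}$ and $v_{j\ell}\mapsto\delta_{j,j_\ell}$ (and leaving the $x_i$ untouched) turns $U$ into the row-selection matrix $\hat U$ and $V$ into the column-selection matrix $\hat V$, so that $UMV$ specializes to $\hat U M\hat V = M'$. Now I use the second standard fact: substituting field elements for a subset of the variables of a matrix over a free algebra cannot increase its inner rank, since a factorization $N=KL$ maps under the induced ring homomorphism to a factorization of the same inner dimension. Applying this to $UMV$ gives $\ncrank(UMV)\ge\ncrank(M')=r$; as $UMV$ is $r\times r$ we also have $\ncrank(UMV)\le r$, whence $\ncrank(UMV)=r$ and $UMV$ is full.

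The main obstacle to a fully rigorous write-up is not the specialization bookkeeping but stating the two cited facts about $\mathbb{F}\langle\langle \bx\rangle\rangle$ cleanly and with reference: the existence of a full $r\times r$ submatrix (classical for matrices over division rings) and monotonicity of non-commutative rank under partial specialization (immediate from Theorem~\ref{cohn-nc=inner}). If one wishes to avoid the ``full $r\times r$ submatrix'' fact, an alternative route for the main direction is to produce, via the blow-up characterization of non-commutative rank, a dimension $d$ and matrices $X_i\in M_d(\mathbb{F})$ with $\rank(M(X))\ge rd$, then extend the substitution to $U$ and $V$ (whose $d\times d$ blocks can be chosen to be arbitrary scalar matrices, hence make $\hat U,\hat V$ arbitrary) so that $UMV$ evaluates to an invertible $rd\times rd$ matrix, and conclude $\ncrank(UMV)\ge r$ from the easy inequality $\rank(N(Y))\le \ncrank(N)\cdot d$. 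I would present the submatrix argument as the primary proof and mention this variant only as a remark.
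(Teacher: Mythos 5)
Your proof is correct and, for the nontrivial direction, follows essentially the same route as the paper: pick a full $r\times r$ submatrix $M'$ of $M$ (which exists because row rank equals column rank over any division ring), observe that $UMV$ specializes to $M'$ by setting the $u_{ij},v_{ij}$ to appropriate $0/1$ constants (row/column selection matrices), and invoke monotonicity of $\ncrank$ under specialization (from the inner-rank characterization) to conclude $\ncrank(UMV)\ge\ncrank(M')=r$. The one thing you add is an explicit treatment of the easy direction via the factorization $UMV=(UK)(LV)$; the paper's proof text omits this, stating only the hard direction, so your write-up is arguably the more complete of the two.
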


\begin{proof} 
	Since $\ncrank(M) \ge r$, there exists an $r \times r$ minor of $M$ of full rank. Let $Q$ be such
	a minor of $M$. W.l.o.g.,\footnote{Notice that we can make the following assumption just to simplify
	notation. In actuality, we do not know where the full rank minor is located in $M$.} 
	we can assume that $Q$ is the $[r] \times [r]$ principal minor of $M$. Hence, we have that
	$$ UMV = \begin{pmatrix} U_1 & U_2 \end{pmatrix} 
	\begin{pmatrix} Q & M_2 \\ M_3 & M_4 \end{pmatrix} 
	\begin{pmatrix} V_1 \\ V_2  \end{pmatrix}, $$
	where $U_1$ and $V_1$ are $r \times r$ matrices and the others are matrices with the proper 
	dimensions. 
	
	Letting $U' = \begin{pmatrix} I_r & 0 \end{pmatrix}$ and 
	$V' = \begin{pmatrix} I_r \\ 0  \end{pmatrix}$, the equality above becomes:
	
	$$ U'MV' = Q. $$
	
	As 
	$$ r \ge \ncrank(UMV) \ge \ncrank(U'MV') = \ncrank(Q) = r, $$ 
	we obtain that $UMV$ is full, as we wanted. Notice that the second inequality 
	comes from the fact that rank does not increase after restrictions of the new variables.
\end{proof}

Notice that we do not know the rank $\ncrank(M)$ a priori. Therefore, our algorithm will
try all possible values of $r \in [n]$ and output the maximum value of $r$
for which we find a full matrix.

For each $r \times r$ matrix $UMV$, we can use the effective
Higman's trick to convert $UMV$ into a $s \times s$ matrix with linear 
entries. With this matrix, we can use the truncated 
Gurvits' algorithm to check whether the matrix we just obtained is full.
Since we have this test, we will be able to output the correct rank. Algorithm~\ref{alg:comp-ncrank}
is the precise formulation of the procedure just described.

\begin{Algorithm}
Input: $M \in \QX^{m \times n}$ s.t. each entry of $M$ is a polynomial 
computed by a formula of size bounded by $s$ and bit complexity bounded by $b$. \\
Output: $\ncrank(M)$ 

\begin{enumerate}
	\item For $1 \le r \le \min(m,n)$, let $U_r$ and $V_r$ be $r \times m$ and $n \times r$ generic matrices in new, non-commuting 
		variables $u_{ij}^{(r)}$, $v_{ij}^{(r)}$.
	\item Let $M_r = U_r M V_r$.
	\item Apply the effective Higman's trick to $M_r$ to obtain a matrix $N_r$ with linear entries on the variables $x_1, \ldots, x_m$
	and $u_{ij}^{(r)}$, $v_{ij}^{(r)}$.
	\item Use Algorithm $G'$ to test whether $N_r$ is full rank.
	\item Output the maximum value of $r$ for which $N_r$ is full rank.
\end{enumerate}
\caption{Noncommutative Rank Algorithm}
\label{alg:comp-ncrank}
\end{Algorithm}

\begin{theorem}\label{thm:comp-ncrank}
	Let $M \in \QX^{m \times n}$ be s.t. each entry of $M$ is a polynomial 
	computed by a formula of size bounded by $s$ and bit complexity bounded by $b$. 
	There exists a deterministic algorithm that finds the non-commutative rank of $M$ in time $\poly(m, n, s, b)$.
\end{theorem}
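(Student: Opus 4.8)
The plan is to establish correctness and polynomial running time of Algorithm~\ref{alg:comp-ncrank} by composing the three reductions it performs — pre/post-multiplication by generic matrices, Higman linearization, and a call to Algorithm $G'$ — and then to track the growth of all size parameters to confirm everything stays polynomial.

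\textbf{Correctness.} Fix $r$ with $1 \le r \le \min(m,n)$. The matrix $M_r = U_r M V_r$ has entries that are polynomials in the original variables $x_1,\dots,x_m$ together with the fresh non-commuting variables $u_{ij}^{(r)}$ and $v_{ij}^{(r)}$. By Lemma~\ref{lem:size-reduction} (applied over the free ring on all of these variables), $M_r$ is full iff $\ncrank(M) \ge r$. Applying the effective Higman trick (Proposition~\ref{prop:higman-trick}) to $M_r$ produces a square matrix $N_r$ with linear entries that is obtained from $M_r \oplus I_k$ — where $k$ is the total number of multiplication gates used to compute the entries of $M_r$ — by left and right multiplication by matrices that are invertible over the free field; hence the co-rank and co-nc-rank of $N_r$ agree with those of $M_r$, and in particular $N_r$ is full iff $M_r$ is full. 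Finally $N_r$ is a square matrix with linear entries, so by Cohn's theorem (Theorem~\ref{inv_to_full}) $N_r$ is full iff it is invertible over the free skew field, which is exactly the property that Algorithm $G'$ decides correctly (its correctness and running-time analysis were carried out in Section~\ref{bit_complexity}; cf.\ Theorem~\ref{main}). Since $\ncrank(M) \ge r$ holds precisely for $r \le \ncrank(M)$, the largest $r$ for which the test succeeds equals $\ncrank(M)$, which is the algorithm's output.

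\textbf{Parameter bookkeeping.} It remains to verify that for each $r$ the matrix $N_r$ has dimension, number of variables, and bit complexity all polynomial in $m,n,s,b$. Each entry of $M$ is computed by a formula of size $\le s$ and bit complexity $\le b$, hence uses $\le s$ multiplication gates. An entry of $M_r$ equals $\sum_{k,l} u^{(r)}_{ik}\, M_{kl}\, v^{(r)}_{lj}$, a sum of at most $mn$ terms each obtained by multiplying one $M_{kl}$ on the left and right by a single variable; so each entry of $M_r$ is computed by a formula using $O(mns)$ multiplication gates and of bit complexity $\le b$. Summing over the $r^2 \le n^2$ entries, the total number $k$ of multiplication gates is $O(mn^3 s)$. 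By Proposition~\ref{prop:higman-trick} the matrix $N_r$ therefore has dimension $r + k = O(mn^3 s) = \poly(m,n,s)$, has coefficients of bit complexity $\le b$ (Higman's trick only reuses subformulas of the given formulas, so it does not increase bit complexity), and is linear in the $m + rm + nr = O(nm + n^2)$ variables. The matrices $P,Q$, and hence $N_r$, are constructed explicitly in time $\poly(m,n,s,b)$, after which Algorithm $G'$ on $N_r$ costs $\poly(m,n,s,b)$ by Section~\ref{bit_complexity}. Since there are at most $\min(m,n)$ choices of $r$, the total running time is $\poly(m,n,s,b)$.

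\textbf{Main obstacle.} The conceptual content — that fullness of a polynomial matrix reduces to a singularity test of a linear matrix — is already packaged in Lemma~\ref{lem:size-reduction} and Proposition~\ref{prop:higman-trick}, so the real work is the quantitative accounting above: one must ensure that forming $U_r M V_r$ and then linearizing via Higman's trick does not blow up the dimension, the number of variables, or the coefficient sizes beyond polynomial, since Algorithm $G'$ runs in time polynomial only in those quantities. The one point that requires genuine care is confirming that the multiplication-gate count aggregated over all $r^2$ entries stays polynomial and that the bit complexity is preserved under the linearization; both follow from the effective statement of Proposition~\ref{prop:higman-trick} and the elementary formula-size estimates for $U_r M V_r$.
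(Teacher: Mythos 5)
Your proposal is correct and follows essentially the same route as the paper: reduce to fullness of $M_r = U_r M V_r$ via Lemma~\ref{lem:size-reduction}, linearize via the effective Higman trick (Proposition~\ref{prop:higman-trick}), and decide fullness of the resulting square linear matrix with Algorithm $G'$, sweeping over $r$. Your bookkeeping of multiplication gates is slightly more explicit than the paper's (which simply bounds each entry of $M_r$ by a size-$2smn$ formula and sets $k \le 2s(mn)^2$), and you additionally make the invocation of Theorem~\ref{inv_to_full} explicit, but these are only presentational differences; the argument and the bounds obtained are the same up to polynomial factors.
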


\begin{proof}
	To prove this theorem, it is enough to show that Algorithm~\ref{alg:comp-ncrank} is correct and it runs
	with the desired runtime.
	
	Without loss of generality, we can assume that $n \le m$. Therefore we have that $\ncrank(M) \le n$.
	By Lemma~\ref{lem:size-reduction}, if $r \le \ncrank(M)$, then matrix $M_r$ will be of full rank (and therefore will not
	have a shrunk subspace, by Theorem~\ref{Equivalences}). Since
	$M_r = U_r M V_r$, from the formulas computing the entries of $M$ we obtain formulas of size at most
	$2smn$ computing the entries of $M_r$. Moreover, the bit complexities of these formulas will still 
	be bounded by $b$, as multiplication by generic matrices do not mix any of the polynomials of $M$.
	
	By Proposition~\ref{prop:higman-trick} and the fact that the size of the formulas computing the entries of $M_r$
	are bounded by $2smn$, we have that $N_r$ is a linear matrix of dimensions $(k+r) \times (k+r)$,
	where $k \le 2s(mn)^2$ and the bit complexity of the coefficients bounded by $b$. 
	Moreover, $N_r = P (M_r \oplus I_{k}) Q$ implies that $N_r$ is full if, and only if, $M_r$ is full, which
	is true if, and only if, $\ncrank(M) \ge r$.
	
	Now, by Theorem~\ref{main}, we have a deterministic polynomial time algorithm  
	to determine whether $N_r$ is full rank.
	If $r \le \ncrank(M)$, $N_r$ will be full, and the maximum 
	such $r$ will be exactly when $r = \ncrank(M)$. Therefore, by outputting the maximum $r$ for which $N_r$ we compute $\ncrank(M)$. This proves that our algorithm is correct. Notice that the runtime is polynomial in the input size, as we perform at most $n$ applications of the Higman trick and of 
	Algorithm $G'$. This completes the proof.
\end{proof}

\subsection{The Quantum Reduction}\label{quantum}

Here we present a different reduction from computing non-commutative rank to fullness testing from a quantum viewpoint. 
We will only work with square matrices though. As we saw, by Higman's trick, we can assume the matrices to be linear. So we are 
given a matrix $L = \sum_{i=1}^m x_i A_i \in M_n(\Fx)$. A combination of Theorems \ref{Equivalences} and \ref{cohn-nc=inner} shows that
$\ncrank(L) \le r$ iff the operator defined by $A_1,\ldots,A_m$ is $n-r$-rank-decreasing. So we just want to check whether a 
completely positive operator is $c$-rank-decreasing and we will do this by using an algorithm for checking if an operator is 
rank-decreasing as a black box using the following lemma:

\begin{lemma}
Let $T : M_n(\mathbb{C}) \rightarrow M_n(\mathbb{C})$ be a completely positive operator. Define an operator $\overline{T} : M_{n+c-1}(\mathbb{C}) \rightarrow M_{n+c-1}(\mathbb{C})$ as follows:
\[ \overline{T} \left(
\begin{bmatrix}
    X_{1,1} & X_{1,2} \\
    X_{2,1} & X_{2,2}
\end{bmatrix} \right)
= 
\begin{bmatrix}
    T(X_{1,1}) + \text{tr}(X_{2,2})I_n & 0 \\
    0 & \text{tr}(X_{1,1})I_{c-1}
\end{bmatrix}
\]
Here $X_{1,1}$, $X_{1,2}$, $X_{2,1}$, $X_{2,2}$ are $n \times n$, $n \times c-1$, $c-1 \times n$, $c-1 \times c-1$ matrices respectively. Then $\overline{T}$ is completely positive and $T$ is $c$-rank-decreasing iff $\overline{T}$ is rank-decreasing. Note that we are considering $c \le n$.
\end{lemma}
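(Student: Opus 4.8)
The plan is to establish the two assertions separately: first that $\overline{T}$ is completely positive, by writing down a Kraus decomposition for it, and then the equivalence ``$T$ is $c$-rank-decreasing $\iff$ $\overline{T}$ is rank-decreasing'' by a block-matrix rank argument in each direction. For complete positivity, fix a Kraus decomposition $T(X)=\sum_{i=1}^m A_i X A_i^{\dagger}$. I would build Kraus operators for $\overline{T}$ from three families of $(n+c-1)\times(n+c-1)$ matrices, with $e_1,\dots,e_{n+c-1}$ the standard basis: (a) the block embeddings $A_i\oplus 0_{c-1}$ for $1\le i\le m$, whose contributions sum to the top-left term $T(X_{1,1})$; (b) the rank-one matrices $e_a e_{n+j}^{\dagger}$ for $a\in\{1,\dots,n\}$ and $j\in\{1,\dots,c-1\}$, whose contributions sum to $\mathrm{tr}(X_{2,2})\,(I_n\oplus 0_{c-1})$; (c) the rank-one matrices $e_{n+l} e_b^{\dagger}$ for $l\in\{1,\dots,c-1\}$ and $b\in\{1,\dots,n\}$, whose contributions sum to $\mathrm{tr}(X_{1,1})\,(0_n\oplus I_{c-1})$. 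The only computation involved is $e_p e_q^{\dagger} X e_q e_p^{\dagger} = X_{qq}\, e_p e_p^{\dagger}$ combined with $\sum_p e_p e_p^{\dagger} = I$ over the relevant index ranges; note that families (b) and (c) contribute only diagonal rank-one terms, so the off-diagonal blocks of $\overline{T}(X)$ vanish. Adding the three sums yields exactly the block matrix defining $\overline{T}$, so $\overline{T}$ is completely positive directly from the definition used in the paper.

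\emph{If $T$ is $c$-rank-decreasing then $\overline{T}$ is rank-decreasing.} Let $X\succeq 0$ satisfy $\mathrm{rank}(T(X))\le\mathrm{rank}(X)-c$; then $\mathrm{rank}(X)\ge c\ge 1$, so $X\ne 0$ and $\mathrm{tr}(X)>0$. Taking $Y=X\oplus 0_{c-1}\succeq 0$ gives $\overline{T}(Y)=T(X)\oplus(\mathrm{tr}(X)\,I_{c-1})$, of rank $\mathrm{rank}(T(X))+(c-1)\le\mathrm{rank}(X)-1<\mathrm{rank}(X)=\mathrm{rank}(Y)$.

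\emph{If $\overline{T}$ is rank-decreasing then $T$ is $c$-rank-decreasing.} This is the substantive direction. Suppose $Y\succeq 0$ satisfies $\mathrm{rank}(\overline{T}(Y))<\mathrm{rank}(Y)$; in particular $Y\ne 0$. Write $Y$ in blocks with psd diagonal blocks $Y_{11}\in M_n(\mathbb{C})$ and $Y_{22}\in M_{c-1}(\mathbb{C})$. I would split into three cases according to whether $\mathrm{tr}(Y_{11})$ and $\mathrm{tr}(Y_{22})$ are zero or positive, using repeatedly the standard fact that a vanishing diagonal block of a psd matrix forces the corresponding off-diagonal blocks to vanish. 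If $\mathrm{tr}(Y_{11})=0$, then $Y_{11}=0$, hence $Y=0_n\oplus Y_{22}$ with $1\le\mathrm{rank}(Y)\le c-1\le n$, while $\overline{T}(Y)=(\mathrm{tr}(Y_{22})\,I_n)\oplus 0_{c-1}$ has rank $n\ge\mathrm{rank}(Y)$, contradicting the rank drop (this is the only place the hypothesis $c\le n$ enters). If $\mathrm{tr}(Y_{11})>0$ and $\mathrm{tr}(Y_{22})>0$, then the top-left block $T(Y_{11})+\mathrm{tr}(Y_{22})\,I_n$ is positive definite and the bottom-right block is $\mathrm{tr}(Y_{11})\,I_{c-1}$, so $\mathrm{rank}(\overline{T}(Y))=n+c-1\ge\mathrm{rank}(Y)$, again a contradiction. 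Hence $\mathrm{tr}(Y_{11})>0$ and $\mathrm{tr}(Y_{22})=0$, so $Y_{22}=0$ and $Y=Y_{11}\oplus 0_{c-1}$ with $Y_{11}\ne 0$; then $\mathrm{rank}(Y)=\mathrm{rank}(Y_{11})$ and $\overline{T}(Y)=T(Y_{11})\oplus(\mathrm{tr}(Y_{11})\,I_{c-1})$ has rank $\mathrm{rank}(T(Y_{11}))+(c-1)$, so $\mathrm{rank}(\overline{T}(Y))<\mathrm{rank}(Y)$ becomes exactly $\mathrm{rank}(T(Y_{11}))\le\mathrm{rank}(Y_{11})-c$; thus $X=Y_{11}$ witnesses that $T$ is $c$-rank-decreasing.

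\emph{Main obstacle.} Once the psd-block lemma above is available the whole argument is essentially bookkeeping. The points that require genuine care are (i) checking the cross-cancellation in the Kraus construction, so that $\overline{T}(X)$ is genuinely block-diagonal with the stated diagonal blocks, and (ii) ruling out the two degenerate sub-cases in the reverse direction, which rests on the observations that a rank-decreasing witness must be nonzero and that $c\le n$ makes the $\mathrm{tr}(Y_{11})=0$ case impossible.
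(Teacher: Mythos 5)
Your proof is correct. The rank-equivalence part (both directions) is essentially the paper's argument, just organized as a case split on $\mathrm{tr}(Y_{11})$ and $\mathrm{tr}(Y_{22})$ rather than on which block is zero; the reasoning -- nonvanishing of both traces forces the image to be full rank, $\mathrm{tr}(Y_{11})=0$ is ruled out because it makes the image rank $n\geq\mathrm{rank}(Y)$ (the one place $c\leq n$ is used), leaving only $Y=Y_{11}\oplus 0$ -- matches the paper step for step. Where you genuinely diverge is in proving complete positivity. The paper invokes Choi's criterion, computes $\overline{T}(E_{i,j})$ on each standard matrix unit, and asserts that the resulting Choi matrix is psd given that the Choi matrix of $T$ is; it then remarks separately that Kraus operators for $\overline{T}$, which the scaling algorithm actually needs, could be extracted from the eigenvectors of that Choi matrix. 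You instead write down an explicit Kraus decomposition of $\overline{T}$ directly -- the blocks $A_i\oplus 0_{c-1}$ together with the rank-one matrices $e_a e_{n+j}^{\dagger}$ and $e_{n+l}e_b^{\dagger}$ -- and verify it by the elementary identity $e_p e_q^{\dagger} X e_q e_p^{\dagger} = X_{qq}\,e_p e_p^{\dagger}$. This is more constructive: it both proves complete positivity without appealing to Choi's theorem and simultaneously furnishes the Kraus operators the paper needs for its algorithms, folding the paper's closing remark into the proof itself. Either route is valid; yours is somewhat cleaner and more self-contained.
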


\begin{proof} A well known characterization due to Choi \cite{Choi} states that $\overline{T}$ is completely positive iff $\sum_{i,j = 1}^{n+c-1} E_{i,j} \tensor \overline{T}(E_{i,j})$ is psd. Here $E_{i,j}$ is the matrix with $1$ at $i,j$ position and $0$ everywhere else. Now
\[ \overline{T}(E_{i,j}) = \begin{cases} 
      \begin{bmatrix}
    T(E_{i,j})  & 0 \\
    0 & I_{c-1}
\end{bmatrix} & 1 \le i=j \le n \\ \\
      \begin{bmatrix}
    T(E_{i,j})  & 0 \\
    0 & 0
\end{bmatrix} & 1 \le i, j \le n, i \neq j \\ \\
      \begin{bmatrix}
    0  & 0 \\
    0 & 0
\end{bmatrix} & 1 \le i \le n, n+1 \le j \le n+c-1 \: \text{or} \: n+1 \le i \le n+c-1, 1 \le j \le n \\ \\
 \begin{bmatrix}
    I_n  & 0 \\
    0 & 0
    \end{bmatrix} & n+1 \le i=j \le n+c-1 \\ \\
     \begin{bmatrix}
    0  & 0 \\
    0 & 0
    \end{bmatrix} & n+1 \le i, j \le n+c-1, i \neq j \\
   \end{cases}
\]
From here it is easy to verify that $\sum_{i,j = 1}^{n+c-1} E_{i,j} \tensor \overline{T}(E_{i,j})$ is psd given that $\sum_{i,j=1}^n E_{i,j} \tensor T(E_{i,j})$ is psd. Now suppose that $\overline{T}$ is rank-decreasing. This can only happen if $X_{1,1} = 0$ or $X_{2,2} = 0$, otherwise 
\[ \overline{T} \left(
\begin{bmatrix}
    X_{1,1} & X_{1,2} \\
    X_{2,1} & X_{2,2}
\end{bmatrix} \right)
= 
\begin{bmatrix}
    T(X_{1,1}) + \text{tr}(X_{2,2})I_n & 0 \\
    0 & \text{tr}(X_{1,1})I_{c-1}
\end{bmatrix}
\]
is full rank. If $X_{1,1} = 0$, then 
\[ \begin{bmatrix}
    0 & X_{1,2} \\
    X_{2,1} & X_{2,2}
\end{bmatrix}
\]
can be psd (and hermitian) only if $X_{1,2} = X_{2,1} = 0$. In this case a $c-1$ ranked matrix is mapped to rank $n$ matrix. So $X_{2,2}$ has to be zero. Then again by the psd condition $X_{1,2} = X_{2,1} = 0$. So 
\[ \overline{T} \left(
\begin{bmatrix}
    X_{1,1} & 0 \\
    0 & 0
\end{bmatrix} \right)
= 
\begin{bmatrix}
    T(X_{1,1})  & 0 \\
    0 & \text{tr}(X_{1,1})I_{c-1}
\end{bmatrix}
\]
and $X_{1,1} \neq 0$ and 
\[ \text{Rank} \left(
\begin{bmatrix}
    X_{1,1} & 0 \\
    0 & 0
\end{bmatrix} \right)
> 
\text{Rank} \left( \begin{bmatrix}
    T(X_{1,1})  & 0 \\
    0 & \text{tr}(X_{1,1})I_{c-1}
\end{bmatrix} \right)
\]
Hence $\text{Rank}(T(X_{1,1})) \le \text{Rank}(X_{1,1}) - c$. This proves one direction. Now suppose that $T$ is $c$-rank-decreasing and $\text{Rank}(T(X)) \le \text{Rank}(X)-c$, then 
\[ \text{Rank} \left( \overline{T} \left(
\begin{bmatrix}
    X & 0 \\
    0 & 0
\end{bmatrix} \right) \right)
= 
\text{Rank} \left( \begin{bmatrix}
    T(X)  & 0 \\
    0 & \text{tr}(X) I_{c-1} 
\end{bmatrix}  \right)
< 
\text{Rank} \left( \begin{bmatrix}
    X  & 0 \\
    0 & 0 
\end{bmatrix}  \right)
\]
This proves the lemma.
\end{proof}

\begin{remark} This seems to be the ``quantum" analogue of obtaining a maximum matching oracle based on a perfect matching oracle: add c-1 dummy vertices to both sides of the bipartite graph and connect them to everything. Then the new graph has a perfect matching iff the original graph had a matching of size $\ge n - c + 1$.
\end{remark}

\begin{remark}
Here we didn't specify a set of Kraus operators for the operator $\overline{T}$ which seem to be needed to run Algorithms \ref{Gurvits_alg} and \ref{Gurvits_alg_trn} but Kraus operators can be obtained by looking at the eigenvectors of $\sum_{i,j=1}^{n+c-1} E_{i,j} \tensor \overline{T}(E_{i,j})$. Alternatively Algorithms \ref{Gurvits_alg} and \ref{Gurvits_alg_trn} can also be interpreted as acting directly on the Choi-Jamiolkowski state of $\overline{T}$ i.e. $\sum_{i,j=1}^{n+c-1} E_{i,j} \tensor \overline{T}(E_{i,j})$. 
\end{remark}

\end{document}